\newtheorem{theorem}{Theorem}[section]
\newtheorem{lemma}[theorem]{Lemma}
\newtheorem{corollary}[theorem]{Corollary}
\newtheorem{definition}[theorem]{Definition}
\newtheorem{example}[theorem]{Example}
\newtheorem{observation}[theorem]{Observation}
\newtheorem{proposition}[theorem]{Proposition}
\newtheorem{claim}[theorem]{Claim}
\newcommand {\ignore} [1] {}
\newcommand {\Fb} {\overline{F}}
\begin{document}

%\title{Bertrand Competition on Graphs}
%\author{Moshe Babaioff \and Brendan Lucier \and Noam Nisan}

% Page heads
%\markboth{Moshe Babaioff et al.}{Bertrand Networks}

% Title portion
\title{Bertrand Networks} 

\author{Moshe Babaioff\thanks{Microsoft Research SVC, \texttt{moshe@microsoft.com}}
\and
Brendan Lucier\thanks{Microsoft Research New England, \texttt{brlucier@microsoft.com}}
\and
Noam Nisan\thanks{Hebrew University and Microsoft Research SVC, \texttt{noam@cs.huji.ac.il}}
% NOTE! Affiliations placed here should be for the institution where the
%       BULK of the research was done. If the author has gone to a new
%       institution, before publication, the (above) affiliation should NOT be changed.
%       The authors 'current' address may be given in the "Author's addresses:" block (below).
%       So for example, Mr. Abdelzaher, the bulk of the research was done at UIUC, and he is
%       currently affiliated with NASA.
}

\date{}

\maketitle

\begin{abstract}
We study scenarios where multiple sellers of a homogeneous good compete on prices, where each seller can only sell to some subset of the buyers.
Crucially, sellers cannot price-discriminate between buyers.   We model the structure of the competition by a graph (or hyper-graph), with nodes representing the sellers and edges representing populations of buyers.
We study equilibria in the game between the sellers, prove that they always exist, and present various structural, quantitative, and computational results about them.  We also analyze the equilibria completely for a few cases.  Many questions are left open.
\end{abstract}

\section{Introduction}
Competition is known to reduce prices and decrease sellers' profits.  The simplest model to this effect
contrasts a seller with a captive market to two competing sellers, where the competition is on
price alone, a model known as Bertrand competition.  While the seller with a captive market would sell at the ``monopoly price'' and make a profit,
the only equilibrium that the
competing sellers may reach is one where they charge the marginal cost, extracting no profit if they have identical constant marginal costs.
There is of course much work in the economic literature that deals with various variants of this model as well as
with alternative assumptions about the competition (e.g. Cournot competition.)

In this paper we study scenarios in which %what happens when
parts of the market are shared between sellers and other parts are captive.  We model the structure of
sharing in the market as a hyper-graph where the vertices are sellers, and each hyper-edge represents a market segment, henceforth just market,
that is shared by these sellers.  The sellers each announce a single price (i.e., price discrimination is impossible\footnote{Where price discrimination
is possible, the seller simply optimizes in each market separately.})
%and are put in a Bertrand competition in each market, where and the whole market buys from the lowest price seller that can sell to it.
and every buyer buys from the lowest price seller that has access to his market (with some tie breaking rule).
The idea is that sellers must balance between
competition in each of the markets they compete in and their captive market, and this trade-off will in turn affect those competing with them.
In this paper we wish to study how the structure of this graph affects
prices and profits of the different sellers.% and in the different markets.

One may think of many scenarios that are captured -- to a first approximation -- by such a model.  Consider several Internet vendors for some good, where
users do not always compare prices among all vendors but rather different subsets of users do their price-comparisons only between a subset of vendors.
One may also think about geographic limitations to competition where buyers can only buy from a ``close'' vendor.  Another
scenario may involve technology
constraints where buyers must choose between essentially equivalent products, but are limited to buying from the subset of those that
are ``compatible'' with their existing systems or that have a certain ``feature'' that they need.  In all these cases, and many others,
price discrimination would be
quite difficult to do.

Taking a higher-level point of view, this work falls into a more general agenda that attempts
``decomposing'' a global economic situation into a network of local economic interactions and extracting some global economic insights from the
structure of the interaction graph, studied in various models, e.g., in \cite{BlumeEKT07,KLS01,BNP09,KKO04,KM01,B04} and many others.   This agenda is distinct
from agendas that consider network formation or network-structured goods, agendas that have also received much attention,
including in models related to Bertrand competition
\cite{ChawlaN09,ChawlaR08,HPT99} as well as in \cite{Guzman2011} that is this paper's starting point.

\begin{figure}
\centering
\begin{tabular}{c|c}
\includegraphics[scale = 0.62]{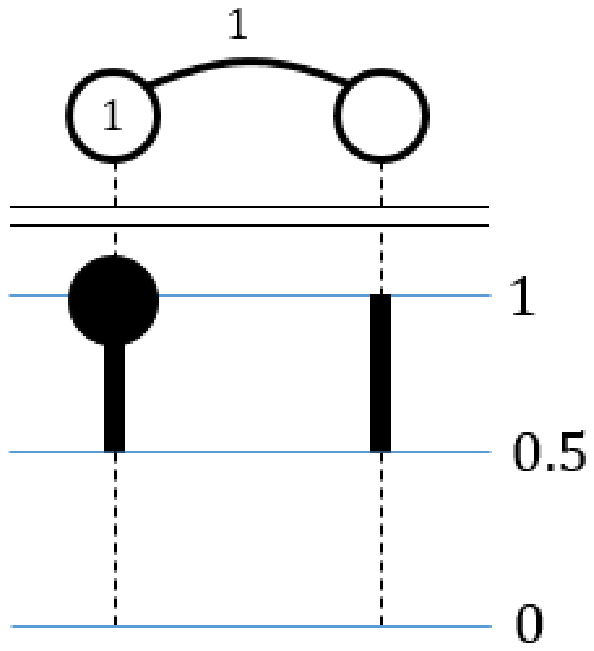} \hspace{1cm} &
\hspace{1cm} \includegraphics[scale = 0.4]{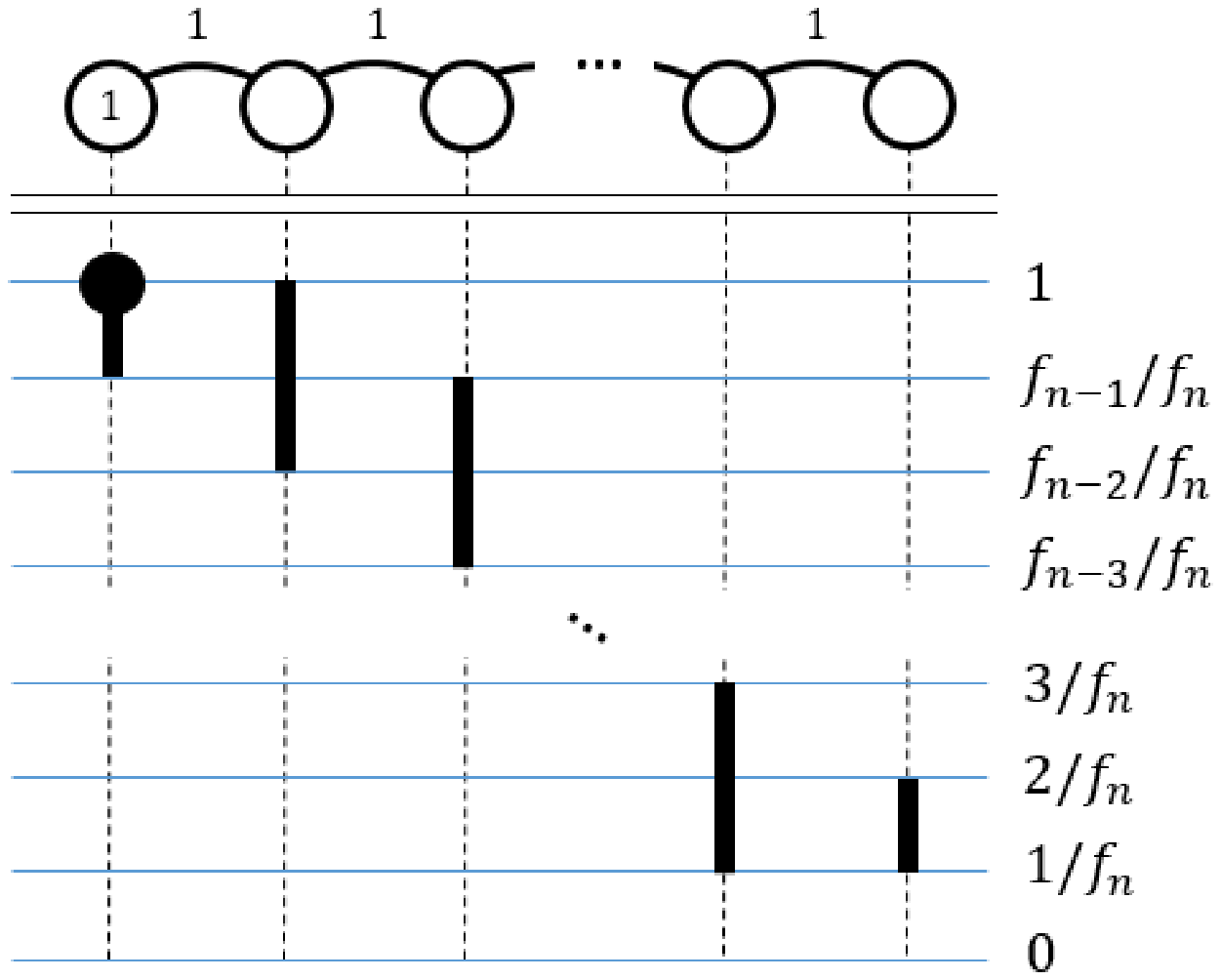} \\
(a) \hspace{1cm} & \hspace{1cm} (b)
\end{tabular}
\caption{An illustration of equilibrium pricing for (a) two sellers with a shared market and one captive market, and (b) a line of $n$ sellers with unit-sized shared markets and a single captive market.  In each depiction, the network is represented at the top.  Captive market sizes are shown within nodes (where a blank node indicates no captive market), and shared market sizes are shown adjacent to their respective edges.  The support of each seller's pricing strategy, as a subset of $[0,1]$, is shown below.  Thick lines indicate the range of prices that fall within each seller's support, and dark circles represent atoms at price $1$. In (b), $f_i$ denotes the $i$th Fibonnacci number, indexed so that $f_0=f_1=1$.}
\label{fig:intro}
\end{figure}

Let us start with the simplest scenario that combines a captive market and a competitive one.  Consider the case of two sellers that share a market,
but where one of the two sellers also has a captive market of the same size as the shared one.  %that is $\alpha$ times the size of the common market.
In our simple scenario both sellers
have zero marginal cost (i.e., for producing the good) and the buyers in each market will all buy from the seller that asked for the lowest price, as long as that price is at most 1.  The
two sellers are thus playing a game, where the strategy of each seller is its requested price which lies in the interval $[0,1]$.
What will the equilibrium look like?  It is easy to verify that no pure equilibrium exists.
However, a mixed equilibrium does exist
and was only recently described in \cite{Guzman2011} (for more general demand and supply curves).
In this unique
equilibrium both sellers randomize their asked price in the range $[0.5,1]$ in
the following way\footnote{One may be somewhat skeptical of the relevance of a mixed Nash equiliribrium with continuous support, however we would like to mention that
we have run simulations % the natural discretization of this game
and found that this mixed continuous support equilibrium was closely approximated by the empirical
distribution of a simple fictitious play in a discretized version of the game.}:
the price of the seller that has the captive market satisfies $Pr[Price < x] = 1-\frac{1}{2x}$ for $0.5 \le x < 1$ and
$Pr[Price = 1] = 0.5$, and that of the seller without a captive market satisfies $Pr[Price < x]=2-\frac{1}{x}$ for all $0.5 \le x \le 1$.  We say that the seller with the captive market has an \emph{atom} at price $1$, meaning that the seller selects price $1$ with positive probability.  See Figure \ref{fig:intro}(a).
%At this point
%one may easily analyze the prices and utilities at this equilibrium.
It may be somewhat surprising that
the seller with no captive market gets positive utility (of $1/2$) despite having no captive market.
This may be contrasted with what would happen if he also succeeds in gaining access to the other seller's captive market, in which case they would be put
in a classic Bertrand competition and all prices would go down to 0.

Let us continue with another example: a line of $n$ sellers, where each two consecutive ones share a market, and the first one also has a captive market,
with all markets being of the same size.  It turns out that the unique equilibrium has each seller $i$ randomizing his price (according to a specific distribution
that we derive)
in the interval $[f_{n-i+2}/f_n, f_{n-i}/f_n]$ where $f_j$ is the $j$'th Fibonacci number starting with $f_0=f_1=1$
(except for the first seller whose bid is capped at 1, with an atom there).
See Figure \ref{fig:intro}(b).
The equilibrium utilities of the players in this network are given by
$u_i = f_{n-i+1}/f_n = \Theta(\phi^{-i})$, where $\phi$ is the golden ratio.

%Looking at this equilibrium one may observe that the seller with the captive
%market extracts no profit from the joint market as his utility is same as it would be without any common market.  Somewhat surprisingly, the
%seller that has no captive market does manage to extract a positive profit from the joint market, taking advantage of the first
%seller needing to balance between his captive market and the joint market.  Looking at the captive market, it no longer
%pays the monopoly price of 1 but rather pays a lower price in expectation, gaining from the indirect competition in the joint market.  On the other
%hand, prices in the joint market do not go down to zero as would happen if there was no captive market.  As can be expected, as $\alpha$ approaches 0,
%the situation approaches the case of pure Bertrand competition with zero prices, and as $\alpha$ goes to infinity it approaches monopoly prices
%(where also the joint market faces the monopoly price.)

The paper attempts analyzing what happens in more general situations with {\em multiple sellers and markets} where different sellers are connected
to different subsets of markets.
To focus on the structure of the graph, we keep everything else as simple as
possible, in particular sticking to zero marginal costs as well as to a a demand curve where all buyers are willing to buy the good for at most $1$.\footnote{
This implies that there are no efficiency issues in this model, and our focus is on prices and revenues.}
Furthermore, as the main distinction we wish to capture is that of monopoly as opposed to competition, we focus on the case where
each market is either captive to one seller or shared between exactly two sellers.  This leads us to modeling the
{\em network of sellers and markets} by a graph whose vertices are the sellers
and where each edge corresponds to market that is shared between the two sellers.  Each seller (vertex) $i$ may have a weight $\alpha_i$ indicating the
size of its captive market and each edge $(i,j)$ will have a weight $\beta_{ij}$ indicating the size of the pair's shared market.\footnote{In the more general model of an hyper-graph $\beta_S$ will indicate the size of the market that is shared by the set $S$ of sellers.}
%\footnote{The more general model will be a hyper-graph where each hyper-edge corresponds to a market that is shared by all sellers in it.}
We will analyze Nash equilibria of the game between the sellers.
To begin with, it is not even clear that a Nash equilibrium exists: the game has
a continuum of strategies (the price is a real number) and discontinuous utilities (slightly under-pricing your opponent is very different than slightly overpricing him).
Nevertheless we invoke the results of \cite{Simon-Zame} and show:

\begin{theorem} \label{ithm1}
In every network of sellers and markets there exists a mixed Nash equilibrium.  Moreover, every equilibrium holds for every tie breaking rules.\footnote{
This theorem also holds in the general hyper-graph model.}
\end{theorem}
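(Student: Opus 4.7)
The plan is to invoke the Simon--Zame existence theorem for games with discontinuous payoffs to obtain a mixed equilibrium under some (endogenous) tie--breaking rule, and then to prove a no--atom--collision lemma that upgrades this to an equilibrium under every tie--breaking rule.

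First I would cast the pricing game into the Simon--Zame framework. Each seller $i$ has the compact strategy set $[0,1]$, and fixing any tie--breaking rule her payoff $u_i(p_1,\ldots,p_n)$ is a bounded measurable function of the price profile. The crucial hypothesis to verify is that the aggregate payoff $\sum_i u_i$ is upper semicontinuous on $\prod_i [0,1]$: for each captive market the revenue collected is $\alpha_i p_i$, which is continuous; for each shared market $(i,j)$ the total revenue collected by the two sellers equals $\beta_{ij}\cdot\min(p_i,p_j)$ (regardless of how ties are split), which is continuous, and so the sum is in fact continuous. Hence the game satisfies the Simon--Zame hypotheses, yielding a mixed strategy profile $\sigma$ and a measurable sharing rule at the discontinuity locus (the diagonals $\{p_i=p_j\}$) under which $\sigma$ is a Nash equilibrium.

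Second, I would prove that in any mixed equilibrium the joint distribution puts zero mass on every ``tie event'' that could possibly change a seller's allocation. The standard Bertrand undercutting argument does the work: suppose that two neighboring sellers $i,j$ (sharing a market of size $\beta_{ij}>0$) both placed atoms of positive mass at a common price $p>0$. Consider shifting one of these atoms, say seller $i$'s, from $p$ to $p-\varepsilon$. The loss is at most $\varepsilon$ times seller $i$'s total expected demand at price $p$, which tends to $0$ with $\varepsilon$, whereas the gain is at least $\beta_{ij}(p-\varepsilon)$ times the positive mass $\sigma_j(\{p\})$ that $i$ now strictly undercuts. For small enough $\varepsilon$ this strictly improves $i$'s payoff, contradicting equilibrium. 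A separate short argument rules out the case $p=0$ as affecting any seller's payoff (all utilities are zero on $\{p=0\}$). Consequently the set $\bigcup_{(i,j)\in E}\{p_i=p_j\}$ has $\sigma$--measure zero.

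Finally I would combine the two steps. Since ties occur with probability zero under $\sigma$, every seller's expected payoff under $\sigma$ is the same for every measurable tie--breaking rule, and in particular equals the payoff under the Simon--Zame sharing rule. Moreover the same no--atom argument applied to any deviation $p_i'$ of seller $i$ shows that $\sigma_{-i}$--almost surely $p_i'$ is not tied with another seller's price; so the best--reply value for $i$ is also independent of the tie--breaking rule. Hence $\sigma$ is a Nash equilibrium under every tie--breaking rule, and by the same reasoning any equilibrium at all is equilibrium under any tie--breaking rule, giving both halves of the theorem. The main obstacle is the no--atom--collision step: a careful case analysis is needed to handle asymmetric situations (for instance, one seller has a captive market and the other does not, or the tied price sits at the boundary $p=1$), to guarantee that the undercutting deviation is strictly profitable rather than merely weakly so.
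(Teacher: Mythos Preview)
Your plan matches the paper's: invoke Simon--Zame to get existence under an endogenous sharing rule, prove a no-atom-collision lemma (no two neighboring sellers share an atom at a positive price), and conclude tie-breaking irrelevance. Two small corrections are worth making.

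First, the hypothesis you check---upper semicontinuity of $\sum_i u_i$---is a Dasgupta--Maskin condition, not what Simon--Zame requires. The paper instead observes that each $u_i$ is continuous on the dense open set of tie-free price profiles, which is the actual Simon--Zame input and is easier to verify than what you wrote. (Your continuity computation is correct but only under sharing rules with $f_{ij}+f_{ji}=1$; the paper allows $f_{ij}+f_{ji}\le 1$.)

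Second, your closing worry about the undercutting argument is misplaced: it works uniformly at every positive $p$, including $p=1$, and independently of which seller has a captive market, so no case analysis is needed. The one place that does need a word is your claim in step three that a deviation $p_i'$ is ``$\sigma_{-i}$-almost surely not tied'': this is false when a neighbor has an atom at $p_i'$, and atoms at $1$ do occur in equilibrium. The correct closing observation is that the deviation payoff at any $p_i'$ under \emph{any} rule is bounded above by $\lim_{\varepsilon\downarrow 0} u_i(p_i'-\varepsilon)$, which is rule-independent and dominated by the on-path payoff.
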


We then start analyzing the properties of these equilibria.  Extending the well known result about Bertrand competition,
we show % it is not difficult to see
that if no seller has a captive market then the only equilibrium is the pure one where each seller sells at 0 (his marginal cost) and gets
0 utility.
We observe the following converse:

\begin{theorem}\label{ithm2}
In every connected network of at least two sellers where at least one seller has a captive market, there does not exist any pure Nash equilibrium.
In every mixed-Nash equilibrium of this network
no seller has any atoms,
% we have that all sellers bid a mixed strategy with no atoms
except perhaps at 1. Moreover, all sellers have their infimum price bounded away from zero, and get strictly positive utility.\footnote{
The fact that lack of captive markets implies zero prices extends to the general hyper-graph model but this theorem does not, nor do the ones below.}
\end{theorem}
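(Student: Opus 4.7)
The plan is to prove the three assertions---no pure NE, no atom below $1$, and positive utility with infimum bounded from $0$---in sequence, each exploiting the key discontinuity of Bertrand-style competition: at any positive price, an infinitesimal undercut yields a discrete jump in market share.

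\emph{No pure NE.} Suppose $(p_1,\ldots,p_n)$ is a pure equilibrium. First, $p_i=p_j$ must hold on every shared-market edge $\{i,j\}$: if instead $p_i<p_j$, seller $i$ could raise his price by a small $\epsilon$ (less than the gap to every strictly higher-priced neighbor) while still winning every market he previously won, strictly improving all his revenues. By connectedness this forces a common price $p^{\circ}\in[0,1]$ across all sellers. If $p^{\circ}=0$, the seller $k$ with captive market earns $0$ and strictly gains by deviating to price $1$. If $p^{\circ}>0$, pick any edge $\{i,j\}$: the tie-breaking rule splits the shared market in proportions $(\tau,1-\tau)$, and whichever endpoint receives at most half of it can undercut to $p^{\circ}-\epsilon$, gaining a first-order $(1-\tau)\beta_{ij}p^{\circ}$ on the shared market that dominates the second-order captive loss $\alpha\epsilon$. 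Either case contradicts equilibrium.

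\emph{No atom at $p^\star\in(0,1)$.} Suppose $F_i$ has an atom of mass $q>0$ at $p^\star\in(0,1)$ and let $j$ be any neighbor of $i$. A direct calculation gives $u_j(p^\star-\epsilon)-u_j(p^\star+\epsilon)\geq\tfrac12\beta_{ij}p^\star q>0$ for all sufficiently small $\epsilon>0$: as $p$ crosses $p^\star$ upward, $j$'s probability of winning the $\{i,j\}$-shared market drops discretely by the atom mass $q$, while every other component of $u_j$ is continuous at $p^\star$. Hence $F_j$ places no mass in $(p^\star,p^\star+\delta_j)$ for some $\delta_j>0$. Taking $\delta=\min_j\delta_j>0$ over $i$'s finitely many neighbors, seller $i$ strictly improves by shifting his atom from $p^\star$ to $p^\star+\delta/2$: he wins the same shared markets with at least the same probability, at a strictly higher price, and his captive revenue also grows---contradiction. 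At $p^\star=1$ this argument fails because no seller can price above $1$, so atoms at $1$ are permitted.

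\emph{Positive utility and infimum bounded from zero.} For the seller $k$ with captive market, deviating to $1$ secures $\alpha_k$, so $u_k\geq\alpha_k>0$. Writing $M_k=\alpha_k+\sum_j\beta_{kj}$ for $k$'s total market size, the payoff at any price $p$ is at most $pM_k$; since every point of $\mathrm{supp}(F_k)$ must achieve the equilibrium payoff $u_k$, we get $\inf\mathrm{supp}(F_k)\geq\alpha_k/M_k>0$. For any neighbor $j$ of $k$, deviating to any $p<\alpha_k/M_k$ guarantees the $\{j,k\}$-shared market, so $u_j\geq\beta_{jk}\alpha_k/M_k>0$; the same $u_j\leq pM_j$ support bound then yields $\inf\mathrm{supp}(F_j)>0$, and induction along a spanning tree of the network extends the conclusion to every seller. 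Finally, the bound $u_i>0$ rules out atoms at $0$ as well. The subtlest ingredient is the atom step, where one must cleanly separate the discontinuous contribution of $i$'s atom to $u_j$ from continuous effects of $j$'s other markets and handle a possible simultaneous atom of $j$ at $p^\star$; Theorem~\ref{ithm1}'s tie-breaking independence lets us fix any convenient tie rule, which tames the bookkeeping.
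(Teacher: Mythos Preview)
Your proof is essentially correct and follows the same overall strategy as the paper: undercutting to rule out pure equilibria, the discrete jump at an atom to rule out atoms below $1$, and propagating positive utility outward from the captive seller along the graph. Two points deserve comment.

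In the pure-NE step, the claim ``if $p_i<p_j$ on some edge then $i$ can raise slightly and still win every market he previously won'' is not quite right: if $i$ has another neighbor $\ell$ with $p_\ell=p_i$, raising forfeits $i$'s tie share on the $\{i,\ell\}$ market, which may be a first-order loss. The ingredient you need---that no two neighbors can be tied at a positive price---is exactly your undercutting argument for the $p^\circ>0$ case, but it must come \emph{first}. The paper orders it that way: (i) prove no two neighbors have atoms at the same positive price (Lemma~\ref{lemma:no-same-atom}); (ii) show no seller prices at $0$; (iii) take a minimal-price seller, note all neighbors are strictly higher by~(i), and raise.

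For atoms below $1$, your argument is a pleasant dual of the paper's. The paper argues: some neighbor $j$ \emph{must} have mass just above the atom (else $i$ could raise), and that $j$ strictly prefers to undercut---contradiction. You argue: \emph{no} neighbor has mass just above the atom (since undercutting dominates), so $i$ can raise---contradiction. Both work. Your version, however, needs the case of a simultaneous atom of $j$ at $p^\star$ handled cleanly, since raising would then cost $i$ the tie share. Your appeal to Theorem~\ref{ithm1} does the job once made explicit: fix the tie rule in which $i$ loses all ties; the profile remains an equilibrium by that theorem, $p^\star$ is still optimal for $i$, and now raising sacrifices nothing. The paper sidesteps this by establishing the no-common-atom lemma up front.
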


We do not have a general algorithm for computing an equilibrium of a given network, however we do show that the problem can be completely reduced
to finding the supports of the sellers' strategies and the set of sellers that have an atom at 1.

\begin{theorem}\label{ithm3}
Given the supports of sellers' strategies, with finitely many boundary points, and the set of sellers that have an
atom at 1, it is possible to explicitly, in polynomial time, compute an equilibrium of the network if such exists.  Generically this equilibrium
is unique for this support and set of sellers with atoms at 1.
\end{theorem}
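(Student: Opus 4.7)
The plan is to reduce the equilibrium computation, given the supports and the atom set $A$, to a linear system of polynomial size in the variables $u_1, \ldots, u_n$ (the per-seller equilibrium utilities). Write $X_j = 1 - F_j$ throughout. First I partition $[0,1]$ using the finitely many boundary points of the supports into $O(n)$ open cells $I_1, \ldots, I_L$; on each $I_\ell$ the set of \emph{active} sellers $A_\ell = \{i : I_\ell \subseteq \mathrm{supp}(F_i)\}$ is constant, and by Theorem~\ref{ithm2} the inactive $X_j$ remain constant on $I_\ell$ (no interior atoms).

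On cell $I_\ell$, the indifference condition $u_i = p(\alpha_i + \sum_{j \sim i} \beta_{ij} X_j(p))$ for $i \in A_\ell$ rearranges to
\[
\sum_{j \sim i,\, j \in A_\ell} \beta_{ij} X_j(p) \;=\; \frac{u_i}{p} - \alpha_i - \sum_{j \sim i,\, j \notin A_\ell} \beta_{ij} \hat{X}_j^{\ell},
\]
where $\hat{X}_j^{\ell}$ is the constant value of the inactive $X_j$ on $I_\ell$. Letting $B_\ell$ be the matrix $(\beta_{ij})_{i,j \in A_\ell}$ (the weighted adjacency of the induced subgraph on $A_\ell$) and assuming it is invertible (the ``generic'' condition), this yields
\[
X_{A_\ell}(p) \;=\; B_\ell^{-1}\!\left(\frac{u_{A_\ell}}{p} - c_\ell\right),
\]
so each active $X_j$ on $I_\ell$ is an affine function of $1/p$ whose coefficients are linear in the unknown $u_i$'s and in the previously computed $\hat{X}_j^{\ell}$. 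Sweeping left to right and using continuity of the $F_j$ at every cell boundary propagates the $\hat{X}_j^{\ell}$'s recursively, so every $X_j(p)$ can be written as an explicit linear function of $u_1, \ldots, u_n$.

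To pin down the $u_i$'s, I impose the remaining boundary constraints, all of which are linear in the $u_i$: first, $X_i(a_i) = 1$ at the leftmost point of each seller's support (the CDF starts at $0$); second, $X_i(b_i) = 0$ at the rightmost point for each $i \notin A$ (the CDF reaches $1$); and third, for each $i \in A$, indifference at the atom price $p = 1$, which, assuming $A$ is an independent set (so neighbors of $i$ carry no atom at $1$ and their $X_j(1) = 0$), reduces to $u_i = \alpha_i$. The resulting linear system has size polynomial in $n$ and is solved in polynomial time by Gaussian elimination; when the composite coefficient matrix is non-singular, as it is for generic parameter values, the solution is unique, giving the uniqueness claim. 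Finally one verifies that the reconstructed $X_j$'s define valid (non-increasing, $[0,1]$-valued) CDFs and that the off-support best-response inequality $p(\alpha_i + \sum_{j \sim i} \beta_{ij} X_j(p)) \le u_i$ holds for $p \notin \mathrm{supp}(F_i)$; if any check fails then no equilibrium with the stipulated support/atom profile exists.

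The main difficulty lies in bookkeeping the continuity across cell boundaries as sellers enter and leave their supports at different endpoints, and in showing that the composite matrix (a product of $B_\ell^{-1}$ blocks glued by continuity relations) is generically of full rank. Genericity of invertibility can be established by viewing each relevant determinant as a non-trivial polynomial in the model parameters $\{\alpha_i, \beta_{ij}\}$ and the support endpoints, hence vanishing only on a measure-zero subset.
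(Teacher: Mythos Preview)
Your approach is essentially the same as the paper's: partition $[0,1]$ by boundary points, use the indifference equations on each cell to see that active $\bar F_j$'s must be affine in $1/p$, invoke invertibility of the per-cell $\beta$-submatrix $B_\ell$ for uniqueness, interpolate between boundary values, and then verify CDF validity plus the off-support best-response inequalities. The paper packages exactly these ingredients.

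One difference worth flagging: the paper sets up the boundary-point constraints as a single linear feasibility program (LP1) in the variables $\{u_i\}_i \cup \{\bar F_r(t)\}_{r,t}$, rather than eliminating the $\bar F_r(t)$ via your sweep and solving only for the $u_i$. The LP framing has two advantages. First, it handles the non-generic case where some $B_\ell$ is singular: your sweep requires $B_\ell^{-1}$ and simply breaks down there, whereas LP feasibility is still decidable in polynomial time, and the unqualified ``compute an equilibrium if one exists'' clause in the theorem must cover \emph{all} instances, not just generic ones. Second, it avoids the over-determination your reduction produces: once you substitute out the $\bar F_r(t)$ you impose $2n$ linear equalities (the $X_i(a_i)=1$, the $X_i(b_i)=0$ or $u_i=\alpha_i$) on only $n$ unknowns $u_i$, so ``non-singular composite matrix'' is not quite the right object; in the LP the system is naturally sized and infeasibility transparently certifies non-existence. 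Two smaller points: your genericity should be stated over $\{\beta_{ij}\}$ alone, since the determinants of the $B_\ell$ involve neither the $\alpha_i$ nor the support endpoints; and the number of cells is bounded by the number of boundary points in the input, not by $O(n)$.
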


Generally speaking there may be different equilibria for a network with different
supports of sellers' strategies, with seller's utilities varying between them.
We next embark on an analysis of a set of networks for which we can effectively analyze and prove uniqueness of the equilibrium.

\begin{theorem}\label{ithm4}
Every network of sellers and markets that has a tree structure and a single captive market has an essentially unique equilibrium which is
described explicitly and polynomially computable from the network structure.
\end{theorem}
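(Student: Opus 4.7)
The plan is to reduce to Theorem 1.3 by exhibiting, for any tree with a single captive market, the unique candidate supports and atom-at-$1$ set, and then to verify they yield an equilibrium. I would root the tree at the unique captive seller $r$, let $u_v$ denote seller $v$'s equilibrium utility, and determine the $u_v$'s and the supports inductively from the leaves upward.

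By Theorem \ref{ithm2}, in any mixed Nash equilibrium no seller has atoms below $1$, and all sellers earn strictly positive utility with support bounded away from zero. From the positive-utility conclusion one immediately sees that only $r$ can have an atom at $1$: any non-root $v$ has no captive market and all neighbors are atomless below $1$, so pricing exactly at $1$ would yield zero revenue, contradicting positivity of $u_v$.

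Next, I would exploit the equilibrium indifference condition on each support $[a_v,b_v]$:
\[
x\Bigl(\alpha_v + \sum_{u \in N(v)} \beta_{uv}\,\Pr[p_u > x]\Bigr) \;=\; u_v,
\]
with $\alpha_v = 0$ for $v \neq r$. At a leaf $\ell$ with parent $p$ this forces $\Pr[p_p > x] = u_\ell/(\beta_{p\ell}\,x)$ on $[a_\ell,b_\ell]$ and hence $a_\ell = u_\ell/\beta_{p\ell}$. Moving up the tree, at each internal node one combines the indifference conditions of neighbors to show that the full collection of CDFs $F_v$ is determined by the tuple $(u_v)_v$, and the endpoints $a_v,b_v$ are determined locally by utilities of nearby sellers (generalizing the Fibonacci-style nesting visible in the line example of Figure \ref{fig:intro}(b), where each seller's upper boundary coincides with its grandparent's lower boundary). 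This recursion expresses every $u_v$ as an explicit function of $u_r$. Closing the system at the root by the requirement that $r$'s CDF together with the atom at $1$ integrates to $1$ yields a single scalar equation that pins down $u_r$ uniquely; Theorem \ref{ithm3} then provides the polynomial-time construction.

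The main technical obstacle will be showing that the supports are indeed single intervals rather than disjoint unions, and that the candidate CDFs arising from the recursion are globally valid (monotone and bounded in $[0,1]$, with non-negative density). The tree structure together with the single-captive assumption is what makes the bottom-up induction deterministic: because every non-root seller has a unique path toward the captive-revenue source $r$, the indifference equations propagate without splitting into cases, and only a single scalar equation at the root remains to close the system. Verifying non-negativity and interval structure at each inductive step is where the bulk of the careful analysis lies, and I expect this to be the hardest part of the proof.
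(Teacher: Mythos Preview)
Your high-level plan---root at the captive seller $r$, argue that only $r$ can carry an atom at $1$, and run a leaf-to-root recursion to pin down utilities---is exactly the paper's route. But there is a genuine gap in how you read the word ``essentially'' in the theorem statement, and it causes your proposed argument to aim at the wrong target.

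For a tree with branching, the equilibrium is \emph{not} unique; only the utilities are. The paper's precise statement (their Theorem~6.1) is that there is a canonical profile of ``staggered'' intervals $[L_v,H_v]$ with midpoints $M_v$ (so that $M_v=L_{P(v)}=H_j$ for every child $j$ of $v$) such that \emph{every} equilibrium satisfies $[L_v,M_v)\subseteq S_v\subseteq[L_v,H_v]$ and $[L_v,M_v)\subseteq\bigcup_{j\in C(v)}S_j$. When $v$ has several children, those children are all indifferent over $[M_v,H_v)$, and any way of splitting the required density among them yields a distinct equilibrium with the same utilities. So your plan to ``show that the supports are indeed single intervals'' is not what is needed and would in fact fail: supports need not be intervals, and even when they are, they are not uniquely determined. (For the line, where every node has a single child, the covering constraint does force $S_v=[L_v,H_v]$ and one gets genuine uniqueness; this is Claim~6.9.)

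What actually carries the proof is the structural bound $\inf_v\ge L_v=\max_{k\in CC(v)}\sup_k$ (their Proposition~6.3). The argument is a three-generation contradiction: if some $v$ prices at $z<L_v$, one finds a child $j$ and grandchild $k$ with $\sup_k=L_v$, and compares $u_v(z)$ to $u_v(L_v)$ term by term using that $j$ and $k$ each maximize at $L_v$. You flag this as ``where the bulk of the careful analysis lies'' but give no mechanism; this is precisely the missing idea.

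Two smaller points. First, there is no scalar equation to close at the root: once you know $r$ has an atom at $1$ and no neighbor does, $u_r=\alpha_r$ is immediate. The bottom-up recursion (their Claim~A.4) computes $\Fb_v(M_v)$ for $v\neq r$ purely from the $\beta$'s, and then $\Fb_r(1)=\alpha_r/(\alpha_r+\sum_{j\in C(r)}\beta_{jr}\Fb_j(M_j))$ gives the atom size directly. Second, your argument that only $r$ can have an atom is fine, but you also need that $r$ \emph{does} have one; the paper gets this (Claim~6.2) by first showing suprema strictly decrease with depth.
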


Our analysis is explicit about what ``essentially unique'' means, completely characterizing the degrees of freedom.  In particular, the utilities
of each seller are the same over all equilibria.  This theorem has two significant limitations: being a tree and having a single captive market.
We provide examples showing that both restrictions are necessary and relaxing either one of them results in multiple equilibria with multiple possible utilities for a seller.   We are able to fully analyze and prove uniqueness of equilibria for an additional case: a ``Star'' where each seller may have a captive market and every
peripheral seller shares a market with the center and all shared markets have the same size.

For general graphs, while equilibria are not necessarily unique, nor are we in general able to characterize them,
we do prove various structural results as well as
quantitative estimates on prices and utilities in every possible equilibria.  We are able to
bound the amount of utility that "flows" from sellers with captive markets to sellers that are ``decoupled'' from them
in each of two senses: (1) distance (2) cut:

\begin{theorem} \label{ithm5}(Informal)
In every non-trivial network and in any equilibrium:
\begin{enumerate}
\item
The utility of every seller is bounded from below by an expression that decreases exponentially in his distance from any captive market.
\item
The utility of every seller is bounded from above by a linear expression in the size of the shared markets in an edge-cut that separates him from all captive markets.
\item
For every seller, as the sizes of all shared markets in an edge-cut that separates him from all captive markets increase to infinity, his utility decreases to 0.
\end{enumerate}
\end{theorem}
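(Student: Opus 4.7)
The plan is to handle all three bounds using two primary tools from equilibrium analysis, both established in the spirit of Theorem~\ref{ithm2}. For any seller $k$, let $p_k^{\inf}$ denote the infimum of $k$'s price support and let $M_k := \alpha_k + \sum_l \beta_{kl}$ denote $k$'s total adjacent market size. The first tool is that $u_k \le p_k^{\inf} M_k$, since at any price in $k$'s support (in particular at $p_k^{\inf}$) $k$ earns exactly $u_k$, and at the infimum $k$ wins at most all of $M_k$. The second tool is the deviation inequality $u_k \ge p_l^{\inf} \beta_{kl}$ for any neighbor $l$, obtained by deviating just below $p_l^{\inf}$ and capturing the shared market $(k,l)$. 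Finally, any captive-market seller $i$ has $u_i \ge \alpha_i$ by pricing at $1$.

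For Part 1, I would fix a shortest path $i_0, i_1, \ldots, i_d = j$ from $j$ to some captive-market seller $i_0$ and propagate lower bounds along the path. The base case $p_{i_0}^{\inf} \ge \alpha_{i_0}/M_{i_0}$ combines the first tool with $u_{i_0} \ge \alpha_{i_0}$. The inductive step uses both tools: $p_{i_t}^{\inf} \ge u_{i_t}/M_{i_t} \ge p_{i_{t-1}}^{\inf} \beta_{i_{t-1} i_t}/M_{i_t}$. Chaining these yields $u_j \ge \alpha_{i_0} \prod_{t=1}^{d} \beta_{i_{t-1} i_t}/M_{i_{t-1}}$ (or an analogous product), each factor at most $1$, producing exponential decay in $d$.

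For Parts 2 and 3, fix an edge-cut $C$ separating $j$ from all captive-market sellers and let $S$ be the side containing $j$; by construction $S$ contains no captive-market seller. Let $\beta_I$ and $\beta_C$ be the total interior and cut market sizes respectively. Since no captive revenue is generated inside $S$, all payments into $S$ come only from interior markets (bounded by $\beta_I$ since prices are at most $1$) and from the portion of cut markets captured by $S$ (bounded by $\beta_C$), giving $u_j \le \sum_{k \in S} u_k \le \beta_I + \beta_C$, the linear-in-cut bound of Part 2. For Part 3, the key additional observation is that for every boundary seller $k \in S$ with neighbor $l \in V \setminus S$, the deviation inequality applied to $l$ yields $p_k^{\inf} \le u_l/\beta_{kl}$, with $u_l$ bounded independently of $\beta_{kl}$ (for instance by Part 2 applied on the other side of a refined cut, or directly by trivial bounds that do not grow with $\beta_{kl}$). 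As the cut sizes grow, every boundary $p_k^{\inf}$ collapses to zero; propagating this collapse through the interior of $S$ via equilibrium indifference then drives every $p_k^{\inf}$ in $S$ to zero, hence $u_k \to 0$ by the first tool.

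The main obstacle is the propagation step for Part 3, since the deviation inequality produces lower bounds on neighbor prices rather than upper bounds on infima. The plan is an induction on distance within $S$ from the boundary: I would show that if a seller $k \in S$ has every neighbor's infimum at most $\varepsilon$, then $k$'s own infimum must itself be $O(\varepsilon)$, using that $k$'s equilibrium utility is attained at every price in its support and that at any price $p$ much larger than $\varepsilon$ the probability $k$ wins any adjacent market is negligible (since neighbors lie below $p$ with probability near one and $k$ earns nothing from a captive inside $S$). Carefully quantifying this "indifference collapse" so that the constants do not blow up along the induction inside $S$ is the technically delicate step.
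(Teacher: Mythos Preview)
Your Part~1 is essentially the paper's argument: the paper packages your two tools as a single neighbor lemma ($u_j \ge u_i/\Delta_i$, with $\Delta_i = M_i/\min_j \beta_{ij}$) and chains it along a shortest path, exactly as you do.

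Your Parts~2 and~3, however, miss the key idea. For Part~2, your total-revenue inequality $\sum_{k\in S} u_k \le \beta_I + \beta_C$ is correct but too weak: it does not vanish as the cut $\beta_C\to 0$, so it does not yield a bound of the form $u_j \le C\cdot \beta_C$ (which is what the formal version requires). The paper instead argues as follows: take the seller $i^*\in S$ with the highest supremum price (breaking ties toward one with an atom); at that price $i^*$ loses every market shared \emph{within} $S$ (no neighbor in $S$ has an atom there), and since $i^*$ has no captive market, his utility at that price is at most the total cut weight incident to $i^*$, hence at most $\beta_C$. Then propagate this single bound through $S$ using your Part~1 neighbor inequality, picking up only a factor $\Delta_S^{D_S}$. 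This ``highest-supremum-in-$S$'' device is the missing ingredient.

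For Part~3, your propagation step is flawed as stated: knowing that every neighbor of $k$ has \emph{infimum} at most $\varepsilon$ tells you nothing about the probability those neighbors lie below a given price $p\gg\varepsilon$, so you cannot conclude that $k$ wins with negligible probability at $p$. The paper's route is quite different. It first bounds the utility of every seller $b$ adjacent to a large cut edge: rescale all markets by $1/M$, apply the Part~2 ``highest-supremum'' lemma to the subgraph of large edges (now the small edges leaving it have total weight $\le n/M$), and scale back to get $u_b \le n\Delta_B^{D_B}$ \emph{independent of $M$}. This converts into a tail bound $\Fb_b(x)\le n\Delta_B^{D_B}/(Mx)$ for every such $b$. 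Now inside $S$ minus the boundary, repeat the highest-supremum argument: the top seller there can only win markets shared with boundary sellers, and the tail bound makes each such win worth $O(1/M)$; then propagate via the neighbor lemma. So the mechanism is not ``infima collapse and spread inward'' but rather ``utilities of boundary sellers stay bounded $\Rightarrow$ their distributions are thin at any fixed level $\Rightarrow$ rerun the highest-supremum argument on the interior.''
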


\ignore{%exact version
\begin{theorem}
Take a non trivial network with $n$ sellers and let $\alpha_{max}=\max_i \alpha_i$ be the size of the largest captive market.  Then
there exist constants $c_1, c_2$ that depend only on the maximum degree of the network as
well as on the maximum ratio between sizes
of markets in the network such that for every seller $i$ we have that
\begin{enumerate}
\item
In every equilibrium, $\alpha_{max} / (c_1)^{d_i} \le u_i \le \alpha_{max} \cdot (c_1)^{d_i}$
where $d_i$ is the distance of $i$ from the seller with captive market $\alpha_{max}$.
\item
Let $E$ be an edge cut that separates $i$ from all captive markets and does not contain edges adjacent to $i$,
and change all market sizes in $E$ to be of size $\eta$ then in every equilibrium of the modified network we have that
$u_i \le {c_2}^n \alpha_{max} \cdot \eta$ and $u_i \le {c_2}^n \alpha_{max} / \eta$.
\end{enumerate}
\end{theorem}
}%ignore

Note that our ``line of sellers'' example above shows that the decrease in utility in part 1 of the theorem may indeed be exponential.
Part 3 of the theorem may be surprising, with the intuitive explanation
being that the largeness of the
markets in the cut causes the sellers in these
markets to ``focus'' on them, not letting indirect competition ``spread'' over the cut.

\vspace{0.1in}
\noindent
{\bf Structure of the paper}
\vspace{0.1in}

We start by describing our model in section \ref{sec:model}, and before diving into the body of our analysis, present a few simple
examples in section \ref{sec:simple}.  Our general analysis of the existence, robustness, and properties of equilibria are given
in section \ref{sec:eq} that also proves theorems \ref{ithm1} and \ref{ithm2}.  Section \ref{sec:sketches}
reduces the problem to analysis only at the boundary points, proving theorem \ref{ithm3}.  Section \ref{sec:trees.1.captive} analyzes trees with a single captive
market and proves theorem \ref{ithm4} and section \ref{sec:star} analyzes the star network.  Finally, section \ref{util} analyzes utilities in general
networks, proving a formal version of theorem \ref{ithm5}.  Many open problems remain, and we sketch some of them in our concluding section \ref{sec:conclude}.

\ignore{ % We have very little to say, so let's say it inline.
\subsection{Related Work}
MOSHE: some of the papers we should probably cite include:

Price Competition on Network~\cite{Guzman2011}.

Bertrand competition in networks~\cite{ChawlaNR09}.

Trading networks with price-setting agents~\cite{BlumeEKT07}
}% ignore

\section{Model}\label{sec:model}
In a general network economy ({\em network} for short)
there are $n\geq 2$ {\em sellers} and a collection of disjoint buyer populations which we call {\em markets}.
All sellers sell the same type of good, and each seller is associated with a supply curve which specifies how many units the seller can sell at any given price.
Each market has access to some of the sellers, possibly not to all of them.  Each market is associated with a demand curve, specifying how many units the population would buy at a given price.

We will focus on the following subclass of networks.  First, we assume that all buyers in a market are willing to pay up to $1$ per unit but no more.
Also, we assume that each seller has a marginal cost of $0$ for producing the good and is able to supply any quantity.
%Note that these two assumptions imply that sellers compete only in prices:
Each buyer
will purchase a full unit of the good from whichever accessible seller has the lowest price.
Finally, we assume that each market has access to at most two sellers.

%This model in its full generally is hard to solve, so we make some simplifying assumptions. We assume that each market has access to either one or two sellers,
%this means that buyers in this market can only buy from these specified sellers.
%\footnote{In our model each market (buyers population) is served by one or two sellers. in general, one can think of hypergraphs
%instead of graphs, and allow for any number of sellers to access the same buyers population.}
%We also assume that every buyer is willing to pay $1$ for a unit of the good, and that sellers all have marginal cost of $0$ and are able
%to supply any quantity (sellers compete in prices).

As each market has access to at most two sellers,
it is natural to represent a network by a graph as follows. Each seller is represented by a node in the graph.
If a market has  access to only a single seller, we say that this market is {\em captive}.
We write $\alpha_i$ for the {\em size} of the captive market of seller $i$, where $\alpha_i = 0$ if
seller $i$ has no captive market.
Note that we assume without
loss of generality that each seller has at most one captive market, since having two or more is equivalent to
having one with the combined size.
We write $\vec{\alpha}=(\alpha_1,\alpha_2,\ldots,\alpha_n)$.
If a market has access to two sellers $i$ and $j$, we represent that market by an edge from node $i$ to node $j$, and use $\beta_{i,j}$ to denote
the size of that market.
We use $N(i)$ to denote the set of sellers that share a market with seller $i$, and write $\beta_i=\sum_{j\in N(i)} \beta_{i,j}$.
See Figure \ref{fig:simple} for an illustration.
% Initially we will assume that $m_e=1$ for every edge $e$ and think of this as a single buyer.
% Any buyer is willing to pay $1$ for an item.

A network defines the following pricing game between the sellers.  
Each seller needs to offer a price per unit of the good.
Each edge (market) buys from the incident node (seller) that offers the lowest price.
A captive market always buys from its associated node.
Formally one needs to specify a tie breaking rule for the case of a tie, but we will later show that ties never occur in equilibrium
(see Section~\ref{sec:eq}), so from that point on will usually omit tie-breaking considerations from our discussion and notation.
%so  breaking will not make any difference in equilibrium
%as ties happen with measure zero in equilibrium (see Section~\ref{sec:eq}).
Note that each seller offers the {\em same} price to all available (i.e.\ incident) markets (edges).
Sellers offer prices simultaneously, and can use randomization to determine prices.
We assume that  all sellers are risk neutral.

Consider the case that each seller $j$ offers price $x_j$,
the utility of seller $i$ with price $x_i$ in this case is
$$u_i(x_1,x_2,\ldots,x_n) = x_i\left(\alpha_i + \sum_{j\in N(i)} \beta_{i,j} \cdot \chi_{x_i<x_j} \right)$$
Here $\chi_{x_i<x_j}$ is an indicator taking value $1$ if $x_i<x_j$, and $0$ otherwise;  formally, this models $i$ loosing in case of a tie.
As mentioned above, changing the tie breaking rule will result in exactly the same equilibria.

A mixed strategy of seller $i$ can be represented by a CDF $F_i$ with support $S_i$.
The support of $F_i$ is (w.l.o.g) contained in $[0,1]$ (as buyers are not willing to pay more than 1 per item).
We use $F_i^-(x)=\sup_{y<x} F_i(y)$ to denote the probability that $i$ puts strictly below $x$.
If $A_i(x)= F_i(x)- F_i^-(x) >0$ we say that $F_i$ has an {\em atom} at $x$ and $A_i(x)$ is its size.
We denote the probability that $i$ places on at least $x$ by $\Fb_i(x) = 1-F^-_i(x)$.
%We say that $F_i$ has an {\em atom} at $x$ if $A_i(x)>0$, and in this case say that $A_i(x)$ is the {\em size of the atom} of $i$ at $x$.
A point $x$ is a {\em boundary (transition) point} for seller $i$ if every open interval containing $x$ intersects $S_i$ but is not contained in $S_i$.
Note that if $S_i$ is a collection of intervals, then the set of boundary points is precisely the set of endpoints of these intervals.
% A {\em transition point} for seller $i$ is a point $x\in (0,1)$ in which for any
%small enough $\epsilon>0$,
%either $F_i(x-\epsilon)=F_i(x)< F_i(x+\epsilon)$ or $F_i(x-\epsilon)<F_i(x)=F_i(x+\epsilon)$.\footnote{$x=1$ is a
%transition point for $i$, if $i$'s supremum bid is $1$.}
We use $\sup_i$ and $\inf_i$ to denote the supremum and infimum of $S_i$. %in the support $S_i$.
% We allow for edges that only incident on one seller node (can be though of as the other node being a dummy node:
%a seller that has cost larger than 1), and we say that such a buyer (or a market) is {\em captive} by the adducent seller.

%Lemma~\ref{lemma:no-same-atom} in Section~\ref{sec:eq} shows that for any tie breaking rule, ties have measure zero.
We can now define $u_i(x, F_{-i})$, the utility of (risk-neutral) seller $i$ when declaring price
% MB2.9: maybe say that assuming he loses when there is a tie?
$x\in [0,1]$, when the other sellers price according to $F_{-i}$:% and no other seller has an atom at $x$. This utility is
\begin{equation}
\label{eq:utility}
u_i(x, F_{-i})=x\left(\alpha_i+\sum_{j\in N(i)} \beta_{i,j}\left(1-F_j(x)\right)\right)
\end{equation}

As mentioned, in Section~\ref{sec:eq} we show that ties do not matter and that no two neighboring sellers can both have an atom at $1$.
From that point on, when considering an equilibrium,
it would be notationally convenient to slightly deviate from the formula above that corresponds to $i$ loosing the tie with $j$ as we formally defined.
Instead, for a seller $i$ that has a neighbor $j$ with an atom at 1
we will 
% MB2.9:
replace the above by 
%take 
the formula that corresponds to $i$ winning the tie at 1 against $j$ and define:
%When seller $i$ has a neighbor $j$ which has an atom at $1$, It may happen that, due to the tie breaking rule,
%$1$ is not best reply for $i$ even though everything in
%some open interval $(z,1)$ is best-reply for him.
%Nevertheless, it may still be convenient to think of $1$ as being in the support of $i$, which is us is a closed set.
%(although it might be in his support, as the support is a closed set).
%It is therefore useful to abuse notation slightly and define $u_i(1)$ to be
$$u_i(1,F_{-i}) = \left(\alpha_i+\sum_{j\in N(i)} \beta_{i,j}\left(1-F^-_j(1)\right)\right).$$
%(This is instead of using $(1-F_j(1))$, that corresponds to $i$ loosing the tie against $j$.)
This is notationally convenient as it maintains $u_i(1,F_{-i})=\lim_{x_i \rightarrow 1} u_i(x_i,F_{-i})$ so
%$1$ is the supremum of the support, it would also have optimal utility.
%(this simplifies notation as it allows us to avoid the extra notation of taking limits as the price increases to $1$ at every place).
in many arguments this avoids the extra notation of taking limits as $x_i$ approaches $1$.
%Since ties have measure zero (Lemma~\ref{lemma:no-same-atom} in Section~\ref{sec:eq}), this modified definition has no impact on equilibria of our game.
% MB2.9: this is new, I moved it here and reworded. 
In particular, this notation is useful as it allows us to think of every price in the support $S_i$ as being optimal for $i$. This is trivially true for every point in which the utility of $i$ is continuous. As atoms only happen at $1$, the price of $1$ is the only possible point of discontinuity. With this definition of $u_i(1,F_{-i})$ the utility of seller $i$ with supremum price of $1$ is also optimal at $1$.
We use $u_i$ to denote the equilibrium utility of seller $i$.  
 %Additionally, independent of any tie breaking rule, $u_i(0, F_{-i})=0$ (when a seller sells at price of 0, he has no revenue).
Additionally, when $F_{-i}$ is clear from context we will abuse notation and write $u_i(x)= u_i(x, F_{-i})$.

%A {\em transition point} for seller $i$ is a point $x\in (0,1)$ in which for any small enough $\epsilon>0$, either $F_i(x-\epsilon)=F_i(x)< F_i(x+\epsilon)$ or $F_i(x-\epsilon)<F_i(x)=F_i(x+\epsilon)$. Additionally, $x=1$ is a transition point for $i$, if $i$'s supremum bid is 1. (MOSHE: I am not handling $0$ as we know there is a neighborhood  of $0$ in which no one bids) We assume that the graph has a single connected component (as all our result will hold for each connected component separately).

% Given the theorem in the rest of the paper we consider only connected graphs with at least one captive market and at least two sellers, and we call such graphs {\em non-trivial}.

A network consists of a graph and market sizes.
%We only consider networks with connected graphs, as all results will hold for each component separately.
We say that a network is \emph{non-trivial} if it is connected, has at least two sellers, and has at least one captive market.
For most of the paper we will focus on non-trivial networks.\footnote{Indeed, for disconnected graphs our results will hold for each component separately, and the degenerate case of no captive markets is solved in Theorem~\ref{thm:pure-eq} and thus is irrelevant to any later parts of the paper.}
%In most of the paper we focus on {\em non-trivial} networks, any such network is connected, it has at least two sellers and at least one captive market  (the case of no captive market is complete solved in Theorem~\ref{thm:pure-eq} and thus is irrelevant to any later parts of the paper).

\section{Simple Examples}
\label{sec:simple}

We begin by building some intuition for our pricing game by describing a few simple examples.  This intuition will be helpful when describing  general properties of equilibria in Section \ref{sec:eq} and the structure of equilibria in Section \ref{sec:sketches}.

The simplest network is a single seller that is a monopolist over a single market.  In this case he will price the item at 1 and extract all surplus.
Another simple network is the case of two sellers with no captive markets who share a single market; this is precisely a Bertrand competition (with marginal cost of 0).
%and a single shared population of buyers and no captive markets.
In this example the %sellers will compete all the way to price of $0$, and the
unique equilibrium is for both sellers to price the item at $0$ (regardless of the size of the shared market), and all surplus goes to the buyers.

We now consider two more interesting examples with non-trivial networks.

\begin{figure}
\centering
\begin{tabular}{c|c}
\includegraphics[scale = 0.62]{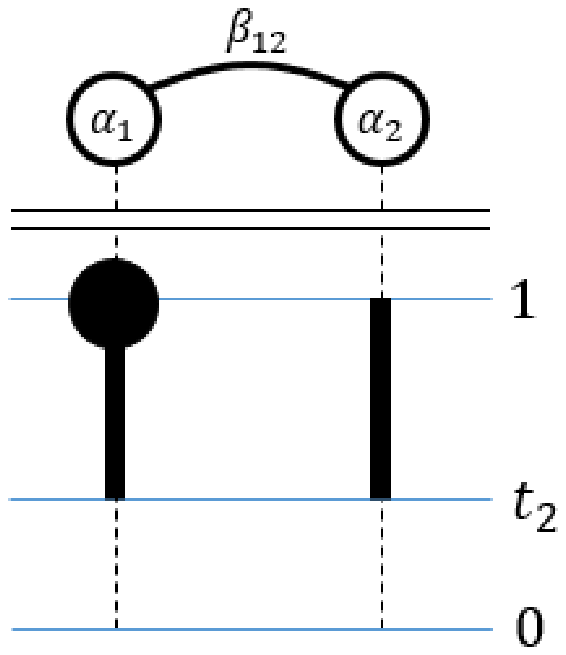} \hspace{1cm} &
\hspace{1cm} \includegraphics[scale = 0.4]{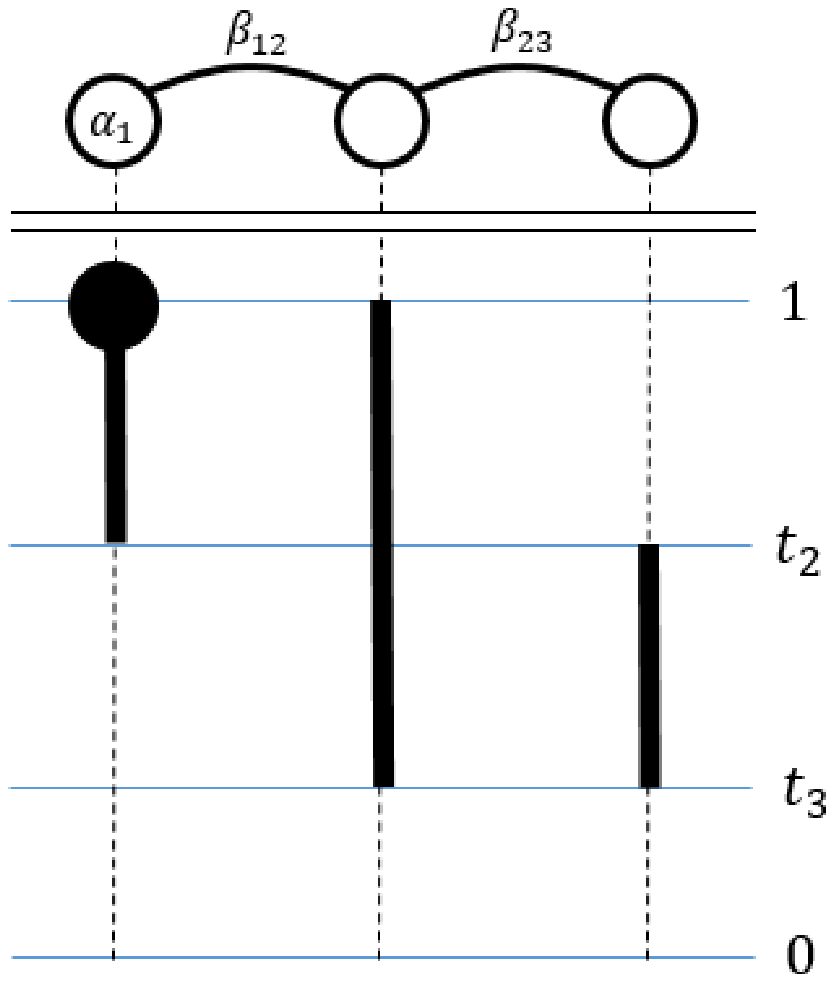} \\
(a) \hspace{1cm} & \hspace{1cm} (b)
\end{tabular}
\caption{An illustration of equilibria for (a) a line of two sellers with $\alpha_1 > \alpha_2$ and (b) a line of three sellers with a single captive market $\alpha_1$.  In each depiction, the network is represented at the top.  Captive market sizes are shown within nodes (where a blank node indicates no captive market), and shared market sizes are shown adjacent to their respective edges.  The support of each seller's pricing strategy, as a subset of $[0,1]$, is shown below.  Thick lines indicate the range of prices that fall within each seller's support, and dark circles represent atoms at price $1$.}
\label{fig:simple}
\end{figure}

\begin{example}[General case of $2$ Sellers]
\label{example:line-2}
Consider two sellers that share a market, where additionally each seller has his own captive market. The captive markets have sizes $\alpha_1\geq \alpha_2>0$ and the shared market has size $\beta_{1,2}>0$.  Theorem
\ref{thm:necessary-eq} will imply that, in the unique equilibrium, the support of each seller's strategy is some interval $[t_2,1]$ where $0 < t_2 < 1$, and moreover seller $2$ has no atom at $1$.  See Figure \ref{fig:simple}(a).  It holds that $\Fb_1(t_2)=\Fb_2(t_2)=1$ and $\Fb_2(1)=0$.  If seller $1$ sets price $1$, he will sell only to his captive market and lose the shared market to seller $2$ with probability $1$.  On the other hand, if he sets price $t_2$, he will win the shared market with probability $1$.  Since prices $1$ and $t_2$ are both in the support of $F_1$, it must therefore hold that $\alpha_1=u_1(1)= u_1(t_2)= t_2(\alpha_1+\beta_{1,2})$, and thus $t_2 = \frac{\alpha_1}{\alpha_1+\beta_{1,2}}$.
Applying similar reasoning to seller $2$, we have $\alpha_2+\beta_{1,2}\Fb_1(1)=u_2(1)= u_2(t_2)= t_2(\alpha_2+\beta_{1,2})$, thus the size of the atom of seller $1$ at $1$ is
$\Fb_1(1)= \frac{t_2(\alpha_2+\beta_{1,2})- \alpha_2}{\beta_{1,2}}$. Note that seller $1$ has no atom if and only if the sellers are symmetric ($\alpha_1=\alpha_2$).

We can now explicitly find the CDFs, using the fact that sellers must be indifferent within their supports.
For every  $x\in [t_2,1]$ it holds that $\alpha_1=u_1(1)= u_1(x)= x(\alpha_1+\beta_{1,2}\Fb_2(x))$, and thus
$\Fb_2(x)= \frac{\alpha_1}{\beta_{1,2}}\cdot \left(\frac{1}{x}-1\right)$.
It also holds that
$ t_2(\alpha_2+\beta_{1,2})= u_2(t_2)= u_2(x)= x(\alpha_2+\beta_{1,2}\Fb_1(x))$, and thus
$\Fb_1(x)= \frac{1}{\beta_{1,2}}\left(\frac{t_2(\alpha_2+\beta_{1,2})}{x}- \alpha_2\right)$.
Note that the seller with the larger captive market gains nothing from the shared market (his utility is $\alpha_1$), while the other seller gains more than $\alpha_2$ when the sellers are asymmetric.
\end{example}

%Our final example considers $3$ sellers on a line, only the first has a captive market.
\begin{example}[$3$ Sellers in a line with $1$ captive market]
\label{example:line-3-single-captive}
In this example, seller $1$ has a captive market of size $\alpha_1>0$ and shares a market of size $\beta_{1,2}>0$ with seller $2$. Seller $2$ shares a market of size $\beta_{2,3}>0$ with seller $3$. Neither seller $2$ nor $3$ has a captive market.
As we prove in Section~\ref{sec:tree-line}, the unique equilibrium has the following form.  Seller $1$ has an atom of size $A =\Fb_1(1)$ at $1$.
For some $1=t_1>t_2>t_3>0$, the support of seller $1$ is $[t_2,t_1]$, the support of seller 2 is $[t_3,t_1]$, and the support of seller $3$ is $[t_3,t_2]$.  %The values $t_1, t_2, t_3$ are the boundary points in this equilibrium.  
%It holds that $\Fb_2(t_3)=\Fb_3(t_3)=1$, $\Fb_1(t_2)=1$,  $\Fb_3(t_2)=0$, $\Fb_2(t_1)=0$.  
See Figure \ref{fig:simple}(b) for a ``sketch'' of this equilibrium structure.

Given the form of the equilibrium, it is possible to solve for the values of $t_2, t_3$ and $\Fb_1(1)$ and $\Fb_2(t_2)$ 
in a method similar to Example \ref{example:line-2}.
This turns out to give $\Fb_2(t_2)=\frac{\beta_{1,2}}{\beta_{1,2}+\beta_{2,3}}$, $t_2 = \Fb_1(1) = \frac{\alpha_1}{\alpha_1+\beta_{1,2}\Fb_2(t_2)}$,   
and $t_3=\frac{t_2\beta_{1,2}}{\beta_{1,2}+\beta_{2,3}}$.   We now have the values of $F_i(t_j)$ for all $i$ and $j$,
and, similarly to Example \ref{example:line-2} we can deduce the 
full form of the CDFs $(F_1,F_2,F_3)$ that will be the piece-wise linear in $x^{-1}$ functions that agree with these values.
The general methodology of finding the equilibrium from the ``sketch'' is described in section \ref{sec:sketches} with the details
for general line networks in Section \ref{sec:tree-line}. 
\end{example}

%Solving this given

\section{Equilibrium Analysis}
\label{sec:eq}

In this section we study the existence and properties of equilibria in pricing networks.  We first establish that ties occur with probability $0$ in any equilibrium.  We then show that the non-occurence of ties implies that an equilibrium always exists.  Finally, we describe some general properties of every equilibrium.

\subsection{Tie Breaking}
\label{sec:tie-breaking}

We first show that any valid tie breaking rule results in the same set of equilibrium.  Moreover, in any equilibrium, the utility of each seller is independent of the tie-breaking rule.

A {\em valid tie breaking rule} specifies for every two sellers $i$ and $j$ that share a market, and every price vector $\vec{p}=(p_1,p_2,\ldots,p_n)$ with $p_i=p_j$, the fraction of the market that buys from $i$ and $j$ respectively: $f_{i,j}(\vec{p})\geq 0$ and $f_{j,i}(\vec{p})\geq 0$ that $i$ and $j$ respectively (where $f_{i,j}(\vec{p})+ f_{j,i}(\vec{p})\leq 1$). %win in the case of tie, the fractions satisfy $f_{i,j}(\vec{p})+ f_{j,i}(\vec{p})\leq 1$.
To determine the impact of tie-breaking, let us revisit the definition of seller utilities in case of a tie.
Consider the case that each seller $j$ offers price $p_j$.  The utility of seller $i$ is then
$$u_i(p_1,p_2,\ldots,p_n) = p_i\left(\alpha_i + \sum_{j\in N(i)} \beta_{i,j} \cdot \chi_{i,j}(\vec{p}) \right)$$
where $\chi_{i,j}(\vec{p})$ is the fraction of the market shared by $i$ and $j$ for which $i$ sells.
That fraction is $1$ if $p_i<p_j$, and is $f_{i,j}(\vec{p})$ if $p_i=p_j$.

We would like to compute the utility $u_i(x, F_{-i})$ when seller $i$ uses price $p_1$ and the others sample according to $F_{-i}$.
Define
$E_{i,j}(p_i,F_{-i}) =
E_{\vec{p}_{-i}\sim F_{-i}} [f_{i,j}(\vec{p}) | p_j=p_i]
$
As sellers are risk neutral, the utility obtained by seller $i$ when selecting price $x$, assuming others set prices according to $F_{-i}$, is
$$u_i(x, F_{-i})=x\left(\alpha_i+\sum_{j\in N(i)} \beta_{i,j}\left(1-F_j(x) + A_j(x) \cdot E_{i,j}(x,F_{-i})\right)\right).$$
%As $F_{-i}$ will be clear from the context we will abuse notation and denote $u_i(x)= u_i(x, F_{-i})$.

We can now show that tie breaking has no impact on the equilibria of the game.

\begin{theorem}
\label{thm:tie-breaking}
Fix any network.
If a profile of strategies is an equilibrium with some valid tie breaking rule, then that profile is an equilibrium for any other valid tie breaking rule. Moreover, in each such equilibrium, the utility of each seller is independent of the tie breaking rule.
\end{theorem}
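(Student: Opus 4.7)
The plan is to reduce the theorem to a single structural lemma: in any equilibrium under any valid tie-breaking rule $T$, no two neighboring sellers $i,j$ can have atoms at the same price. Granted this lemma, for any pair of neighbors the joint probability of a tie is $\sum_x A_i(x) A_j(x) = 0$ (using independence of mixed strategies across sellers), so the strategy profile induces the same distribution of payoffs under any two valid tie-breaking rules. This immediately preserves expected utility, and a short sandwich argument (outlined below) also preserves the best-response condition.

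To prove the lemma, I would argue by contradiction. Suppose neighbors $i$ and $j$ both have atoms at some price $p$. The case $p=0$ is trivial since every price of $0$ yields zero utility regardless of tie-breaking, so assume $p>0$. The pointwise validity condition $f_{i,j}(\vec{p})+f_{j,i}(\vec{p})\leq 1$, combined with independence of the sellers' mixed strategies, yields the conservation inequality $E^T_{i,j}(p,F_{-i}) + E^T_{j,i}(p,F_{-j}) \leq 1$, so at least one side, say $i$, satisfies $E^T_{i,j}(p,F_{-i}) < 1$. I would then have $i$ deviate from $p$ to $p-\epsilon$ for small $\epsilon > 0$ chosen to lie outside the (countable) atom set of each $F_k$. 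A direct expansion shows that, as $\epsilon \to 0^+$, $u_i^T(p-\epsilon,F_{-i})$ tends to $p\bigl(\alpha_i + \sum_k \beta_{i,k}(1-F_k^-(p))\bigr)$, which exceeds $u_i^T(p,F_{-i})$ by at least the strictly positive quantity $p \beta_{i,j} A_j(p)\bigl(1-E^T_{i,j}(p,F_{-i})\bigr)$. For sufficiently small generic $\epsilon$ this deviation is strictly profitable, contradicting the requirement that every atom of a best-response strategy attain the equilibrium value.

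Given the lemma, I would fix an equilibrium $(F_i)$ under $T$ with equilibrium utilities $u_i^*$, let $T'$ be any other valid tie-breaking rule, and write $\bar{u}_i(x,F_{-i}) = x\bigl(\alpha_i + \sum_{k\in N(i)} \beta_{i,k}(1-F_k^-(x))\bigr)$ for the utility $i$ would attain by winning every tie at $x$. Any rule $T'$ satisfies $u_i^{T'}(x,F_{-i}) \leq \bar{u}_i(x,F_{-i}) = \lim_{y\to x^-} u_i^T(y,F_{-i}) \leq u_i^*$, where the last inequality comes from applying the equilibrium condition under $T$ along $y \to x^-$; so no deviation under $T'$ can beat $u_i^*$. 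By the lemma, $F_i$-almost every $x \in S_i$ avoids the countable atom set of each neighbor, so $u_i^{T'}(x,F_{-i}) = u_i^T(x,F_{-i}) = u_i^*$ on that set, and the expected payoff under $T'$ equals $u_i^*$ as well. Hence $(F_i)$ remains an equilibrium under $T'$ with identical utilities.

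The main obstacle will be the structural lemma, specifically balancing the discrete jump $p\beta_{i,j}A_j(p)(1-E^T_{i,j})$ captured by undercutting a neighbor's atom against the continuous $O(\epsilon)$ loss from the price decrease; the conservation inequality $E^T_{i,j}+E^T_{j,i}\leq 1$, distilled from pointwise validity plus independence of strategies, is the essential quantitative input that forces strict positivity on at least one side. Once that lemma is in hand, the rest is a clean sandwich between the tie-breaking-independent upper envelope $\bar{u}_i$ and the left-limit of $u_i^T$.
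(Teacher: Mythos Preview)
Your proposal is correct and follows essentially the same route as the paper: both reduce the theorem to the structural lemma that no two neighboring sellers share an atom at a positive price, and both prove that lemma by the identical undercutting argument (pick the side with $E_{i,j}<1$ and deviate slightly below the shared atom). Your sandwich argument via the upper envelope $\bar{u}_i$ makes the preservation of the best-response condition more explicit than the paper does --- the paper simply asserts that ``ties at positive prices have measure zero'' suffices --- but this is an elaboration of the same idea rather than a different approach.
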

\begin{proof}
As ties at  price $0$ do not influence seller utilities, it is enough to show that ties at positive prices have measure zero in any equilibrium.
To prove this it is enough to prove the following lemma.  %, whose proof appears in Appendix~\ref{app:tie-breaking}.

\begin{lemma}
\label{lemma:no-same-atom}
Fix any valid tie breaking rule.
In any network and any equilibrium, no two sellers who share a market both have an atom at the same positive price.
\end{lemma}
\begin{proof}
The lemma follows from the fact that for one seller $i$, a slight decrease in the price will allow $i$ to win over the atom for sure (instead of just a fraction of the time due to tie breaking) and increase his utility. We next formalize this claim.

%In any graph and any equilibrium, if $x>0$ is an optimal price for seller $i$, then there is no seller that shares a market with $i$ that has an atom at $x$.

Assume that $i$ and $j$ share a market and both have an atom at $x>0$. We assume without loss of generality that $E_{i,j}(x,F_{-i})<1$ (otherwise replace $i$ and $j$. Note that $E_{i,j}(x,F_{-i}) + E_{j,i}(x,F_{-i})\leq 1$).

Note that $x$ is an optimal price for seller $i$.
Assume that a seller $j$ that shares a market with $i$ and has an atom of size $A_j(x)>0$.
We show that there is a price $y<x$ with $u_i(y)>u_i(x)$, contradicting the assumption that $x$ is optimal for $i$.
Indeed, for $y<x$

$$ u_i(y)=y\left(\alpha_i+\sum_{j\in N(i)} \beta_{i,j}\left(1-F_j(y) +
A_j(y) \cdot E_{i,j}(y,F_{-i})\right)\right)\geq
y\left(\alpha_i+\sum_{j\in N(i)} \beta_{i,j}\left(1-F_j(y)\right)\right)$$

thus
$$\lim_{y\rightarrow x, y<x} u_i(y)\geq  \lim_{y\rightarrow x, y<x} y\left(\alpha_i+\sum_{j\in N(i)} \beta_{i,j}\left(1-F_j(y)\right)\right)=
x\left(\alpha_i+\sum_{j\in N(i)} \beta_{i,j}\left(1-F^-_j(x)\right)\right) = $$

$$ x\left(\alpha_i+\sum_{j\in N(i)} \beta_{i,j}\left(1-F_j(x) + A_j(x)\right)\right) >
x\left(\alpha_i+\sum_{j\in N(i)} \beta_{i,j}\left(1-F_j(x) + A_j(x) \cdot E_{i,j}(x,F_{-i})\right)\right) = u_i(x)
$$
where the strict inequality follows from the existence of a seller $j$ that is a neighbor of $i$ for which it holds that $j$ has an atom at $x$ ($A_j(x)>0$) and $E_{i,j}(x,F_{-i})<1$.
\end{proof}

This concludes the proof of the theorem.
\end{proof}

% Given the theorem in the rest of this paper we  assume that for any sellers $i$ and $j$ that share a market,  and for any price $p$ it holds that $f_{i,j}(p) = f_{j,i}(p) = 0 $. This allows us to simplify notation when considering a seller $i$ with $\sup_i=1$ that has neighbors with an atom at $1$, as with these fractions the price of $1$ would also be an optimal price for $i$ (otherwise we would have to look at the limit as the price tends to $1$).

We note that Lemma~\ref{lemma:no-same-atom} implies that the utility of every seller at every point smaller than $1$ is continuous in his price.  Thus any price in $S_i$, including the boundary of $S_i$, %(except possibly at $1$),
is optimal for the seller.
% Moreover, if the supremum of the support is $1$ then $1$ would be optimal if we are to assume that ties are broken in favor of the seller. We adopt that convention to avoid the extra notation of taking limits as the price increases to $1$.

\subsection{Existence of equilibrium}

% Although pure equilibria fail to exist, a mixed equilibrium is guaranteed to exist.
We show that a mixed equilibrium is guaranteed to exist for any network.
This is a non-trivial claim, since the strategy space is infinite and utilities are discontinuous.
\begin{theorem}
\label{thm:equil.exists}
In any network there exists a mixed equilibrium.
\end{theorem}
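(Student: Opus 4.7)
The plan is to invoke the existence theorem of Simon and Zame~\cite{Simon-Zame}, which is tailored to games with discontinuous payoffs such as ours. The high-level scheme is: verify Simon-Zame's hypotheses for the pricing game, obtain a mixed equilibrium together with an endogenously chosen tie-breaking rule, and then appeal to Theorem~\ref{thm:tie-breaking} to transport the equilibrium to every valid tie-breaking rule.

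First I would check the basic regularity assumptions. Each seller's strategy space is the compact metric interval $[0,1]$; payoffs are uniformly bounded (by $\alpha_i+\beta_i$) and Borel measurable, with discontinuities confined to the diagonal sets $\{p_i=p_j\}$ for neighboring sellers $i,j$. The non-trivial Simon-Zame hypothesis is that the \emph{sum} of sellers' payoffs be upper semicontinuous in the strategy profile. In our setting this sum is in fact continuous: a captive market contributes $\alpha_i p_i$, and a shared edge $(i,j)$ always contributes aggregate revenue $\beta_{i,j}\min(p_i,p_j)$ regardless of tie-breaking, since at a tie the full market sells at the common price and any valid split of $\beta_{i,j}$ aggregates to the same total. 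Summing over all captive markets and shared edges yields a manifestly continuous function of $(p_1,\ldots,p_n)$.

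Invoking Simon-Zame then produces a mixed profile $(F_1,\ldots,F_n)$ together with a measurable selection at discontinuity points, assigning to every tied price vector a feasible allocation of the tied shared markets, such that the profile is a mixed Nash equilibrium under that selection. The selection is by construction a valid tie-breaking rule in the sense of Section~\ref{sec:tie-breaking}, as it simply prescribes non-negative fractions summing to at most one. Theorem~\ref{thm:tie-breaking} then transports the equilibrium to every other valid tie-breaking rule, completing the proof.

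The main technical hurdle is confirming that Simon-Zame's ``endogenous sharing rule'' can indeed be read as a valid tie-breaking rule in our framework. This reduces to the observation that at any tied price vector involving neighbors $i$ and $j$, the set of feasible vector payoffs obtainable by splitting the market $\beta_{i,j}$ is exactly the convex hull of the two extreme allocations (all to $i$ or all to $j$, at the common price); consequently any selection furnished by Simon-Zame translates directly into non-negative fractions $f_{i,j},f_{j,i}$ with $f_{i,j}+f_{j,i}\le 1$, matching the paper's definition.
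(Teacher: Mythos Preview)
Your approach is essentially the paper's: invoke Simon--Zame to obtain a mixed equilibrium together with an endogenous sharing rule, observe that this rule is a valid tie-breaking rule in the sense of Section~\ref{sec:tie-breaking}, and then apply Theorem~\ref{thm:tie-breaking} to transport the equilibrium to every valid tie-breaking rule. The paper does exactly this.

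One bookkeeping correction: the hypothesis you verify---upper semicontinuity of the \emph{sum} of payoffs---is the Dasgupta--Maskin condition, not the Simon--Zame one. Simon--Zame instead require that payoffs be defined and continuous on a dense subset of the profile space; the paper accordingly just notes that the set of tie-free profiles is dense and that utilities are continuous there. Your continuity-of-the-aggregate computation is correct (modulo the caveat that the paper's definition allows $f_{i,j}+f_{j,i}\le 1$, so strictly you get upper semicontinuity rather than continuity), but it is not the hypothesis Simon--Zame asks for. This does not affect the validity of the argument---both sets of hypotheses are satisfied---only which theorem's checklist you are running through.
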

\begin{proof}
The existence of a mixed equilibria in our game follows from the general results of \cite{Simon-Zame}.  They consider
general games where the strategy sets are compact metric spaces and the utility functions are only defined to be continuous on a dense subset
of the space of strategy profiles.  Their main motivation is scenarios where the utilities are continuous everywhere except at sparse
``tie points'' in which some discontinuity occurs. This is exactly the case we have in our setting where the strategy set of a seller is
the interval $[0,1]$ and the utility of every seller is continuous (linear in his own price) everywhere except at points where
his price equals that of another seller, in which case a discontinuous jump in utility occurs.  To place our setting into their
formalism we simply consider the subspace of strategy profiles that have no ties, which is a dense subset, and over this subset the utilities in our game are continuous.

The main result of \cite{Simon-Zame} is that as long as we allow our equilibrium to {\em endogenously} choose ``tie-breaking'' utilities
for the strategy profiles
that lie outside the dense subset then a mixed Nash equilibrium exists.  Specifically, the endogenously chosen profile of utilities lies in the convex
hull of the closure of the graph of utilities in the dense subset over which the utility function was exogenously defined and continuous.
In our setting, at a point with a tie between sellers and $i$ and $j$ the endogenously chosen utilities for $i$ and $j$ will be some convex combination
of the utility when $i$ wins the market in case of tie and when $j$ does so.  That corresponds to
each of the two sellers winning some fraction of the market in a tie, with the
sum of the fractions being exactly 1.

At this point we can invoke the fact that, for our games, the tie breaking rule does not matter as discussed in Section \ref{sec:tie-breaking}: for the endogenously-chosen tie breaking rule,
a mixed Nash equilibrium exists by the results of \cite{Simon-Zame}.
This tie breaking rule certainly falls into the family of tie-breaking rules considered in Section \ref{sec:tie-breaking}.  Therefore,
for any other tie-breaking rule in this family, the same profile of mixed strategies is still a mixed-Nash equilibrium.
\end{proof}

Theorem \ref{thm:equil.exists} shows that an equilibrium exists, but is it unique? In the example presented in Section~\ref{example:non-eq-utils} we show that there may exist multiple equilibria.  Moreover, these equilibria are truly distinct from the perspective of the sellers, in the sense that they are not utility-equivalent (i.e.\ some sellers' utilities differ between the equilibria).

Is it possible that a \emph{pure} equilibrium exists?
We observe that when at least one seller has a captive market, a pure equilibrium never exists.
Recall that a non-trivial network is connected, has at least two sellers, and has at least one captive market.
\begin{observation}
\label{obs:no-pure}
If a network is non-trivial then
% If there is at least one captive market
there does not exist a pure equilibrium (that is, in any equilibrium at least one seller uses a mixed strategy).
%Moreover, in any equilibrium, any seller with no captive market has no atoms (must be playing a mixed strategy).
\end{observation}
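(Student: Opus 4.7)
The plan is to suppose for contradiction that a pure Nash equilibrium $(p_1, \dots, p_n)$ exists, classify every seller $k$ according to whether $p_k = 0$, $p_k \in (0,1)$, or $p_k = 1$, derive strong structural restrictions in each class, and then exploit connectivity together with the existence of some seller $i$ with $\alpha_i > 0$.

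The opening move is to invoke Lemma~\ref{lemma:no-same-atom}. Since a pure strategy places an atom of size one at its price, no two neighbors $i,j$ in the putative pure equilibrium can satisfy $p_i = p_j > 0$; hence either their prices differ or both are $0$. This fact feeds into each of the three subsequent cases.

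Next I would analyze each price class. If $p_k = 0$, then $u_k = 0$; the deviation $p_k' = 1$ yields utility $\alpha_k$, forcing $\alpha_k = 0$, and the deviation $p_k' = p_\ell/2$ for any neighbor $\ell$ with $p_\ell > 0$ yields $(p_\ell/2)\beta_{k,\ell} > 0$, so every neighbor must also price at $0$. If $p_k = 1$, the tie-free observation above immediately implies that no neighbor prices at $1$. If $p_k \in (0,1)$, the absence of ties at $p_k$ means that $u_k$ is continuous in some neighborhood of $p_k$ and is linear to the right of $p_k$ with slope $\alpha_k + \sum_{j : p_j > p_k}\beta_{k,j}$; since this is a sum of nonnegative terms, optimality of $p_k$ forces the slope to vanish, so $\alpha_k = 0$ and no neighbor has $p_j > p_k$, hence (combined with the no-tie fact) every neighbor prices strictly below $p_k$. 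Then $u_k = 0$, and the same undercutting argument as in the $p_k = 0$ case shows that in fact every neighbor must price at $0$.

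The conclusion then follows from connectedness. By non-triviality some seller $i$ has $\alpha_i > 0$; the two lower classes were shown to force $\alpha_k = 0$, so necessarily $p_i = 1$. Because $n \ge 2$ and the network is connected, $i$ has at least one neighbor $j$, and by the analysis of the $p_k = 1$ class we have $p_j < 1$. But the $p_j = 0$ and $p_j \in (0,1)$ analyses each conclude that \emph{every} neighbor of $j$ prices at $0$; applied to the neighbor $i$ of $j$ this gives $p_i = 0$, contradicting $p_i = 1$.

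The main obstacle is the middle case $p_k \in (0,1)$: one needs to argue that no strict mixing around $p_k$ can mask a profitable infinitesimal upward deviation. This is where Lemma~\ref{lemma:no-same-atom} is essential, because it rules out the possibility that a neighbor ties at $p_k$ and thereby hides a discontinuity of $u_k$ at $p_k$; once ties are excluded, the slope analysis pins down the whole local geometry and forces the cascade of conclusions above.
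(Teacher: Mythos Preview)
Your proof is correct, but it takes a different route from the paper's. The paper argues in two short steps: first, using connectivity and the existence of a captive market, it shows that no seller can price at $0$ in a pure equilibrium (else some zero-priced seller has a positive-priced neighbor and can profitably undercut); second, it takes a seller $i$ with \emph{minimal} positive price, notes that by Lemma~\ref{lemma:no-same-atom} all of $i$'s neighbors price strictly higher, and observes that $i$ can raise its price slightly while still winning every market --- an immediate contradiction. Your approach instead classifies every seller by whether $p_k = 0$, $p_k \in (0,1)$, or $p_k = 1$, and extracts stronger local structure (in particular that $p_k < 1$ forces $\alpha_k = 0$ and all neighbors at $0$), before closing with a connectivity argument focused on the captive-market seller. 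Both proofs rest on the same two ingredients --- Lemma~\ref{lemma:no-same-atom} to rule out positive-price ties, and the undercutting deviation --- but the paper's ``take the minimum and raise it'' move is more direct and avoids the full case split, while your classification makes explicit the additional structural fact that any seller pricing strictly below $1$ must have no captive market.
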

\begin{proof}
% By Corollary~\ref{cor:captive-positive} the infimum price of every seller is positive.
Assume that a pure equilibrium exists.  Note that not all sellers can choose price $0$, as a seller with a captive market would generate positive utility by selecting a positive price.
We further claim that \emph{no} seller can choose price $0$.  Indeed, if some seller chooses price $0$, then there exists a seller that chooses price $0$ and that has a neighbor $j$ that chooses positive price $p_j > 0$.  In this case, this seller with price $0$ receives utility $0$, but would receive positive utility (from the market shared with $j$) if he chose price $p_j / 2$.  This contradicts the equilibrium assumption, and hence no seller chooses price $0$.

%Suppose seller $i$ always prices at $p_i > 0$.
%Consider a seller $i$ with minimal price $p_i$.
Let $i$ be a seller with minimal price $p_i > 0$.
By Lemma~\ref{lemma:no-same-atom} none of his neighbors price at $p_i$. As $i$ has finitely many neighbors and they all price using a pure strategy, there is an $\epsilon>0$ such that if $i$ increases his price by $\epsilon$ he sells to exactly the same set of buyers for a higher price, increasing his utility. This contradicts the equilibrium assumption.
% The claim that any seller with no captive market has no atoms follows from Observation~\ref{obs:atoms-at-1}.
\end{proof}

Observation \ref{obs:no-pure} does not consider networks in which no seller has a captive market. For networks with no captive markets we show that the unique equilibrium is a pure equilibrium in which every seller selects price $0$.
%The proof of the following theorem appears in Appendix~\ref{app:existence}.
\begin{theorem}
\label{thm:pure-eq}
Consider any connected network with at least two sellers.
If no seller has a captive market then the unique equilibrium is for all sellers to price the good at $0$ ($F_i(0)=1$ for all $i$).  In this equilibrium every seller has zero utility.
\end{theorem}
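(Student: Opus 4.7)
The plan has two parts. Existence of the all-zero profile is immediate: if every $F_i$ concentrates on $0$, then for any deviation $y > 0$ by seller $i$ every neighbor $j$ has $F_j(y) = 1$, so $u_i(y) = y \sum_{j \in N(i)} \beta_{i,j}(1 - F_j(y)) = 0 = u_i$, giving no profitable deviation and every seller utility $0$.

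For uniqueness the plan is to assume $M := \max_i \sup S_i > 0$ and derive a contradiction by first producing a seller $k'$ with $\sup S_{k'} > 0$ and $u_{k'} = 0$, and then propagating. Given such a $k'$, the argument is short: since $u_{k'}(y) \le u_{k'} = 0$ for all $y$ and $u_{k'}(y) \ge 0$, we have $u_{k'}(y) = 0$ for every $y \in (0,1)$, forcing $F_j(y) = 1$ for every $j \in N(k')$ and every $y > 0$; by right-continuity $F_j(0) = 1$, so $\sup S_j = 0$. Applying the same reasoning to any such $j$ gives $F_m(y) = 1$ for every $m \in N(j)$ and every $y > 0$, so $\sup S_m = 0$ for every $m \in N(j)$; since $k' \in N(j)$ this forces $\sup S_{k'} = 0$, contradicting $\sup S_{k'} > 0$.

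To produce $k'$ I would pick any $k$ with $\sup S_k = M$ and compute $u_k(M)$. Because $\sup S_j \le M$ for every $j$, $F_j(M) = 1$, giving $u_k(M) = M \sum_{j \in N(k)} \beta_{k,j}(1 - F_j(M)) = 0$. If no neighbor of $k$ has an atom at $M$ and $M < 1$, $u_k$ is continuous at $M$, so $u_k = 0$ and I set $k' = k$. If some neighbor $j$ has an atom at $M$, then Lemma~\ref{lemma:no-same-atom} gives that neither $k$ nor any neighbor of $j$ has an atom at $M$, so $u_j$ is continuous at $M$ and $u_j(M) = M \sum_{m \in N(j)} \beta_{j,m}(1 - F_m(M)) = 0$ (all $m \in N(j)$ have $\sup S_m \le M$), yielding $u_j = 0$ and $k' = j$. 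For $M = 1$ the same argument works unless some seller $\ell$ has an atom at $1$; in that case Lemma~\ref{lemma:no-same-atom} forces $A_m(1) = 0$ for every $m \in N(\ell)$, so the convention $u_\ell(1) = \sum_m \beta_{\ell,m} A_m(1) = 0$ gives $u_\ell = 0$ and I take $k' = \ell$.

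I expect the main obstacle to be exactly this first step---producing $k'$---because it requires untangling the possible placements of atoms at $M$ and at $1$ via Lemma~\ref{lemma:no-same-atom} and the nonstandard tie-breaking convention for $u_k(1)$. Once $k'$ is in hand the rest is short: the assumption that the network is connected with at least two sellers ensures $k'$ has a neighbor $j$ with $k' \in N(j)$, and the contradiction is delivered by two propagation steps across a single edge using only the equilibrium inequality and non-negativity of $u_i$.
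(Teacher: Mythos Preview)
Your proof is correct and reaches the contradiction by a different route than the paper. Both arguments share the key first move: locate a seller at the maximum supremum $M>0$ whose equilibrium utility must be zero, handling atoms via Lemma~\ref{lemma:no-same-atom}. The paper then takes a global detour: it notes that if \emph{some} seller has $\sup_i>0$ then a neighbor has positive utility and hence positive infimum, and invokes Observation~\ref{obs:inf-price-prop} to propagate positive utility to \emph{every} seller in the connected graph; the zero-utility seller at $M$ is then an immediate contradiction. Your argument stays local: from $u_{k'}=0$ you force every neighbor of $k'$ to concentrate at $0$, then bounce back across a single edge to force $k'$ itself to concentrate at $0$, contradicting $\sup S_{k'}>0$. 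Your version is more self-contained (no forward reference to Observation~\ref{obs:inf-price-prop}), while the paper's is shorter once that observation is available. One stylistic suggestion: your case analysis for producing $k'$ can be compressed by the paper's device of choosing the max-supremum seller \emph{breaking ties in favor of one with an atom there}; that single choice already guarantees no neighbor has an atom at $M$ (either because none does, or by Lemma~\ref{lemma:no-same-atom}), collapsing your cases into one line.
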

\begin{proof}
Consider any equilibrium (either pure or randomized) and assume that not all sellers always price the good at $0$. This means that for some seller $i$ it holds that $F_i(0)<1$ and that $\sup_i>0$. This implies that any seller $j$ that is neighbor of $i$ has positive utility, and thus positive infimum price. By Observation~\ref{obs:inf-price-prop} every seller has a positive infimum price and positive utility in equilibrium. Consider a seller $j$ with maximal supremum price, breaking ties in favor of a seller that has an atom at that price. This means that if $j$ does not have an atom at $\sup_j$, none of his neighbors has an atom.
Moreover, if $j$ has an atom at $\sup_j$ it is still true that none of his neighbors has an atom at this price due to Lemma~\ref{lemma:no-same-atom}.
In any case $\sup_j$ is in the support of $j$ and when pricing at $\sup_j$ seller $j$ make no sell in any of his non-captive markets.
That seller has no captive market and thus he never sells and has zero utility, a contradiction.
\end{proof}

Motivated by Theorem \ref{thm:pure-eq}, we consider only non-trivial networks in later sections.
%, ones that have connected graphs with at least one captive market and at least two sellers. % (non-trivial networks).

\subsection{Properties of Equilibria of Non-Trivial Networks}

We next present some properties that every equilibrium in a non-trivial network must satisfy.

%We use the convention that if $i$ and $j$ are neighbors and  $sup_i=sup_j=1$ then ties are broken toward the seller with an atom if such a seller exists(it cannot be the case that both have an atom at $1$ in equilibrium).
\begin{theorem}
\label{thm:necessary-eq}
Fix any non-trivial network and equilibrium. The following holds:
\begin{enumerate}
\item \label{thm:captive-positive} There exists some positive $\delta>0$ (independent of the equilibrium) such that %in any equilibrium
the support of prices of every seller is contained in $[\delta,1]$. Moreover, % in any equilibrium
every seller $i$ has positive utility, and his utility is at least $\alpha_i$.
\item \label{thm:atoms-at-1} If seller $i$ has an atom, that atom must be at $1$, and it must be the case that $i$ has a captive market. None of the neighbors of $i$ has any atoms.
\item \label{thm:nec-1} If seller $i$ has no captive market and none of his neighboring sellers has an atom at $1$, then seller $i$'s supremum price is strictly less than $1$.
\item \label{thm:nec-union-support} For any seller $i$ the support $S_i$ excluding the point $1$
% excluding possibly the price of $1$ (in the case $i$ has an atom at $1$),
is contained in the union of the supports of %prices chosen by
the neighbors of $i$.
\item \label{thm:nec-local-max} If the supremum of the support of seller $i$ is at least the supremum of the support of all his neighboring sellers then the supremum of his support is $1$.
\item \label{thm:nec-util-alpha} There is at least one seller $i$ with utility $u_i=\alpha_i$. That seller has a captive market ($\alpha_i>0$) and $1 \in S_i$. Any seller with no captive market has no atoms.
\end{enumerate}
\end{theorem}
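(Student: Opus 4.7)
The plan is to prove the six parts in the stated order, each building on the previous ones along with Lemma~\ref{lemma:no-same-atom} and the convention $u_i(1,F_{-i}) = \alpha_i + \sum_{j \in N(i)} \beta_{ij}(1 - F_j^-(1))$.

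For Part~\ref{thm:captive-positive}, pricing at $1$ immediately yields $u_i \geq \alpha_i$ for every seller. To handle sellers with $\alpha_i = 0$, I would propagate positive utility along shortest paths from captive-market sellers: for any seller $j$ already known to have $u_j > 0$, since $\inf_j \in S_j$ is optimal and can achieve utility at most $\inf_j(\alpha_j + \beta_j)$, we have $\inf_j \geq u_j/(\alpha_j + \beta_j)$; any neighbor $k$ can then price just below $\inf_j$ to secure the shared market $\beta_{kj}$, giving $u_k \geq \inf_j \cdot \beta_{kj}$. Iterating along the connected graph yields a positive lower bound on every $u_i$ depending only on the network parameters, and the analogous relation $\inf_i \geq u_i/(\alpha_i+\beta_i)$ then supplies the uniform $\delta > 0$.

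For Part~\ref{thm:atoms-at-1}, suppose $i$ has an atom at $x$. By Lemma~\ref{lemma:no-same-atom} no neighbor has an atom at $x$, so each $F_j$ is continuous at $x$ and $u_i(y) = y \cdot g_i(y)$ with $g_i(y) = \alpha_i + \sum_j \beta_{ij}(1 - F_j(y))$ is right-continuous at $x$ and satisfies $g_i(x) = u_i(x)/x > 0$ by Part~\ref{thm:captive-positive}. Thus $u_i(x+\varepsilon) \approx (x+\varepsilon)g_i(x) > u_i(x)$ for small $\varepsilon > 0$, contradicting optimality unless $x = 1$. If $x = 1$ and $\alpha_i = 0$, Lemma~\ref{lemma:no-same-atom} gives $F_j^-(1) = 1$ for all neighbors and hence $u_i(1) = 0$, contradicting $u_i > 0$. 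The claim that neighbors carry no atoms follows because any such atom would have to sit at $1$, impossible by Lemma~\ref{lemma:no-same-atom}.

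Parts~\ref{thm:nec-1}, \ref{thm:nec-union-support}, and~\ref{thm:nec-local-max} all exploit the observation that on any open interval where every neighbor's CDF is constant, $u_i(y) = y \cdot C$ is strictly linear and increasing (provided $C > 0$, which follows from Part~\ref{thm:captive-positive}), so $i$ could strictly improve by moving right. For Part~\ref{thm:nec-union-support} any $x \in S_i \setminus \{1\}$ absent from every neighbor's closed support has such a neighborhood, contradicting $x \in S_i$. For Part~\ref{thm:nec-1}, if $\sup_i = 1$ and $\alpha_i = 0$ then by Part~\ref{thm:atoms-at-1} seller $i$ has no atom at $1$, so $1 \in S_i$ is optimal and $u_i = u_i(1) = \sum_j \beta_{ij}(1-F_j^-(1)) = 0$ when no neighbor has an atom at $1$, contradicting Part~\ref{thm:captive-positive}. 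For Part~\ref{thm:nec-local-max}, if $M = \sup_i < 1$ dominates all neighbors' suprema then Part~\ref{thm:atoms-at-1} rules out atoms at $M$, so $u_i(M) = M\alpha_i$; combined with $u_i \geq \alpha_i > M\alpha_i$ (when $\alpha_i > 0$) or $u_i > 0 = M\alpha_i$ (when $\alpha_i = 0$), we get $u_i(M) < u_i$, contradicting $M \in S_i$. For Part~\ref{thm:nec-util-alpha} pick $i^*$ with $\sup_{i^*}$ maximum, breaking ties in favor of one carrying an atom at $1$; Part~\ref{thm:nec-local-max} forces $\sup_{i^*} = 1$, the tie-breaking choice with Lemma~\ref{lemma:no-same-atom} prevents any neighbor from having an atom at $1$, and thus $u_{i^*} = u_{i^*}(1) = \alpha_{i^*}$; positivity from Part~\ref{thm:captive-positive} forces $\alpha_{i^*} > 0$, and the final claim that sellers without captive markets have no atoms is immediate from Part~\ref{thm:atoms-at-1}.

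The main obstacle is the interleaved dependence between the parts: utility computations at boundary points like $\inf_i$, $\sup_i$, and atom locations are clean only once Part~\ref{thm:atoms-at-1} excludes atoms below $1$, yet Part~\ref{thm:atoms-at-1} itself relies on positive utility from Part~\ref{thm:captive-positive}. The resolution is to prove Part~\ref{thm:captive-positive} using only the coarse tools of deviating to price $1$ or to prices strictly below neighbors' infima (which require only Lemma~\ref{lemma:no-same-atom} to handle tie issues), then establish Part~\ref{thm:atoms-at-1}, and only afterward invoke the clean no-atom-below-$1$ formulas throughout the subsequent parts.
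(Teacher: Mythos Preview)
Your overall organization mirrors the paper's, and Parts~\ref{thm:captive-positive}, \ref{thm:nec-1}, \ref{thm:nec-union-support}, \ref{thm:nec-local-max}, and~\ref{thm:nec-util-alpha} are essentially the paper's arguments (the paper packages them as Observations~\ref{obs:inf-price-prop}--\ref{obs:local-supremum} and Corollaries~\ref{cor:captive-positive}, \ref{util_captive}).

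There is, however, a genuine gap in your argument for Part~\ref{thm:atoms-at-1}. You claim that if $i$ has an atom at $x<1$ then, since no neighbor has an atom at $x$, right-continuity of $g_i$ gives $u_i(x+\varepsilon)\approx (x+\varepsilon)g_i(x)>u_i(x)$. But right-continuity of $g_i$ only yields $u_i(x+\varepsilon)=(x+\varepsilon)g_i(x+\varepsilon)\to x\,g_i(x)=u_i(x)$; it does \emph{not} give the strict inequality you need. If some neighbor $j$ has support accumulating at $x$ from above, then $g_i(x+\varepsilon)<g_i(x)$ for every $\varepsilon>0$, and the drop $x\bigl(g_i(x)-g_i(x+\varepsilon)\bigr)$ can dominate the gain $\varepsilon\,g_i(x+\varepsilon)$, so $u_i(x+\varepsilon)\le u_i(x)$ is perfectly consistent with everything you wrote. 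In other words, seller $i$ himself need not have a profitable rightward deviation from an atom at $x<1$.

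The paper's fix is to shift the contradiction to a \emph{neighbor}. First one shows (Observation~\ref{obs:neighbor-above}) that for any $z<1$ in $S_i$ there must exist a neighbor $j$ with $F_j(y)>F_j(z)$ for every $y>z$: otherwise all neighbors' CDFs are constant on some $[z,z+\delta)$, and then $i$ really could move right and strictly gain (this is the situation your heuristic would actually justify). Given such a $j$, there are optimal prices for $j$ arbitrarily close to $z$ from above; but then $j$ can deviate to a price just \emph{below} $z$, capturing $i$'s atom in the shared market $\beta_{ij}$ for an $\Omega(1)$ gain while sacrificing only $o(1)$ in price, contradicting optimality for $j$. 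So the atom at $x<1$ is ruled out via the neighbor's undercutting incentive, not via $i$'s own rightward move.
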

%The proof of the theorem can be found in Appendix~\ref{app:necessary-eq}.

The proof of the theorem follows from the following sequence of claims and observations.  Our first observation holds for any network.
\begin{observation}
\label{obs:inf-price-prop}
Fix any network and any equilibrium.
If there is at least one seller $i$ with support that has a positive infimum ($\inf_i>0$), then there exists some positive $\delta>0$ such that in any equilibrium the support of prices of every seller is contained in $[\delta,1]$. Moreover, in any equilibrium every seller $i$ has positive utility, and the utility is at least $\alpha_i$.
\end{observation}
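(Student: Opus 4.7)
My plan is to establish the observation in three steps: first, an unconditional lower bound $u_i \geq \alpha_i$; second, a propagation argument that spreads the ``positive infimum'' property along the edges of the (connected) network from the seller $i_0$ supplied by the hypothesis; and third, a finite minimum to extract the uniform $\delta$. The lower bound is immediate: the deviation ``price $1$'' yields $u_i(1, F_{-i}) = \alpha_i + \sum_{j \in N(i)} \beta_{i,j}(1 - F_j^-(1)) \geq \alpha_i$ (using the tie convention from Section~\ref{sec:tie-breaking}), so $u_i \geq \alpha_i$ unconditionally.

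For the propagation step I alternate two implications. \textbf{(A)} A positive infimum at $i$ forces positive utility at each neighbor $j$: if $\inf_i \geq a > 0$, then every $p \in (0, a)$ satisfies $F_i(p) = 0$, so
$$u_j(p, F_{-j}) \;\geq\; p \cdot \beta_{i,j}\bigl(1 - F_i(p)\bigr) \;=\; p \cdot \beta_{i,j},$$
and taking $p = a/2$ gives $u_j \geq (a/2)\,\beta_{i,j} > 0$. \textbf{(B)} Positive utility forces positive infimum: if $u_j > 0$, then for any $p \in S_j$ with $p < 1$ the remark following Lemma~\ref{lemma:no-same-atom} ensures that $u_j(\cdot, F_{-j})$ is continuous on $[0,1)$ and that every price in $S_j$ is optimal, so $u_j(p) = u_j$. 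Combined with the trivial bound $u_j(p) \leq p(\alpha_j + \beta_j)$, this forces $p \geq u_j/(\alpha_j + \beta_j)$, hence $\inf_j \geq u_j/(\alpha_j + \beta_j) > 0$ (the degenerate case $S_j \subseteq \{1\}$ gives $\inf_j = 1$ trivially). Alternating (A) starting from $\inf_{i_0} > 0$ and (B) along a BFS in the connected network, and noting that since (A) is symmetric $i_0$ itself is revisited via any neighbor, we conclude that every seller $k$ has $u_k > 0$ and $\inf_k > 0$.

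For the final step, since there are finitely many sellers, $\delta := \min_k \inf_k$ is positive and gives $S_k \subseteq [\delta, 1]$ uniformly. The main technical point is implication (B), which relies on the identity $u_j(p) = u_j$ at a boundary (non-atom) point $p < 1$ of $S_j$; that identity in turn relies on the continuity of $u_j$ on $[0,1)$, which would fail if a neighbor of $j$ had an atom at such a boundary point and is precisely what the remark following Lemma~\ref{lemma:no-same-atom} rules out.
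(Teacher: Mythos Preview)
Your proof is correct and follows essentially the same approach as the paper's: both propagate the ``positive infimum'' property to neighbors via a utility lower bound (your step (A)) and then convert positive utility back to a positive infimum via the trivial upper bound $u_j(p)\le p(\alpha_j+\beta_j)$ (your step (B)), iterating along the connected graph. The paper simply fuses (A) and (B) into a single step, writing directly $\inf_j \ge \frac{(\inf_i-\epsilon)(\alpha_j+\beta_{i,j})}{\alpha_j+\beta_j}$, and it derives positive utility and the bound $u_i\ge\alpha_i$ at the end rather than the beginning; these are cosmetic reorderings, not a different idea.
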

\begin{proof}
We show that any seller that has a neighbor with positive infimum price, also have positive infimum price. Indeed, assume that $i$ has $\inf_i>0$ and consider a seller $j$ that shares a market of size $\beta_{i,j}>0$ with $i$.  for small enough $\epsilon>0$, by pricing at $\inf_i-\epsilon>0$ seller $j$ can unsure utility of at least $(\inf_i-\epsilon)(\alpha_j+\beta_{i,j})>0$, thus any price $y$ in the support of $j$ must be at least
$\frac{(\inf_i-\epsilon)(\alpha_j+\beta_{i,j})}{\alpha_j + \beta_j}>0$.

The above claim implies the existence of $\delta>0$ such that in any equilibrium the support of prices of every seller is contained in $[\delta,1]$.
This implies that every seller $i$ has positive utility in equilibrium, as the utility must be at least as high as the utility achieved by pricing at $\delta/2$, which is least $(\alpha_i+\beta_i)\delta/2 >0$.

Finally, observe that in any equilibrium the utility of $i$ is at least $\alpha_i$, as by pricing the good at $1$ seller $i$ gets utility of at least $\alpha_i$ (for any strategies of the others).
\end{proof}

Given the lemma we prove the following corollary which implies Theorem~\ref{thm:necessary-eq} (\ref{thm:captive-positive}).
\begin{corollary}
\label{cor:captive-positive}
Fix any non-trivial network (connected with at least one captive market).
%Fix any connected graph in which there is at least one captive market.
There exists some positive $\delta>0$ such that in any equilibrium the support of prices of every seller is contained in $[\delta,1]$. Moreover, in any equilibrium every seller $i$ has positive utility, and the utility is at least $\alpha_i$.
\end{corollary}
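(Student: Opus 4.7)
The plan is to reduce the corollary directly to Observation~\ref{obs:inf-price-prop} by exhibiting a single seller whose support has a positive infimum in every equilibrium. Since the network is non-trivial, there is some seller $i^*$ with a captive market of positive size $\alpha_{i^*}>0$. I will argue that $\inf_{i^*}>0$ in any equilibrium. By pricing at $1$, seller $i^*$ is guaranteed at least $\alpha_{i^*}$ regardless of what the other sellers do, so the equilibrium utility satisfies $u_{i^*}\geq \alpha_{i^*}>0$. On the other hand, for any price $x\in S_{i^*}$, the utility of seller $i^*$ is at most $x(\alpha_{i^*}+\beta_{i^*})$, since this is the utility obtained when $i^*$ wins every shared market. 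Every price in the support must be optimal (this is noted right after Lemma~\ref{lemma:no-same-atom}), hence $x(\alpha_{i^*}+\beta_{i^*})\geq u_{i^*}\geq \alpha_{i^*}$ for every $x\in S_{i^*}$, which gives $\inf_{i^*}\geq \alpha_{i^*}/(\alpha_{i^*}+\beta_{i^*})>0$.

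Having established the hypothesis of Observation~\ref{obs:inf-price-prop}, the remainder of the corollary is immediate. The observation yields a single $\delta>0$ such that $S_j\subseteq[\delta,1]$ for every seller $j$, together with the bound $u_j\geq \alpha_j$ and strict positivity of all utilities. Connectivity of the network is used implicitly here to propagate the positive infimum from the distinguished seller $i^*$ to every other seller along the neighbor-propagation argument inside the observation.

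There is essentially no real obstacle to this argument; the only conceptual step is recognizing that the captive market gives seller $i^*$ an ``outside option'' of value $\alpha_{i^*}$ at price $1$, which forces his support to be bounded away from zero. Everything else is packaged in Observation~\ref{obs:inf-price-prop}. The one thing to be slightly careful about is the use of the ``every price in the support is optimal'' fact for seller $i^*$ at price $1$; but this is precisely the content of the remark following Lemma~\ref{lemma:no-same-atom}, since the only possible point of discontinuity of $u_{i^*}(\cdot,F_{-i^*})$ is $1$, where the definition $u_{i^*}(1,F_{-i^*})=\lim_{x\to 1^-}u_{i^*}(x,F_{-i^*})$ makes pricing at $1$ optimal as well.
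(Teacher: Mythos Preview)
Your proof is correct and follows essentially the same approach as the paper: exhibit a seller $i^*$ with $\alpha_{i^*}>0$, bound $\inf_{i^*}\ge \alpha_{i^*}/(\alpha_{i^*}+\beta_{i^*})$ via the captive-market outside option, then invoke Observation~\ref{obs:inf-price-prop}. The only difference is that the paper compresses the first two steps into a single sentence, while you spell out the intermediate inequality $u_{i^*}\ge\alpha_{i^*}$ explicitly; your final paragraph's worry about optimality at price $1$ is unnecessary, since all you need is that $u_{i^*}\ge u_{i^*}(1,F_{-i^*})\ge\alpha_{i^*}$ as a best-response bound, not that $1\in S_{i^*}$.
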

\begin{proof}
Consider some seller $i$ with a captive market of size $\alpha_i>0$.
If $i$ prices at $x$, his utility is at most $x(\alpha_i + \beta_i)$ (recall that $\beta_i$ is the total size of all non-captive markets of $i$), thus
any price in the support of $i$ must be at least $\frac{\alpha_i}{\alpha_i + \beta_i}>0$.
The claim now follows from Observation~\ref{obs:inf-price-prop}.
\end{proof}
Note that this in particular says that the profile in which all sellers post a price of $0$ is not an equilibrium when there is a captive market.
% Also note that for graphs with a single captive market, this means that no matter how far in the network a seller is from a captive market, he gains from its existence (as without it the only equilibrium is the all-zero equilibrium, see Theorem~\ref{thm:pure-eq} ).

\begin{observation}
\label{obs:neighbor-above}
Fix any non-trivial network and any equilibrium.
If price $z<1$ is in the support of seller $i$ then there exists a neighbor $j$ of $i$ such that for any $x>z$ it holds that $F_j(x)>F_j(z)$.
\end{observation}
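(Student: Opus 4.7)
The approach is proof by contradiction. Suppose that for every neighbor $j \in N(i)$ there exists some $x_j > z$ with $F_j(x_j) = F_j(z)$ (this is the negation of the conclusion, using that $F_j$ is non-decreasing). Since each $F_j$ is monotone, $F_j$ is then constant and equal to $F_j(z)$ on the entire interval $[z, x_j]$. Letting $\eta = \min_{j \in N(i)}(x_j - z) > 0$ (a minimum over the finitely many neighbors of $i$) and shrinking $\eta$ if necessary so that $z + \eta < 1$, we obtain an interval $(z, z + \eta)$ on which $F_j(y) = F_j(z)$ for every neighbor $j$ of $i$ simultaneously.

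The key computation uses the utility formula $u_i(y, F_{-i}) = y(\alpha_i + \sum_{j\in N(i)} \beta_{i,j}(1 - F_j(y)))$. For any $y \in (z, z+\eta)$ the bracketed term equals the corresponding bracket at $z$, so $u_i(y) = (y/z) \cdot u_i(z)$. Next I would argue that $u_i(z) \geq u_i$, the equilibrium utility: since $u_i(\cdot, F_{-i})$ is right-continuous in its first argument (each $F_j$ is a right-continuous CDF) and $z \in S_i$ means every right-neighborhood of $z$ carries positive $F_i$-mass concentrated on prices achieving the equilibrium utility $u_i$, a sequence of such optimal prices decreasing to $z$ combined with right-continuity forces $u_i(z) \geq u_i$. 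By Corollary~\ref{cor:captive-positive}, $u_i > 0$, hence $u_i(z) > 0$, so that $u_i(y) = (y/z)\,u_i(z) > u_i(z) \geq u_i$, contradicting optimality of $u_i$.

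The main subtlety is the step deducing $u_i(z) \geq u_i$ from $z \in S_i$, since $z$ need not be a point of positive $F_i$-mass and a neighbor of $i$ might in principle have an atom at $z$, making $u_i$ discontinuous from the left at $z$. However, right-continuity of $u_i$ at $z$ survives any such atoms because CDFs are right-continuous, so the standard fact that a mixed strategy places all its mass on utility-maximizing prices still propagates optimality to $z$ from the right. Tie-breaking at $z$ itself plays no role in the argument since we are only comparing $u_i(y)$ for $y > z$ with $u_i(z)$, both evaluated under the same fixed convention. Once this right-continuity argument is in place, the remainder is the elementary observation that if all neighboring CDFs are locally constant to the right of $z$, then a slight upward deviation strictly improves $i$'s payoff.
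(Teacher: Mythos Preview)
Your approach is the same as the paper's: assume by contradiction that every neighbor's CDF is constant on some interval $[z,z+\eta]$, and conclude that $i$ can strictly improve by pricing slightly above $z$. The paper's version is terser---it simply asserts that $i$ ``wins with positive probability with the price of $z$'' (citing Corollary~\ref{cor:captive-positive}) and then observes that raising the price preserves the winning probability while increasing revenue. Implicitly the paper is leaning on the remark after Theorem~\ref{thm:tie-breaking} that $u_i$ is continuous below $1$, so that any $z\in S_i$ with $z<1$ is optimal.

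There is one gap in your justification of $u_i(z)\ge u_i$. You write that ``$z\in S_i$ means every right-neighborhood of $z$ carries positive $F_i$-mass,'' but support membership only guarantees that every \emph{two-sided} neighborhood has positive mass; $z$ could be the right endpoint of a component of $S_i$, with all nearby mass strictly to the left of $z$. In that case your right-continuity argument does not apply, and if some neighbor had an atom exactly at $z$ then $u_i$ (under the lose-ties convention you use) would jump down at $z$, so $u_i(z)<u_i$ is conceivable. The paper's argument has essentially the same blind spot, since the continuity remark it invokes is only fully justified once one knows that atoms occur only at $1$---which is proved \emph{later}, using this very observation. The repair is easy: either note that the only downstream use of this observation (in the proof of Observation~\ref{obs:atoms-at-1}) applies it at a $z$ that is an atom of $F_i$, where $u_i(z)=u_i$ is immediate; or split into cases and, when the mass near $z$ lies only to the left, pass to the win-ties convention (equivalent by Theorem~\ref{thm:tie-breaking}) so that left-continuity of $u_i$ gives $u_i(z)=u_i$ regardless of any neighbor's atom at $z$.
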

\begin{proof}
By Corollary~\ref{cor:captive-positive} seller $i$ has positive utility and thus wins with positive probability with the price of $z$.
There must exist a neighbor $j$ of seller $i$ such that for any $x>z$ it holds that $F_j(x)>F_j(z)$, as otherwise a small enough increase in the price by $i$ will result with higher utility for him (he still wins the same buyers with the same positive probability, but for a higher price).
\end{proof}

The next observation implies Theorem~\ref{thm:necessary-eq} (\ref{thm:atoms-at-1}).
\begin{observation}
\label{obs:atoms-at-1}
Fix any non-trivial network and any equilibrium.
If seller $i$ has an atom, that atom must be at $1$, and it must be the case that $i$ has a captive market.
None of the neighbors of $i$ has any atoms.
\end{observation}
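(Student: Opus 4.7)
The plan is to establish the three assertions in sequence, each reducing to an optimality argument on a neighbor of $i$ or to Lemma \ref{lemma:no-same-atom}.

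\emph{Atoms can only occur at $1$.} Suppose for contradiction that $i$ has an atom of size $A_i(x) > 0$ at some $x \in (0,1)$; here $x > 0$ follows from Corollary \ref{cor:captive-positive}. The jump of $F_i$ at $x$ induces a downward jump in the utility of every neighbor $j$: one computes $\lim_{y \to x^-} u_j(y) - u_j(x) \geq x \beta_{j,i} A_i(x) > 0$, since $1 - F_i(y)$ loses mass $A_i(x)$ when $y$ crosses $x$ from below, while $u_j$ remains right-continuous at $x$ by right-continuity of the CDFs. Hence there exists $\epsilon_j > 0$ such that every price in $[x, x + \epsilon_j)$ is strictly worse for $j$ than some price just below $x$, so $F_j$ puts zero mass on $[x, x + \epsilon_j)$. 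Combined with Lemma \ref{lemma:no-same-atom}, which forbids $j$ from having an atom at $x$, this gives $F_j(y) = F_j(x)$ throughout $[x, x + \epsilon_j)$. Setting $\epsilon = \min_{j \in N(i)} \epsilon_j > 0$ over the finite neighborhood of $i$, the factor $\alpha_i + \sum_k \beta_{i,k}(1 - F_k(\cdot))$ is constant on $[x, x + \epsilon)$ and strictly positive (as $u_i(x) > 0$ by Corollary \ref{cor:captive-positive}), so $u_i(x + \epsilon/2) > u_i(x)$, contradicting the equilibrium optimality of $x \in S_i$.

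\emph{An atom at $1$ requires a captive market.} If $i$ has an atom at $1$ and $\alpha_i = 0$, then Lemma \ref{lemma:no-same-atom} forces $F^-_k(1) = F_k(1) = 1$ for every neighbor $k$ of $i$, so the paper's tie-breaking convention at $1$ yields $u_i(1, F_{-i}) = 0$, contradicting $u_i > 0$ from Corollary \ref{cor:captive-positive}. \emph{No neighbor of $i$ has an atom.} If some $j \in N(i)$ had an atom, the first part would force it to be at $1$, but then both $i$ and $j$ would have atoms at $1$ while sharing a market, again contradicting Lemma \ref{lemma:no-same-atom}.

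The main obstacle is the first part: one must carefully exploit the left-right asymmetry of $u_j$ near the hypothetical atom (left-limits are picked up by $F^-$, while right-continuity of CDFs controls values at and just above $x$) and secure a uniform $\epsilon$ over all of $i$'s finitely many neighbors, so that the profitability of a tiny upward deviation in the neighbors' local support gap can be transferred back to seller $i$ himself.
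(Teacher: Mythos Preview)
Your proof is correct. Parts two and three match the paper's argument exactly. For the first part, your route is dual to the paper's: the paper invokes Observation~\ref{obs:neighbor-above} to find a neighbor $j$ that must place mass arbitrarily close to $z$ from above, and then derives a contradiction from $j$'s side by having $j$ undercut the atom. You instead argue that the downward jump in $u_j$ at $x$ expels \emph{every} neighbor from a right-neighborhood $[x,x+\epsilon)$, and then derive the contradiction from $i$'s side, since $u_i$ becomes strictly increasing there. Your version is slightly more self-contained in that it does not cite Observation~\ref{obs:neighbor-above} (you effectively re-derive its content in the final step), at the cost of needing the uniform $\epsilon$ over the finite neighborhood; the paper's version pins the blame on a single neighbor and avoids that uniformization. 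Both arguments rest on the same discontinuity observation and on Lemma~\ref{lemma:no-same-atom}.
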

\begin{proof}
Assume in contradiction that in some equilibrium there is a seller $i$ with an atom at some $z<1$, which means that $z$ is in the support.
By Corollary~\ref{cor:captive-positive} seller $i$'s support has positive infimum ($\inf_i>0$), and thus has no atom at $0$, so we can assume that $z>0$.
% Additionally, by that corollary seller $i$ has positive utility in equilibrium.
% and thus wins with positive probability with the bid $z$.
By Observation~\ref{obs:neighbor-above} there exists a neighbor $j$ of $i$ such that for any $x>z$ it holds that $F_j(x)>F_j(z)$.
This means that $j$ has optimal prices arbitrarily close to $z$ (above $z$).
By Corollary~\ref{cor:captive-positive} $j$ wins with positive probability with any price in his support.
Now, seller $j$ can increase his utility by pricing at $y<z$ that is large enough, as he now also wins over the atom of $i$ but losses arbitrarily small in price (the formal argument is similar to the one presented in Lemma~\ref{lemma:no-same-atom} and is omitted).

Finally, if a seller has an atom at $1$ this means that his utility is $\alpha_i$ (as none of his neighbors has an atom at 1 by
Lemma~\ref{lemma:no-same-atom}). If he has no captive market this means that his utility is zero, in contradiction to
Corollary~\ref{cor:captive-positive}.

None of the neighbors of $i$ has any atoms as any such atom must be at $1$, but that is impossible by Lemma~\ref{lemma:no-same-atom}.
\end{proof}

% \subsection{Some Necessary Conditions for Equilibrium}

The next observation implies Theorem~\ref{thm:necessary-eq} (\ref{thm:nec-1}).
\begin{observation}
Fix any non-trivial network.
Consider any seller $i$ that has no captive market and assume that in some equilibrium none of his neighboring sellers has an atom at $1$. Then seller $i$'s supremum price in that equilibrium is strictly less than 1.
\end{observation}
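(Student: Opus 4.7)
The plan is to argue by contradiction. Suppose seller $i$ has no captive market, none of his neighbors has an atom at $1$, and yet $\sup_i = 1$. I will show that this forces the equilibrium utility $u_i$ to be zero, contradicting Corollary~\ref{cor:captive-positive} (Theorem~\ref{thm:necessary-eq}(\ref{thm:captive-positive})), which guarantees $u_i > 0$ in any non-trivial network.

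The first step is to analyze $u_i(x)$ for $x$ slightly below $1$. Since $\alpha_i = 0$, formula~(\ref{eq:utility}) gives
\[
u_i(x) = x \sum_{j \in N(i)} \beta_{i,j}\bigl(1 - F_j(x)\bigr).
\]
The key observation is that no neighbor $j$ has an atom at $1$, so $F_j$ is left-continuous at $1$, hence $\lim_{x \to 1^-} F_j(x) = F_j^-(1) = 1 - A_j(1) = 1$. Therefore $\lim_{x \to 1^-} u_i(x) = 0$.

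The second step is to combine this with $\sup_i = 1$. By definition of the supremum of the support, there exist points $x \in S_i$ arbitrarily close to $1$. For any such $x < 1$, the utility function is continuous (since no atoms occur below $1$, by Observation~\ref{obs:atoms-at-1}), and every point of $S_i$ is optimal, so $u_i(x) = u_i$. Picking $x \in S_i$ with $x$ close enough to $1$ that $u_i(x) < u_i/2$ (possible by the limit computed above) contradicts $u_i(x) = u_i > 0$. If instead $1$ itself lies in $S_i$, then using the paper's convention $u_i(1,F_{-i}) = \alpha_i + \sum_{j \in N(i)} \beta_{i,j}(1 - F_j^-(1)) = 0$, again contradicting $u_i > 0$.

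There is no real obstacle here; the argument is essentially a limit computation. The one minor subtlety is handling the two cases $1 \in S_i$ versus $1 \notin S_i$ (with prices only approaching $1$), but both reduce immediately to the fact that, in the absence of both a captive market and a neighbor atom at $1$, pricing at or near $1$ yields zero utility while equilibrium forces positive utility.
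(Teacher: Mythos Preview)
Your proof is correct and follows essentially the same approach as the paper: assume $\sup_i = 1$, observe that the absence of neighbor atoms at $1$ together with $\alpha_i = 0$ forces $u_i(x) \to 0$ as $x \to 1^-$ (equivalently, $u_i(1) = \alpha_i = 0$), and derive a contradiction with the positive-utility guarantee of Corollary~\ref{cor:captive-positive}. Your treatment is slightly more detailed in separating the cases $1 \in S_i$ and $1 \notin S_i$, but the argument is the same.
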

\begin{proof}
Seller $i$ has no captive market. Assume that $\sup_i=1$. As none of his neighbors has an atom at $1$, his utility is continuous at $1$ and is eqaul to $\alpha_i$ which is $0$ as he has no captive market.
But, if $i$ has a %on the other hand there is such a
captive market then he has positive utility by Corollary~\ref{cor:captive-positive}. A Contradiction.
\end{proof}

The next observation implies Theorem~\ref{thm:necessary-eq} (\ref{thm:nec-union-support}).
\begin{observation}
\label{obs:contained-support}
Fix any non-trivial network and any equilibrium.
For any seller $i$ the support $S_i$ excluding the point $1$
% excluding possibly the price of $1$ (in the case $i$ has an atom at $1$),
is contained in the union of the supports of %prices chosen by
the neighbors of $i$.
%For any seller $i$ the support of prices selected by $i$, excluding possibly an atom at $1$, is contained in the union of the supports of prices chosen by the competitors of $i$.
\end{observation}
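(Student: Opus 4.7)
The plan is to argue by contradiction. Suppose there is some $z \in S_i$ with $z < 1$ that does not lie in $S_j$ for any neighbor $j \in N(i)$. First I would observe that the union $\bigcup_{j \in N(i)} S_j$ is a finite union of closed sets, hence closed, so its complement contains an open interval $(z-\epsilon, z+\epsilon)$ around $z$. Shrinking $\epsilon$ if necessary, I can arrange $z + \epsilon < 1$, which is the only place where the hypothesis $z < 1$ is used.

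Next I would analyze $u_i(x)$ on this interval. Since no neighbor places probability on $(z-\epsilon, z+\epsilon)$, each CDF $F_j$ for $j \in N(i)$ is constant there. Using the utility formula
\[
u_i(x) = x\bigl(\alpha_i + \sum_{j \in N(i)} \beta_{i,j}(1 - F_j(x))\bigr),
\]
the parenthesized quantity is a constant $C$ on the interval, and by Corollary~\ref{cor:captive-positive} seller $i$ has strictly positive equilibrium utility, forcing $C>0$ (otherwise $u_i(z) = zC \le 0$, contradicting positivity). Hence $u_i$ is linear in $x$ with strictly positive slope on $(z-\epsilon, z+\epsilon)$, and in particular $u_i(z + \epsilon/2) > u_i(z)$.

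This contradicts the fact, noted after Lemma~\ref{lemma:no-same-atom}, that every price in $S_i$ is an optimal price for seller $i$. Therefore every $z \in S_i \setminus \{1\}$ must lie in $S_j$ for at least one $j \in N(i)$, establishing the claimed containment. The only mild obstacle worth flagging is confirming that the bracketed coefficient is \emph{strictly} positive rather than merely nonnegative, so that the slope is strictly positive; this is precisely what Corollary~\ref{cor:captive-positive} delivers, and is the reason the non-triviality assumption is needed.
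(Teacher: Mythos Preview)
Your proof is correct and follows essentially the same contradiction argument as the paper's: both find an interval around the offending point on which all neighbors' CDFs are constant, observe that $u_i$ is then linear with positive slope there, and derive a contradiction to optimality of prices in the support. Your version is somewhat more explicit than the paper's---you spell out why the interval exists (closedness of the finite union of supports) and why the slope is strictly positive (via Corollary~\ref{cor:captive-positive}), whereas the paper simply asserts ``it is easy to see'' the utility is strictly larger at the higher price---but the underlying idea is identical.
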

\begin{proof}
%If there is no captive market in the graph, then every seller must be pricing at $0$ by Theorem~\ref{thm:pure-eq}, and the claim follows.
%If on the other hand there is such a captive market,
By Observation~\ref{obs:atoms-at-1} no seller has any atom, except possibly at $1$.

%If none of $i$'s neighbors has an atom at $1$ it means that if $\sup_i=1$ then $\sup_i$ is an optimal price for $i$, but such a price gives $i$ utility $\alpha_i$ which is zero as he has no captive market. A Contradiction.

Assume that the claim is not true, then for some seller $i$ and some prices $1>y>x$ in the support of $i$ it holds that $F_j(x) = F_j(y)$ for every neighbor $j$ of $i$. It is easy to see that in this case the utility of $i$ by price $y$ is strictly larger then his utility by price $x$, contradiction the assumption that $x$ is optimal for $i$ (any point in the support that is not $1$ is optimal).
\end{proof}

The next corollary shows that any local minimum of the infima must be shared by at least two sellers.
\begin{corollary}
\label{cor:local-infimum}
%Fix any connected graph %that includes at least one captive market
Fix any non-trivial network and any equilibrium.
If the infimum of the support of seller $i$ is at most the infimum of the support of all his neighboring sellers then there is some neighboring seller with the same support infimum.
\end{corollary}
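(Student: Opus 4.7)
The plan is to argue by contradiction, mirroring the local-maximum argument that established Theorem~\ref{thm:necessary-eq}(\ref{thm:nec-local-max}). Suppose $\inf_i < \inf_j$ for every neighbor $j \in N(i)$, and set $m = \min_{j \in N(i)} \inf_j$, which is well-defined (since $i$ has finitely many neighbors) and strictly greater than $\inf_i$. The goal is to exhibit two distinct prices in $S_i$, both lying strictly below $m$, that yield different utilities for $i$. Since both points of $S_i$ must be optimal for seller $i$ (by the continuity-in-support observation stated after Lemma~\ref{lemma:no-same-atom}), equal optimal values would then contradict the explicit computation.

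First I would observe that $\inf_i < 1$: otherwise $S_i = \{1\}$, forcing $\inf_j \geq 1$ for every neighbor and violating the strict inequality $\inf_i < \inf_j$. Since $\inf_i \in (0,1)$, Observation~\ref{obs:atoms-at-1} guarantees that $F_i$ has no atom at $\inf_i$. Consequently, for every $\epsilon > 0$ the half-open interval $(\inf_i, \inf_i + \epsilon)$ must carry positive $F_i$-mass (the mass accumulating at $\inf_i$ cannot be concentrated in a single point), and therefore intersects $S_i$. Choosing $\epsilon$ smaller than $m - \inf_i$, pick any $z \in S_i$ with $\inf_i < z < m$.

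To finish, I would evaluate $u_i$ at the two prices $\inf_i$ and $z$, both of which lie strictly below $\inf_j$ for every $j \in N(i)$. This gives $\Fb_j(\inf_i) = \Fb_j(z) = 1$ for each neighbor, so the utility formula~(\ref{eq:utility}) reduces to $u_i(\inf_i) = \inf_i \cdot (\alpha_i + \beta_i)$ and $u_i(z) = z \cdot (\alpha_i + \beta_i)$. Because the network is non-trivial, connected, and has at least two sellers, seller $i$ has at least one neighbor and therefore $\beta_i > 0$, making this expression strictly increasing in the price. Hence $u_i(z) > u_i(\inf_i)$, contradicting the optimality of both prices for $i$, and completing the proof.

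The one step that requires genuine care is justifying that $S_i$ meets the open interval $(\inf_i, m)$; this hinges crucially on the no-atoms-below-$1$ conclusion from Observation~\ref{obs:atoms-at-1}, for without it $\inf_i$ could in principle be an isolated mass point of $S_i$ and the utility-comparison argument would collapse.
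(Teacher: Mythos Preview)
Your proof is correct. The paper states this result as an immediate corollary of Observation~\ref{obs:contained-support} (that $S_i \setminus \{1\}$ lies in the union of the neighbors' supports): since $\inf_i < 1$ and $\inf_i \in S_i$, the point $\inf_i$ must lie in some $S_j$, giving $\inf_j \leq \inf_i$ and hence equality. Your argument instead re-derives the relevant special case of that observation directly from the utility formula, which is the same underlying mechanism packaged differently.

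One simplification worth noting: the detour through ``no atoms below $1$'' to locate a second point $z \in S_i \cap (\inf_i, m)$ is not needed. Once you know $\inf_i \in S_i$ and hence $u_i(\inf_i) = u_i$, you may take \emph{any} $z \in (\inf_i, m)$, whether in $S_i$ or not, and compute $u_i(z) = z(\alpha_i + \beta_i) > \inf_i\,(\alpha_i + \beta_i) = u_i$, contradicting the best-response condition directly. So the step you flag as requiring ``genuine care'' can in fact be dropped.
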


The next observation shows that any local maximum of the suprema is a global maximum.
It implies Theorem~\ref{thm:necessary-eq} (\ref{thm:nec-local-max}).
\begin{observation}
\label{obs:local-supremum}
Fix any non-trivial network
%Fix any connected graph that includes at least one captive market
and any equilibrium.
If the supremum of the support of seller $i$ is at least the supremum of the support of all his neighboring sellers
%(breaking ties in favor of a seller with an atom at the tie)
then the supremum of his support is $1$.
\end{observation}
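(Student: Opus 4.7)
The plan is to argue by contradiction: assume $s_i := \sup_i$ dominates $s_j := \sup_j$ for every neighbor $j \in N(i)$, but $s_i < 1$. I will show that the equilibrium utility of seller $i$ then equals exactly $s_i \alpha_i$, and use this value to derive a contradiction in both the cases $\alpha_i = 0$ and $\alpha_i > 0$.

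The key step is evaluating $u_i$ at the price $s_i$. For every neighbor $j$, the support $S_j \subseteq [0,s_j] \subseteq [0,s_i]$, so all of $F_j$'s mass sits in $[0, s_i]$, whence $F_j(s_i) = 1$. Since $S_i$ is closed we have $s_i \in S_i$, and since every point of the support is optimal (as remarked after the discussion of ties),
\begin{equation*}
u_i \;=\; u_i(s_i) \;=\; s_i\Bigl(\alpha_i + \sum_{j \in N(i)} \beta_{i,j}\bigl(1 - F_j(s_i)\bigr)\Bigr) \;=\; s_i\,\alpha_i.
\end{equation*}

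If $\alpha_i = 0$, this gives $u_i = 0$, contradicting Corollary~\ref{cor:captive-positive}, which guarantees $u_i > 0$ in every non-trivial network. If instead $\alpha_i > 0$, consider the deviation to price $1$. Since $s_j \leq s_i < 1$ for every neighbor $j$ and (by Observation~\ref{obs:atoms-at-1}) no seller has an atom below $1$, we get $F_j^-(1) = 1$ for all $j \in N(i)$. Hence $u_i(1, F_{-i}) = \alpha_i + \sum_{j \in N(i)} \beta_{i,j}(1 - F_j^-(1)) = \alpha_i$, which strictly exceeds $s_i \alpha_i = u_i(s_i)$, again contradicting the optimality of $s_i \in S_i$.

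I do not expect a real obstacle; the only place requiring some care is the equality $F_j(s_i) = 1$ in the boundary case $s_j = s_i$, which could \emph{a priori} fail if $j$ had an atom at $s_j$. This is ruled out by Observation~\ref{obs:atoms-at-1} together with $s_j = s_i < 1$, and the same observation likewise delivers $F_j^-(1) = 1$ for the deviation argument.
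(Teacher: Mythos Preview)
Your proof is correct and follows essentially the same route as the paper: assume $\sup_i<1$, evaluate $u_i$ at $\sup_i$ using that no neighbor has mass above $\sup_i$ (and no atoms below $1$), and derive a contradiction with Corollary~\ref{cor:captive-positive}. Your case split on $\alpha_i$ is in fact more careful than the paper's argument, which simply asserts ``utility zero'' and invokes positivity of $u_i$; your treatment of the case $\alpha_i>0$ via the deviation to price $1$ (yielding $\alpha_i > s_i\alpha_i$) cleanly closes a gap that the paper's one-line proof glosses over.
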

\begin{proof}
If for seller $i$ it holds that $1>sup_i\geq sup_j$  for every $j$ that is a neighbor of $i$ then $i$ has utility zero, as no seller has an atom at a positive price that is less than $1$ (Observation~\ref{obs:atoms-at-1}). This contradict Corollary~\ref{cor:captive-positive} which shows that $i$ must have positive utility in any equilibrium. We conclude that $sup_i=1$.
\end{proof}

The next observation implies Theorem~\ref{thm:necessary-eq} (\ref{thm:nec-util-alpha}).
\begin{corollary}
\label{util_captive}
Fix any non-trivial network and any equilibrium.
There is at least one seller $i$ with utility $u_i=\alpha_i$, that seller has a captive market ($\alpha_i>0$).
\end{corollary}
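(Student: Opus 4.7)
The plan is to exhibit an equilibrium seller $i$ who has the price $1$ in his support and such that none of his neighbors has an atom at $1$. For such a seller, the paper's convention gives
\[
u_i(1,F_{-i}) \;=\; \alpha_i + \sum_{j\in N(i)} \beta_{i,j}\bigl(1-F_j^-(1)\bigr) \;=\; \alpha_i,
\]
because every neighbor $j$ lacking an atom at $1$ satisfies $F_j^-(1)=F_j(1)=1$. Since every point in $S_i$ is an optimal price (a fact noted right after Lemma~\ref{lemma:no-same-atom}), this immediately yields $u_i=\alpha_i$. The second claim, $\alpha_i>0$, is then free from Corollary~\ref{cor:captive-positive}, which guarantees $u_i>0$ in any equilibrium.

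To produce such a seller I would start by picking any seller $i^*$ whose supremum $\sup_{i^*}$ is maximal across all sellers. This maximum is at least the supremum of any neighbor's support, so Observation~\ref{obs:local-supremum} forces $\sup_{i^*}=1$. Since supports are closed, $1\in S_{i^*}$.

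Now split into two cases. If none of $i^*$'s neighbors has an atom at $1$, take $i = i^*$, and the computation above gives $u_{i^*}=\alpha_{i^*}$. Otherwise, some neighbor $j$ of $i^*$ has an atom at $1$; by Observation~\ref{obs:atoms-at-1}, $j$ has a captive market and satisfies $1\in S_j$ (so $\sup_j=1$), and by Lemma~\ref{lemma:no-same-atom} no neighbor of $j$ may simultaneously have an atom at $1$. Thus $j$ itself satisfies the two conditions we need, and the same computation yields $u_j=\alpha_j$. Either way, we have a seller whose equilibrium utility equals his captive-market size, and positivity of that utility forces the captive market to be nonzero.

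The only delicate point is the convention that redefines $u_i(1,F_{-i})$ so that a seller at price $1$ is credited with winning ties against neighbors who also place mass at $1$; without this convention the ``optimal price $1$'' step would have to be phrased via a limit from below. But as long as we use the paper's formula consistently and remember that atoms can only occur at $1$, the argument is essentially a one-line reading of that formula together with the two structural facts above (Observation~\ref{obs:local-supremum} and Lemma~\ref{lemma:no-same-atom}).
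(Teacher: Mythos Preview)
Your proof is correct and follows essentially the same approach as the paper. The only difference is cosmetic: the paper avoids your case split by choosing, among sellers with maximal supremum, one that has an atom if any such seller exists; with that tie-breaking rule the chosen seller automatically has no neighbor with an atom at $1$ (since such a neighbor would itself have supremum $1$ and an atom, and would have been chosen instead), collapsing your two cases into one line.
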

\begin{proof}
Consider seller $i$ with the maximum supremum price, breaking ties in favor of a seller with an atom.
By Observation\ref{obs:local-supremum} it holds that $\sup_i=1$. None of $i$'s neighbors has an atom at $1$ by Lemma~\ref{lemma:no-same-atom}.
When seller $i$ prices arbitrarily close to $1$ he only wins his captive market, thus his utility is $\alpha_i$.
It must be the case that $\alpha_i>0$, as every seller has positive utility, by Corollary~\ref{cor:captive-positive}.
\end{proof}

\ignore{
\subsection{OLD Basic observations}

\begin{observation}
\label{obs:local-mixing}
Fix any connected graph with at least two sellers, %that includes at least one captive market
and any equilibrium.
For any seller $i$ there is at least one neighboring seller that is not always bidding $1$.
\end{observation}

\section{Supports and Equilibrium}
\label{sec:sketches}
In this section we present some more details about the structure of any equilibrium.
% We show that once one specifies the set of boundary points for the support of each CDF we can compute a unique candidate for an equilibrium, and that it is simple to check if that candidate is indeed an equilibrium.

Assume that we are given the support $S_i$ of each CDF $F_i$ for every seller $i$.
Let $B_i$ be the set of boundary points for the support $S_i$, and let $T=\cup_i^n \ B_i$ be the union of all these sets, we call it the set of
{\em boundary points} of $\{ F_i \}_i$.
Let $T_i$ be the set of points in $S_i\cap T$, these are points in the support of $i$ that are also boundary for some seller.

We say that a support $S_i$ of CDF $F_i$ has {\em finite-boundary} if the number of boundary points of the support is finite ($|B_i|$ is finite).
Assume that we are given the support $S_i$ of each CDF $F_i$, and all supports have a finite boundary, in this case $T$ is finite.
Let $k=|T|$ be the size of $T$. We list the point in decreasing order $1\geq t_1>t_2>\ldots>t_k\geq 0$.
If the CDFs form an equilibrium then $t_1=1$ and $t_k>0$ (by Observation~\ref{obs:local-supremum} and Corollary~\ref{cor:captive-positive}).
Notation-wise, it would sometimes be convenient to think of the list of points $\tilde{T}$ that also includes the point $t_{k+1}=0$, so we denote
$\tilde{T}= \{t_1,t_2,\ldots,t_k,t_{k+1}\}$.
Additionally, for $j\in \{1,2,\ldots,k\}$ we denote by $R_j$ the set of sellers with support that contains the interval $(t_{j+1},t_j)$ (note that $R_k$ is empty when $t_k>0$, as happens in any equilibrium).
Finally, $R_0$ specifies the set of sellers that should have an atom at $1$.

Recall that for seller $i$ and point $x\in [0,1]$, we denote $\Fb_i(x) = 1-F^-_i(x)$. (Note that for any $x<1$ no seller has an atom at $x$ in equilibrium, thus for any such $x$ it holds that $\Fb_i(x) = 1-F_i(x)$. For $x=1$, $\Fb_i(1) = 1-F^-_i(1)$ is the size of the atom of $i$ at $1$).

% For any $j\in [k]$, the set of sellers with support

\begin{definition}
A {\em sketch} (of an equilibrium) specifies for every seller $i$ the support $S_i$ of $F_i$, such that all supports have a finite boundary.
% The set $T$ of boundary points of these CDFs is assumed to be  $1= t_1>t_2>\ldots>t_k>0$.
Additionally, the sketch specifies a set $R_0$ of sellers that should have atoms at $1$, in the set no two sellers are neighbors.
An equilibrium {\em satisfies the sketch} if the supports and atoms are as required by the sketch.
\end{definition}

\begin{definition}
A {\em sketch solution} (of an equilibrium) specifies a sketch and additionally,
for every seller $r$ and point $t$ in the set $T$ of boundary points of the sketch, it defines $\Fb_r(t)$ with solve the following linear program (LP1) in the variables $\{u_i\}_{i\in [n]}$, $\{\Fb_r(t)\}_{r\in[n],t\in T}$ (observe that $t$ itself is not a variable).

\begin{equation}
\label{eq:eq-util}
u_i= t\left(\alpha_i+\sum_{r\in N(i)} \beta_{i,r} \cdot \Fb_r(t)\right) \ \ \forall i\in [n], t\in T_i
\end{equation}
\begin{equation}
\label{eq:off-eq-util}
u_i\geq  t\left(\alpha_i+\sum_{r\in N(i)} \beta_{i,r} \cdot \Fb_r(t)\right) \ \ \forall i\in [n], t\in T \setminus T_i
\end{equation}
\begin{equation}
\label{eq:starts-0}
\Fb_i(t_k)=1\ \ \forall i\in [n]
\end{equation}
\begin{equation}
\label{eq:no-atom}
\Fb_i(1)=0\ \ \forall i\notin R_0
\end{equation}
\begin{equation}
\label{eq:yes-atom}
\Fb_i(1)>0\ \ \forall i\in R_0
\end{equation}
\begin{equation}
\label{eq:out-support}
\Fb_i(t_j)=\Fb_i(t_{j+1})\ \ \forall j\in [k-1]\ \forall\ i\notin R_j
\end{equation}
\begin{equation}
\label{eq:CDF-mon}
\Fb_i(t_j)> \Fb_i(t_{j+1})\ \ \forall j\in [k-1]\ \forall\ i\in R_j
%\Fb_i(t_j)\geq \Fb_i(t_{j+1})  \ \ \forall i\in [n], j\in [k]
%1\geq \Fb_i(t_j)\geq \Fb_i(t_{j+1})\geq 0  \ \ \forall i\in [n], j\in [k]
\end{equation}
An equilibrium {\em satisfies the sketch solution} if it satisfies the sketch and for every $i$ and $t\in T$ it agrees with $\Fb_i(t)$.
\end{definition}

In some of the proofs it would be useful to consider the set $\tilde{T}= T\cup \{0\}$ and denote $t_{k+1}=0$. Additional,  define $R_k$ to be empty and $\Fb_i(0)=1$ for all $i\in [n]$.

We next explain the constraints of the linear program.
Constraints~(\ref{eq:eq-util}) state that each seller has the same utility from every price in his support that is boundary point.
Constraints~(\ref{eq:off-eq-util}) state that each seller has the weakly lower utility by pricing at a boundary point that is not his boundary point, that is, it is not in the boundary of his support.
Constraints~(\ref{eq:starts-0}) state that the CDFs all start at $0$ at the lowest boundary point.
Constraints~(\ref{eq:no-atom}) state that sellers not in $R_0$ have no atom, while
Constraints~(\ref{eq:yes-atom}) state that sellers in $R_0$ have an atom.
Note that for $i\in R_0$ the size of the atom of $i$ at $1$ is exactly $\Fb_i(1)$.
Finally,
% we denote he size of the atom of $i$ at $1$ by $A_i(1) = 1- \Fb_i(1)$.
Constraints~(\ref{eq:out-support}) state that sellers do not price outside their support, while
Constraints~(\ref{eq:CDF-mon}) state that they do price inside  their support.
Observe that as all these constraints must be satisfied in equilibrium, if the linear program cannot be satisfied then an equilibrium satisfying the sketch does not exist.

\ignore{
MOSHE: need to work on this theorem. It is not really clear what I mean by "do not violet any necessary condition for equilibrium". Maybe we want to say that if there is a unique equilibrium that satisfies the sketch then the LP will output a valid sketch solution (for that equilibrium)
\begin{observation}
\label{obs:sketch-to-solution}
For any given sketch, if there is an equilibrium that satisfies this sketch then the linear program LP1 will output a sketch solution such that
if there exists an equilibrium satisfying the sketch  it will output
% a valid sketch solution.
values $\Fb_i(t)$ for every seller $i$ and point $t\in T$, such that these values do not violet any necessary condition for equilibrium.
Additionally, if the linear program cannot be satisfied then an equilibrium satisfying the sketch does not exist.
\end{observation}
\begin{proof}
Observe that these are indeed a set of linear equations in the variables, and that if there is no solution then no equilibrium with the specified supports and atoms exists, as each constraint represents some condition that is necessary, either for equilibrium, for the CDFs to be well defined, or for the CDFs to respect the specified supports and atoms.
\end{proof}
}
%FINISH!!! NEED TO COUNT CONSTRAINS OF TYPES (\ref{eq:CDF-mon}) AND (\ref{eq:out-support}).
%There are $n (k+1)$ constraints of type (\ref{eq:eq-util}) and (\ref{eq:off-eq-util}) combined.
%There are $n$ constraints (\ref{eq:starts-0}).
%There are $n$ constraints of type (\ref{eq:no-atom}) and (\ref{eq:yes-atom}) combined.

% For any $j\in [k]$ we know that sellers in $R_j$ are indifferent between $t_j$ and $t_{j+1}$.
% Alternatively, by Constraints~(\ref{eq:eq-util}) it hold that for every $i\in R_j$

\begin{definition}
Fix a network and a sketch. % and a vector of utilities $\vec{u}=(u_1, u_2, \ldots, u_n)$.
We say that a network has {\em full rank with respect to the given sketch} if for every $j\in [k]$ the $|R_j|\times |R_j|$ sized matrix with entries
$\beta_{i,r}$ for $(i,r)\in R_j\times R_j$ has full rank.\footnote{Note that the notion is defined independent of $\vec{\alpha}$}
\end{definition}
Note that this condition ensures that if we look at the $|R_j|$ Constraints~(\ref{eq:eq-util}) as linear equations in the $|R_j|$ variables $\Fb_r(t_{j})$ for every $r\in R_j$, the system will have at most one solution.

%Note that this condition is satisfied if the set $\{\beta_{i,j}\}_{i\in [n],j\in [k]}$ is generic.

Fix and $d$. A set in $\Re^d$ is {\em generic} if its Lebesgue measure is zero.  As any minor of a generic matric has full rank the following observation is immediate.
\begin{observation}
\label{obs:generic-beta}
Fix any connected graph and any sketch.
For a generic set $\{\beta_{i,j}\}_{i\in [n],j\in [k]}$ in $\Re^{n\times k}$ it holds that for every vector of captive markets sizes $\vec{\alpha}$,
the network defined by the graph and these market sizes has full rank with respect to the sketch.
\end{observation}

\begin{observation}
\label{obs:generic-alpha}
Fix any connected graph and any sketch.
If for  $\{\beta_{i,j}\}_{i\in [n],j\in [k]}$ the network is not $\beta$-generic with respect to the sketch, then for a generic $\alpha$
there is no equilibrium that satisfies this sketch (and no sketch solution).
\end{observation}
\begin{proof}
Since the is not $\beta$-generic with respect to the sketch, there exists $j$ such that the $|R_j|\times |R_j|$ sized matrix with entries
$\beta_{i,r}$ for $(i,r)\in R_j\times R_j$ does not have full rank. A set of $|R_j|$ linear equations in $|R_j|$ variable that does not have full rank does not have a solution when the vector of free variables is generic. Thus, there will be no solution that satisfy Constraints~(\ref{eq:eq-util}) in LP1, which are necessary for an equilibrium to exist.
\end{proof}

% MOSHE: I think that generically it should also hold that for any seller $i$ with $sup_i=1$ it holds that either $i$ has an atom at $1$, or one of his neighbors has such an atom. This constrains the set $R_0$ of sellers with an atom at one.

\begin{lemma}
% Fix a non-trivial network and a sketch, and assume that LP1 has a solution. % with utilities vector $\vec{u}$.
Fix a non-trivial network and a sketch solution.
If the network has full rank with respect to the given sketch then the solution must be unique and it can be found by solving $k$ sets of linear equations (set $j\in [k]$ has $|R_j|$  linear equations).
\end{lemma}
\begin{proof}
%Additionally, $\Fb_i(0)=1$ for all $i$.
The solution is computed inductively, going over the points $t\in \tilde{T}$ from the smallest to the largest and computing the value of $\Fb_i(t)$ for every agent $i$, as well as $u_i$ for every $i$ with support with infimum smaller than $t$. At each stage there is a unique way to set these values.

The base case: for $t_{k+1}=0$ we define $\Fb_i(0)=1$ for every $i$, and determine no $u_i$.
Assume that we have computed everything as assumed up to $t_{j+1}$, and show how to proceed to $t_j$.
This means that we have already computed $\Fb_i(t_{j+1})$ for every $i$, and $u_i$ for every $i$ with support with infimum smaller than $t_{j+1}$.
Consider seller $i$ with infimum $t_{j+1}$. We can compute his utility as we have already specified $\Fb_j(t_{j+1})$ for every $j$:
\begin{equation}
u_i = u_i(t_{j+1}) = t_{j+1}\left(\alpha_i+\sum_{r\in N(i)} \beta_{i,r} \cdot \Fb_r(t_{j+1})\right)
\end{equation}
We next show how to compute $\Fb_i(t_j)$.
For $i\notin R_j$ define $\Fb_i(t_j)= \Fb_i(t_{j+1})$ as we must satisfy Constraints~(\ref{eq:out-support}).
For sellers in $R_j$ we consider the
$|R_j|$ Constraints (\ref{eq:eq-util}), seen as linear equations in the $|R_j|$ variables $\Fb_r(t_{j})$.
Since the network has full rank with respect to %$\vec{u}$ and
the sketch, the set has full rank and there is a unique set of values $\{\Fb_i(t_j)\}_{i\in R_j}$ solving the equations.
This completes the induction step.

Notice that the solution we found is the unique one that satisfy the constraints we have considered. As there exist a solution to all constraints, the solution we have found must satisfy the rest of the constraints.
\end{proof}

We next show that a sketch solution %solution to the LP
is sufficient to specify a unique equilibrium, each CDF $F_i(x)$ can be described as a list of linear functions (in $1/x$).
\begin{lemma}
\label{lem:LP-to-eq}
%Assume that we are given a sketch and a corresponding solution to LP1 (specifying values $\Fb_i(t)$ for every seller $i$ and point $t\in \tilde{T}$).
Assume that we are given a sketch solution.
There is a polynomial time algorithm that outputs an explicit representation of an equilibrium that satisfies the sketch solution,
and moreover, if the network has full rank with respect to the sketch then that equilibrium is the unique one that satisfies the sketch solution.
% (MOSHE: this is not true - in a line of 3 with a captive in the middle, the uniqueness is true for the sum of the spikes CDFs, not each separately. Maybe this is true if the economy is generic)
\end{lemma}
\begin{proof}
% A sketch solution gives a solution to the
A solution to the linear program specifies $\Fb_i(t)$ for any seller $i$ and $t\in T$. We use the solution to %uniquely
define $\Fb_j(x) = 1-F^-_j(x)$ for any seller $i$ and $x\in [0,1]$, that coincides with the solution on $\tilde{T}$.
Together with $F_i(1)=1$ for every $i$, this will completely % and uniquely
define a CDF for each seller.
Once we specify the CDFs we check that they indeed form an equilibrium.

%to get uniqueness we still need to prove we must use linear functions in $1/x$. This is not true in general (might be for generic instance)

For any seller $i$ and any $j\in [k]$ we define $\Fb_i(\cdot)$ to be a linear function in $1/x$ on the interval $[t_{j+1},t_j]$, that is,
$\Fb_i(\cdot)$ is of the form $\Fb_i(\cdot)= L_{i,j}(x)=a_{i,j}+b_{i,j}/x$.
We fix the linear function to the unique linear function that coincides with the solution at the boundaries, that is
$L_{i,j}(t_{j+1})= \Fb_i(t_{j+1})$ and $L_{i,j}(t_j)=\Fb_i(t_j)$.

Given a solution to the linear program above, for seller $i$ and $t\in \tilde{T}$ define
\begin{equation}
u_i(t) = t\left(\alpha_i+\sum_{r\in N(i)} \beta_{i,r} \cdot \Fb_r(t)\right)
\end{equation}

The next lemma would be useful.
\begin{lemma}
For the CDFs as defined above, for any seller $i$, and any $j\in [k]$,
the utility $U_i(\cdot)$ is a linear function on the interval $(t_{j+1},t_j)$, moreover, it is the unique linear function that pass through the points $(t_{j+1},u_i(t_{j+1}))$ and  $(t_{j},u_i(t_{j}))$
\end{lemma}
\begin{proof}
Consider any point $x$ in the interval $(t_{j+1},t_j)$.
\begin{equation}
\label{eq:util-x}
u_i(x)= x\left(\alpha_i+\sum_{r\in N(i)} \beta_{i,r} \cdot \Fb_r(x)\right)=
\end{equation}
\begin{equation*}
x\left(\alpha_i+\sum_{r\in N(i)} \beta_{i,r}  (a_{r,j}+b_{r,j}/x)\right)=
\end{equation*}
\begin{equation*}
x\cdot \alpha_i+\sum_{r\in N(i)} \beta_{i,r} (x\cdot a_{r,j}+b_{r,j})=
\sum_{r\in N(i)} \beta_{i,r} b_{r,j} + x\left(\alpha_i+\sum_{r\in N(i)} \beta_{i,r} a_{r,j}\right)
\end{equation*}
This is clearly a linear function, and clearly it go through the two specified points by the way $\Fb_i(\cdot)$ is defined at these boundary points  for every seller $i$.
\end{proof}

This lemma shows that for the defined CDFs it is indeed the case that each $i\in R_j$ is indifferent between all the prices in the interval
$(t_{j+1},t_j)$, and that for any interval $(t_{l+1},t_l)$ such that $i\notin R_l$, $i$ cannot gain by deviating and pricing on that interval.
This prove that the specified CDFs indeed forms an equilibrium.

Finally, we observe that if the network has full rank with respect to the sketch then that equilibrium is the unique one that respects the solution to the LP. Indeed, consider any $x$ in the interval $(t_{j+1},t_j)$. The solution to LP1 specifies utility $u_i$ for every seller $i$.
For any $i\in R_j$, consider the equation $u_i=u_i(x)$ for $u_i$ as specified in Equation (\ref{eq:util-x}).
This is a  set of $|R_j|$ linear equations in the $|R_j|$ variables $\Fb_r(x)$ for every $r\in R_j$. As the network has full rank with respect to the sketch this set specifies a matrix of full rank, thus there is at must one solution to the set.
\end{proof}

The following theorem follows from the fact that linear programs are solvable in polynomial time.
\begin{theorem}
%Assume that we are given the support $S_i$ of each CDF $F_i$, and all supports have a finite boundary.
%Additionally, assume that we are also given a set $R_0$ of sellers that should have atoms at $1$, and no two sellers in $R_0$ are neighbors.
There is a polynomial time algorithm that gets a sketch as input and has the following properties.
If there exists an equilibrium satisfying the sketch then it will compute such an equilibrium (a list of CDFs $F_1,F_2,\ldots, F_n$), and if such an equilibrium does not exists it will provide an evidence to that claim.
% Moreover, if there is a unique equilibrium that satisfies that sketch and the economy has full rank with respect to the sketch then the solution can  be found by solving a set of linear equation (the general algorithm uses linear program).
\end{theorem}
%\begin{proof}
%WE NEED TO PROVE THAT "uniqueness implies WE CAN COMPUTE fast"

%\end{proof}

\ignore{
We next present a necessary condition for an economy with a bipartite graph to be $\beta$-generic with respect to the sketch.

\begin{lemma}
Fix an economy with a bipartite graph.
Any sketch for which the economy has full rank with respect to the sketch satisfies the following condition:
For any $j\in [k]$, the size of the set $R_j$ is even.
Moreover, if we look at the induced graph on $R_j$, its connected component of this graph is of even size.
\end{lemma}
\begin{proof}
Consider the induced graph on $R_j$ and assume that some connected component has an odd size, this size must be larger than $1$.
The induced matrix on that connected component has more nodes from one side of the bipartite graph than from the other.
Let these numbers be $L>S>0$. The $L$ nodes on the large side are connected only to the $S$ nodes on the small side but not to each other.
Thus, we have $L$ rows, the support of each is contained in at most $S$ columns, so the matrix does not have full rank.
\end{proof}
}

\ignore{
\subsubsection{A lower bound on the size of the sketch}
We next show that generically, the sketch must become more complex as the number of sellers increases, in the sense that the total of the number of boundary points and the number of sellers with an atom at $1$ must grow linearly with the number of sellers.
Formally, we show the following.
\begin{observation}
Fix a non-trivial network and a sketch with set $R_0$ of sellers with an atom at $1$, at set $T$ of boundary points of size $k$.
Assume that the network has full rank with respect to the sketch.
If there exist an equilibrium that satisfies the sketch then it holds that
$k\geq n+1-|R_0|$.
\end{observation}
\begin{proof}
PROVE!

To prove the claim it will be constructive to present another linear program that also encodes necessary conditions for a sketch to satisfy an equilibrium.
For each seller $i$ and each $j\in [k]$ such that $i\in R_j$ we can define $F_{i,j}=\Fb_i(t_{j+1})- \Fb_i(t_{j})= F^-_i(t_{j})- F^-_i(t_{j+1})$, and for $R_0$ (the set of sellers with an atom at 1) define $F_{i,0}=1- F^-_i(1)$ to be the size of this atom.
With this notation for a CDF $F_i$ it holds that $1=F_i(1)-F_i(0)=  \sum_{j=0}^k F_{i,j}(t_j)$.

FINISH!

We observe the the number of variables of LP1 is $n$ for the utilities, and $n\cdot (k+1)$ for the $\Fb_i(t)$ for every $i\in [n]$ and every $t\in \tilde{T}$. So the total is $(n+1)(k+1)$. We next count the number of constraints.
MOSHE: FINISH, count carefully!
% Note that equalities need to be counted twice.

\end{proof}

% ----------------
OLD:

Fix any number of transition points $t$ and fix sets $R_1,R_2,\ldots,R_t$ that are bidding on the intervals.
For any set $R_j$ for $j<t$, consider any agent $i\in R_j$.
It holds that
$$u_i=u_i(t_j)=t_j\left(\alpha_i+\sum_{k\in R_j\setminus \{i\}} (1-F^-_k(t_j))\right)$$
and
$$u_i=u_i(t_{j+1})=t_{j+1}\left(\alpha_i+\sum_{k\in R_j\setminus \{i\}} (1-F^-_k(t_{j+1}))\right)$$

which implies that
$$\frac{t_{j+1}}{t_j}= \frac{\alpha_i+\sum_{k\in R_j\setminus \{i\}} (1-F^-_k(t_j))}{\alpha_i+\sum_{k\in R_j\setminus \{i\}} (1-F^-_k(t_{j+1}))}$$

this holds for $|R_j|$ sellers, which gives us $|R_j|-1$ equations.

For each seller $i$ and each $R_j$ such that $i\in R_j$ we can define $F_{i,j}=F^-_i(t_{j+1})- F^-_i(t_{j})$, and for $R_t$ (the set of sellers with an atom at 1) define $F_{i,t}=1- F^-_i(t_{t})$ to be the size of this atom.

We also know that for every seller $i$ it holds that $F_i(1)=1$, this gives $n$ additional equations.

We can look at all the equations as equations in $F_{i,j}$. The total number of variables in all these equations is $\sum_{j=1}^t |R_j|$.

We conclude that the difference between number of variables an equations is
$\sum_{j=1}^t |R_j| - \sum_{j=1}^{t-1} (|R_j|-1) -n = t-1+|R_t|-n$. As $|R_t|$ is the number of atoms (at 1), for it to be non-negative we need $t\geq n+1-\#atoms$.

% MOSHE:  I now think that the first sum only goes up to $t-1$. This should implies that the difference is actually $|R_j|-t+1-n$.

%This gives us the minimal number of sets in a sketch - there must be at least $t$. Also
Note that any gap in the bid of seller $i$ will result with another equation (saying that utilities on both ends of the gap are the same). So each additional gap implies a need to increase the size of $T$.
For example, if there are only 4 sellers and we want a gap, we need $t\geq 5$.

% ------------

% \input{multiplicity.tex}

}
}

\section{Supports and Equilibrium}
\label{sec:sketches}

In general, the definition of a mixed Nash equilibrium requires checking a continuum of equations and inequalities.
In this section we show that the space of potential equilibria -- and the conditions to check -- can be simplified immensely.
An equilibrium \emph{sketch} (defined formally below) describes each seller's support and the set of players that have an atom at $1$. 
We will show that once the sketch of an equilibrium is known, the full specification of an equilibrium with that support can be determined.  
%In particular, while the definition of a mixed Nash equilibrium requires checking a continuum of equations and inequalities, this section
Moreover, one can efficiently decide whether a given sketch corresponds to an equilibrium: it suffices to check the equilibrium conditions at the boundary points of the players' supports.  We also provide conditions under which a sketch \emph{uniquely} determines an equilibrium.
%In this section we present some more details about the structure of equilibria and in particular show that the support (and 
% We show that once one specifies the set of boundary points for the support of each CDF we 
%can compute a unique candidate for an equilibrium, and that it is simple to check if that candidate is indeed an equilibrium.

Assume that we are given the support $S_i$ of each CDF $F_i$ for every seller $i$.
Let $B_i$ be the set of boundary points for the support $S_i$, and let $T=\cup_i^n \ B_i$ be the union of all these sets, we call it the set of
{\em boundary points} of $\{ F_i \}_i$.
Let $T_i$ be the set of points in $S_i\cap T$.  That is, $T_i$ is the set of boundary points that are in the support of seller $i$.

We say that support $S_i$ has {\em finite boundary} if $|B_i|$ is finite.
Suppose all sellers have supports with finite boundary.  Then $T$ is finite; write $k = |T|$.
%Assume that we are given the support $S_i$ of each CDF $F_i$, and all supports have a finite boundary, in this case $T$ is finite.
%Let $k=|T|$ be the size of $T$. 
We can then write $T = \{t_1, t_2, \dotsc, t_k\}$, where $1\geq t_1>t_2>\ldots>t_k\geq 0$.
Note that if the CDFs form an equilibrium then $t_1=1$ and $t_k>0$ (by Theorem \ref{thm:necessary-eq}, items \ref{thm:captive-positive} and \ref{thm:nec-util-alpha}).
%Observation~\ref{obs:local-supremum} and Corollary~\ref{cor:captive-positive}).
It will sometimes be convenient to think of the list of points $\tilde{T}$ that also includes the point $t_{k+1}=0$, so we denote
$\tilde{T}= \{t_1,t_2,\ldots,t_k,t_{k+1}\}$.
Additionally, for $j\in \{1,2,\ldots,k\}$ we denote by $R_j$ the set of sellers with support that contains the interval $(t_{j+1},t_j)$.  Note that $R_k$ is empty when $t_k>0$, as happens in any equilibrium.
Finally, $R_0$ specifies the set of sellers that have an atom at $1$.

% For any $j\in [k]$, the set of sellers with support

\begin{definition}
A {\em sketch} (of an equilibrium) specifies for every seller $i$ the support $S_i$ of $F_i$, where all supports have finite boundary.
% The set $T$ of boundary points of these CDFs is assumed to be  $1= t_1>t_2>\ldots>t_k>0$.
Additionally, the sketch specifies a set $R_0$ of sellers that should have atoms at $1$. %, in the set no two sellers are neighbors.
An equilibrium {\em satisfies the sketch} if its supports and atoms match those of the sketch.
\end{definition}

A \emph{sketch solution} is a sketch augmented with partial information about a Nash equilibrium, concerning behavior at boundary points.
Recall that for seller $i$ and point $x\in [0,1]$, we denote $\Fb_i(x) = 1-F^-_i(x)$. Since no seller has an 
atom at $x < 1$ in equilibrium, we have $\Fb_i(x) = 1-F_i(x)$ for $x < 1$.  Also, $\Fb_i(1) = 1-F^-_i(1)$ is the size of the atom of $i$ at $1$.  

\begin{definition}
A {\em sketch solution} (of an equilibrium) specifies a sketch and, additionally, it defines values $\Fb_r(t)$
for every seller $r$ and point $t$ in the set $T$ of boundary points of the sketch.  These values must satisfy the following linear program (LP1) in the variables $\{u_i\}_{i\in [n]}$, $\{\Fb_r(t)\}_{r\in[n],t\in T}$ (observe that the values of $t \in T$ are not variables).

\begin{equation}
\label{eq:eq-util}
u_i= t\left(\alpha_i+\sum_{r\in N(i)} \beta_{i,r} \cdot \Fb_r(t)\right) \ \ \forall i\in [n], t\in T_i
\end{equation}
\begin{equation}
\label{eq:off-eq-util}
u_i\geq  t\left(\alpha_i+\sum_{r\in N(i)} \beta_{i,r} \cdot \Fb_r(t)\right) \ \ \forall i\in [n], t\in T \setminus T_i
\end{equation}
\begin{equation}
\label{eq:starts-0}
\Fb_i(t_k)=1\ \ \forall i\in [n]
\end{equation}
\begin{equation}
\label{eq:no-atom}
\Fb_i(1)=0\ \ \forall i\notin R_0
\end{equation}
\begin{equation}
\label{eq:yes-atom}
\Fb_i(1)>0\ \ \forall i\in R_0
\end{equation}
\begin{equation}
\label{eq:out-support}
\Fb_i(t_j)=\Fb_i(t_{j+1})\ \ \forall j\in [k-1]\ \forall\ i\notin R_j
\end{equation}
\begin{equation}
\label{eq:CDF-mon}
\Fb_i(t_j)> \Fb_i(t_{j+1})\ \ \forall j\in [k-1]\ \forall\ i\in R_j
%\Fb_i(t_j)\geq \Fb_i(t_{j+1})  \ \ \forall i\in [n], j\in [k]
%1\geq \Fb_i(t_j)\geq \Fb_i(t_{j+1})\geq 0  \ \ \forall i\in [n], j\in [k]
\end{equation}
An equilibrium {\em satisfies the sketch solution} if it satisfies the sketch and, moreover, for every $i$ and $t\in T$ the value of $\Fb_i(t)$ equals the corresponding value in the sketch solution.
\end{definition}

%In some of the proofs it would be useful to consider the set $\tilde{T}= T\cup \{0\}$ and denote $t_{k+1}=0$. Additional,  define $R_k$ to be empty and $\Fb_i(0)=1$ for all $i\in [n]$.

We next explain the constraints of the linear program.
Constraints~(\ref{eq:eq-util}) state that each seller has the same utility from every boundary point in his support.
Constraints~(\ref{eq:off-eq-util}) state that each seller has weakly lower utility for boundary points that are not in his support.
Constraints~(\ref{eq:starts-0}) state that, for each $i$, the CDF for $i$ has value $0$ at the lowest boundary point.
Constraints~(\ref{eq:no-atom}) states that sellers not in $R_0$ have no atom, while
constraints~(\ref{eq:yes-atom}) state that sellers in $R_0$ have an atom.
Note that for $i\in R_0$ the size of the atom of $i$ at $1$ is exactly $\Fb_i(1)$.
Finally,
% we denote he size of the atom of $i$ at $1$ by $A_i(1) = 1- \Fb_i(1)$.
constraints~(\ref{eq:out-support}) state that sellers do not price outside their support, while
constraints~(\ref{eq:CDF-mon}) state that they do price inside their support.
Observe that as all these constraints must be satisfied in equilibrium.  If the linear program cannot be satisfied then an equilibrium satisfying the sketch does not exist.

Since the linear program can be solved in polynomial time, it follows that we can efficiently find a sketch solution for a given sketch.

\begin{observation} \label{sketch-to-ss}
There exists a polynomial time algorithm that, when given as input a nontrivial network and a sketch, it outputs a sketch solution that satisfies the sketch, if such a solution exists.
\end{observation}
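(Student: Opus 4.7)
The plan is to reduce the problem directly to solving a linear program of size polynomial in the input. The sketch specifies, for each seller $i$, a support $S_i$ with finite boundary; write $k = |T|$ for the total number of boundary points. The unknowns of LP1 are $\{u_i\}_{i\in[n]}$ together with $\{\Fb_r(t)\}_{r\in[n],\, t\in T}$, giving $n(k+1)$ variables, and the number of constraints in (\ref{eq:eq-util})--(\ref{eq:CDF-mon}) is also polynomial in $n$ and $k$. Each such constraint is linear in the unknowns (the values $t\in T$, the sets $R_j$, and the neighbourhoods $N(i)$ are all determined by the given sketch and network and are therefore just constants of the LP).

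The one subtlety is that constraints (\ref{eq:yes-atom}) and (\ref{eq:CDF-mon}) are \emph{strict} inequalities, which standard LP machinery does not handle directly. I would remove this obstacle with the usual slack-variable trick: introduce a single new scalar $\epsilon\geq 0$, replace every strict inequality $a > b$ by the non-strict $a\geq b+\epsilon$, and solve the resulting LP with objective $\max\epsilon$. Any assignment that strictly satisfies the original system yields some positive slack and hence a feasible point with $\epsilon>0$; conversely, any feasible point of the modified LP with $\epsilon>0$ strictly satisfies the original constraints. Thus a sketch solution exists if and only if the optimum $\epsilon^\star$ of this LP is strictly positive.

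The algorithm is then: build the modified LP from the given network and sketch, solve it in polynomial time (by the ellipsoid method or an interior-point method, using that the data are rational), and inspect the optimum. If $\epsilon^\star>0$, output the corresponding values of $\{u_i\}$ and $\{\Fb_r(t)\}$, which by construction satisfy every equality in (\ref{eq:eq-util}), (\ref{eq:starts-0}), (\ref{eq:no-atom}), (\ref{eq:out-support}), every inequality in (\ref{eq:off-eq-util}), and with positive slack every strict inequality in (\ref{eq:yes-atom}) and (\ref{eq:CDF-mon}); this is a valid sketch solution satisfying the input sketch. If $\epsilon^\star=0$ or the LP is infeasible, report that no sketch solution exists. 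The entire procedure runs in polynomial time. There is no deep obstacle here: the observation is essentially an unpacking of the definition of a sketch solution as an LP feasibility problem, with the strict-inequality handling being the only item requiring any care.
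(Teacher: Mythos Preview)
Your proof is correct and follows the same approach as the paper, which simply asserts that ``the linear program can be solved in polynomial time'' and leaves it at that. You go further by explicitly addressing the strict-inequality constraints (\ref{eq:yes-atom}) and (\ref{eq:CDF-mon}) via the standard slack-variable maximization, a point the paper glosses over entirely. So your argument is the same reduction-to-LP idea, just carried out with more care.
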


We next point out that, generically, there will only be a unique sketch solution that satisfies a given sketch.

\ignore{
MOSHE: need to work on this theorem. It is not really clear what I mean by "do not violet any necessary condition for equilibrium". Maybe we want to say that if there is a unique equilibrium that satisfies the sketch then the LP will output a valid sketch solution (for that equilibrium)
\begin{observation}
\label{obs:sketch-to-solution}
For any given sketch, if there is an equilibrium that satisfies this sketch then the linear program LP1 will output a sketch solution such that
if there exists an equilibrium satisfying the sketch  it will output
% a valid sketch solution.
values $\Fb_i(t)$ for every seller $i$ and point $t\in T$, such that these values do not violate any necessary condition for equilibrium.
Additionally, if the linear program cannot be satisfied then an equilibrium satisfying the sketch does not exist.
\end{observation}
\begin{proof}
Observe that these are indeed a set of linear equations in the variables, and that if there is no solution then no equilibrium with the specified supports and atoms exists, as each constraint represents some condition that is necessary, either for equilibrium, for the CDFs to be well defined, or for the CDFs to respect the specified supports and atoms.
\end{proof}
} % ignore
%FINISH!!! NEED TO COUNT CONSTRAINS OF TYPES (\ref{eq:CDF-mon}) AND (\ref{eq:out-support}).
%There are $n (k+1)$ constraints of type (\ref{eq:eq-util}) and (\ref{eq:off-eq-util}) combined.
%There are $n$ constraints (\ref{eq:starts-0}).
%There are $n$ constraints of type (\ref{eq:no-atom}) and (\ref{eq:yes-atom}) combined.

% For any $j\in [k]$ we know that sellers in $R_j$ are indifferent between $t_j$ and $t_{j+1}$.
% Alternatively, by Constraints~(\ref{eq:eq-util}) it hold that for every $i\in R_j$

\begin{definition}
Fix a network and a sketch. % and a vector of utilities $\vec{u}=(u_1, u_2, \ldots, u_n)$.
We say that a network has {\em full rank with respect to the sketch} if, for every $j\in [k]$, the $|R_j|\times |R_j|$ sized matrix with entries
$\beta_{i,r}$ for $(i,r)\in R_j\times R_j$ has full rank.\footnote{Note that this notion does not depend on $\vec{\alpha}$.}
\end{definition}

Note that this condition ensures that if we look at the $|R_j|$ constraints~(\ref{eq:eq-util}) as linear equations in the $|R_j|$ variables $\Fb_r(t_{j})$ for every $r\in R_j$, the system will have a unique solution.  This in turn will imply that there is at most one sketch solution that satisfies the sketch (depending on whether the unique solution to constraints~(\ref{eq:eq-util}) satisfy the remaining constraints).

%Note that this condition is satisfied if the set $\{\beta_{i,j}\}_{i\in [n],j\in [k]}$ is generic.

%Fix and $d$. A set in $\Re^d$ is {\em generic} if its Lebesgue measure is zero.  

We will say that a statement holds for {\em generic} values of certain parameters if the Lebesgue measure of the parameter values for which the statement holds is 1.
As any minor of a generic matrix has full rank the following observation is immediate.

\begin{observation}
\label{obs:generic-beta}
Fix a nontrivial network.  Then, for every sketch,
there exists at most one sketch solution that satisfies the sketch, 
generically over the shared market sizes $\{\beta_{i,j}\}$.
\end{observation}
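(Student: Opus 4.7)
The plan is to reduce the uniqueness claim to a statement about generic invertibility of the matrices underlying each $R_j$. First, I would note that the discussion preceding the observation shows that when the network has full rank with respect to the sketch, each $|R_j| \times |R_j|$ linear system extracted from constraints~(\ref{eq:eq-util}) has a unique solution; solving these inductively from $j = k$ down to $j = 1$ (with constraints~(\ref{eq:starts-0})--(\ref{eq:CDF-mon}) already fixing the remaining values of $\Fb_i(t)$) forces the full sketch solution to be unique. Thus it suffices to show that for generic $\beta$, the network has full rank with respect to the sketch, i.e., for every $j \in [k]$ the $|R_j| \times |R_j|$ matrix $M_j$ with entries $\beta_{i,r}$ indexed by $(i,r) \in R_j \times R_j$ is invertible.

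For each fixed $j$, the determinant $\det(M_j)$ is a polynomial in the edge weights $\beta_{i,r}$ of the network. I would then invoke the basic fact about real polynomials on $\mathbb{R}^d$: if a polynomial $p$ is not identically zero, then its zero set $\{x : p(x) = 0\}$ has Lebesgue measure zero. So as long as $\det(M_j)$ is not the zero polynomial, the set of $\beta$ making $M_j$ singular has measure zero; intersecting over the finitely many $j \in [k]$ preserves full measure and yields the desired genericity. This is the exact content of the preamble's remark that any minor of a generic matrix has full rank, since $M_j$ is essentially a principal submatrix of the weighted adjacency matrix of the graph restricted to indices in $R_j$.

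The main obstacle I anticipate is handling the degenerate case where $\det(M_j) \equiv 0$ as a polynomial in $\beta$---for instance, when the subgraph of the network induced on $R_j$ admits no cycle cover, so that every term in the Leibniz expansion of $\det(M_j)$ involves a missing (zero) edge. In such a case $M_j$ is always singular regardless of $\beta$, but then for generic values of the inhomogeneous data inherited from prior induction steps (which depend affinely on $\alpha$ and on the $\Fb_r(t_{j+1})$ already determined), the linear system for $\{\Fb_r(t_j)\}_{r \in R_j}$ becomes inconsistent, so no sketch solution satisfying the sketch exists at all and ``at most one'' holds vacuously. With this dichotomy, the observation follows by a finite intersection of full-measure conditions, one per $j \in [k]$.
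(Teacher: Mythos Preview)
Your core argument is exactly the paper's: reduce uniqueness to the full-rank condition on each $M_j$, and invoke the fact that a nonzero polynomial (here $\det M_j$ as a function of the $\beta_{i,r}$) vanishes only on a measure-zero set. The paper's entire proof is the one-liner ``as any minor of a generic matrix has full rank the following observation is immediate,'' so your write-up is already more detailed than what the paper provides.

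Where you go beyond the paper is in flagging the degenerate case $\det M_j \equiv 0$ caused by structural zeros in the induced subgraph on $R_j$ (no cycle cover); the paper simply ignores this. Your proposed fix---argue the system is then generically inconsistent so that ``at most one'' holds vacuously---is the right instinct, but there is a mismatch with the statement: you lean on genericity of the inhomogeneous data, which you trace partly to $\alpha$, whereas the observation is phrased only for generic $\beta$ with $\alpha$ fixed. To make this airtight under the statement as written you would need to show inconsistency for generic $\beta$ alone, tracking how the right-hand side depends on $\beta$ through the earlier induction steps; alternatively, the statement could be weakened to allow genericity in $\alpha$ as well.
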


\ignore{  % I'm not exactly sure how to phrase the qunatification here, and also it interferes with the current flow -- I do want to revive it sometime
\begin{observation}
\label{obs:generic-alpha}
Fix a nontrivial network and a sketch.
If for  $\{\beta_{i,j}\}_{i\in [n],j\in [k]}$ the network is not $\beta$-generic with respect to the sketch, then for a generic $\alpha$
there is no equilibrium that satisfies this sketch (and no sketch solution).
\end{observation}
\begin{proof}
Since the is not $\beta$-generic with respect to the sketch, there exists $j$ such that the $|R_j|\times |R_j|$ sized matrix with entries
$\beta_{i,r}$ for $(i,r)\in R_j\times R_j$ does not have full rank. A set of $|R_j|$ linear equations in $|R_j|$ variable that does not have full rank does not have a solution when the vector of free variables is generic. Thus, there will be no solution that satisfy Constraints~(\ref{eq:eq-util}) in LP1, which are necessary for an equilibrium to exist.
\end{proof}
}% ignore

% MOSHE: I think that generically it should also hold that for any seller $i$ with $sup_i=1$ it holds that either $i$ has an atom at $1$, or 
% one of his neighbors has such an atom. This constrains the set $R_0$ of sellers with an atom at one.

We next show that a sketch solution %solution to the LP
suffices for recovering the full information about a corresponding equilibrium (which will be generically unique).  Specifically, each CDF $F_i(x)$ can be described by a list of linear functions in $1/x$.  %The proof is postponed to appendix \ref{app:sketches}

\begin{lemma}
\label{lem:LP-to-eq}
%Assume that we are given a sketch and a corresponding solution to LP1 (specifying values $\Fb_i(t)$ for every seller $i$ and point $t\in \tilde{T}$).
Fix a nontrivial network and assume that we are given a sketch solution.
There is a polynomial time algorithm that outputs an equilibrium $(F_1,F_2,\ldots,F_n)$ that satisfies the sketch solution,
where each $F_i$ is a piece-wise linear functions of the inverse of its input.
Moreover, if the network has full rank with respect to the sketch, then this is the unique equilibrium that satisfies the sketch solution.
% (MOSHE: this is not true - in a line of 3 with a captive in the middle, the uniqueness is true for the sum of the spikes CDFs, not each separately. Maybe this is true if the economy is generic)
\end{lemma}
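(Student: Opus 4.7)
The plan is to construct each CDF by interpolating between the boundary values provided by the sketch solution using the natural ansatz ``$\Fb_i$ is affine in $1/x$'', and then exploit a fortunate linearity to reduce all equilibrium conditions to the finitely many constraints already verified by the sketch solution. Specifically, on each interval $[t_{j+1}, t_j]$ for $j \in [k]$ I would define $\Fb_i$ as the unique function of the form $\Fb_i(x) = a_{i,j} + b_{i,j}/x$ that agrees with the sketch solution's values $\Fb_i(t_{j+1})$ and $\Fb_i(t_j)$ at the endpoints. I would set $\Fb_i(x) = 1$ for $x < t_k$ and $F_i(1) = 1$, introducing an atom of size $\Fb_i(1)$ at $1$ exactly when $i \in R_0$, and putting $F_i(x) = 1 - \Fb_i(x)$ for $x < 1$ otherwise. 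The resulting description is polynomial in the size of the sketch, and constraints~(\ref{eq:out-support}) and~(\ref{eq:CDF-mon}) (with values decreasing from $t_{j+1}$ to $t_j$) guarantee that $\Fb_i$ is non-increasing on each interval, so $F_i$ is a valid CDF.

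The central observation for verifying equilibrium is that on each interval $(t_{j+1}, t_j)$ the utility
\[
u_i(x) = x\left(\alpha_i + \sum_{r \in N(i)} \beta_{i,r}\, \Fb_r(x)\right)
\]
is \emph{linear in $x$}, because $x \cdot \Fb_r(x) = a_{r,j}\, x + b_{r,j}$ whenever $\Fb_r$ is affine in $1/x$. A linear function is pinned down by its values at two points, so two cases arise. When $i \in R_j$, constraints~(\ref{eq:eq-util}) give $u_i(t_{j+1}) = u_i(t_j) = u_i$, hence $u_i(x) = u_i$ throughout $(t_{j+1}, t_j)$, which delivers the indifference condition across the support. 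When $i \notin R_j$, constraints~(\ref{eq:off-eq-util}) give $u_i(t_{j+1}), u_i(t_j) \leq u_i$, and linearity propagates the bound $u_i(x) \leq u_i$ across the whole interval, ruling out off-support deviations to interior prices. Deviations to prices below $t_k$ are dominated because they are dominated by pricing at $1$ (which earns at least $\alpha_i$ by Theorem~\ref{thm:necessary-eq}(\ref{thm:captive-positive})), and the value at $x = 1$ is accommodated by the continuity convention of Section~\ref{sec:eq}. Together these cases verify that the constructed profile is a Nash equilibrium satisfying the sketch solution.

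For uniqueness under the full-rank hypothesis, I would argue pointwise. Fix any interval $(t_{j+1}, t_j)$ and any interior $x$. In any equilibrium satisfying the sketch solution, the values $\Fb_r(x)$ for $r \notin R_j$ are pinned to the corresponding boundary values by constraint~(\ref{eq:out-support}), and the indifference conditions $u_i = u_i(x)$ for $i \in R_j$ yield $|R_j|$ linear equations in the $|R_j|$ unknowns $\{\Fb_r(x)\}_{r \in R_j}$. The coefficient matrix is exactly the $|R_j| \times |R_j|$ matrix $[\beta_{i,r}]_{i,r \in R_j}$, which has full rank by assumption, so the solution is unique and must coincide with the affine-in-$1/x$ interpolation constructed above.

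The main subtlety I would flag is ensuring that off-support deviations to interior prices are covered, since the LP only enforces inequalities at boundary points. Fortunately, the linearity of $u_i$ on each subinterval is exactly what reduces this global condition to the finitely many boundary-point inequalities~(\ref{eq:off-eq-util}), so no extra argument is needed. The ansatz $\Fb_r$ affine in $1/x$ is itself the clever input: it is the unique functional form (on each interval) that makes $x\, \Fb_r(x)$ linear in $x$, which is precisely why matching at two endpoints suffices to control $u_i(x)$ everywhere in between.
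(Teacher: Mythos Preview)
Your proposal is correct and follows essentially the same approach as the paper: interpolate each $\Fb_i$ by the unique affine-in-$1/x$ function matching the sketch-solution endpoints, observe that this makes $u_i(x)$ \emph{linear} in $x$ on each interval $(t_{j+1},t_j)$, and then reduce the continuum of equilibrium conditions to the finitely many boundary constraints already enforced by (LP1); uniqueness under full rank is argued pointwise exactly as you do. If anything, your write-up is slightly more careful than the paper's in explicitly invoking constraints~(\ref{eq:CDF-mon}) for CDF validity and in spelling out the off-support case via the two endpoint inequalities (the paper states this conclusion but leaves the linearity step implicit). One small clean-up: your argument for deviations to $x<t_k$ is a bit indirect; the direct route is that $\Fb_r(x)=1$ for all $r$ there, so $u_i(x)=x(\alpha_i+\beta_i)<t_k(\alpha_i+\beta_i)=u_i(t_k)\le u_i$ by constraint~(\ref{eq:starts-0}) together with~(\ref{eq:eq-util}) or~(\ref{eq:off-eq-util}).
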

\begin{proof} %(of lemma \ref{lem:LP-to-eq})
% A sketch solution gives a solution to the
A solution to the linear program specifies $\Fb_i(t)$ for any seller $i$ and $t\in T$. We use the solution to %uniquely
define $\Fb_j(x) = 1-F^-_j(x)$ for any seller $i$ and $x\in [0,1]$, that coincides with the solution on $\tilde{T}$.
Together with $F_i(1)=1$ for every $i$, this will completely % and uniquely
define a CDF for each seller.
Once we specify the CDFs we check that they indeed form an equilibrium.

%to get uniqueness we still need to prove we must use linear functions in $1/x$. This is not true in general (might be for generic instance)

For any seller $i$ and any $j\in [k]$ we define $\Fb_i(\cdot)$ to be a linear function in $1/x$ on the interval $[t_{j+1},t_j]$, that is,
$\Fb_i(\cdot)$ is of the form $\Fb_i(\cdot)= L_{i,j}(x)=a_{i,j}+b_{i,j}/x$.
We fix the linear function to the unique linear function that coincides with the solution at the boundaries, that is
$L_{i,j}(t_{j+1})= \Fb_i(t_{j+1})$ and $L_{i,j}(t_j)=\Fb_i(t_j)$.

Given a solution to the linear program above, for seller $i$ and $t\in \tilde{T}$ define
\begin{equation}
u_i(t) = t\left(\alpha_i+\sum_{r\in N(i)} \beta_{i,r} \cdot \Fb_r(t)\right)
\end{equation}

The next lemma would be useful.
\begin{lemma}
For the CDFs as defined above, for any seller $i$, and any $j\in [k]$,
the utility $U_i(\cdot)$ is a linear function on the interval $(t_{j+1},t_j)$, moreover, it is the unique linear function that pass through the points $(t_{j+1},u_i(t_{j+1}))$ and  $(t_{j},u_i(t_{j}))$
\end{lemma}

\begin{proof}
Consider any point $x$ in the interval $(t_{j+1},t_j)$.
\begin{equation}
\label{eq:util-x}
u_i(x)= x\left(\alpha_i+\sum_{r\in N(i)} \beta_{i,r} \cdot \Fb_r(x)\right)=
\end{equation}
\begin{equation*}
x\left(\alpha_i+\sum_{r\in N(i)} \beta_{i,r}  (a_{r,j}+b_{r,j}/x)\right)=
\end{equation*}
\begin{equation*}
x\cdot \alpha_i+\sum_{r\in N(i)} \beta_{i,r} (x\cdot a_{r,j}+b_{r,j})=
\sum_{r\in N(i)} \beta_{i,r} b_{r,j} + x\left(\alpha_i+\sum_{r\in N(i)} \beta_{i,r} a_{r,j}\right)
\end{equation*}
This is clearly a linear function, and clearly it go through the two specified points by the way $\Fb_i(\cdot)$ is defined at these boundary points  for every seller $i$.
\end{proof}

This lemma shows that for the defined CDFs it is indeed the case that each $i\in R_j$ is indifferent between all the prices in the interval
$(t_{j+1},t_j)$, and that for any interval $(t_{l+1},t_l)$ such that $i\notin R_l$, $i$ cannot gain by deviating and pricing on that interval.
This prove that the specified CDFs indeed forms an equilibrium.

Finally, we observe that if the network has full rank with respect to the sketch then that equilibrium is the unique one that respects the solution to the LP. Indeed, consider any $x$ in the interval $(t_{j+1},t_j)$. The solution to LP1 specifies utility $u_i$ for every seller $i$.
For any $i\in R_j$, consider the equation $u_i=u_i(x)$ for $u_i$ as specified in Equation (\ref{eq:util-x}).
This is a  set of $|R_j|$ linear equations in the $|R_j|$ variables $\Fb_r(x)$ for every $r\in R_j$. As the network has full rank with respect to the sketch this set specifies a matrix of full rank, thus there is at must one solution to the set.
\end{proof}

The following theorem (which is a re-statement of theorem 3 from the introduction) follows by combining Lemma \ref{lem:LP-to-eq} with observations \ref{sketch-to-ss} and \ref{obs:generic-beta}.

\begin{theorem}
%Assume that we are given the support $S_i$ of each CDF $F_i$, and all supports have a finite boundary.
%Additionally, assume that we are also given a set $R_0$ of sellers that should have atoms at $1$, and no two sellers in $R_0$ are neighbors.
There is a polynomial time algorithm that gets a sketch as input and has the following properties.
If there exists an equilibrium satisfying the sketch then it will compute such an equilibrium (a list of CDFs $F_1(x),F_2(x),\ldots, F_n(x)$ each linear in $x^{-1}$),
and if such an equilibrium 
does not exist then it will provide a proof of that claim.  Moreover, generically in the shared market sizes, the provided equilibrium is unique.
% Moreover, if there is a unique equilibrium that satisfies that sketch and the economy has full rank with respect to the sketch then the solution can  be found by solving a set of linear equation (the general algorithm uses linear program).
\end{theorem}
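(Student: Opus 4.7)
The plan is to assemble the theorem as a direct composition of the three prior ingredients: Observation~\ref{sketch-to-ss}, Lemma~\ref{lem:LP-to-eq}, and Observation~\ref{obs:generic-beta}. Given the sketch as input, I would first invoke Observation~\ref{sketch-to-ss}: run the polynomial-time LP solver on LP1 (whose constraints are all necessary for an equilibrium to satisfy the sketch). If LP1 is infeasible, then because every constraint is a necessary equilibrium condition, no equilibrium satisfying the sketch can exist, and the infeasibility certificate of the LP (e.g., a Farkas witness) serves as the required proof of non-existence.

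If LP1 is feasible, the solver returns a sketch solution, i.e.\ values $\{\Fb_r(t)\}_{r,t}$ and $\{u_i\}_i$ satisfying constraints~(\ref{eq:eq-util})--(\ref{eq:CDF-mon}). I would then apply Lemma~\ref{lem:LP-to-eq} to extend this sketch solution to a full strategy profile: on each interval $[t_{j+1},t_j]$, and for each seller $i$, define $\Fb_i$ as the unique function of the form $a_{i,j}+b_{i,j}/x$ agreeing with the sketch values at the two endpoints, and set $F_i(1)=1$. The key computation inside that lemma shows that $u_i(x)$ is then affine in $x$ on each interval and matches the LP utility at the endpoints, so $u_i(x)\equiv u_i$ for every $x\in S_i$ and $u_i(x)\leq u_i$ for $x$ in any interval $(t_{j+1},t_j)$ with $i\notin R_j$; this yields that the profile is a mixed Nash equilibrium satisfying the sketch. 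The whole procedure runs in time polynomial in the size of the sketch and the network.

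For generic uniqueness, I would argue as follows. Fix the sketch. Observation~\ref{obs:generic-beta} says that, generically in $\{\beta_{i,j}\}$, the full-rank condition holds, so constraints~(\ref{eq:eq-util}), viewed on each $R_j$ as a linear system in the $|R_j|$ unknowns $\{\Fb_r(t_j)\}_{r\in R_j}$, admit at most one solution; hence there is at most one sketch solution compatible with the sketch. Combined with the uniqueness clause of Lemma~\ref{lem:LP-to-eq}, which states that under full rank the equilibrium extending a given sketch solution is unique, this yields a unique equilibrium satisfying the sketch in the generic case.

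The main subtlety, and the only place where care is needed, is the transition from ``the LP solution satisfies finitely many boundary conditions'' to ``the constructed CDFs constitute a genuine equilibrium everywhere.'' This is precisely what Lemma~\ref{lem:LP-to-eq} handles via the piecewise-linear-in-$1/x$ interpolation, exploiting the fact that each seller's utility is affine in $x$ whenever the neighbors' $\Fb_j$ are linear in $1/x$. Beyond that, the theorem is essentially bookkeeping: package the LP solver (Observation~\ref{sketch-to-ss}), the interpolation construction and its correctness (Lemma~\ref{lem:LP-to-eq}), and the genericity observation (Observation~\ref{obs:generic-beta}) into a single polynomial-time algorithm with the stated guarantees.
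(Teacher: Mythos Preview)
Your proposal is correct and follows exactly the same approach as the paper, which simply states that the theorem follows by combining Lemma~\ref{lem:LP-to-eq} with Observations~\ref{sketch-to-ss} and~\ref{obs:generic-beta}. Your write-up actually provides more detail than the paper's one-line proof, spelling out how the three ingredients compose and identifying the key step (the piecewise-linear-in-$1/x$ interpolation making $u_i(x)$ affine) as the only non-bookkeeping content.
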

%\begin{proof}
%WE NEED TO PROVE THAT "uniqueness implies WE CAN COMPUTE fast"
%\end{proof}

\ignore{
We next present a necessary condition for an economy with a bipartite graph to be $\beta$-generic with respect to the sketch.

\begin{lemma}
Fix an economy with a bipartite graph.
Any sketch for which the economy has full rank with respect to the sketch satisfies the following condition:
For any $j\in [k]$, the size of the set $R_j$ is even.
Moreover, if we look at the induced graph on $R_j$, its connected component of this graph is of even size.
\end{lemma}
\begin{proof}
Consider the induced graph on $R_j$ and assume that some connected component has an odd size, this size must be larger than $1$.
The induced matrix on that connected component has more nodes from one side of the bipartite graph than from the other.
Let these numbers be $L>S>0$. The $L$ nodes on the large side are connected only to the $S$ nodes on the small side but not to each other.
Thus, we have $L$ rows, the support of each is contained in at most $S$ columns, so the matrix does not have full rank.
\end{proof}
}

\ignore{
\subsubsection{A lower bound on the size of the sketch}
We next show that generically, the sketch must become more complex as the number of sellers increases, in the sense that the total of the number of boundary points and the number of sellers with an atom at $1$ must grow linearly with the number of sellers.
Formally, we show the following.
\begin{observation}
Fix a non-trivial network and a sketch with set $R_0$ of sellers with an atom at $1$, at set $T$ of boundary points of size $k$.
Assume that the network has full rank with respect to the sketch.
If there exist an equilibrium that satisfies the sketch then it holds that
$k\geq n+1-|R_0|$.
\end{observation}
\begin{proof}
PROVE!

To prove the claim it will be constructive to present another linear program that also encodes necessary conditions for a sketch to satisfy an equilibrium.
For each seller $i$ and each $j\in [k]$ such that $i\in R_j$ we can define $F_{i,j}=\Fb_i(t_{j+1})- \Fb_i(t_{j})= F^-_i(t_{j})- F^-_i(t_{j+1})$, and for $R_0$ (the set of sellers with an atom at 1) define $F_{i,0}=1- F^-_i(1)$ to be the size of this atom.
With this notation for a CDF $F_i$ it holds that $1=F_i(1)-F_i(0)=  \sum_{j=0}^k F_{i,j}(t_j)$.

FINISH!

We observe the the number of variables of LP1 is $n$ for the utilities, and $n\cdot (k+1)$ for the $\Fb_i(t)$ for every $i\in [n]$ and every $t\in \tilde{T}$. So the total is $(n+1)(k+1)$. We next count the number of constraints.
MOSHE: FINISH, count carefully!
% Note that equalities need to be counted twice.

\end{proof}

% ----------------
OLD:

Fix any number of transition points $t$ and fix sets $R_1,R_2,\ldots,R_t$ that are bidding on the intervals.
For any set $R_j$ for $j<t$, consider any agent $i\in R_j$.
It holds that
$$u_i=u_i(t_j)=t_j\left(\alpha_i+\sum_{k\in R_j\setminus \{i\}} (1-F^-_k(t_j))\right)$$
and
$$u_i=u_i(t_{j+1})=t_{j+1}\left(\alpha_i+\sum_{k\in R_j\setminus \{i\}} (1-F^-_k(t_{j+1}))\right)$$

which implies that
$$\frac{t_{j+1}}{t_j}= \frac{\alpha_i+\sum_{k\in R_j\setminus \{i\}} (1-F^-_k(t_j))}{\alpha_i+\sum_{k\in R_j\setminus \{i\}} (1-F^-_k(t_{j+1}))}$$

this holds for $|R_j|$ sellers, which gives us $|R_j|-1$ equations.

For each seller $i$ and each $R_j$ such that $i\in R_j$ we can define $F_{i,j}=F^-_i(t_{j+1})- F^-_i(t_{j})$, and for $R_t$ (the set of sellers with an atom at 1) define $F_{i,t}=1- F^-_i(t_{t})$ to be the size of this atom.

We also know that for every seller $i$ it holds that $F_i(1)=1$, this gives $n$ additional equations.

We can look at all the equations as equations in $F_{i,j}$. The total number of variables in all these equations is $\sum_{j=1}^t |R_j|$.

We conclude that the difference between number of variables an equations is
$\sum_{j=1}^t |R_j| - \sum_{j=1}^{t-1} (|R_j|-1) -n = t-1+|R_t|-n$. As $|R_t|$ is the number of atoms (at 1), for it to be non-negative we need $t\geq n+1-\#atoms$.

% MOSHE:  I now think that the first sum only goes up to $t-1$. This should implies that the difference is actually $|R_j|-t+1-n$.

%This gives us the minimal number of sets in a sketch - there must be at least $t$. Also
Note that any gap in the bid of seller $i$ will result with another equation (saying that utilities on both ends of the gap are the same). So each additional gap implies a need to increase the size of $T$.
For example, if there are only 4 sellers and we want a gap, we need $t\geq 5$.

% ------------

% \input{multiplicity.tex}

} 
%\section{Trees}
\section{Trees with a Single Captive Market}
\label{sec:trees.1.captive}

We now turn our attention to a particular type of network: a tree with exactly one captive market.  Such a network may have multiple equilibria, but we will show that all equilibria have a particular form and that every equilibrium is utility-equivalent for each seller.  Moreover, when the tree is a line with the captive market at one endpoint, there is a unique equilibrium.
%We show by way of examples in Sections \ref{sec:cycle-single-captive} and \ref{example:non-eq-utils} that this utility-equivalence result requires both that the network is acyclic and that there is only a single captive market.

Fix an arbitrary tree as our network, and suppose seller $r$ is the unique seller with $\alpha_r > 0$.  We will think of the tree as being rooted at $r$.  In this rooted tree, we write $P(i)$ for the parent of seller $i$ (with $P(r) = \emptyset$), and $C(i)$ for the set of children of seller $i$.  We say $i$ is a \emph{leaf} if $C(i) = \emptyset$.  We will also write $CC(i) = \{j \colon P(P(j)) = i\}$, the set of grandchildren of $i$.

Before characterizing the equilibria of our network, it will be helpful to describe a particular type of sketch.  In this sketch, the support of each seller is an interval, say $S_v = [L_v, H_v]$.  Moreover, for each seller there is a ``midpoint'' value $M_v \in [L_v, H_v]$ such that $L_{P(v)} = M_v$ and $H_j = M_v$ for each $j \in C(v)$.  That is, the ``top'' portion of a seller's range is shared with his parent, and the ``bottom'' portion is shared with each of his children.  The root has $M_r = H_r = 1$ and each leaf $v$ has $M_v = L_v$.  Note that if sellers $v$ and $w$ are siblings then they must have $M_v = M_w$.  See Figure \ref{fig:tree}.  We say that such a profile of intervals $\{[L_v, H_v]\}_v$ is \emph{staggered}.

%More formally, we will say that a profile of intervals $\{[L_v, H_v]\}_v$ (one for each seller) to be \emph{staggered} if it satisfies the properties above.

%\begin{definition}[Staggered Intervals]
%Given a rooted tree with root $r$, we say a profile of intervals $\{[L_i, H_i]\}_i$ is \emph{staggered} if
%\begin{enumerate}
%\item $L_i \leq L_{P(i)}$ and $H_i \leq H_{P(i)}$ for every $i \neq r$,
%\item $H_r = 1$ and $H_i = 1$ for all $i \in C(r)$,
%\item $L_i = L_{P(i)}$ for each leaf $i$, and
%\item $H_i = L_{P(P(i))}$ for each $i \not\in C(r) \cup \{r\}$.
%\end{enumerate}
%Given a profile of staggered intervals, we will write $M_r = 1$ and $M_i = L_{P(i)}$ for $i \neq r$.
%\end{definition}

\begin{figure}
\centering
\begin{tabular}{c|c}
\includegraphics[scale = 0.4]{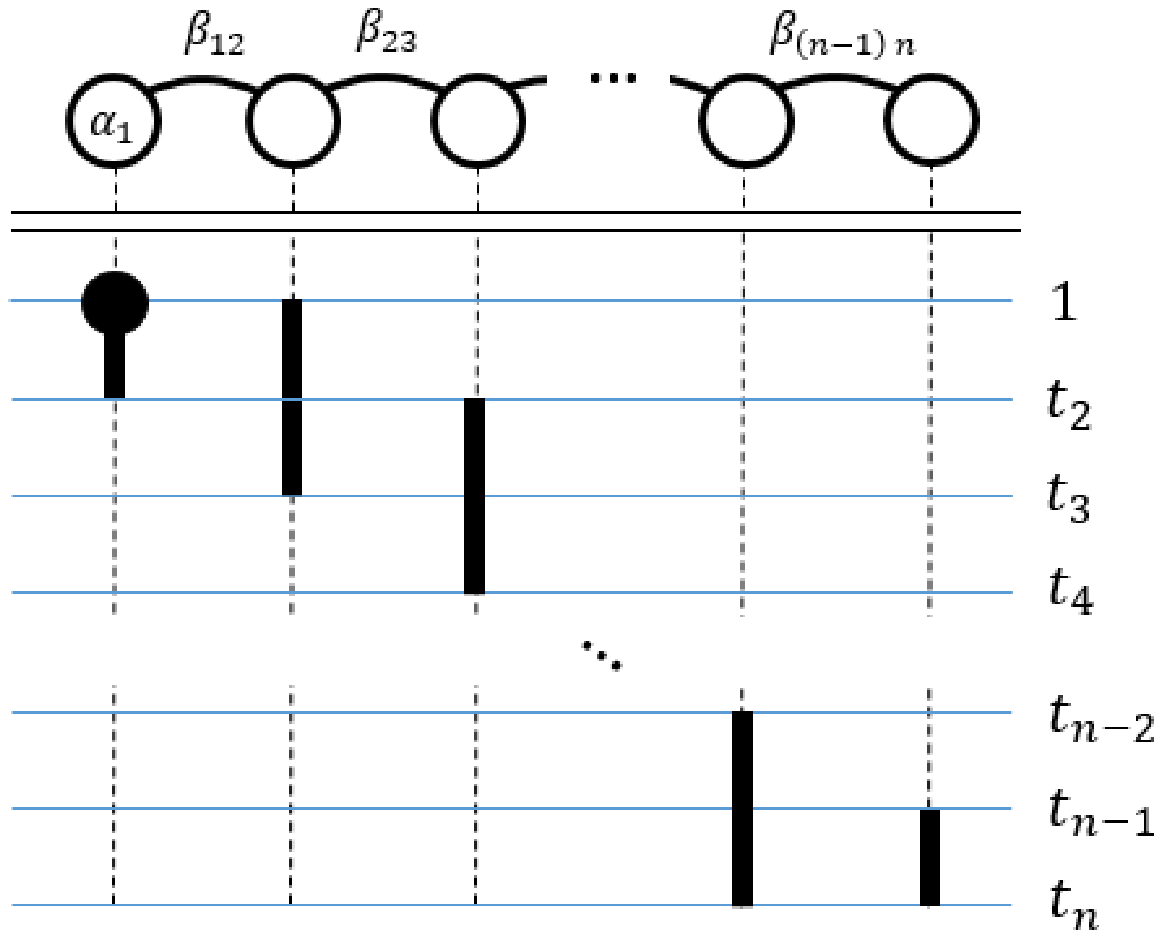} \hspace{0.5cm} &
\hspace{0.5cm} \includegraphics[scale = 0.45]{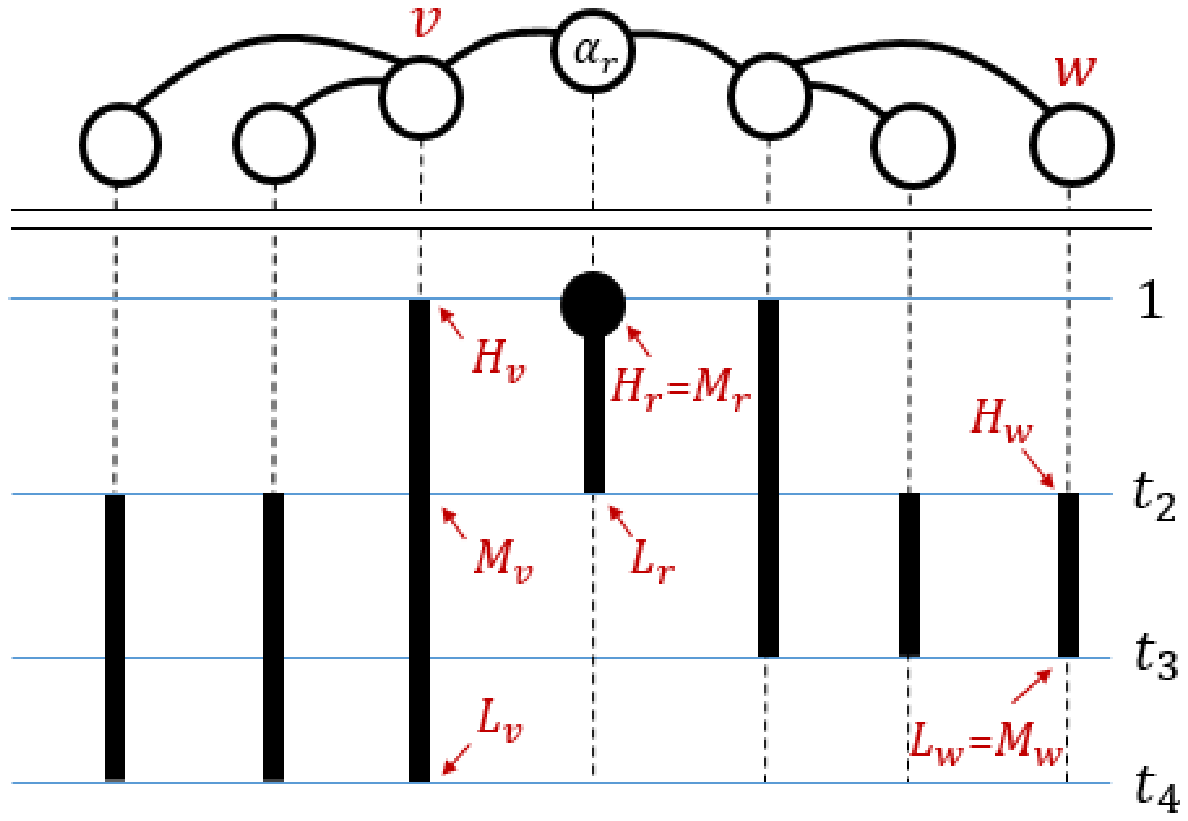} \\
(a) \hspace{0.5cm} & \hspace{0.5cm} (b)
\end{tabular}
\caption{The sketch corresponding to a staggered interval profile for (a) a line of length $n$ with a single captive market at one endpoint, and (b) a binary tree with a single captive market at its root.  Note that a profile of intervals for a tree is staggered if and only if it is staggered along each root-to-leaf path.  
In (b), the values of $L_v, M_v,$ and $H_v$ are illustrated for an interior node $v$, leaf $w$, and root $r$.
%The dashed portion of the sellers' supports in (b) indicate the regions that need not be included in their supports, as described in Theorem \ref{thm:tree.single.captive}
}
\label{fig:tree}
\end{figure}

Informally speaking, we will show that there is an equilibrium whose sketch corresponds to a profile of staggered intervals, where only the root has an atom at $1$.  In general this equilibrium will not be unique.  However, we will show that there is a unique profile of staggered intervals $\{[L_i, H_i]\}_i$ such that, for every equilibrium, $[L_i, M_i) \subseteq S_i \subseteq [L_i, H_i]$ for each seller $i$. Our main theorem for this section, which is a more detailed statement of Theorem \ref{ithm4} from the introduction, is as follows. %Moreover, every equilibrium will be utility-equivalent for every seller.

\begin{theorem}
\label{thm:tree.single.captive}
Fix a rooted tree with single captive market, as described above.  Then there exists a profile of staggered intervals $\{[L_i, H_i]\}_i$ such that, for any equilibrium of the network and every seller $i$,
\begin{enumerate}
\item $[L_i, M_i) \subseteq S_i \subseteq [L_i, H_i]$,
\item $[L_i, M_i) \subseteq \cup_{j \in C(i)} S_j$, and
\item $\Fb_r(1) > 0$ and $\Fb_i(1) = 0$ for all $i \neq r$.
\end{enumerate}
This profile of intervals (and an equilibrium) can be computed from the network structure in polynomial time, and every equilibrium is utility-equivalent for each seller.
\end{theorem}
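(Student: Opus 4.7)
The plan is to descend the tree from the root, pinning down structural data layer by layer.

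First, I would apply Theorem~\ref{thm:necessary-eq} to fix the top of the picture. Since $r$ is the unique seller with a captive market, item (\ref{thm:atoms-at-1}) implies that only $r$ can have atoms and any such atom is at $1$, while item (\ref{thm:nec-util-alpha}) forces $u_r = \alpha_r$ and $1 \in S_r$. To conclude $\Fb_r(1) > 0$, I would argue that if $r$ had no atom at $1$ then by item (\ref{thm:nec-1}) no child of $r$ would have supremum $1$, so on an interval $(\max_{j \in C(r)} H_j, 1)$ the utility $u_r(x) = x\alpha_r$ would be strictly increasing, contradicting indifference together with $1 \in S_r$. This yields property~(3) and $H_r = 1$.

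Second, I would analyze the shape of supports. The key local observation is: on any open interval $(a,b) \subseteq [0,1)$ on which no neighbor of $v$ has support, $\Fb_j(\cdot)$ is constant on $(a,b)$ for every $j \in N(v)$, so $u_v(x) = x(\alpha_v + \sum_j \beta_{v,j}\Fb_j(x))$ is strictly linear with positive slope, and hence $S_v \cap (a,b) = \emptyset$. Together with item (\ref{thm:nec-union-support}), this yields $S_v \setminus \{1\} \subseteq S_{P(v)} \cup \bigcup_{j \in C(v)} S_j$. I would then use indifference of $v$ across $S_v$ and the fact that $u_v$ can change ``slope'' only at boundary points of neighbors' supports to conclude that $S_v$ is a single interval with at most one internal transition point $M_v$ separating the range where $v$ competes with its parent from the range where $v$ competes with its children, giving the staggered structure.

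Third, I would compute the profile $\{L_v, M_v, H_v\}$ recursively and apply Lemma~\ref{lem:LP-to-eq} to build the equilibrium CDFs. Setting $H_r = M_r = 1$, indifference of $r$ at $L_r$ (where every child is above by construction, so $\Fb_j(L_r) = 1$) gives $\alpha_r = L_r(\alpha_r + \beta_r)$, hence $L_r = \alpha_r/(\alpha_r + \beta_r)$. Descending, for $v \neq r$ the values $M_v = L_{P(v)}$ and $H_v = M_{P(v)}$ are determined by the ancestors, and $L_v$ is then pinned by the indifference of $v$ at $L_v$ and $H_v$ together with the indifference conditions of $v$'s children; solving this cascade of linear equations yields a unique value of $L_v$ as a rational function of the subtree data, computable in polynomial time. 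The utilities $u_v$ are fixed by construction (e.g.\ $u_v = L_v(\alpha_v + \beta_v)$ at the infimum), so every equilibrium satisfying the sketch is utility-equivalent.

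The main obstacle is the second step, where I must show that in \emph{every} equilibrium the supports satisfy $[L_v, M_v) \subseteq S_v \subseteq [L_v, H_v]$, not merely that some equilibrium has this shape. The delicate aspect is that the inclusion $[L_v, M_v) \subseteq S_v$ is forced while $[M_v, H_v] \setminus S_v$ may be nonempty, and the coverage of $[L_v, M_v)$ by $\bigcup_{j \in C(v)} S_j$ can be partitioned among siblings in multiple ways, producing different but utility-equivalent equilibria. I expect to handle this by a telescoping argument along root-to-leaf paths: once $M_v$ is pinned by the parent's $L_{P(v)}$, the combined indifference of $v$ and its children forces $L_v$ uniquely, regardless of how the coverage of $[L_v, M_v)$ is split among children. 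The hypotheses of tree structure and a single captive market are both essential here: cycles would admit multiple indifference paths, and multiple captive markets would create competing ``anchors,'' either of which can yield genuinely different interval profiles and break utility-equivalence, as illustrated by the examples mentioned after the theorem.
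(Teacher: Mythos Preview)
Your outline has the right high-level shape, but there is a concrete error in Step~3 and a real gap in Step~2.

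The formula $L_r = \alpha_r/(\alpha_r + \beta_r)$ is wrong. You justify it by asserting that at $L_r$ ``every child is above by construction, so $\Fb_j(L_r)=1$,'' but in the staggered picture each child $j$ of $r$ has $M_j = L_r$ and $L_j < L_r$ (whenever $j$ is not a leaf), so $j$ places mass strictly below $L_r$ and $\Fb_j(L_r) = \Fb_j(M_j) < 1$. The correct indifference at $L_r$ reads $\alpha_r = L_r\bigl(\alpha_r + \sum_{j\in C(r)}\beta_{jr}\Fb_j(M_j)\bigr)$, and the numbers $\Fb_j(M_j)$ depend on the entire subtree below $j$. (Already in the three-seller line of Example~\ref{example:line-3-single-captive}, $t_2 = \alpha_1/\bigl(\alpha_1 + \beta_{12}\Fb_2(t_2)\bigr)$ with $\Fb_2(t_2)=\beta_{12}/(\beta_{12}+\beta_{23})\ne 1$.) Consequently the computation cannot be purely top-down as you describe. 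The paper's approach is a two-pass recursion: first compute $\Fb_v(M_v)$ \emph{bottom-up} from the leaves via
\[
\Fb_v(M_v)=\frac{\beta_{vP(v)}}{\beta_{vP(v)}+\sum_{j\in C(v)}\beta_{jv}\Fb_j(M_j)},
\]
and only then recover $M_v = \prod_{w\succ v}\Fb_w(M_w)$ top-down. Your utility formula $u_v = L_v(\alpha_v+\beta_v)$ is wrong for the same reason; the paper instead evaluates at $M_v$, where all children have exhausted their mass, giving $u_v = M_v\,\beta_{vP(v)}$.

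Step~2 is where the real difficulty lies, and your sketch does not address it. The observation that $u_v$ is linearly increasing on intervals avoided by all neighbors only rules out gaps; it does not show that $S_v$ is a single interval, nor that the parent's range sits strictly above the children's. The paper's argument (Proposition~\ref{prop:tree.equil.Lv}) defines $L_v$ as the maximum supremum among the \emph{grandchildren} of $v$ and derives a contradiction from any price of $v$ below that level by simultaneously exploiting indifference of $v$, a child, and a grandchild at two carefully chosen prices. Your slope-counting heuristic does not produce this, and without it you cannot conclude the containment $S_v\subseteq[L_v,H_v]$ that the theorem asserts for \emph{every} equilibrium.
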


We prove Theorem \ref{thm:tree.single.captive} in two parts.  In Section \ref{sec:tree.equil.form} we prove that, for any given equilibrium, there is a corresponding profile of staggered intervals satisfying the conditions of Theorem \ref{thm:tree.single.captive}.  In Section \ref{sec:tree-solve} we complete the proof of Theorem \ref{thm:tree.single.captive} by showing that the profile of staggered intervals corresponding to a given equilibrium can be fully described and computed as a function of the network weights only.  This will imply that there is a single interval profile that satisfies the conditions of Theorem \ref{thm:tree.single.captive} for all equilibria.  In Section \ref{sec:tree-comp-statics} we explore some comparative statics implied by our equilibrium characterization.  In Section \ref{sec:tree-line} we focus on the special case of a line network, where we show that there is a unique equilibrium.
We will defer some proof details to Appendix \ref{app:trees.1.captive}.

\subsection{The Form of an Equilibrium}
\label{sec:tree.equil.form}

Fix a tree network as above.  In this section we prove of the following lemma.

\begin{lemma}
\label{lem:tree-staggered-intervals}
For any equilibrium $\mathbf{F} = \{F_i\}_i$, there is a profile of staggered intervals $\{[L_i^{\mathbf{F}}, H_i^{\mathbf{F}}]\}_i$ such that, for every seller $i$,
$[L_i^{\mathbf{F}}, M_i^{\mathbf{F}}) \subseteq S_i \subseteq [L_i^{\mathbf{F}}, H_i^{\mathbf{F}}]$ and
$[L_i^{\mathbf{F}}, M_i^{\mathbf{F}}) \subseteq \cup_{j \in C(i)} S_j$.
Moreover, $\Fb_r(1) > 0$ and $\Fb_i(1) = 0$ for all $i \neq r$.
\end{lemma}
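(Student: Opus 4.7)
The plan is to prove Lemma \ref{lem:tree-staggered-intervals} in three stages: settle the atoms at $1$, construct the staggered interval profile from the equilibrium, and verify the containment assertions.

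Stage 1 (atoms). Since $r$ is the unique seller with a captive market, Theorem \ref{thm:necessary-eq}(\ref{thm:atoms-at-1}) immediately gives $\Fb_i(1)=0$ for every $i\neq r$. To show $\Fb_r(1)>0$, I argue by contradiction. If $r$ had no atom at $1$, then no seller has any atom anywhere, and Theorem \ref{thm:necessary-eq}(\ref{thm:nec-1}) (applied to root's children first, since their only potentially-atomic neighbor is $r$, and then cascading to deeper nodes) would force $\sup S_v<1$ for every non-root $v$. But Theorem \ref{thm:necessary-eq}(\ref{thm:nec-util-alpha}) guarantees a seller with $1\in S_i$ and utility $\alpha_i>0$, which must be $r$. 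For any price $x$ strictly between $\max_{v\neq r}\sup S_v$ and $1$, the root's utility reduces to $x\alpha_r$, which is strictly increasing; hence the only point of $S_r$ in this range could be $x=1$, and without an atom the point $1$ would be an isolated point of $S_r$ with zero mass, contradicting $1\in S_r$.

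Stage 2 (construction). I set $L_v:=\inf S_v$ for every seller $v$, $M_r:=H_r:=1$, and for each non-root $v$ define $M_v:=L_{P(v)}$ and $H_v:=M_{P(v)}$. Under these definitions the staggered relations $L_{P(v)}=M_v$ and $H_j=M_v$ for every $j\in C(v)$ hold tautologically, as does the sibling identity $M_v=M_w$ whenever $P(v)=P(w)$. For a leaf $v$ the required identity $M_v=L_v$ reduces to $\inf S_v=\inf S_{P(v)}$: the inequality $\inf S_v\geq \inf S_{P(v)}$ comes from Observation \ref{obs:contained-support} applied to the leaf (whose only neighbor is $P(v)$), while the reverse follows from Corollary \ref{cor:local-infimum} together with the indifference condition for the leaf over its support.

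Stage 3 (containment). The two containments $[L_v,M_v)\subseteq S_v\subseteq [L_v,H_v]$ and $[L_v,M_v)\subseteq \bigcup_{j\in C(v)}S_j$ are established by induction down the tree. The crux is showing $\sup S_v\leq H_v=M_{P(v)}$: any point of $S_v$ above $M_{P(v)}$ would, by Observation \ref{obs:contained-support}, need to lie in some neighbor's support; the inductive hypothesis caps each child's supremum at $M_v\leq M_{P(v)}$, while the parent's own support above $M_{P(v)}$ is bounded by $H_{P(v)}$ via the staggered structure one level further up. The lower containment follows symmetrically: within $(L_v,M_v)$ no parent support can contribute (since $\inf S_{P(v)}=M_v$), so Observation \ref{obs:contained-support} combined with indifference forces coverage by children and membership in $S_v$. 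The main obstacle lies in this stage --- pinning down the precise alignment $\sup S_v=M_{P(v)}$ requires coordinating Observation \ref{obs:contained-support}, the local-supremum property (Theorem \ref{thm:necessary-eq}(\ref{thm:nec-local-max})), and the indifference equations along parent--child--grandparent triples to simultaneously rule out spurious gaps or over-extension in any seller's support that would disrupt the staggered structure.
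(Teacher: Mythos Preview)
Your Stage 1 is essentially fine and matches the paper's Claim \ref{claim.sup.desc} in content. The gaps are in Stages 2 and 3.

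\textbf{Stage 2 fails already at the construction.} Setting $L_v:=\inf S_v$ does not in general yield a staggered profile. The leaf condition $M_v=L_v$ becomes $\inf S_{P(v)}=\inf S_v$, and your justification for the direction $\inf S_{P(v)}\geq \inf S_v$ does not go through when $P(v)$ has more than one child. Consider a root $r$ with one child $a$, which in turn has two leaf children $b_1,b_2$. There are equilibria in which $S_{b_1}$ and $S_{b_2}$ partition the interval $[L_a,M_a)$, say $S_{b_1}=[L_a,x]$ and $S_{b_2}=[x,M_a]$; then $\inf S_{b_2}=x>L_a=\inf S_a$, so with your definitions $L_{b_2}>M_{b_2}$ and the profile is not staggered. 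The paper avoids this by defining $L_v$ differently: $L_v=\max_{k\in CC(v)}\sup_k$ when $v$ has grandchildren, $L_v=\inf S_v$ when $v$ has children but no grandchildren, and $L_v=L_{P(v)}$ for leaves. That $\inf S_v$ equals this $L_v$ (for non-leaves) is then a \emph{conclusion} of the analysis, not the starting point.

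\textbf{Stage 3's containment argument does not close.} You want $\sup S_v\le H_v=M_{P(v)}$ and argue that any $x\in S_v$ with $x>M_{P(v)}$ must, by Observation \ref{obs:contained-support}, lie in some neighbor's support; you then try to rule out each neighbor. But the parent $P(v)$ is a neighbor whose support can legitimately extend above $M_{P(v)}$ (indeed $S_{P(v)}\subseteq[L_{P(v)},H_{P(v)}]$ with $H_{P(v)}\geq M_{P(v)}$), so Observation \ref{obs:contained-support} does not forbid $x\in S_{P(v)}$. Support containment alone only gives $\sup S_v\le H_{P(v)}$, one level too high. The paper's Proposition \ref{prop:tree.equil.Lv} supplies the missing idea: assuming some $v$ prices below $L_v$ (equivalently, some grandchild prices above its grandparent's infimum), it locates a price $z<L_v$ with $z\in S_v$ and $F_{P(v)}(z)=F_{P(v)}(L_v)$, then compares $u_v(z)$ with $u_v(L_v)$ term by term. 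The crucial inequality $L_v\,\Fb_j(L_v)\ge z\,\Fb_j(z)$ for every child $j$ is obtained by passing through a grandchild $k\in C(j)$ with $\sup_k=L_v$ and using $u_k(L_v)\ge u_k(z)$. This three-generation utility comparison is the heart of the proof and is not reducible to Observation \ref{obs:contained-support} plus induction on depth; your sketch does not contain it.
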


Lemma \ref{lem:tree-staggered-intervals} asserts the existence of an interval profile for each equilibrium, whereas Theorem \ref{thm:tree.single.captive} makes the stronger claim that a single profile applies to \emph{all} equilibria.
%
%Note the difference between this and Theorem \ref{thm:tree.single.captive}: Lemma \ref{lem:tree-staggered-intervals} asserts the existence of an interval profile for each equilibrium, whereas Theorem \ref{thm:tree.single.captive} makes the stronger claim that a single interval profile applies to all equilibria.
%
Fix equilibrium $\textbf{F}$. We %the proof of Lemma \ref{lem:tree-staggered-intervals}
observe that the sellers' suprema must be decreasing with depth.

\begin{claim}
\label{claim.sup.desc}
We have $\Fb_r(1) > 0$ and $\Fb_1(i) = 0$ for all $i \neq r$.  Also, $\sup_i \leq \sup_{P(i)}$ for each seller $i \neq r$, with equality only if $P(i) = r$.
\end{claim}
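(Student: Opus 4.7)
My plan is to establish the three parts of the claim in order, each following from Theorem \ref{thm:necessary-eq}. The atom statements come almost for free: by item \ref{thm:atoms-at-1}, any atom must sit at $1$ and must belong to a seller with a captive market, so in this tree $\Fb_i(1) = 0$ for every $i \neq r$. To set the stage for the rest of the argument I would also record that $\sup_r = 1$, which follows from item \ref{thm:nec-util-alpha} (some seller has $1 \in S_i$ together with a positive captive market, and $r$ is the only candidate).

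For the monotonicity claim $\sup_v \leq \sup_{P(v)}$, the case $P(v) = r$ is immediate since $\sup_r = 1$. For $P(v) \neq r$ I would argue by contradiction: assuming $\sup_v \geq \sup_{P(v)}$, pick a seller $w$ in the subtree $T_v$ rooted at $v$ that maximizes $\sup_w$. A short case check (either $w = v$, where the parent's supremum is controlled by the standing hypothesis and each child's by maximality; or $w$ is a proper descendant of $v$, where both parent and children lie in $T_v$ and so are controlled by maximality) shows $\sup_w \geq \sup_u$ for every $u \in N(w)$. Item \ref{thm:nec-local-max} then forces $\sup_w = 1$, and since $w \neq r$ has no captive market, item \ref{thm:nec-1} demands that some neighbor of $w$ has an atom at $1$. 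By item \ref{thm:atoms-at-1} this neighbor can only be $r$, so $w$ is a child of $r$; but then $P(w) = r \notin T_v$, so $w$ must equal the root of $T_v$, giving $P(v) = r$, the desired contradiction.

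Finally, to establish $\Fb_r(1) > 0$ I would assume for contradiction that $r$ has no atom at $1$. Under this assumption no seller at all has an atom at $1$, so item \ref{thm:nec-1} applied to each child of $r$ yields $\sup_v < 1$ for every such child, and the monotonicity just proved extends this to $\sup_v < 1$ for every $v \neq r$. Setting $M = \max_{j \in N(r)} \sup_j < 1$, every neighbor $j$ of $r$ satisfies $\Fb_j(x) = 0$ for $x \in (M, 1]$, hence $u_r(x) = x \alpha_r < \alpha_r = u_r$ for $x \in (M, 1)$ (the equality $u_r = \alpha_r$ being item \ref{thm:nec-util-alpha} applied to $r$). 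Thus $S_r \cap (M, 1) = \emptyset$, and combined with the assumed lack of an atom at $1$ this means $F_r$ places no mass on $(M, 1]$, contradicting $\sup_r = 1$.

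The main obstacle will be the middle step: the choice of $w$ inside $T_v$ and the case split between $w = v$ and $w$ a proper descendant of $v$ is the one place where the tree structure (specifically, the fact that $r$ lies outside $T_v$ whenever $P(v) \neq r$) is used in an essential way, and careful bookkeeping is needed to extract the local-maximum property that feeds into items \ref{thm:nec-local-max} and \ref{thm:nec-1}.
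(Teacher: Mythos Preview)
Your proof is correct. The high-level strategy matches the paper's---an extremal choice of seller whose supremum locally dominates its neighbors, followed by a contradiction with positivity of utilities---but the implementation differs in two places worth noting.

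For the monotonicity step, the paper picks $j$ of \emph{maximum depth} among all sellers with $\sup_j \geq \sup_{P(j)}$; maximality then immediately gives $\sup_j \geq \sup_k$ for every $k \in C(j)$, and since $j \notin C(r)\cup\{r\}$ none of $j$'s neighbors is $r$, so no neighbor has an atom and $u_j(\sup_j)=0$ directly. You instead pick $w$ of \emph{maximum supremum} inside the subtree $T_v$ and route through items \ref{thm:nec-local-max} and \ref{thm:nec-1} to force $w \in C(r)$, which clashes with $P(v)\neq r$. Both work; the paper's choice avoids the case split and the detour through $\sup_w = 1$, while yours uses the packaged statements of Theorem~\ref{thm:necessary-eq} as black boxes rather than rearguing that $u_j(\sup_j)=0$.

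For $\Fb_r(1)>0$, the paper argues via Observation~\ref{obs:contained-support}: if $r$ had no atom, some $j\in C(r)$ would have $\sup_j \geq \sup_r$, and since the monotonicity already gives $\sup_j > \sup_k$ for $k\in C(j)$, one gets $u_j=0$. Your route---no atom anywhere, so item~\ref{thm:nec-1} forces $\sup_v<1$ for all $v\neq r$, leaving a forbidden interval $(M,1)$ in which $u_r(x)<\alpha_r$---is a bit longer but equally valid and perhaps more transparent about why $r$ must retain mass at $1$.
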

%\begin{proof}
%Observation \ref{obs:atoms-at-1} immediately implies $\Fb_i(1) = 0$ for all $i \neq r$.   Next, suppose for contradiction that there exists a seller $i \not\in C(r) \cup \{r\}$ such that $\sup_i \geq \sup_{P(i)}$.  Let $j$ be a seller of maximum depth such that $\sup_j \geq \sup_{P(j)}$.  The maximality of $j$ then implies $\sup_j \geq \sup_k$ for all $k \in C(j)$, and hence $\sup_j \geq \sup_\ell$ for all $\ell \in N(j)$.  This implies $u_j(\sup_j) = 0$, and hence $u_j = 0$, contradicting Observation \ref{obs:inf-price-prop}.
%
%Next suppose $A_r(1) = 0$.  By Observation \ref{obs:contained-support}, there exists some $j \in C(r)$ with $\sup_j \geq \sup_r$.  Since we have already shown $\sup_j \geq \sup_k$ for all $k \in C(j)$, we have $u_j = 0$, contradicting Observation \ref{obs:inf-price-prop}.  Thus $\Fb_r(1) > 0$, and hence $\sup_r = 1$.  From this we can infer $\sup_j \leq \sup_{P(j)} = 1$ for all $j \in C(r)$.
%\end{proof}

We can now define a  profile of staggered intervals $\{[L_i^{\mathbf{F}}, H_i^{\mathbf{F}}]\}_i$ corresponding to equilibrium $\mathbf{F}$.  We begin with the lower bounds of the intervals.  For each $v$ with $CC(v) \neq \emptyset$, define $L_v^{\mathbf{F}} = \max_{k \in CC(v)} \sup_k$.  For $v$ with $CC(v) = \emptyset$ but $C(v) \neq \emptyset$, let $L_v^{\mathbf{F}} = \inf_v$.  Finally, for each leaf $v$, $L_v^{\mathbf{F}} = L_{P(v)}$.  For the upper bounds, set $H_r^{\mathbf{F}} = 1 = H_v^{\mathbf{F}}$ for each $v \in C(r)$, and $H_v^{\mathbf{F}} = L_{P(P(v))}$ for each $v \not\in \{r\} \cup C(r)$.  Let $M_r^{\mathbf{F}} = 1$ and $M_v^{\mathbf{F}} = L_{P(v)}$ for each $v \neq r$.  %We will sometimes omit the dependency on $\mathbf{F}$ from our notation when this is clear from context.
Claim \ref{claim.sup.desc} implies that $L_v^{\mathbf{F}} \leq M_v^{\mathbf{F}} \leq H_v^{\mathbf{F}}$ for each $v$, with $L_v^{\mathbf{F}} = M_v^{\mathbf{F}}$ only if $v$ is a leaf and $M_v^{\mathbf{F}} = H_v^{\mathbf{F}}$ only if $v = r$.  Thus $\{[L_i^{\mathbf{F}}, H_i^{\mathbf{F}}]\}_i$ is, in fact, a profile of staggered intervals.

The main technical step in the proof of Lemma \ref{lem:tree-staggered-intervals} is to show that $S_v \subseteq [L_v^{\mathbf{F}}, H_v^{\mathbf{F}}]$ for each $v$.  Roughly speaking, we show that if some seller $v$ bids below $L_v^{\mathbf{F}}$, then we can find a certain pair of prices $p_1$ and $p_2$ such that $v$, a child of $v$, and a grandchild of $v$ all maximize utility at both $p_1$ and $p_2$.  We then show that such a circumstance leads to a contradiction, due to the relationship between these prices and the supports of the neighbors of the three nodes.

%Note that the definition of $H_v$ implies that $\sup_v \leq H_v$ for all $v$, so it suffices to show that $\inf_v \geq L_v$ for all $v$.  We begin with the easy case that $v$ is a leaf.

\begin{proposition}
\label{prop:tree.equil.Lv}
For each seller $v$, $S_v \subseteq [L_v^{\mathbf{F}}, H_v^{\mathbf{F}}]$.  Moreover, for each $j \in C(v)$ that is not a leaf, there exists $k \in C(j)$ with $\sup_k = L_v^{\mathbf{F}}$.
\end{proposition}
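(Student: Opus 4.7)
The plan is to prove the containment $S_v \subseteq [L_v^{\mathbf{F}}, H_v^{\mathbf{F}}]$ as two separate inclusions and then handle the ``moreover'' clause. The upper bound $\sup_v \leq H_v^{\mathbf{F}}$ follows almost immediately from Claim~\ref{claim.sup.desc} together with the definition of $H_v^{\mathbf{F}}$: if $v \in \{r\} \cup C(r)$ then $H_v^{\mathbf{F}} = 1$ and there is nothing to prove; otherwise $v \in CC(P(P(v)))$, so $H_v^{\mathbf{F}} = L_{P(P(v))}^{\mathbf{F}} = \max_{k' \in CC(P(P(v)))} \sup_{k'} \geq \sup_v$ by definition.

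For the lower bound $\inf_v \geq L_v^{\mathbf{F}}$, the easy cases are quickly dispatched. When $v$ is a leaf, $L_v^{\mathbf{F}} = L_{P(v)}^{\mathbf{F}}$ and Theorem~\ref{thm:necessary-eq}(\ref{thm:nec-union-support}) gives $S_v \setminus \{1\} \subseteq S_{P(v)}$, reducing the bound to that for the parent. When $v$ has children but no grandchildren, $L_v^{\mathbf{F}} = \inf_v$ by definition. The substantive case is $v$ with grandchildren, where $L_v^{\mathbf{F}} = \max_{k \in CC(v)} \sup_k$. I would argue by contradiction: assume some grandchild $k$ of $v$ has $\sup_k > \inf_v$, let $j = P(k) \in C(v)$, and set $p_0 := \sup_k$. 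The goal is to locate two prices $p_1 < p_2$ at which $v$, $j$, and $k$ are all optimal and derive a contradiction from their combined indifference conditions.

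For $p_2$, note that Claim~\ref{claim.sup.desc} gives $\sup_j > p_0$ (strictly, since $P(k) = j \neq r$), so $S_j$ contains prices just above $p_0$; since $k$ was chosen to maximize $\sup_{k'}$ over $k' \in CC(v)$, every $k' \in C(j)$ satisfies $\sup_{k'} \leq p_0$, and Theorem~\ref{thm:necessary-eq}(\ref{thm:nec-union-support}) then forces any such point of $S_j$ above $p_0$ to lie in $S_v$. Taking $p \to p_0^+$, the point $p_2 := p_0$ sits in the closure of $S_v \cap S_j$, is the top boundary of $S_k$, and (via the boundary convention of Section~\ref{sec:eq}) is optimal for all of $v$, $j$, $k$. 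For $p_1$, I descend within $S_j$ below $p_0$: because $\sup_k = p_0 \in S_j$ and $\sup_\ell < p_0$ for each $\ell \in C(k)$, points of $S_k$ just below $p_0$ lie in $S_j$, and iterating Observation~\ref{obs:contained-support} with Corollary~\ref{cor:local-infimum} exhibits (possibly in the limit) a price in $S_v \cap S_j \cap S_k$ strictly below $p_0$. Writing the three indifference conditions via~\eqref{eq:utility} at $p_1$ and $p_2$, the $u_k$ equation simplifies because $\sup_\ell < p_0$ for $\ell \in C(k)$ makes only $F_j$ relevant near $p_0$; since $v$ and $k$ are not neighbors, $F_k$ enters $u_j$ but not $u_v$, and the joint constraints on $F_v, F_j$ become inconsistent, yielding the contradiction.

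For the ``moreover'' clause, fix a non-leaf $j \in C(v)$, set $c := \max_{k \in C(j)} \sup_k$, and suppose for contradiction that $c < L_v^{\mathbf{F}}$. Combined with the already-established $\inf_v \geq L_v^{\mathbf{F}}$, every point of the gap $(c, L_v^{\mathbf{F}})$ lies outside $S_v \cup \bigcup_{k \in C(j)} S_k$, and hence outside $S_j$ by Theorem~\ref{thm:necessary-eq}(\ref{thm:nec-union-support}). If $j$ deviates to $p \in (c, L_v^{\mathbf{F}})$, he wins every neighboring market, so $u_j(p) = p \beta_j$, and letting $p \to L_v^{\mathbf{F}}$ gives $u_j \geq L_v^{\mathbf{F}} \beta_j$. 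But any $p \in S_j$ with $p \leq c$ yields $u_j(p) \leq c \beta_j < L_v^{\mathbf{F}} \beta_j$, contradicting $u_j(p) = u_j$; hence $S_j \cap [0, c] = \emptyset$ and $F_j(p) = 0$ for $p \leq c$. But then from the perspective of any $k \in C(j)$, the subtree rooted at $k$ becomes an isolated non-trivial subgame in which $\beta_{j,k}$ plays the role of $k$'s captive market (since $j$ never undercuts), so Theorem~\ref{thm:necessary-eq}(\ref{thm:nec-util-alpha}) forces $\sup_k = 1$, contradicting $\sup_k \leq c < 1$. The hardest step in this plan is placing $p_1$ simultaneously inside all three supports $S_v$, $S_j$, $S_k$: while $p_2$ emerges cleanly from top-end support geometry, the lower matchup requires carefully combining several support-containment relations and may need a limiting argument when the supports meet only at boundary points.
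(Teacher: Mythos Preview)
Your plan has the right shape (upper bound easy, leaf and no-grandchild cases easy, main contradiction for $CC(v) \neq \emptyset$), but the main case is missing the paper's central idea and your ``moreover'' argument contains a concrete error.

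\textbf{Main case.} You fix an arbitrary offending $v$ and try to find two prices $p_1 < p_2$ simultaneously optimal for $v$, $j$, and $k$. The difficulty you flag at the end is real and, as stated, fatal: there is no clean reason $S_v$, $S_j$, $S_k$ must share a point strictly below $p_0$, and your ``iterating Observation~\ref{obs:contained-support} with Corollary~\ref{cor:local-infimum}'' does not produce one. More importantly, even granting such a $p_1$, the indifference $u_v(p_1) = u_v(p_2)$ involves $\Fb_{P(v)}$ and \emph{all} $\Fb_{j'}$ for $j' \in C(v)$, not just $\Fb_j$; you never control the parent's distribution or the other children, so the ``joint constraints become inconsistent'' step is unsupported. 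The paper resolves both issues by a different device: it does \emph{not} look for a common point of three supports. Instead it chooses $v$ extremally (maximizing $\sup\{S_v \cap [0,L_v^{\mathbf F}]\}$, breaking ties toward the root), which forces $F_{P(v)}$ to be constant between the witness point $z \in S_v$ and $L_v^{\mathbf F}$. It then compares $u_v(z) \geq u_v(L_v^{\mathbf F})$ and, separately for \emph{every} child $j' \in C(v)$, proves $L_v^{\mathbf F}\,\Fb_{j'}(L_v^{\mathbf F}) \geq z\,\Fb_{j'}(z)$ using only that $L_v^{\mathbf F}$ is optimal for the relevant grandchild (an inequality, not an indifference). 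These termwise inequalities, together with the strict inequality on the $P(v)$-term, contradict $u_v(z) \geq u_v(L_v^{\mathbf F})$.

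\textbf{Moreover clause.} Your claim ``if $j$ deviates to $p \in (c, L_v^{\mathbf F})$, he wins every neighboring market, so $u_j(p) = p\beta_j$'' is wrong: every $k' \in C(j)$ has $\sup_{k'} \leq c < p$, so $j$ \emph{loses} those markets and $u_j(p) = p\,\beta_{vj}$ only. With that correction the chain $u_j \geq L_v^{\mathbf F}\beta_j > c\,\beta_j \geq u_j(p)$ collapses, and the subsequent ``isolated subgame'' step (which in any case is not a legitimate invocation of Theorem~\ref{thm:necessary-eq}(\ref{thm:nec-util-alpha})) is moot. The paper's argument here is short: it first establishes, via the child $j^\ast$ that does have a grandchild at $L_v^{\mathbf F}$, the global inequality $L_v^{\mathbf F}\,\Fb_v(L_v^{\mathbf F}) > w\,\Fb_v(w)$ for all $w < L_v^{\mathbf F}$; then for any non-leaf $j \in C(v)$ with $x := \max_{k \in C(j)} \sup_k < L_v^{\mathbf F}$, one has $x \in S_j$ and $u_j(x) = x\,\Fb_v(x)\,\beta_{vj} < L_v^{\mathbf F}\,\Fb_v(L_v^{\mathbf F})\,\beta_{vj} = u_j(L_v^{\mathbf F})$, contradicting optimality of $x$.
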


The other requirements of Lemma \ref{lem:tree-staggered-intervals} then follow from various applications of 
% MB: changes below
Theorem~\ref{thm:necessary-eq} (\ref{thm:nec-union-support}).
% Observation \ref{obs:contained-support}.
%We now establish the other properties required by Lemma \ref{lem:tree-staggered-intervals}.

\begin{proposition}
\label{prop:children.range}
For each seller $v$, $S_v \cap [L_v^{\mathbf{F}}, M_v^{\mathbf{F}}] = [L_v^{\mathbf{F}}, M_v^{\mathbf{F}}] \cap (\cup_{j \in C(v)}S_j)$.
\end{proposition}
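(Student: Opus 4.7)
The plan is to prove Proposition \ref{prop:children.range} by two set inclusions, with the central tool being Theorem~\ref{thm:necessary-eq}(\ref{thm:nec-union-support}), which says that each seller's support (excluding the price $1$) lies in the union of its neighbors' supports. Combined with the containment $S_w \subseteq [L_w^{\mathbf{F}}, H_w^{\mathbf{F}}]$ from Proposition \ref{prop:tree.equil.Lv}, we can localize where different sellers' supports can overlap. Specifically, the key geometric observations are that (i) $S_{P(v)} \subseteq [L_{P(v)}^{\mathbf{F}}, H_{P(v)}^{\mathbf{F}}] = [M_v^{\mathbf{F}}, H_{P(v)}^{\mathbf{F}}]$, and (ii) for each grandchild $k \in CC(v)$, $S_k \subseteq [L_k^{\mathbf{F}}, L_v^{\mathbf{F}}]$, since $H_k^{\mathbf{F}} = L_{P(P(k))}^{\mathbf{F}} = L_v^{\mathbf{F}}$.

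For the $\subseteq$ direction, fix $x \in S_v \cap [L_v^{\mathbf{F}}, M_v^{\mathbf{F}}]$. For $v \neq r$, by (i), $S_{P(v)} \cap [L_v^{\mathbf{F}}, M_v^{\mathbf{F}}) = \emptyset$, so for $x < M_v^{\mathbf{F}}$ with $x \neq 1$, Theorem~\ref{thm:necessary-eq}(\ref{thm:nec-union-support}) applied at $v$ forces $x \in \cup_{j \in C(v)} S_j$. For the boundary case $x = M_v^{\mathbf{F}} = L_{P(v)}^{\mathbf{F}}$, I will invoke the ``moreover'' clause of Proposition \ref{prop:tree.equil.Lv} applied to $P(v)$: since $v$ is a non-leaf child of $P(v)$, some $k \in C(v)$ has $\sup_k = L_{P(v)}^{\mathbf{F}} = M_v^{\mathbf{F}}$, whence $M_v^{\mathbf{F}} \in S_k \subseteq \cup_{j \in C(v)} S_j$. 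For $v = r$, where $M_r^{\mathbf{F}} = 1$, the case $x = 1$ is handled by using Theorem~\ref{thm:necessary-eq}(\ref{thm:nec-local-max}) to deduce that some child of $r$ has supremum $1$.

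For the $\supseteq$ direction, fix $x \in [L_v^{\mathbf{F}}, M_v^{\mathbf{F}}] \cap S_j$ for some $j \in C(v)$. Applying Theorem~\ref{thm:necessary-eq}(\ref{thm:nec-union-support}) at $j$ (whose neighbors are $v$ and $C(j) \subseteq CC(v)$), for $x \neq 1$ we obtain $x \in S_v$ or $x \in S_k$ for some $k \in CC(v)$. By observation (ii) above, $S_k \subseteq [L_k^{\mathbf{F}}, L_v^{\mathbf{F}}]$, so whenever $x > L_v^{\mathbf{F}}$ we must have $x \in S_v$. The remaining case $x = L_v^{\mathbf{F}}$ will be handled by showing $L_v^{\mathbf{F}} \in S_v$ directly from the definition: if $L_v^{\mathbf{F}} = \inf_v$, then $L_v^{\mathbf{F}} \in S_v$ trivially (support is closed); if $L_v^{\mathbf{F}} = \max_{k \in CC(v)} \sup_k$, one argues that this common boundary point must be optimal for $v$ itself (using closedness of $S_v$ together with the fact that $v$ has prices arbitrarily close to $L_v^{\mathbf{F}}$ from above, which follows from (i) by the same argument as the forward direction).

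The main obstacle is handling the two boundary transition points $x = L_v^{\mathbf{F}}$ and $x = M_v^{\mathbf{F}}$. At interior points of $(L_v^{\mathbf{F}}, M_v^{\mathbf{F}})$, Theorem~\ref{thm:necessary-eq}(\ref{thm:nec-union-support}) together with the interval containments from Proposition \ref{prop:tree.equil.Lv} gives a clean ``sandwich'' that rules out all neighbors except the intended ones. At the boundaries, however, supports of the ``wrong'' neighbors can touch, and one must appeal to the ``moreover'' refinement in Proposition \ref{prop:tree.equil.Lv}, the special role of $x=1$ (the only allowable atom), and the closedness of supports to close the argument.
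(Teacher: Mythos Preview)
Your approach matches the paper's: both inclusions follow from Theorem~\ref{thm:necessary-eq}(\ref{thm:nec-union-support}) applied at $v$ (for $\subseteq$) and at each $j\in C(v)$ (for $\supseteq$), combined with the containments $S_{P(v)}\subseteq[M_v^{\mathbf F},H_{P(v)}^{\mathbf F}]$ and $S_k\subseteq[L_k^{\mathbf F},L_v^{\mathbf F}]$ for $k\in CC(v)$ from Proposition~\ref{prop:tree.equil.Lv}. You are more explicit than the paper about the endpoint cases; two small remarks: the fact that some child of $r$ has supremum $1$ follows from item~(\ref{thm:nec-union-support}) applied to $S_r$ near $1$ rather than from item~(\ref{thm:nec-local-max}), and your sketch that ``$v$ has prices arbitrarily close to $L_v^{\mathbf F}$ from above'' is essentially the content of Proposition~\ref{prop:bottom.range}, so at this stage it is cleaner (as the paper later does) to get $L_v^{\mathbf F}\in S_j$ for some $j\in C(v)$ first via the ``moreover'' clause of Proposition~\ref{prop:tree.equil.Lv} and then push up to $S_v$.
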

%\begin{proof}
%If $v$ is a leaf then this is vacuously true.  Note that $(L_v^{\mathbf{F}}, M_v^{\mathbf{F}}) \cap S_{k} = \emptyset$ for each $k \in CC(v)$, since $L_v^{\mathbf{F}} = H_k^{\mathbf{F}} \geq \sup_k$ for any such $k$ by Proposition \ref{prop:tree.equil.Lv}.  Thus, for any $j \in C(v)$, we must have $S_j \cap [L_v^{\mathbf{F}}, M_v^{\mathbf{F}}] \subseteq S_v$ by Observation \ref{obs:contained-support}.  Similarly, if $v \neq r$ then $\inf_{P(v)} \geq L_{P(v)}^{\mathbf{F}} = M_v^{\mathbf{F}}$, $(L_v^{\mathbf{F}}, M_v^{\mathbf{F}}) \cap S_{P(v)} = \emptyset$.  We must therefore have $S_v \cap [L_v^{\mathbf{F}}, M_v^{\mathbf{F}}] \subseteq \cup_{j \in C(v)} S_j$, again by Observation \ref{obs:contained-support}.
%\end{proof}

\begin{proposition}
\label{prop:bottom.range}
$[L_v^{\mathbf{F}}, M_v^{\mathbf{F}}) \subseteq S_v$ for each seller $v$.
\end{proposition}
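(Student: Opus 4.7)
The plan is to induct on the depth of the subtree rooted at $v$. The base case, when $v$ is a leaf, is vacuous because $L_v^{\mathbf{F}} = M_v^{\mathbf{F}}$. For the inductive step, Proposition~\ref{prop:children.range} reduces the inclusion $[L_v^{\mathbf{F}}, M_v^{\mathbf{F}}) \subseteq S_v$ to showing $[L_v^{\mathbf{F}}, M_v^{\mathbf{F}}) \subseteq \bigcup_{j \in C(v)} S_j$. I will assume for contradiction that some connected component $G$ of $[L_v^{\mathbf{F}}, M_v^{\mathbf{F}}) \setminus \bigcup_j S_j$ is nonempty, and derive a contradiction in every possible shape of $G$.

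The core technical observation I will use is that on any such gap $G$, the utility $u_v(x)$ is a strictly increasing linear function of $x$. Indeed, since no $j \in C(v)$ bids inside $G$ and no seller has an atom below $1$ (Theorem~\ref{thm:necessary-eq}(\ref{thm:atoms-at-1})), each $\Fb_j$ is constant on $\overline{G}$; also $\Fb_{P(v)}(x) = 1$ throughout $[L_v^{\mathbf{F}}, M_v^{\mathbf{F}}]$ since $\inf_{P(v)} \geq L_{P(v)}^{\mathbf{F}} = M_v^{\mathbf{F}}$. Plugging these into the utility formula yields $u_v(x) = x \cdot K$ on $\overline{G}$, with slope $K$ strictly positive thanks to the additive $\beta_{v, P(v)}$ term (or $\alpha_r$, when $v = r$).

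The case analysis then proceeds on the endpoints of $G$. When $G = (a, b)$ is an interior gap with $L_v^{\mathbf{F}} < a < b < M_v^{\mathbf{F}}$, maximality places $a, b \in \bigcup_j S_j$, and Proposition~\ref{prop:children.range} then gives $a, b \in S_v$, so $u_v(a) = u_v(b) = u_v$, contradicting strict monotonicity on $\overline{G}$. When $G = (c, M_v^{\mathbf{F}})$ is a right-boundary gap, $c \in \bigcup_j S_j$ by maximality, hence $c \in S_v$ by Proposition~\ref{prop:children.range}, giving $u_v(c) = u_v$; strict monotonicity then forces $u_v(x) > u_v$ for $x$ slightly above $c$, violating the optimality of $u_v$.

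The hard case will be when $G$ contains $L_v^{\mathbf{F}}$, equivalently $L_v^{\mathbf{F}} \notin \bigcup_j S_j$. I will rule this out by exhibiting $L_v^{\mathbf{F}} \in \bigcup_j S_j$ in all configurations. If every child of $v$ is a leaf, then by definition $L_v^{\mathbf{F}} = \inf_v$, which lies in the closed set $S_v$, and Proposition~\ref{prop:children.range} moves it into $\bigcup_j S_j$. Otherwise $v$ has a non-leaf child $j$, for which the inductive hypothesis yields $[L_j^{\mathbf{F}}, M_j^{\mathbf{F}}) = [L_j^{\mathbf{F}}, L_v^{\mathbf{F}}) \subseteq S_j$, and closedness of $S_j$ promotes $L_v^{\mathbf{F}}$ into $S_j$, provided $L_j^{\mathbf{F}} < L_v^{\mathbf{F}}$. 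Establishing this strict inequality is the crux: when $CC(j) \neq \emptyset$ it follows from the strict-inequality clause of Claim~\ref{claim.sup.desc} applied to a grandchild of $j$ (whose parent, a grandchild of $v$, is never the root); when $CC(j) = \emptyset$, an equality $L_j^{\mathbf{F}} = L_v^{\mathbf{F}}$ would, via Proposition~\ref{prop:tree.equil.Lv}, force some leaf grandchild of $v$ to have singleton support $\{L_v^{\mathbf{F}}\}$, creating an atom below $1$ that is ruled out by Observation~\ref{obs:atoms-at-1}.
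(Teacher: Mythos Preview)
Your argument is correct, but it is considerably more elaborate than the paper's. Both proofs share the same engine: on any interval inside $[L_v^{\mathbf{F}}, M_v^{\mathbf{F}})$ where no child of $v$ bids, the utility $u_v(x)$ is strictly increasing (since $\Fb_{P(v)}\equiv 1$ there and each $\Fb_j$ is constant), so a gap produces a profitable deviation. The difference lies entirely in how one rules out the possibility that the gap begins at $L_v^{\mathbf{F}}$.

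You handle this via structural induction on subtree height: for a non-leaf child $j$, the inductive hypothesis gives $[L_j^{\mathbf{F}}, L_v^{\mathbf{F}}) \subseteq S_j$, and you then argue (with a further case split on whether $CC(j)=\emptyset$) that this interval is nonempty, so closedness pushes $L_v^{\mathbf{F}}$ into $S_j$. The paper avoids the induction entirely. It observes that Proposition~\ref{prop:tree.equil.Lv} already supplies a grandchild $k$ of $v$ with $\sup_k = L_v^{\mathbf{F}}$; since every child of $k$ has strictly smaller supremum (Claim~\ref{claim.sup.desc}), Observation~\ref{obs:contained-support} forces $L_v^{\mathbf{F}} \in S_{P(k)}$, and Proposition~\ref{prop:children.range} then gives $L_v^{\mathbf{F}} \in S_v$ directly. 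Having anchored $L_v^{\mathbf{F}}$ in $S_v$, the paper needs only the single ``first gap above $L_v^{\mathbf{F}}$'' argument, not your three-way case split on the shape of $G$. Your route works, but the paper's is shorter and extracts more mileage from the second clause of Proposition~\ref{prop:tree.equil.Lv}, which you invoke only inside one sub-case.
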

%\begin{proof}
%We first claim that $L_v^{\mathbf{F}} \in S_v$.  If $CC(v) = \emptyset$ then $L_v^{\mathbf{F}} = \inf_v \in S_v$ by definition.  Otherwise, there exists some $j \in C(v)$ and $k \in C(j)$ with $\sup_k = L_v^{\mathbf{F}}$ by Proposition \ref{prop:tree.equil.Lv}.
%%
%We must therefore have $L_v^{\mathbf{F}} \in S_j$ by Observation \ref{obs:contained-support}, and hence $L_v^{\mathbf{F}} \in S_v$ by Proposition \ref{prop:children.range}.
%
%Let $x = \sup\{x \colon [L_v^{\mathbf{F}}, x] \subseteq S_v\}$.  Since $L_v^{\mathbf{F}} \in S_v$ we know $x \geq L_v^{\mathbf{F}}$.  Suppose for contradiction that $x < M_v^{\mathbf{F}}$.  Then there exists some range $(x, x+\epsilon) \subseteq [L_v^{\mathbf{F}}, M_v^{\mathbf{F}})$ such that $x \in S_v$ but $(x, x+\epsilon) \cap S_j = \emptyset$ for each $j \in C(v)$.  However, since $\inf_{P(v)} \geq L_{P(v)}^{\mathbf{F}} = M_v^{\mathbf{F}}$, we then conclude that
%\[ u_v(x+\epsilon) = (x+\epsilon)(\beta_{v P(v)} + \sum_{j \in C(v)}\beta_{j v}\Fb_j(x)) > x(\beta_{v P(v)} + \sum_{j \in C(v)}\beta_{j v}\Fb_j(x))  = u_v(x),\]
%a contradiction.  Thus $x = M_v^{\mathbf{F}}$, so $[L_v^{\mathbf{F}}, M_v^{\mathbf{F}}) \subseteq S_v$ as required.
%\end{proof}

%Propositions \ref{prop:children.range} and \ref{prop:bottom.range} together immediately imply that our profile of intervals satisfy the second requirement of Lemma \ref{lem:tree-staggered-intervals}.
%
%\begin{corollary}
%For each seller $i$, $[L_i^{\mathbf{F}}, M_i^{\mathbf{F}}) \subseteq \cup_{j \in C(i)} S_j$.
%\end{corollary}

Combining the results in this section completes the proof of Lemma \ref{lem:tree-staggered-intervals}.

\subsection{Uniqueness of Intervals}
\label{sec:tree-solve}

In Section \ref{sec:tree.equil.form} we defined a profile of staggered intervals $\{[L_v^{\mathbf{F}}, H_v^{\mathbf{F}}]\}_v$ for equilibrium $\mathbf{F}$.  In Appendix \ref{app:tree-solve} we show that these intervals are uniquely determined by (and can be efficiently computed from) the network, completing the proof of Theorem \ref{thm:tree.single.captive}.
%A corollary of this analysis will be that every equilibrium is utility-equivalent for each seller.

\begin{lemma}
\label{lem:intervals-unique}
%MB changes below
There exists a profile of staggered intervals $\{[L_v, H_v]\}_v$ such that for every $v$ and every equilibrium $\mathbf{F}$ it holds that 
$[L_v^{\mathbf{F}}, H_v^{\mathbf{F}}] = [L_v, H_v]$.
%OLD: 
% There exists a profile of staggered intervals $\{[L_v, H_v]\}_v$ such that $\{[L_v^{\mathbf{F}}, H_v^{\mathbf{F}}]\}_v = \{[L_v, H_v\}_v$ for every equilibrium $\mathbf{F}$.
\end{lemma}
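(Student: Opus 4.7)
The plan is to show that the $L$-values of the staggered interval profile arising from any equilibrium $\mathbf{F}$ are pinned down by a recursion that depends only on the network data. Since by Lemma~\ref{lem:tree-staggered-intervals} we have $M_v^{\mathbf{F}} = L_{\pi(v)}^{\mathbf{F}}$ and $H_v^{\mathbf{F}} = L_{\pi^2(v)}^{\mathbf{F}}$ (with the root and children-of-root as boundary cases), uniqueness of $\{L_v^{\mathbf{F}}\}_v$ immediately yields uniqueness of the full interval profile, so I will focus on the $L$-values.

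My starting observation is that for every non-root seller $v$, evaluating indifference at $x = M_v^{\mathbf{F}} = L_{\pi(v)}^{\mathbf{F}}$ gives
$$u_v = L_{\pi(v)}^{\mathbf{F}} \cdot \beta_{v,\pi(v)}.$$
Indeed, at this price the parent $\pi(v)$ is at the infimum of its support, so $\Fb_{\pi(v)}(L_{\pi(v)}^{\mathbf{F}}) = 1$, while each child $c \in C(v)$ has $H_c^{\mathbf{F}} = L_{\pi(v)}^{\mathbf{F}}$ and no atom there (by Theorem~\ref{thm:necessary-eq}(\ref{thm:atoms-at-1})), so $\Fb_c(L_{\pi(v)}^{\mathbf{F}}) = 0$. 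For the root, Corollary~\ref{util_captive} gives $u_r = \alpha_r$ directly. Thus each seller's utility is determined by $L$-values.

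Next, I would use indifference at the other endpoint $x = L_v^{\mathbf{F}}$ for each non-leaf seller. For a non-root non-leaf $v$ this yields
$$L_{\pi(v)}^{\mathbf{F}} \cdot \beta_{v,\pi(v)} \;=\; L_v^{\mathbf{F}} \cdot \Bigl( \beta_{v,\pi(v)} + \sum_{c \in C(v)} \beta_{v,c} \cdot \Fb_c(L_v^{\mathbf{F}}) \Bigr),$$
and an analogous equation at the root. The ratio $L_{\pi(v)}^{\mathbf{F}}/L_v^{\mathbf{F}}$ is thus determined by the values $\Fb_c(L_v^{\mathbf{F}})$ for $v$'s children. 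Applying the same indifference logic one level deeper (for each child $c$, evaluated at its interior point $L_v^{\mathbf{F}}$) and using $u_c = L_v^{\mathbf{F}} \cdot \beta_{c,v}$ from the previous step, each $\Fb_c(L_v^{\mathbf{F}})$ is expressed in terms of the $L$-values in the subtree $T_c$ together with the network weights. This sets up a bottom-up recursion: starting from leaves, where $L_v^{\mathbf{F}} = L_{\pi(v)}^{\mathbf{F}}$ holds trivially, I compute, at each non-leaf $v$, the CDF values $\Fb_c(L_v^{\mathbf{F}})$ for each child $c$ from data already determined inductively in $T_c$, and then use $v$'s indifference to obtain $L_v^{\mathbf{F}}$ as a function of $L_{\pi(v)}^{\mathbf{F}}$.

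The main obstacle is to close the recursion: although the above reasoning expresses $L_v^{\mathbf{F}}$ in terms of $L_{\pi(v)}^{\mathbf{F}}$, both endpoints of the recursion are unknown a priori, so one must show the combined system of equations from leaves to root admits a unique solution. I would handle this by turning the recursion into a top-down propagation: the root equation $\alpha_r = L_r^{\mathbf{F}} (\alpha_r + \sum_{c \in C(r)} \beta_{r,c} \Fb_c(L_r^{\mathbf{F}}))$ combined with the inductive expression of each $\Fb_c(L_r^{\mathbf{F}})$ in terms of $L_c^{\mathbf{F}}$ (and more generally deeper $L$-values) is a univariate equation whose solution determines $L_r^{\mathbf{F}}$; once $L_r^{\mathbf{F}}$ is fixed, the same inductive step determines each $L_c^{\mathbf{F}}$ in turn. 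To verify uniqueness of the solution at each stage, I would exploit monotonicity: as a function of $L_v^{\mathbf{F}}$, the left-hand side of each indifference equation is strictly monotone (linear in $L_v^{\mathbf{F}}$) while the right-hand side, with deeper $L$-values expressed as functions of $L_v^{\mathbf{F}}$, is continuous and oppositely monotone, ensuring a unique crossing. The resulting algorithm traverses the tree in polynomial time, computing each $L_v$ from the network alone, and hence $L_v^{\mathbf{F}} = L_v$ for every equilibrium $\mathbf{F}$, completing the proof.
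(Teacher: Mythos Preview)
Your overall skeleton is right and close to the paper's: you correctly compute $u_v$ by evaluating at $M_v^{\mathbf{F}}=L_{\pi(v)}^{\mathbf{F}}$, and you correctly write down $v$'s indifference at $L_v^{\mathbf{F}}$, which involves the quantities $\Fb_c(L_v^{\mathbf{F}})=\Fb_c(M_c^{\mathbf{F}})$ for $c\in C(v)$. The gap is precisely at your step~3: you assert that ``each $\Fb_c(L_v^{\mathbf{F}})$ is expressed in terms of the $L$-values in the subtree $T_c$'' but do not say how. Indifference of $c$ at $L_v^{\mathbf{F}}$ only recovers $u_c$; it does not by itself produce $\Fb_c(M_c^{\mathbf{F}})$. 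And $v$'s indifference on $[L_v^{\mathbf{F}},M_v^{\mathbf{F}}]$ gives only the weighted \emph{sum} $\sum_c\beta_{v,c}\Fb_c(M_c^{\mathbf{F}})$, which does not separate the children when $|C(v)|>1$. Without pinning down each $\Fb_c(M_c^{\mathbf{F}})$, your monotonicity sketch has no concrete object to be monotone in, and the recursion does not close.

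The paper fills exactly this gap with one extra indifference you do not invoke: for some $j\in C(v)$ with $M_v^{\mathbf{F}}\in S_j$, equating $u_j(M_j^{\mathbf{F}})=u_j(M_v^{\mathbf{F}})$ yields $M_j^{\mathbf{F}}=M_v^{\mathbf{F}}\,\Fb_v(M_v^{\mathbf{F}})$, i.e.\ $L_v^{\mathbf{F}}=M_v^{\mathbf{F}}\,\Fb_v(M_v^{\mathbf{F}})$ (Claim~\ref{claim.midpoints}). Substituting this into $v$'s indifference at $L_v^{\mathbf{F}}$ gives the closed-form
\[
\Fb_v(M_v^{\mathbf{F}})=\frac{\beta_{v,\pi(v)}}{\beta_{v,\pi(v)}+\sum_{c\in C(v)}\beta_{v,c}\,\Fb_c(M_c^{\mathbf{F}})},
\]
which is a pure bottom-up recursion in the edge weights alone (Claim~\ref{claim.cdf.midpoints}). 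The $M$-values then follow as the product $M_v^{\mathbf{F}}=\prod_{w\succ v}\Fb_w(M_w^{\mathbf{F}})$, so no monotonicity or fixed-point argument is needed. In short: the missing piece is the relation $\Fb_c(L_v^{\mathbf{F}})=L_c^{\mathbf{F}}/L_v^{\mathbf{F}}$, obtained from a grandchild's indifference, which decouples the $\Fb$-recursion from the $L$-values entirely.
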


A corollary of our analysis is that there exists an equilibrium in which $S_v = [L_v, H_v]$ for each $v$, and in particular this equilibrium can be computed efficiently.  Another corollary is that that every equilibrium is utility-equivalent for each seller.

\begin{corollary}
All equilibria are utility-equivalent for all sellers.
\end{corollary}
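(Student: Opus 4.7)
The plan is to show that each seller's equilibrium utility can be expressed purely in terms of the staggered interval profile $\{[L_v, H_v]\}_v$ (with midpoints $M_v$) together with the fixed network weights. Since Lemma~\ref{lem:intervals-unique} asserts that this profile is identical across all equilibria, utility-equivalence will follow immediately.

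For the root $r$, I would invoke Corollary~\ref{util_captive}: since $r$ has an atom at $1$ (by Theorem~\ref{thm:tree.single.captive} item~3) and none of its neighbors do (by Lemma~\ref{lemma:no-same-atom}), pricing at $1$ wins only the captive market, so $u_r = \alpha_r$. This value depends only on the network.

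For each non-root $v$, I claim $u_v = M_v \cdot \beta_{v, P(v)}$, recalling that $\alpha_v = 0$. By Lemma~\ref{lem:tree-staggered-intervals}, $[L_v, M_v) \subseteq S_v$, and since $S_v$ is closed (as the support of a probability measure) we obtain $M_v \in S_v$; for leaves, $L_v = M_v$ is already in $S_v$ directly. Evaluating $u_v(\cdot)$ at $x = M_v$: the parent has support $S_{P(v)} \subseteq [M_v, H_{P(v)}]$ with no atom at $M_v$ (atoms occur only at $1$ by Theorem~\ref{thm:necessary-eq} item~\ref{thm:atoms-at-1}, and $M_v < 1$), so $F_{P(v)}(M_v) = 0$ and $v$ wins the shared market with $P(v)$ with probability~$1$. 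For each child $j \in C(v)$, $S_j \subseteq [L_j, M_v]$ together with the absence of an atom at $M_v$ gives $F_j(M_v) = 1$, so $v$ loses to $j$ with probability~$1$. Substituting into the utility formula~(\ref{eq:utility}) yields $u_v = u_v(M_v) = M_v \cdot \beta_{v, P(v)}$, a quantity determined entirely by the network.

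The point requiring the most care is the verification that $M_v < 1$ for every non-root $v$, so that no neighbor can have an atom at $M_v$. This holds because $M_v = L_{P(v)}$ and no non-root seller may play a pure strategy (which would require an atom, forbidden without a captive market by Theorem~\ref{thm:necessary-eq} item~\ref{thm:atoms-at-1}), while the root itself must also mix (for otherwise any child $v \in C(r)$ would face constant $u_v(x) = x\cdot\beta_{v,r}$ on an interval contained in $S_v$, contradicting the constant-utility condition on the support). Once this is established, the identity $u_v = M_v \cdot \beta_{v, P(v)}$ together with the uniqueness of $\{M_v\}_v$ from Lemma~\ref{lem:intervals-unique} gives the conclusion.
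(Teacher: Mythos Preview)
Your proof is correct and takes essentially the same approach as the paper, which simply states $u_v = u_v(M_v) = M_v \beta_{v,P(v)}$ for $v \neq r$ and $u_r = u_r(1) = \alpha_r$ without further elaboration. Your final paragraph justifying $M_v < 1$ is unnecessary (and the root-mixing argument there is a bit shaky as written), since $M_v < H_v \leq 1$ for every non-root $v$ is already part of the staggered-interval structure established immediately after Claim~\ref{claim.sup.desc}.
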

\begin{proof}
Let $\{M_v\}_v$ be the profile of interval midpoints corresponding to our tree network, from Theorem \ref{thm:tree.single.captive}.  Then, for each seller $v \neq r$, we have $u_v = u_v(M_v) = M_v \beta_{v P(v)}$ in every equilibrium.  Also, $u_r = u_r(1) = \alpha_r$ in every equilibrium.  The seller utilities are therefore equilibrium-invariant, as required.
\end{proof}

\subsection{Utilities and Captive Market Size}
\label{sec:tree-comp-statics}

One implication of our equilibrium analysis is that, in every equilibrium, each seller's utility increases as $\alpha_r$ increases.

\begin{proposition}
\label{prop:tree-alpha}
For fixed shared market sizes $\boldsymbol{\beta}$, the value of $u_v$ is strictly increasing as $\alpha_r$ increases, for every seller $v$.
\end{proposition}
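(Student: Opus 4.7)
My plan leverages the corollary at the end of Section~\ref{sec:tree-solve}: in every equilibrium, $u_r = \alpha_r$ (which is trivially strictly increasing in $\alpha_r$) and $u_v = M_v \beta_{v P(v)} = L_{P(v)} \beta_{v P(v)}$ for every $v \neq r$, using $M_v = L_{P(v)}$ from the staggered structure. Since the shared market sizes $\boldsymbol{\beta}$ are held fixed, it suffices to show that every $L_v$ is strictly increasing in $\alpha_r$. The plan has two parts: (i) show that each ratio $L_v / L_r$ depends only on $\boldsymbol{\beta}$, via a bottom-up recursion; and (ii) solve for $L_r$ in closed form as a function of $\alpha_r$.

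The pivotal step is the following ratio identity: for any non-root $v$ and any $k \in C(v)$,
\[
\Fb_k(L_v) \;=\; L_k/L_v.
\]
If $k$ is a leaf then $L_k = L_v$ and both sides are $1$. Otherwise, pick any grandchild $m \in C(k)$; by the staggered structure $H_m = L_{P(P(m))} = L_v$, so $L_v$ is the top of $m$'s support. Any neighbor of $m$ other than $k$ has support ending at $L_k < L_v$ with no atom there, so $u_m(L_v) = L_v \beta_{mk} \Fb_k(L_v)$. Combined with $u_m = L_k \beta_{mk}$ from the corollary, this yields the identity. Applied to grandchildren of $r$, the same argument gives $\Fb_j(L_r) = L_j/L_r$ for each $j \in C(r)$.

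Substituting the ratio identity into the indifference condition $u_v = u_v(L_v)$ for non-root $v$ (using $\Fb_{P(v)}(L_v) = 1$ since $L_v \le L_{P(v)}$ and $P(v)$ has no atom there) yields
\[
L_v \beta_{v P(v)} + \sum_{k \in C(v)} \beta_{vk} L_k \;=\; L_{P(v)} \beta_{v P(v)}.
\]
Setting $\phi_v := L_v / L_{P(v)}$, this becomes
\[
\phi_v \;=\; \frac{\beta_{v P(v)}}{\beta_{v P(v)} + \sum_{k \in C(v)} \beta_{vk} \phi_k},
\]
with base case $\phi_v = 1$ at leaves. By induction from leaves to root, each $\phi_v$ is determined purely by $\boldsymbol{\beta}$, so $L_v = L_r \prod_u \phi_u$ (the product running over the $r$-to-$v$ path, excluding $r$) is $L_r$ times a constant independent of $\alpha_r$. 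Finally, the root's indifference $\alpha_r = L_r(\alpha_r + \sum_{j \in C(r)} \beta_{rj} \Fb_j(L_r))$, combined with $\Fb_j(L_r) = \phi_j$, gives $L_r = \alpha_r / (\alpha_r + C)$ where $C := \sum_{j \in C(r)} \beta_{rj} \phi_j > 0$ depends only on $\boldsymbol{\beta}$. This is strictly increasing in $\alpha_r$, hence every $L_v$ and every $u_v$ is too.

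The main obstacle is the ratio identity $\Fb_k(L_v) = L_k/L_v$: it is not implied by $k$'s own indifference at $L_v$, which is satisfied automatically and gives no information about $\Fb_k$ itself. The correct source is the indifference of a grandchild of $v$, whose support has upper endpoint exactly $L_v$ by the staggered structure from Section~\ref{sec:tree.equil.form}; once this identity is in hand, the recursion for $\phi_v$ and the closed form for $L_r$ follow from routine substitution.
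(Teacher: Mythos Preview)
Your proposal is correct and is essentially a re-derivation, in different notation, of what the paper already establishes in Claims \ref{claim.midpoints} and \ref{claim.cdf.midpoints}. Your ratio identity $\Fb_k(L_v)=L_k/L_v$ is precisely $M_m = M_k\,\Fb_k(M_k)$ (Claim \ref{claim.midpoints}) after substituting $L_v=M_k$ and $L_k=M_m$; your recursion for $\phi_v$ is exactly equation \eqref{eq.Fb.v}; and your closed form $L_r=\alpha_r/(\alpha_r+C)$ is equation \eqref{eq.Fb.root}. The paper's own proof of Proposition \ref{prop:tree-alpha} simply cites these claims and packages the conclusion slightly differently, showing $M_v=\frac{\alpha_r X_v}{\alpha_r Y_v+Z_v}$ by top-down induction (Observation \ref{obs.2}); your version $L_v = L_r\cdot\prod\phi$ with $L_r=\alpha_r/(\alpha_r+C)$ is a more explicit special case of that form.

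One small inaccuracy: when deriving the ratio identity you write ``pick any grandchild $m\in C(k)$'' and treat $H_m=L_v$ as the top of $m$'s support. Theorem \ref{thm:tree.single.captive} only guarantees $S_m\subseteq[L_m,H_m]$, not that $H_m\in S_m$. You need the specific $m$ with $\sup_m=L_v$, whose existence is part of Proposition \ref{prop:tree.equil.Lv} (equivalently, it follows from $[L_k,M_k)\subseteq\cup_{m\in C(k)}S_m$ by taking a limit and using closedness of supports). With that choice of $m$ your computation $u_m(L_v)=L_v\beta_{mk}\Fb_k(L_v)=u_m=L_k\beta_{mk}$ goes through as written.
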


Our analysis in Section \ref{ss:cut} allows us to relate the size of a shared market $\beta_{v P(v)}$ to the utilities of the descendents of $v$.  Write $w \prec v$ to mean $w$ is a strict descendent of $v$.

\begin{proposition}
\label{prop:tree-beta}
Fix shared market sizes $\boldsymbol{\beta}$ and captive market size $\alpha_r$.  Choose node $v \neq r$.  If we take $\beta_{v P(v)} \to 0$, then $u_w \to 0$ for all $w \prec v$.  Alternatively, as $\beta_{v P(v)} \to \infty$, we again have $u_w \to 0$ for all $w \prec v$.
\end{proposition}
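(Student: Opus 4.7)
The plan is to show that $L_v \to 0$ in both limits.  Combined with the equilibrium identity $u_w = L_{P(w)}\,\beta_{w P(w)}$ for every non-root seller $w$ (which follows from $w$'s indifference at the price $M_w = L_{P(w)}$, where $w$ wins his upward market with probability $1$ and wins no child's market since $H_c = M_w$ and no non-root seller has an atom) and the monotonicity $L_w \leq L_{P(w)}$ (immediate from the staggered interval structure of Theorem~\ref{thm:tree.single.captive}), this yields $u_w \leq L_v\,\beta_{w P(w)} \to 0$ for every strict descendant $w \prec v$.

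The single tool I would extract from Section~\ref{sec:tree-solve} is the recursion
\[
\sum_{c \in C(w)} \beta_{wc}\,L_c \;=\; \beta_{w P(w)}\bigl(L_{P(w)} - L_w\bigr) \qquad \text{for every non-root } w,
\]
together with $\sum_{c \in C(r)} \beta_{rc}\,L_c = \alpha_r(1 - L_r)$ at the root.  This is a direct consequence of $w$'s indifference at his infimum price $L_w$ once one observes that $1 - F_c(L_w) = L_c/L_w$; the latter in turn follows from $c$'s own indifference on the sub-interval $(L_w, M_w]$ on which $c$ competes only with $w$.

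For Case~1 ($\beta_{v P(v)} \to \infty$), applying this recursion at the parent $p = P(v)$ and retaining only the $v$-term in the sum yields $\beta_{v P(v)}\,L_v \leq \alpha_r$ when $p = r$, and $\beta_{v P(v)}\,L_v \leq \beta_{p P(p)}\bigl(L_{P(p)} - L_p\bigr) \leq \beta_{p P(p)}$ when $p \neq r$.  Either way $L_v = O(1/\beta_{v P(v)}) \to 0$, as desired.

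For Case~2 ($\beta_{v P(v)} \to 0$), I would argue by contradiction.  Suppose $L_v$ does not tend to $0$; then along some sequence $\beta_{v P(v)} \to 0$ we have $L_v \geq \epsilon > 0$, and by compactness (every $L_w \in [0,1]$) we may pass to a sub-sequence along which every $L_w$ in the subtree rooted at $v$ converges, say $L_w \to L_w^*$, with $L_v^* \geq \epsilon$.  The recursion at $v$ forces $\sum_{c \in C(v)} \beta_{vc}\,L_c \leq \beta_{v P(v)}\bigl(L_{P(v)} - L_v\bigr) \to 0$, hence $L_c^* = 0$ for every child $c$ of $v$.  If any such child $c$ is a leaf then $L_c = L_v$ by the staggered structure, directly contradicting $L_c^* = 0 < \epsilon \leq L_v^*$; otherwise, picking any non-leaf child $c$, the recursion at $c$ gives $\sum_{k \in C(c)} \beta_{ck}\,L_k = \beta_{vc}(L_v - L_c) \to \beta_{vc}\,L_v^* > 0$, but each $L_k \leq L_c \to 0$ forces the left-hand side to $0$---a contradiction.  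The main delicate point will be organizing this case split between leaf and non-leaf children of $v$; once that is in place, the arithmetic is routine.
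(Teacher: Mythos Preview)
Your proof is correct, and it takes a different route from the paper's.  The paper does not prove Proposition~\ref{prop:tree-beta} directly from the tree structure; instead it points forward to the general cut estimates of Section~\ref{ss:cut} (Lemma~\ref{cut0} for the case $\beta_{vP(v)}\to 0$ and the subsequent large-edge lemma for $\beta_{vP(v)}\to\infty$), specializing them to the single edge $(v,P(v))$.  Your argument stays entirely inside the tree machinery of Section~\ref{sec:trees.1.captive}: you extract the recursion $\sum_{c\in C(w)}\beta_{wc}L_c=\beta_{wP(w)}(L_{P(w)}-L_w)$ (which is indeed an immediate consequence of Claims~\ref{claim.midpoints} and~\ref{claim.cdf.midpoints}) and use it to show $L_v\to 0$ in both limits, which combined with $u_w=L_{P(w)}\beta_{wP(w)}$ and $L_{P(w)}\le L_v$ for $w\prec v$ finishes the job.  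Your approach is more elementary and self-contained; the paper's buys generality, since the Section~\ref{ss:cut} lemmas apply to arbitrary networks and cuts, not just single tree edges.

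One small remark on your justification of $\Fb_c(L_w)=L_c/L_w$: you invoke ``$c$'s indifference on $(L_w,M_w]$,'' but that interval is $(M_c,H_c]$, the \emph{upper} portion of $c$'s range, which Theorem~\ref{thm:tree.single.captive} does not guarantee lies in $S_c$.  The identity is nonetheless correct---it is exactly $L_c=M_c\Fb_c(M_c)$, which follows from Claim~\ref{claim.midpoints} applied to any grandchild of $w$ (or directly from $c$'s indifference between $L_c$ and $M_c$, both of which \emph{are} in $S_c$).  This does not affect the validity of your argument, only the one-line justification.
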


\subsection{Special Case: A Line with a Single Captive Market}
\label{sec:tree-line}

Consider now the special case that our network is a line with a single captive market belonging to one of the endpoints, $r$.  Label the sellers $i_1, \dotsc, i_n$, with $i_1 = r$ and $i_k = P(i_{k+1})$ for all $k < n$.  A corollary of Theorem \ref{thm:tree.single.captive} is that there is a \emph{unique} equilibrium.  See Figure \ref{fig:tree}(a) for an illustration of the sketch of this equilibrium.

\begin{claim}
\label{claim:line-network}
For the line network with a single captive market belonging to a seller at one endpoint, there is a unique equilibrium.  Moreover, this equilibrium has a sketch of the following form: $|T| = n$, only seller $i_1$ has an atom at $1$, and $S_{i_k} = [t_{k+1}, t_{k-1}]$ for each $k$ (where we define $t_0 = 1$ and $t_{n+1} = t_n$ for notational convenience).
\end{claim}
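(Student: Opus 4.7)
The plan is to specialize Theorem~\ref{thm:tree.single.captive} to the line, then use the neighbor-support containment in Theorem~\ref{thm:necessary-eq}(\ref{thm:nec-union-support}) to upgrade the guaranteed containment $[L_v,M_v) \subseteq S_v$ to the full equality $S_v = [L_v,H_v]$, and finally invoke Lemma~\ref{lem:LP-to-eq} to conclude that the distributions themselves are uniquely determined. The first step is essentially bookkeeping: rooting at $i_1$, the staggering relations $L_{P(v)} = M_v$ and $H_j = M_v$ for $j \in C(v)$ collapse along the line to $H_{i_k} = M_{i_{k-1}} = L_{i_{k-2}}$ for $k \geq 3$, together with $M_{i_1} = H_{i_1} = 1$ and $M_{i_n} = L_{i_n}$. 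Setting $t_k := M_{i_k}$ for $k \geq 1$ (and using the conventions $t_0 = 1$, $t_{n+1} = t_n$), one reads off $[L_{i_k}, H_{i_k}] = [t_{k+1}, t_{k-1}]$ for every $k$, and in particular $t_1 = 1$, so $|T| = n$. The same theorem gives $R_0 = \{i_1\}$.

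The main step is to show that each $S_{i_k}$ actually equals $[t_{k+1}, t_{k-1}]$, rather than potentially being a proper subset. I would argue inductively down from the root. For $k=2$: since $i_1$'s only neighbor is $i_2$, Theorem~\ref{thm:necessary-eq}(\ref{thm:nec-union-support}) gives $S_{i_1} \setminus \{1\} \subseteq S_{i_2}$, and already $[t_2, 1) \subseteq S_{i_1}$, so $[t_2, 1) \subseteq S_{i_2}$. For $k \geq 3$: apply the same theorem to $i_{k-1}$. Every point of $[t_k, t_{k-1}) \subseteq S_{i_{k-1}}$ must lie in $S_{i_{k-2}} \cup S_{i_k}$, but Theorem~\ref{thm:tree.single.captive} gives $S_{i_{k-2}} \subseteq [t_{k-1}, t_{k-3}]$, which excludes anything strictly below $t_{k-1}$. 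Hence $[t_k, t_{k-1}) \subseteq S_{i_k}$; combining with $[t_{k+1}, t_k) \subseteq S_{i_k}$ and closedness of the support, the endpoint $t_{k-1}$ is also forced in, giving $S_{i_k} = [t_{k+1}, t_{k-1}]$.

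Once every equilibrium is shown to realize the same sketch, uniqueness of the distributions follows from Lemma~\ref{lem:LP-to-eq}. Each open interval $(t_{j+1}, t_j)$ lies in the support of exactly two sellers, so $R_j = \{i_j, i_{j+1}\}$; the associated $2 \times 2$ matrix has zero diagonal and positive off-diagonal entries $\beta_{i_j, i_{j+1}}$, hence is invertible, so the network has full rank with respect to the sketch. The lemma then delivers a unique equilibrium consistent with it. The main obstacle is the bootstrap in the second step: the general tree theorem only pins down the lower portion $[L_v, M_v)$ of each seller's support, and one must exploit the specific path structure of the line -- specifically that each interior node's grandparent has support bounded strictly above $t_{k-1}$ -- to force the upper portion $[M_v, H_v]$ into $S_v$ as well.
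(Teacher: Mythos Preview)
Your proposal is correct and follows essentially the same approach as the paper. The one minor difference is in how you obtain $[M_{i_k},H_{i_k})\subseteq S_{i_k}$: the paper applies item~(2) of Theorem~\ref{thm:tree.single.captive} directly, observing that $C(i_k)=\{i_{k+1}\}$ forces $[L_{i_k},M_{i_k})\subseteq S_{i_{k+1}}$, which is exactly $[M_{i_{k+1}},H_{i_{k+1}})\subseteq S_{i_{k+1}}$ after relabeling. You instead combine item~(1) of Theorem~\ref{thm:tree.single.captive} with Theorem~\ref{thm:necessary-eq}(\ref{thm:nec-union-support}) to reach the same conclusion; this is really a rederivation of item~(2) in the line case (and, incidentally, your ``induction'' is not actually inductive---each step $k$ uses only the containments from Theorem~\ref{thm:tree.single.captive}, not the previously established $S_{i_{k-1}}=[t_k,t_{k-2}]$). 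Both routes lead to the identical sketch, after which the full-rank check and Lemma~\ref{lem:LP-to-eq} finish the argument in the same way.
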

%\begin{proof}
%By Theorem \ref{thm:tree.single.captive}, there is a staggered profile of intervals such that $[L_i, M_i) \subseteq S_i \subseteq [L_i, H_i]$, and moreover $[L_i, M_i) \subseteq \cup_{j \in C(i)} S_j$.  Since $C(i_k) = i_{k+1}$ for all $k < n$, this implies that $[L_{i_k}, M_{i_k}) \subseteq S_{i_{k+1}}$, and hence $[M_i, H_i) \subseteq S_i$ for each $i$.  We therefore have $S_i = [L_i, H_i]$ for each seller $i$.
%
%We can now describe the sketch of our equilibrium.  We have a set of boundary points $T = \{t_1, \dotsc, t_n\}$ with $t_j = M_{i_j}$ for each $j \leq n$.  Our supports are of the form $S_{i_k} = [L_{i_k}, H_{i_k}] = [t_{k-1}, t_{k+1}]$ for each $i$.
%
%The analysis of Section \ref{sec:tree-solve} provides a sketch solution for this sketch.  Noting that the line network has full rank over this sketch, Lemma \ref{lem:LP-to-eq} implies that there is a unique equilibrium satisfying this sketch solution, and hence a unique equilibrium for our network.
%\end{proof}

\begin{example}
As an illustration of our equilibrium for the line, consider the case in which $\alpha_r = 1$ and $\beta_{i_k, i_{k+1}} = 1$ for all $k < n$.  By Claim \ref{claim:line-network}, the unique equilibrium has a sketch with boundary points $T = \{t_1, \dotsc, t_n\}$, $S_{i_k} = [t_{k+1}, t_{k-1}]$ for all $1 \leq k \leq n$, and an atom at $1$ for seller $i_1$.

Considering the utility of seller $i_{n-1}$ at declarations $t_n$ and $t_{n-1}$, we have
\[ t_{n-1} = u_{i_{n-1}}(t_{n-1}) = u_{i_{n-1}}(t_{n}) = 2t_n. \]
Moreover, considering the utility of each seller $k < n-1$ at points $t_k$ and $t_{k+1}$, we have
\[ t_{k} = u_{i_{k}}(t_{k}) = u_{i_{k}}(t_{k+1}) = t_{k+1}(1 + \Fb_{i_{k+1}}(t_{k+1})) = t_{k+1} + t_{k+2} \]
where we used Claim \ref{claim.midpoints} in the last equality to infer that $t_{k+1}\Fb_{i_{k+1}}(t_{k+1}) = t_{k+2}$.  A simple recursion then implies that $t_k = N_{n-k+1} \cdot t_n$ for each $k$,
where $N_i$ denotes the $i$th Fibonacci number, indexed so that $(N_0, N_1, N_2, \dotsc) = (1, 1, 2, \dotsc)$.
Since we know $t_1 = 1$, we can solve for $t_n$ to conclude that $t_k = \frac{N_{n-k+1}}{N_n}$ for each $k$.

Since $u_{i_k} = M_{i_k} = t_k$ for each $k$, we conclude $u_{i_k} = N_{n-k+1} / N_n$ for each seller $i_k$.  In particular, the utilities of sellers decay exponentially with the distance to the captive market, with the rate of decay converging to the golden ratio as $n$ grows large.
\end{example}

\subsection{Non-uniqueness for Cycles with a Single Captive Market}
\label{sec:cycle-single-captive}
%We have seen that for every tree network with a single captive market, all equilibria are utility equivalent.
We now show that Theorem \ref{thm:tree.single.captive} does not extend to networks with cycles.
%We now show that networks with cycles, even with only a single captive market, may exhibit multiple non-utility-equivalent equilibria.
Our example is a network of $5$ sellers in a cycle, where only one seller has a captive market.  The network will be symmetric with respect to reflection about seller $3$.  We will exhibit a non-symmetric equilibrium for this network.  Symmetry will then imply the existence of two different equilibria, in which some sellers achieve different utilities.  The example is graphically depicted in figure \ref{fig:cex}(a).   The details are postponed to the appendix and give 
$u_1 = 0.645242$ and $u_5 = 0.622108$.

\begin{figure}
\centering
\begin{tabular}{c|c}
\includegraphics[scale = 0.45]{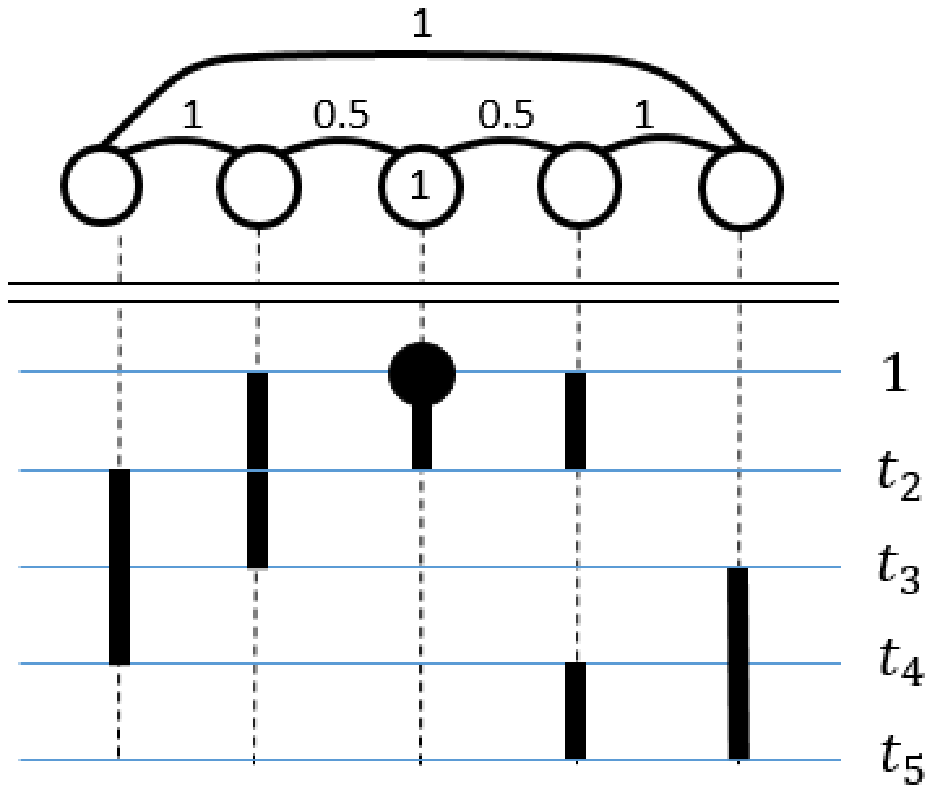} \hspace{0.5cm} &
\hspace{0.5cm} \includegraphics[scale = 0.4]{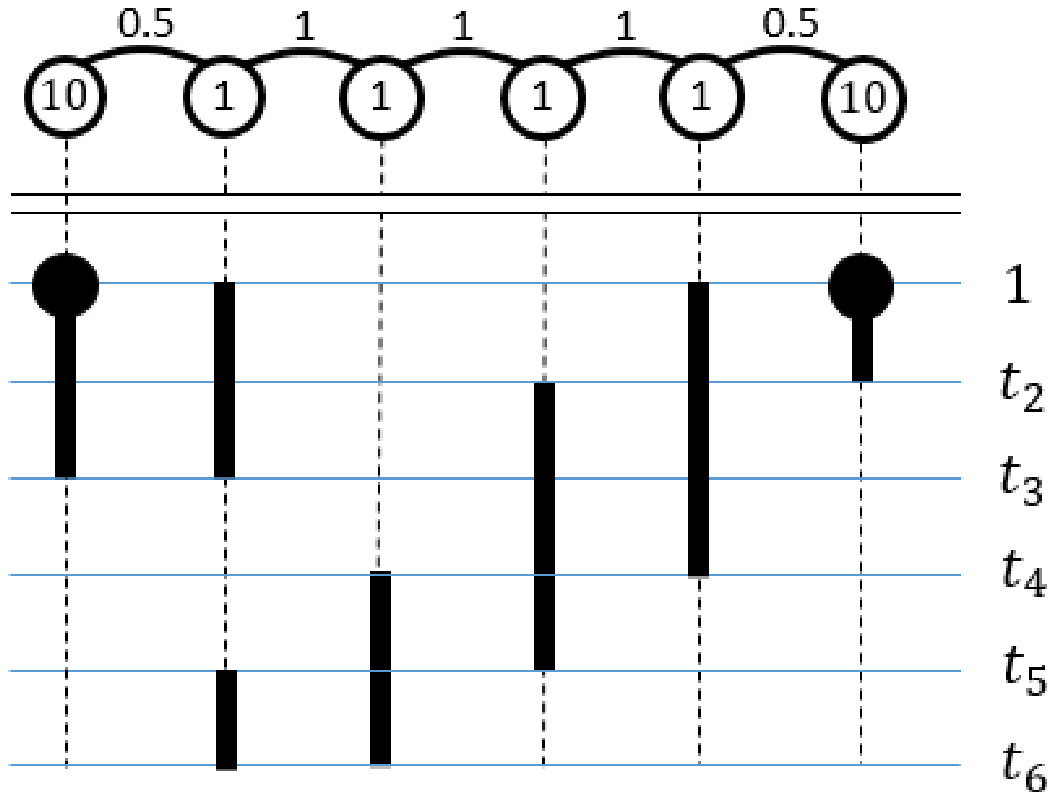} \\
(a) \hspace{0.5cm} & \hspace{0.5cm} (b)
\end{tabular}
\caption{Examples showing that Theorem \ref{thm:tree.single.captive} does not extend to (a) cycles (section \ref{sec:cycle-single-captive}) and 
(b) trees with multiple captive markets (section \ref{example:non-eq-utils}).  In each case, the network is symmetric but the equilibrium is asymmetric, 
and hence a second (reflected and non-equivalent) equilibrium exists.
%The dashed portion of the sellers' supports in (b) indicate the regions that need not be included in their supports, as described in Theorem \ref{thm:tree.single.captive}
}
\label{fig:cex}
\end{figure}

\section{Trees with Multiple Captive Markets}\label{sec:star}

We now consider extending our analysis of trees to allow for multiple captive markets.  As we will show, the results of Theorem \ref{thm:tree.single.captive} do not extend beyond a single captive market; we present an example with multiple equilibria that are not utility-equivalent.  However, we are able to fully characterize the (generically unique) equilibrium for the \emph{star} network with equal-sized shared markets but generic captive markets.

%\subsection{Line of 3 sellers}

%\subsection{Line of 4 sellers}

%\subsection{Stars}

\subsection{Star Networks}
In this section we study star networks with all edges having the same market size, and generic sizes of the captive markets.
There is a central seller that shares a market of size $1$ with each of $n$ additional peripheral sellers.
The central seller is labelled $0$ and the $n$ sellers are labelled $1,2,\ldots,n$.
Seller $i\in \{0,1,\ldots,n\}$ has a captive market of size $\alpha_i>0$.
We assume without loss of generality that $\alpha_1\geq \alpha_2\geq \ldots \geq \alpha_n$.
All claims in this section will be made for a generic $\vec{\alpha}$.
%We say that the star network is {\em generic} is all these inequalities are strict.

\begin{figure}
\centering
\begin{tabular}{c|c}
\includegraphics[scale = 0.4]{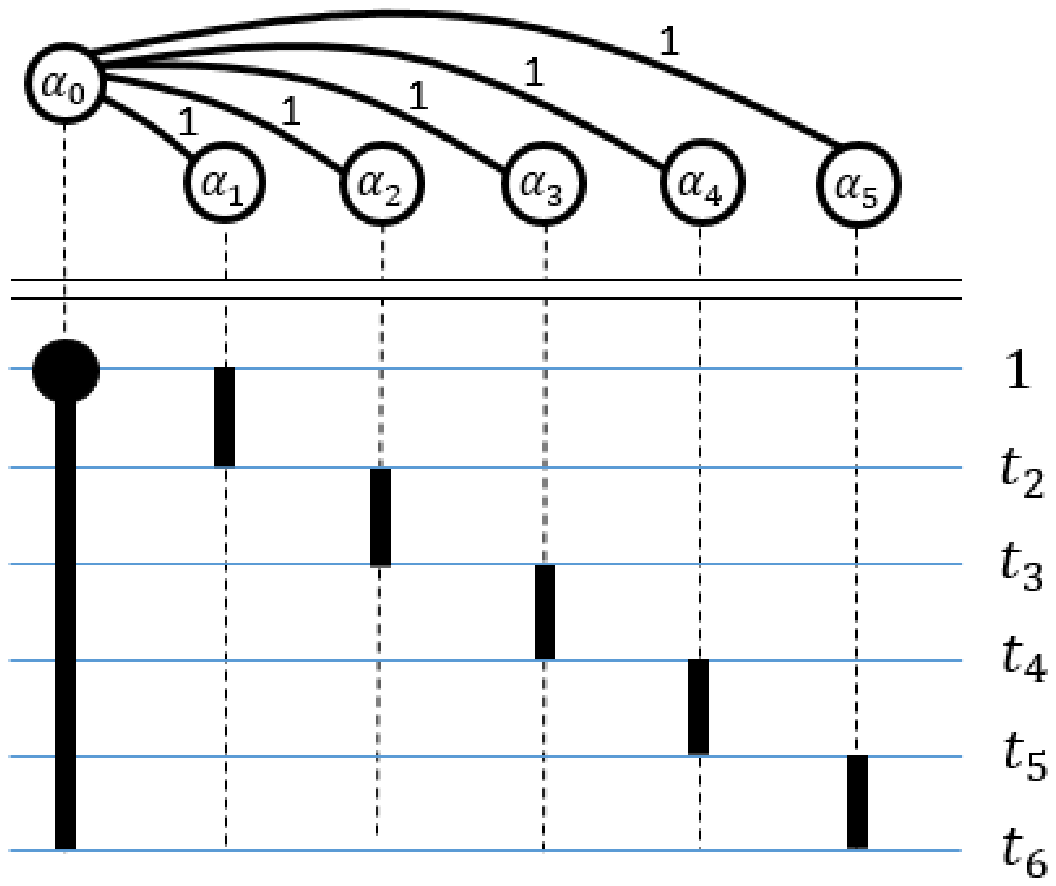} \hspace{0.5cm} &
\hspace{0.5cm} \includegraphics[scale = 0.4]{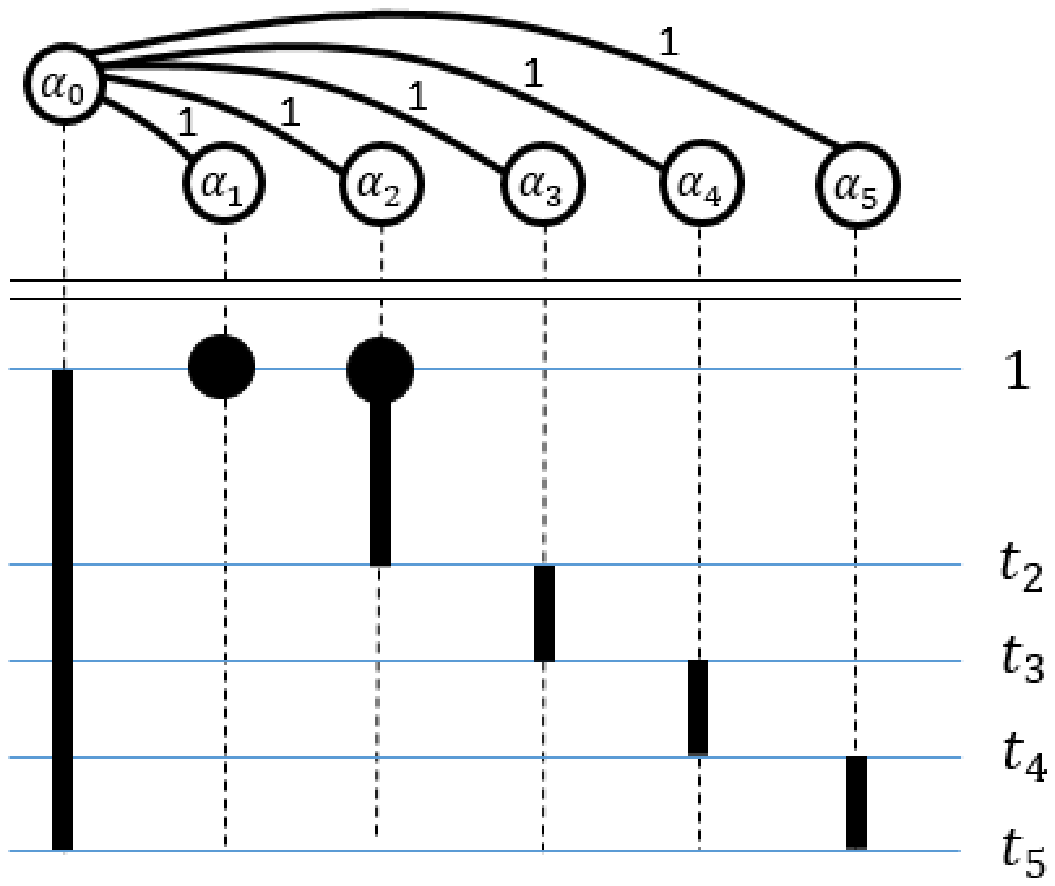} \\
(a) \hspace{0.5cm} & \hspace{0.5cm} (b)
\end{tabular}
\caption{Typical equilibria for the star network with unit shared markets and arbitrary captive markets.
%The dashed portion of the sellers' supports in (b) indicate the regions that need not be included in their supports, as described in Theorem \ref{thm:tree.single.captive}
}
\label{fig:star}
\end{figure}

We will show that for a star network there exists a unique equilibrium, generically with respect to $\vec{\alpha}$.
In Appendix~\ref{app:star-util} we discuss the sellers' equilibrium utilities.  
\begin{theorem}
\label{thm:star-unique}
A star network has a unique equilibrium, generically over $\vec{\alpha}$.
%For a generic star network, there exists a unique equilibrium.
\end{theorem}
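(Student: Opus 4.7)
The plan is to show that, generically in $\vec\alpha$, every equilibrium induces the same sketch (supports and atoms at $1$); Lemma~\ref{lem:LP-to-eq} then recovers a unique equilibrium from this sketch. For $n=1$ this is just the two-seller case already handled in Example~\ref{example:line-2}, so I would focus on $n\ge 2$.

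First, I would pin down the geometric form of the sketch. Since each peripheral $i$ has only seller $0$ as a neighbor, $i$'s indifference on $S_i$ reads $\Fb_0(x) = u_i/x - \alpha_i$. If two peripherals $i\neq j$ had overlapping support interiors, then $\Fb_0$ would satisfy two distinct affine-in-$1/x$ formulas on the overlap, forcing $u_i=u_j$ and $\alpha_i=\alpha_j$; generic $\vec\alpha$ rules this out, so peripheral supports have pairwise disjoint interiors. Theorem~\ref{thm:necessary-eq}(\ref{thm:nec-union-support}) gives $S_i\setminus\{1\}\subseteq S_0$ and $S_0\setminus\{1\}\subseteq\bigcup_i S_i$, while on any open gap in $[\inf_0,1]\setminus S_0$ no peripheral is active, so $0$'s utility $x(\alpha_0+\text{const})$ is strictly monotone on such a gap -- contradicting $0$'s indifference on the adjacent points of $S_0$. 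Hence $S_0=[t^*,1]$ for some $t^*>0$, and the peripherals tile $[t^*,1]$ into contiguous slots $S_{\pi(k)}=[t_{k+1},t_k]$, with ordered boundary points $t^*=t_{n+1}<t_n<\cdots<t_1\le 1$ for some permutation $\pi$.

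Next I would determine $\pi$ and the boundary points. Writing $c_k=\Fb_0(t_k)$, the indifference of $0$ on slot $k$ reads $u_0=x(\alpha_0+(k-1)+\Fb_{\pi(k)}(x))$; evaluating at the top and bottom of each interior slot (where $\Fb_{\pi(k)}$ equals $0$ and $1$ respectively) yields $t_k=u_0/(\alpha_0+k-1)$ for $k\ge 2$. To fix $\pi$, I would use the no-deviation condition for $\pi(k)$ pricing in slot $k+1$: the deviation utility at $x=t_{k+2}$ is $t_{k+2}(\alpha_{\pi(k)}+c_{k+2})$, which must not exceed $u_{\pi(k)}=t_{k+1}(\alpha_{\pi(k)}+c_{k+1})$. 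Using $\pi(k+1)$'s indifference to express $t_{k+2}/t_{k+1}=(\alpha_{\pi(k+1)}+c_{k+1})/(\alpha_{\pi(k+1)}+c_{k+2})$ and substituting, the inequality reduces to $f(\alpha_{\pi(k+1)})\le f(\alpha_{\pi(k)})$ with $f(\alpha)=(\alpha+c_{k+1})/(\alpha+c_{k+2})$; since $c_{k+2}>c_{k+1}$ (because $\Fb_0$ is strictly decreasing), $f$ is strictly increasing, whence $\alpha_{\pi(k+1)}\le\alpha_{\pi(k)}$. Generic strict ordering of $\vec\alpha$ then forces $\pi$ to be the identity.

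Finally, I would resolve the atom at $1$. By Lemma~\ref{lemma:no-same-atom} at most one of seller $0$ and peripheral $1$ carries it, giving two candidate cases: Case I ($0$ has the atom) with $u_0=\alpha_0$ and $u_1=\alpha_1+A_0(1)$, or Case II (peripheral $1$ has the atom) with $u_1=\alpha_1$ and $u_0=\alpha_0+A_1(1)$. In either case the $t_k$ are linear in $u_0$, and propagating the continuity relation $u_{k+1}=u_k-t_{k+1}(\alpha_k-\alpha_{k+1})$ from $u_1$ down to the bottom-of-support constraint $u_n=t_{n+1}(1+\alpha_n)$ produces a single linear equation for $u_0$, hence for the atom size. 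The sign of the resulting atom is a single scalar inequality on $\vec\alpha$, so for generic $\vec\alpha$ exactly one case is valid, the sketch is uniquely determined, and Lemma~\ref{lem:LP-to-eq} delivers a unique equilibrium. The main obstacle is the structural first step: while pairwise disjointness of peripheral supports is immediate, ruling out multi-component $S_i$ and showing $S_0$ is a single interval requires carefully combining Theorem~\ref{thm:necessary-eq}(\ref{thm:nec-union-support}) with the strict-monotonicity argument for $0$'s utility on any peripheral-free subinterval; once this clean tiling is established, Stages 2 and 3 are routine algebraic computations.
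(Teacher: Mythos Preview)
Your overall strategy matches the paper's: show every equilibrium has the ``tiled'' sketch with $S_0=[b_n,1]$ and $S_i=[b_i,b_{i-1}]$ ordered by the $\alpha_i$, then determine the unique atom configuration and invoke Lemma~\ref{lem:LP-to-eq}. Your ordering argument (comparing adjacent slots via the monotone function $f$) is a valid alternative to the paper's direct no-crossing argument in Observation~\ref{obs:no-crossing}.

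There is, however, a genuine gap in your atom analysis. You write ``by Lemma~\ref{lemma:no-same-atom} at most one of seller $0$ and peripheral $1$ carries it, giving two candidate cases,'' but Lemma~\ref{lemma:no-same-atom} only forbids \emph{neighboring} sellers from sharing an atom. The peripherals are not neighbors of one another, so nothing stops several of them from having atoms at $1$ simultaneously. In fact this happens: when $\alpha_0$ is small relative to the top few $\alpha_i$'s, the equilibrium has sellers $1,\ldots,j-1$ pricing \emph{deterministically} at $1$ (degenerate supports $S_i=\{1\}$), seller $j$ with a partial atom at $1$ and a nontrivial interval below, and the center with no atom. Your tiling with strict inequalities $t_{n+1}<t_n<\cdots<t_1$ and your formula $t_k=u_0/(\alpha_0+k-1)$ both implicitly assume every peripheral slot is non-degenerate, and your two-case dichotomy (center atom versus peripheral-$1$ atom) misses all of these configurations.

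The fix the paper uses is to run the recurrence $\Fb_0(b_{i-1})=\Fb_0(b_i)-(\alpha_i+\Fb_0(b_i))/(\alpha_0+i)$ downward from $\Fb_0(b_n)=1$ \emph{before} committing to an atom case, and let the first index $j$ at which this recurrence would drive $\Fb_0$ negative determine how many peripherals collapse to the point $\{1\}$. Generically in $\vec\alpha$ the recurrence never hits exactly zero, so this $j$ is well-defined and unique; only then do you solve for $u_0$ and the remaining $b_i$. Your Stage~3 needs to be reorganized along these lines rather than treated as a binary choice.
\textbf{}
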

We outline the proof; for the complete proof see Appendix~\ref{app:star}.
We first show that for any equilibrium, a sketch that satisfies the equilibrium must have the following form.
The support of the center seller is a non-trivial interval with supremum $1$. The support of each peripheral seller is an interval (possibly degenerate, containing only point $1$).  The interiors of the peripheral sellers' intervals do not overlap.  Moreover, the intervals of the peripheral sellers are ``ordered'' by $\vec{\alpha}$: if $\alpha_i > \alpha_j$, then $S_i$ lies (weakly) above $S_j$. 
%  is above the intervals of any other peripheral seller with a smaller captive market.
More precisely, for some $1=b_0\geq b_1\geq \ldots \geq b_n$ such that $b_n<1$, it holds that $S_0 = [b_n,1]$ and 
each peripheral seller $i$ has support $S_i=[b_i,b_{i-1}]$.
Next, % in Lemma~\ref{lem:star-sketch} 
we show that
for a star network with generic $\vec{\alpha}$,
there is a unique sketch (set of sellers with atoms at $1$, and setting of $\{b_i\}_{i\in [n]}$)
that can be satisfied in equilibrium.
% In this sketch it holds that
% We show that there is a unique way to set the $\{b_i\}_{i\in [n]}$ while not violating any equilibrium condition.
For any sketch with these supports, the network has full rank with respect to the given sketch. Equilibrium uniqueness then follows from Lemma~\ref{lem:LP-to-eq}. %, and we also explicitly present the equilibrium.

\subsection{Non-uniqueness of Equilibrium: Lines with Captive Markets}
%\subsection{Example: Non-uniqueness of Equilibrium}
\label{example:non-eq-utils}

We now show that tree networks can exhibit multiple, non-utility-equivalent equilibria when there is more than one captive market.  Our example network will consist of $6$ sellers in a line.  This network will be symmetric, but we will exhibit a non-symmetric equilibrium.  Symmetry will then imply the existence of two different equilibria, and we will show that some sellers achieve different utilities in these equilibria.  The example is graphically depicted in
figure \ref{fig:cex}(b), the details are postponed to the appendix and show $u_5 = 1.44112$ and $u_2 = 1.4403$

\ignore{
Moshe:
all the below is old

Note that \eqref{eq.F.recurrence} implies that each inequality $F_0(b_i) \leq 1$ can be expressed in the form $\alpha_i \leq g_i(\alpha_0, \alpha_{i+1}, \dotsc, \alpha_n)$ for a certain function $g_i$.  So we can view the equilibrium selection rule as finding the largest $j$ for which $\alpha_j$ is sufficiently large as a function of $\alpha_0$ and $\alpha_{j+1}, \dotsc, \alpha_n$.

---------------
\begin{theorem}
\label{thm:star}
For a generic star network, there exists a unique equilibrium. The equilibrium has the following form.
For some $1=b_0\geq b_1\geq \ldots \geq b_n$ such that $b_n<1$ it holds that the support of the center is the interval $[b_n,1]$.
Additionally, Each peripheral seller $i$ has support $S_i=[b_i,b_{i-1}]$.
There exists a function $g(\vec{\alpha})$ which outputs an index
$j\in \{0,1,\ldots,n-1\}$ such that
\begin{enumerate}
\item
if $j=0$ then the center has an atom at $1$, no other seller has an atom, and % since $u_0=\alpha_0$ then
$b_i=\frac{\alpha_0}{\alpha_0+i}$.
\item
if $j>0$ then the center has no atoms, sellers $1,2,\ldots, j-1$ always price at $1$, seller $j$ has an atom at $1$. For any $i\geq j$ it holds that $b_i=\frac{u_0}{\alpha_0+i}$.
\end{enumerate}

FINISH!

\begin{enumerate}
\item The support of the center is the interval $[b_n,1]$ and the center has an atom at 1, %Define $b_0=1$.
while each peripheral seller $i$ has support $S_i=[b_i,b_{i-1}]$, where $b_i= \frac{\alpha_0}{\alpha_0+i}$, and no atoms.
\item The support of the center is the interval $[b_n,1]$ and the center has no atom at 1. Define $b_0=1$.
Each peripheral seller $i$ has support $S_i=[b_i,b_{i-1}]$ for $b_i\leq b_{i-1}$. It holds that $b_n<1$.
Note that this allows for some prefix of peripheral sellers to only have an atom at 1, while the next seller may have a partial atom at one and all the rest have no atoms. The last peripheral seller has a non-trivial support.
\end{enumerate}
\end{theorem}
\begin{proof}
\end{proof}

OLD:
\begin{theorem}
\label{thm:star}
For a generic star network, any equilibrium is of one of the following two forms:
\begin{enumerate}
\item The support of the center is the interval $[b_n,1]$ and the center has an atom at 1, %Define $b_0=1$.
while each peripheral seller $i$ has support $S_i=[b_i,b_{i-1}]$, where $b_i= \frac{\alpha_0}{\alpha_0+i}$, and no atoms.
\item The support of the center is the interval $[b_n,1]$ and the center has no atom at 1. Define $b_0=1$.
Each peripheral seller $i$ has support $S_i=[b_i,b_{i-1}]$ for $b_i\leq b_{i-1}$. It holds that $b_n<1$.
Note that this allows for some prefix of peripheral sellers to only have an atom at 1, while the next seller may have a partial atom at one and all the rest have no atoms. The last peripheral seller has a non-trivial support.
\end{enumerate}
\end{theorem}
\begin{proof}
MOSHE: I will first try to get a better characterization, but we also need to prove the claim

We first try to understand the equilibrium of the first type. In such an equilibria the support of the center is the interval $(b_n,1]$ and the center has an atom at 1, while each peripheral seller $i$ has support $S_i=(b_i,b_{i-1})$, where $b_i= \frac{\alpha_0}{\alpha_0+i}$.

The center has an atom at $1$, thus $u_0=\alpha_0$.
Also, for every $i$, $b_i$ is in the support of the center, thus $\alpha_0=u_0=u_0(b_i)=b_i(\alpha_0+i)$. We conclude that $b_i= \frac{\alpha_0}{\alpha_0+i}$.
We next compute $F_0(b_i)$ for every $i$, starting from $F_0(b_n)=0$ and decreasing $i$ one at a time.
For every peripheral seller $i$ and every $x\in S_i$ % such that $x\neq 1$
it holds that $u_i(x)=x(\alpha_i+1-F_0(x))$.
This holds in particular at $b_{i-1},b_i\in S_i$. Thus with the convention that $F_0(b_0)=F_0^-(1)$ (and the atom of the center at 1 is of size $1-F_0^-(1)$)
$$u_i=u_i(b_{i-1})=b_{i-1}(\alpha_i+1-F_0(b_{i-1})) =
b_{i}(\alpha_i+1-F_0(b_{i})) = u_i(b_{i})
$$
alternatively
$$ F_0(b_{i-1})
= \alpha_i+1 - \frac{b_{i}}{b_{i-1}}\cdot
(\alpha_i+1-F_0(b_{i}))  =
\alpha_i+1 - \left(1-\frac{1}{\alpha_0+i}\right)\cdot
(\alpha_i+1-F_0(b_{i}))
$$
Thus
$$F_0(b_{i-1})
= F_0(b_{i})  + \frac{\alpha_i+1-F_0(b_{i})}{\alpha_0+i}
$$

Note that utilities can also be computed without solving for $F_0$ explicitly.
$$u_n=u_n(b_n)= b_n(\alpha_n+1) =  \frac{\alpha_0}{\alpha_0+n}(\alpha_n+1) $$

Since $b_{i-1}$ belongs to both $S_{i-1}$ and $S_i$, it holds that $u_i=u_i(b_{i-1})=b_{i-1}(\alpha_i+1-F_0(b_{i-1})) $
and $u_{i-1}=u_{i-1}(b_{i-1})=b_{i-1}(\alpha_{i-1}+1-F_0(b_{i-1})) $, thus
$$u_{i-1}=u_i - b_{i-1} ( \alpha_{i} -\alpha_{i-1} )  =
u_i - \frac{\alpha_0}{\alpha_0+i-1} ( \alpha_{i} -\alpha_{i-1} )
$$

Once we have the utilities we can derive the CDFs.
We first find $F_0$.
For every peripheral seller $i$ and every $x\in S_i= (b_{i-1},b_i)$ it holds that $u_i = u_i(x)=x(\alpha_i+1-F_0(x))$, thus on $S_i$ it holds that
$$
F_0(x) = \alpha_i+1 - \frac{u_i}{x}
$$
We can also find $F_i$ for every $i$ by looking at $x\in S_i$ and the fact that the center is indifferent between all the prices in his support.  For every such $x\in S_i$ it holds that $\alpha_0 = u_0 = u_0(x)=x(\alpha_0+i-F_i(x))$, thus
$$
F_i(x) = \alpha_0+i - \frac{\alpha_0}{x}
$$

Note that for this to be an equilibrium a necessary condition is that for every $i$ it holds that $F_0(b_i)<1$.
Another necessary condition is that for every $i$ it holds that $u_i\geq \alpha_i$.
MOSHE: we need to check what this mean about the alphas! Few simple examples: $u_n\geq \alpha_n$
implies that $\alpha_n\leq u_n(b_n)=b_n(\alpha_n+1) = \frac{\alpha_0}{\alpha_0+n}\cdot (\alpha_n+1)$ or equivalently that $n\cdot \alpha_n\leq \alpha_0$.

-----------------

We next move to consider the equilibria of the second type. The support of the center is the interval $(b_n,1)$ and the center has no atom at 1.
Sellers $1$ to $j-1$ always price at 1. Seller $j$ support is the interval $[b_j,1]$ and potentially has an atom of size $p_j\geq 0$ at 1.
Define $b_0=1$. Each peripheral seller $i>j$ has support $S_i=[b_i,b_{i-1}]$. We observe that since for every $i\geq j$, $b_i$ is in the support of the center, thus $u_0=u_0(b_i)=b_i(\alpha_0+i)$. We conclude that
\begin{equation}
\label{eq.b.formula}
b_i= \frac{u_0}{\alpha_0+i}
\end{equation}

We next compute $F_0(b_i)$ for every $i$, starting from $F_0(b_n)=0$ and decreasing $i$ one at a time till it equals $j$.
For every peripheral seller $i\geq j$ and every $x\in S_i$ % such that $x\neq 1$
it holds that $u_i(x)=x(\alpha_i+1-F_0(x))$.
This holds in particular at $b_{i-1},b_i\in S_i$.
% Thus with the convention that $F_0(b_0)=F_0^-(1)$ (and the atom of the center at 1 is of size $1-F_0^-(1)$)
$$u_i=u_i(b_{i-1})=b_{i-1}(\alpha_i+1-F_0(b_{i-1})) =
b_{i}(\alpha_i+1-F_0(b_{i})) = u_i(b_{i})
$$
alternatively
$$ F_0(b_{i-1})
= \alpha_i+1 - \frac{b_{i}}{b_{i-1}}\cdot
(\alpha_i+1-F_0(b_{i}))  =
\alpha_i+1 - \left(1-\frac{1}{\alpha_0+i}\right)\cdot
(\alpha_i+1-F_0(b_{i}))
$$
Thus
\begin{equation}
\label{eq.F.recurrence}
F_0(b_{i-1})
= F_0(b_{i})  + \frac{\alpha_i+1-F_0(b_{i})}{\alpha_0+i}
\end{equation}

% Now that we can compute the size of the atom

Once we have $F_0(b_j)$ we can compute $u_0$ as follows.
Since
$$\alpha_j=u_j = u_i(b_j)=b_j(\alpha_j+1-F_0(b_j)) = \frac{u_0}{\alpha_0+j}(\alpha_j+1-F_0(b_j))  $$
it holds that
\begin{equation}
\label{eq.u0.formula}
u_0 = \frac{\alpha_j( \alpha_0+j)}{(\alpha_j+1-F_0(b_j))}
\end{equation}

Note that if we fix $u_0$ the utilities of the peripheral sellers can be computed using the following:
$$u_n=u_n(b_n)= b_n(\alpha_n+1) =  \frac{u_0}{\alpha_0+n}(\alpha_n+1) $$

Since $b_{i-1}$ belongs to both $S_{i-1}$ and $S_i$, it holds that $u_i=u_i(b_{i-1})=b_{i-1}(\alpha_i+1-F_0(b_{i-1})) $
and $u_{i-1}=u_{i-1}(b_{i-1})=b_{i-1}(\alpha_{i-1}+1-F_0(b_{i-1})) $, thus
$$u_{i-1}=u_i - b_{i-1} ( \alpha_{i} -\alpha_{i-1} )  =
u_i - \frac{u_0}{\alpha_0+i-1} ( \alpha_{i} -\alpha_{i-1} )
$$

Once we have the utilities we can derive the CDFs.
We first find $F_0$.
For every peripheral seller $i\geq j$ and every $x\in S_i= (b_{i-1},b_i)$ it holds that $u_i = u_i(x)=x(\alpha_i+1-F_0(x))$, thus on $S_i$ it holds that
$$
F_0(x) = \alpha_i+1 - \frac{u_i}{x}
$$
We can also find $F_i$ for every $i\geq j$ by looking at $x\in S_i$ and the fact that the center is indifferent between all his bids.  For every such $x\in S_i$ it holds that $u_0 = u_0(x)=x(\alpha_0+i-F_i(x))$, thus
$$
F_i(x) = \alpha_0+i - \frac{u_0}{x}
$$
The size of the atom of $j$ can also be computed by $p_j= 1 - F_j^-(1) = 1- (\alpha_0+j - u_0)$

Note that for the above to be an equilibrium a necessary condition is that for every $i\geq j$ it holds that $F_0(b_i)<1$. It also must be the case that $1> p_j\geq 0$.
Another necessary condition is that for every $i\in \{0,1,\ldots, n\}$ it holds that $u_i\geq \alpha_i$.

MOSHE: we need to check what this mean about the alphas! Few simple examples: $u_n\geq \alpha_n$
implies that $\alpha_n\leq u_n(b_n)=b_n(\alpha_n+1) = \frac{\alpha_0}{\alpha_0+n}\cdot (\alpha_n+1)$ or equivalently that $n\cdot \alpha_n\leq \alpha_0$.

Moshe: we need to understand how $j$ is picked! My Guess: it is the first index $i$ for which $\alpha_i$ is ''low enough''

BRENDAN:  The following notes relate to understanding the choice of $j$, as well as showing that the equilibrium conditions are satisfied at the corresponding equilibrium choice.  Todo: formalize the argument below and integrate with the rest of the proof.  Also, it seems we have uniqueness -- formalize this?

\textbf{Choosing the value of $j$. }
Equation \eqref{eq.F.recurrence} gives a recurrence for $F_0(b_i)$, starting with $F_0(b_n) = 0$.  The choice of equilibrium is as follows.  If the calculation \eqref{eq.F.recurrence} gives $F_0(b_i) \leq 1$ for all $0 \leq i < n$, then the equilibrium is of the first form (i.e.\ seller 0 has an atom at $1$).  Otherwise, let $j$ be the maximum (i.e.\ first) value for which \eqref{eq.F.recurrence} yields $F_0(b_{j-1}) > 1$.  Our equilibrium is then of the second form, with this choice of $j$.  (In an equilibrium of the first form, we take $j = 0$).

Note that \eqref{eq.F.recurrence} implies that each inequality $F_0(b_i) \leq 1$ can be expressed in the form $\alpha_i \leq g_i(\alpha_0, \alpha_{i+1}, \dotsc, \alpha_n)$ for a certain function $g_i$.  So we can view the equilibrium selection rule as finding the largest $j$ for which $\alpha_j$ is sufficiently large as a function of $\alpha_0$ and $\alpha_{j+1}, \dotsc, \alpha_n$.

\textbf{Verifying equilibrium: case $j > 0$. }
Suppose $j > 0$.  The calculations above describe how to calculate each $F_0(b_i)$, $u_i$, and $b_i$, as well as $p_j$.  (Recall that $S_i = \{1\}$ for each $0 < i < j$).  By construction, each seller (including seller $0$) is indifferent within its support.  To verify that this is an equilibrium we must check that $b_n < b_{n-1} < \dotsc < b_j < 1$, $p_j \in [0,1)$, and that no seller $i$ can improve utility by bidding outside $S_i$.

That $b_n < \dotsc < b_j$ follows from \eqref{eq.b.formula}.  Furthermore, we have
\[ b_j = \frac{u_0}{\alpha_0 + j} = \frac{\alpha_j(\alpha_0+j)}{(\alpha_0 + j)(\alpha_j + 1 - F_0(b_j))} = \frac{\alpha_j}{\alpha_j + 1 - F_0(b_j)}\leq 1 \]
since $F_0(b_j) \leq 1$ by our choice of $j$.

The fact that no seller can improve utility by bidding outside $S_i$ follows as in Observation \ref{obs:no-crossing}.  Consider some $i \geq j$.  Since seller $i+1$ is indifferent on the range $[b_{i+2}, b_{i+1}]$, it must be that seller $i$ strictly prefers $b_{i+1}$ to other values in this range, since $\alpha_{i+1} < \alpha_i$.  Repeating this argument shows that $i$ strictly prefers $b_{i+1}$ to any value in the range $[b_{k+1}, b_{k}]$ for any $k > i$.  A similar argument shows that $i$ prefers $b_i$ to any value in the range $[b_k, b_{k-1}]$ for $0 \leq k \leq i$.  In particular, this implies that $u_i \geq \alpha_i$ for each $i \leq j$.  And certainly no agent can gain by bidding less than $b_n$.

Finally, we have (using \eqref{eq.u0.formula}) that
\begin{align*}
p_j & = 1 - (\alpha_0 + j - u_0) \\
& = 1 - (\alpha_0 + j)\left(1-\frac{\alpha_j}{\alpha_j + 1 - F_0(b_j)}\right) \\
& = 1 - (\alpha_0 + j)\frac{1-F_0(b_j)}{\alpha_j + 1 - F_0(b_j)}.
\end{align*}
Since $F_0(b_j) < 1$ by construction, we can conclude $p_j \leq 1$.  Moreover, by our choice of $j$, we have
\[ 1 < F_0(b_j) + \frac{\alpha_j + 1 - F_0(b_j)}{\alpha_0 + j} \]
which implies
\[ (\alpha_0 + j)\frac{1-F_0(b_j)}{\alpha_j + 1 - F_0(b_j)} < 1 \]
and hence $p_j > 0$ as required.  We conclude that the described strategy profile is well-defined and forms an equilibrium.

\textbf{Verifying equilibrium: case $j = 0$. }
Next suppose $j = 0$.  As above, we have $b_n < \dotsc < b_1 < 1$, and no seller $i$ can improve utility by bidding outside $S_i$.  In particular, $u_i \geq \alpha_i$ for each $i$.  Further, since $F_0(b_0) = F_0^{-}(1) < 1$ by our choice of $j$, we have that $p_0 = 1 - F_0^{-}(1) \in (0,1]$.  Thus the described strategy profile is well-defined and forms an equilibrium.

\end{proof}

}

%\input{sec-clique}
%\input{line.tex}

%\section{Utilities and Prices}

\section{Quantitative Estimates of Utility}\label{util}

This section provides bounds on utility in general networks.  We study how utility "flows" from sellers with captive markets to sellers that are ``far away'' from captive markets.  We study two notions of being ``far away:'' (1) large distance in the graph to any captive market, and (2) small cut separating from all captive markets.  

\begin{theorem}
\label{thm:util}
Take a non trivial network with $n$ sellers and let $\alpha_{max}=\max_i \alpha_i$ be the size of the largest captive market.  Then 
there exist constants $c_1, c_2$ that depend only on the maximum degree of the network as 
well as on the maximum ratio between sizes
of markets in the network such that, for every seller $i$, the following are true.
\begin{enumerate}
\item
\label{thm:util-dis}
In every equilibrium, $\alpha_{max} / (c_1)^{d_i} \le u_i \le \alpha_{max} \cdot (c_1)^{d_i}$ 
where $d_i$ is the distance of $i$ from the seller with captive market $\alpha_{max}$.
\item 
\label{thm:util-cut}
Let $E$ be an edge cut that separates $i$ from all captive markets and does not contain edges adjacent to $i$, 
and change all market sizes in $E$ to be of size $\eta$ then in every equilibrium of the modified network we have that
$u_i \le {c_2}^n \alpha_{max} \cdot \eta$ and $u_i \le {c_2}^n \alpha_{max} / \eta$.
\end{enumerate}
\end{theorem}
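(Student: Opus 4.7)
The workhorse is a one-step \emph{propagation lemma}: for adjacent sellers $i,j$ in any equilibrium, $u_i \ge u_j\cdot \beta_{ij}/(\alpha_j+\beta_j)$. I will derive this by evaluating $u_j$ at $x=\inf_j$ to obtain $u_j = \inf_j(\alpha_j+\sum_k\beta_{jk}(1-F_k^-(\inf_j)))\le \inf_j(\alpha_j+\beta_j)$, so $\inf_j\ge u_j/(\alpha_j+\beta_j)$; then $i$ can deviate to a price just below $\inf_j$ and surely win the shared market with $j$, giving $u_i\ge\inf_j\cdot\beta_{ij}$. With $c_1:=\max_{(i,j)\in E}(\alpha_i+\beta_i)/\beta_{ij}$, which depends only on the maximum degree and the maximum ratio of market sizes, the symmetric version of the lemma gives $u_j/c_1 \le u_i \le c_1 u_j$ for any neighbors.

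Part~\ref{thm:util-dis} then follows by iterating the lemma along a shortest path. For the lower bound, take the path of length $d_i$ from $i$ to the seller $j_0$ with $\alpha_{j_0}=\alpha_{\max}$ and combine with $u_{j_0}\ge\alpha_{\max}$ from Theorem~\ref{thm:necessary-eq}(\ref{thm:captive-positive}). For the upper bound, iterate from $i$ to any seller $j^*$ with a captive market attaining $u_{j^*}=\alpha_{j^*}\le\alpha_{\max}$ (furnished by Corollary~\ref{util_captive}); since $d(i,j^*)\le n$ may differ from $d_i$, this slack is absorbed into the constant.

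For Part~\ref{thm:util-cut}, small $\eta$, I let $j^*$ be the $i$-side seller with largest supremum. Since the $i$-side is captive-free, $j^*$ has no atom (Theorem~\ref{thm:necessary-eq}(\ref{thm:atoms-at-1})), and Theorem~\ref{thm:necessary-eq}(\ref{thm:nec-local-max}) forces any neighbor $k$ with $\sup_k\ge\sup_{j^*}$ to lie across the cut. Evaluating $u_{j^*}(\sup_{j^*})$, only cut neighbors contribute, so $u_{j^*}\le \sup_{j^*}\cdot\Delta\eta\le\Delta\eta$. Propagating to $i$ through at most $n$ edges via the lemma yields $u_i\le c_1^n\Delta\eta$, which is absorbable into the stated $c_2^n\alpha_{\max}\eta$.

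For Part~\ref{thm:util-cut}, large $\eta$, Part~\ref{thm:util-dis} first gives $u_w\le c_1^n\alpha_{\max}$ for every captive-side seller. Running the deviation in the lemma the other way at a cut edge $(j^*,k^*)$, seller $k^*$ could undercut $j^*$ on the market of size $\eta$, so $u_{k^*}\ge\inf_{j^*}\eta$ and thus $\inf_{j^*}\le c_1^n\alpha_{\max}/\eta$. The main obstacle is to convert this boundary infimum compression into a \emph{supremum} compression for the internal seller $i$: I will argue that the equilibrium indifference of $k^*$ makes $F_{k^*}$ grow sufficiently quickly above $\inf_{k^*}$ (since $k^*$'s utility at each point in $S_{k^*}$ must equal $u_{k^*}$, while the captive contribution $x\alpha_{k^*}$ grows linearly in $x$), so that $j^*$'s revenue curve $x(1-F_{k^*}(x))$ on the cut edge peaks within a bounded factor of $\inf_{k^*}$, pinning $\sup_{j^*}$ to $O(\alpha_{\max}/\eta)$. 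Theorem~\ref{thm:necessary-eq}(\ref{thm:nec-union-support}) then carries this compression inward by induction: the support of every internal $i$-side seller lies in the union of its neighbors' supports, so $\sup_v=O(\alpha_{\max}/\eta)$ for all internal $i$-side $v$. Since $i$ is internal, $\beta_i$ omits the cut, and therefore $u_i\le\sup_i\cdot\beta_i\le c_2^n\alpha_{\max}/\eta$, completing the proof.
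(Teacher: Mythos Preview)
Your propagation lemma and the treatment of Part~\ref{thm:util-dis} and the small-$\eta$ half of Part~\ref{thm:util-cut} coincide with the paper's Lemmas~\ref{nbrs}, \ref{path}, and~\ref{cut0}: bound $\inf_j$ from $u_j$, undercut to propagate utilities along edges, and for the small-cut case take the $i$-side seller with maximal supremum and observe that only cut edges can contribute there.

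The large-$\eta$ argument, however, has a real gap at exactly the step you flag as ``the main obstacle.'' First, the mechanism you invoke is backwards: the equilibrium indifference of $k^*$ pins down $\bar F_\ell$ for $\ell\in N(k^*)$ (in particular $\bar F_{j^*}$), not $\bar F_{k^*}$; the quantity $x(1-F_{k^*}(x))$ you want is controlled by the indifference of $k^*$'s \emph{neighbors}. Second, and more decisively, the target conclusion $\sup_{j^*}=O(\alpha_{\max}/\eta)$ is simply false. Take the three-seller line of Example~\ref{example:line-3-single-captive} with $\alpha_1=\alpha$, $\beta_{12}=\eta$, $\beta_{23}=1$, so the cut is $\{(1,2)\}$, $k^*=1$, $j^*=2$, and $i=3$. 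The unique equilibrium has $S_2=[t_3,1]$, i.e.\ $\sup_{j^*}=1$ for every $\eta$; only the internal seller has $\sup_3=t_2\approx\alpha/\eta$. So supremum compression cannot be established at the cut and then pushed inward via Theorem~\ref{thm:necessary-eq}(\ref{thm:nec-union-support}).

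The paper's route is to work with tail probabilities instead of suprema. One first bounds the utility of every seller adjacent to a large cut edge (the set $B$, on both sides) by a quantity independent of $\eta$; the paper does this by rescaling all markets by $1/\eta$ and applying Lemma~\ref{cut0} to $B$, though your Part~\ref{thm:util-dis} bound $u_{k^*}\le c_1^n\alpha_{\max}$ on the captive-side endpoints already suffices for what follows. The key step is then a deviation of $k^*$ to an \emph{arbitrary} price $x$: since $k^*$ would earn at least $x\eta\bar F_{j^*}(x)$ on the cut edge, one gets $\bar F_{j^*}(x)\le u_{k^*}/(x\eta)$ for all $x$ and all $i$-side cut-adjacent $j^*$. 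Now take the \emph{internal} $i$-side seller $v$ with the largest supremum. All of $v$'s neighbors are on the $i$-side; the internal ones contribute nothing at $\sup_v$ (no atoms, no captives), and the edges from $v$ to cut-adjacent neighbors are not in the cut and hence have bounded size. Therefore $u_v=\sup_v\sum_{j^*}\beta_{v j^*}\bar F_{j^*}(\sup_v)\le \sum_{j^*}\beta_{v j^*}\cdot u_{k^*}/\eta = O(c_1^n\alpha_{\max}/\eta)$, and your propagation lemma carries this to $u_i$.
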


An implication of item \ref{thm:util-cut} is that $i$'s utility goes to 0 if the cut size goes to 0 or infinity.

Subsection \ref{ss:dist} proves (a more explicit version of) part 1 of this theorem and subsection \ref{ss:cut} proves a more explicit and more general version of part
two of this theorem.  We will use the following parameters in our estimates:

\begin{enumerate}
\item The diameter of the graph denoted by $D$.
\item The ``Effective degree'' of a seller $i$: $\Delta_i=\max_{j} \frac{\alpha_i+\sum_{k \in N(i)} \beta_{ik}}{\beta_{ij}}$.  When all
market sizes ($\beta_{ij}$ and $\alpha_i$) faced by $i$ are the same
then this is exactly the degree of the seller $i$ plus $1$.  When the market sizes are not identical, this
parameter is increased by the imbalance.  The effective degree of the entire graph is $\Delta=\max_i \Delta_i$.
\item $\alpha_{max} = \max_i \alpha_i$.
\end{enumerate}

\subsection{Utilities and Distance} \label{ss:dist}

%We start by a sequence of propositions which are essentially quantitative variants of the characterizations in Section~\ref{sec:eq}.
% Theorem~\ref{thm:necessary-eq} (~\ref{thm:nec-util-alpha})

\ignore{
\begin{lemma} \label{util_captive}
In any network and any equilibrium at least one seller $i$ with a captive market $\alpha_i > 0$ gets utility $u_i = \alpha_i$.
\end{lemma}

\begin{proof}
According to proposition \ref{TBD} there exists some seller whose supremum bid is $1$, and without loss of generality we may assume
that none of his neighbors has an atom at 1 (otherwise take the neighbor, and according to lemma \ref{2} none of its neighbors will
have an atom at 1.)  When this seller bids arbitrarily close to $1$ he only wins his captive market, with total utility $\alpha_i$.
\end{proof}
} %ignore

The following lemmas provide bounds on seller utilities with respect to the network topology.  The first bounds the possible gaps between utilities of neighbors, and the second applies this bound along a path to a captive market.

\begin{lemma} \label{nbrs}
In any network and any equilibrium, for any two sellers $j \in N(i)$, we have that $u_j \ge u_i/\Delta_i$.
\end{lemma}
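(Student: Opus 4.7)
The plan is to bound $u_j$ from below by a carefully chosen deviation for $j$, and to bound $u_i$ from above using the fact that every price in the support of $i$ must yield utility $u_i$. Combining the two bounds will give the desired inequality in one step.

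First I would bound $\inf_i$ from below in terms of $u_i$. Every price $x \in S_i$ is optimal for $i$, so in particular
$$u_i = u_i(\inf_i) = \inf_i \left(\alpha_i + \sum_{k\in N(i)} \beta_{ik}(1-F_k(\inf_i))\right) \le \inf_i \cdot (\alpha_i + \beta_i),$$
which yields $\inf_i \ge u_i/(\alpha_i + \beta_i)$. (We are using that $\inf_i$ lies in the support $S_i$, which is a closed set; and by Theorem~\ref{thm:necessary-eq}(\ref{thm:captive-positive}) we in fact have $\inf_i > 0$, so $\inf_i$ is a valid positive price.)

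Next I would exhibit a concrete deviation for $j$. For any $y < \inf_i$, seller $j$ beats $i$ on their shared market with probability one, since $F_i(y) = 0$. Thus
$$u_j \ge u_j(y) \ge y \cdot \beta_{ij}(1-F_i(y)) = y\cdot \beta_{ij}.$$
Taking $y \to \inf_i^-$ (which is feasible as long as $\inf_i > 0$, which holds here; and in the corner case $\inf_i = 1$ we simply take $y \to 1^-$) gives $u_j \ge \inf_i \cdot \beta_{ij}$.

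Chaining the two bounds yields
$$u_j \ge \inf_i \cdot \beta_{ij} \ge \frac{u_i \cdot \beta_{ij}}{\alpha_i + \beta_i} = \frac{u_i}{(\alpha_i+\beta_i)/\beta_{ij}} \ge \frac{u_i}{\Delta_i},$$
using the definition of $\Delta_i$ in the last step. The argument is essentially two lines once the setup is in place; the only subtlety I expect to have to address is the corner case where $\inf_i = 1$ (i.e.~$i$ is a pure atom at $1$), to make sure the limit argument $y \to \inf_i^-$ is rigorous and that $j$ indeed wins with probability one for $y$ just below $1$ — but this follows because atoms live only at $1$ and $F_i(y) = 0$ for $y < 1$ in that case.
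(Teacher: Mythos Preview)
Your proof is correct and follows essentially the same route as the paper: show that $i$ never prices below $p = u_i/(\alpha_i+\beta_i)$, then let $j$ undercut to win the shared market with certainty, yielding $u_j \ge p\,\beta_{ij} \ge u_i/\Delta_i$. Your version is slightly more careful in taking the limit $y\to\inf_i^-$ rather than having $j$ price exactly at $p$ (which would raise a tie-breaking issue if $\inf_i=p$), but the substance is identical.
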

\begin{proof}
Clearly $i$ can never price below $p=\frac{u_i}{\alpha_i+\sum_{k \in N(i)} \beta_{ik}}$ since even winning all his markets at a lower price would lead
to lower utility than $u_i$.  It follows that if $j$ prices at $p$ he would certainly win his whole shared market with $i$, getting utility
at least $\frac{u_i b_{ij}}{\alpha_i+\sum_{k \in N(i)} \beta_{ik}} \ge u_i/\Delta_i$ which is a lower bound to $u_j$.
\end{proof}

\begin{lemma} \label{path}
In any network and any equilibrium, for every seller $j$ we have that $\alpha_{max} / \Delta^D \le u_j \le \alpha_{max} \cdot \Delta^D$.
\end{lemma}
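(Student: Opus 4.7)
The plan is to combine Lemma~\ref{nbrs} (which is a one-step Lipschitz-type bound on utilities along edges) with two anchors that tie utility to $\alpha_{max}$: a lower-bound anchor coming from the seller who owns the largest captive market, and an upper-bound anchor coming from Corollary~\ref{util_captive}. Walking along shortest paths of length at most $D$ then yields the desired factor of $\Delta^D$ on each side.

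For the lower bound, let $i^\star$ be the seller with $\alpha_{i^\star}=\alpha_{max}$. By Theorem~\ref{thm:necessary-eq}(\ref{thm:captive-positive}), every seller has utility at least $\alpha_i$ in any equilibrium, so in particular $u_{i^\star}\ge\alpha_{max}$. Take a shortest path $i^\star=v_0,v_1,\dots,v_\ell=j$ in the graph, of length $\ell\le D$. Lemma~\ref{nbrs} applied along each edge gives $u_{v_{s+1}}\ge u_{v_s}/\Delta_{v_s}\ge u_{v_s}/\Delta$, so by induction $u_j\ge u_{i^\star}/\Delta^\ell\ge \alpha_{max}/\Delta^D$.

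For the upper bound, Corollary~\ref{util_captive} guarantees the existence of at least one seller $k$ whose equilibrium utility satisfies $u_k=\alpha_k\le\alpha_{max}$. Take a shortest path $k=w_0,w_1,\dots,w_m=j$ of length $m\le D$. Lemma~\ref{nbrs} applied to the pair $(w_{s+1},w_s)$ (with the roles reversed) gives $u_{w_s}\ge u_{w_{s+1}}/\Delta_{w_{s+1}}$, i.e.\ $u_{w_{s+1}}\le\Delta\cdot u_{w_s}$. Iterating yields $u_j\le\Delta^m\cdot u_k\le\alpha_{max}\cdot\Delta^D$.

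Combining the two inequalities gives $\alpha_{max}/\Delta^D\le u_j\le\alpha_{max}\cdot\Delta^D$, which is exactly the statement of the lemma. The only non-trivial input is the existence of the upper-bound anchor provided by Corollary~\ref{util_captive}; everything else is a direct iteration of the neighbour bound of Lemma~\ref{nbrs} along paths whose length is controlled by the diameter $D$, so there is no real obstacle beyond carefully applying the one-step estimate in both directions.
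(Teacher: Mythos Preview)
Your proof is correct and follows essentially the same approach as the paper: anchor the lower bound at the seller with the largest captive market (using $u_{i^\star}\ge\alpha_{i^\star}=\alpha_{max}$), anchor the upper bound at the seller guaranteed by Corollary~\ref{util_captive} (equivalently Theorem~\ref{thm:necessary-eq}(\ref{thm:nec-util-alpha})) to have $u_k=\alpha_k\le\alpha_{max}$, and iterate Lemma~\ref{nbrs} along shortest paths of length at most $D$ in each direction. The paper's proof is terser but identical in structure.
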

\begin{proof}
For the lower bound on $u_j$ take the seller $i$ with $\alpha_i=\alpha_{max}$ and apply lemma \ref{nbrs} repeatedly along the shortest path between
$i$ and $j$.  For the upper bound on $u_j$ take the seller $i$ for which Theorem~\ref{thm:necessary-eq} (\ref{thm:nec-util-alpha})
 ensures $u_i=\alpha_i \le \alpha_{max}$ and
again apply lemma \ref{nbrs} repeatedly along the shortest path between
$i$ and $j$, but this time using it to provide upper bounds.
\end{proof}

The following examples show that both a dependence in $\Delta$ and an exponential dependence in $D$ are needed.

\begin{example}
Consider a line of length $n$, with a single captive market at one end, and all markets of equal sizes.  Formally, $\alpha_1=1$, and for all $1 \le i < n$
$\beta_{i,i+1}=1$.  Thus we have $\Delta=2$ and $D=n-1$.
While $u_1=1$ per Theorem~\ref{thm:necessary-eq} (\ref{thm:nec-util-alpha}), our analysis in Section \ref{sec:tree-line} shows that $u_i = \Theta(\rho^{-i})$, where $\rho$ is the golden ratio.  Thus
we see that utilities may indeed decrease exponentially in the distance.
\end{example}

\begin{example}
Consider a line of three sellers with $\alpha_1 = 1$, $\beta_{12}=C$, and $\beta_{23}=C^2$ for some large $C$ (so in particular $\Delta=C-1$).
While $u_1=1$ per Theorem~\ref{thm:necessary-eq} (\ref{thm:nec-util-alpha}), our analysis of Example~\ref{example:line-3-single-captive} shows that $u_2 = \Theta(\Delta)$.  Thus we see three interesting
and perhaps non-intuitive effects:
first, the utility of a seller with no captive market may be larger than that of any seller with a captive market, and the gap may be unbounded.  Second,
a small captive market may increase the utility of sellers by more than its size, again with the gap being unbounded.  Third, these gaps may indeed increase
with the effective degree even when the graph size is fixed.
\end{example}

We still do not know though whether the upper bound on $u_j$ may be improved, e.g. to $u_j \le \alpha_{max} \cdot \Delta$.

\subsection{Utilities Across Cuts} \label{ss:cut}

Take a network and consider ``part'' $G$ of the network which is ``rather separated'' from 
captive markets (perhaps except from tiny ones).  We would expect that seller utilities in this part
of the market be indeed quite low.  In this section we justify this intuition for two different notions
of ``separate'', one of them quite natural and the second more surprising.

More specifically, consider an edge-cut separating $G$ from the rest of the network.  It would seem
that if the sizes of all markets on this cut are very small then the influence of the captive markets outside of $G$ cannot be too large on $G$.  This is
indeed justified by the following lemma:

\begin{lemma} \label{cut0}
Let $G$ be a subset of sellers in the network such that for every $i \in G$,
$\alpha_i + \sum_{j \not\in G} \beta_{ij} \le \epsilon$. Denote by $\Delta_G=\max_{i,j \in G} \frac{\alpha_i+\sum_{k \in N(i)} \beta_{ik}}{\beta_{ij}}$
and by $D_G$ the diameter of the largest connected component of $G$.
Then for every $i \in G$ we have that
$u_i \le \epsilon \Delta_G^{D_G}$.
\end{lemma}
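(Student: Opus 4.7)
The plan is to leverage the hypothesis that, for every $i \in G$, the total size of $i$'s ``external'' market (captive plus edges leaving $G$) is at most $\epsilon$. I will first isolate, within each connected component of $G$, a distinguished seller whose equilibrium utility can be directly bounded by $\epsilon$, and then propagate that bound to every other seller in the component via Lemma~\ref{nbrs}.

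Fix a connected component $G'$ of $G$, and let $i^* \in G'$ be a seller whose equilibrium supremum $\sup_{i^*}$ is maximal within $G'$, breaking ties in favor of a seller that has an atom at $1$. Since $\sup_{i^*} \in S_{i^*}$ and every price in the support is optimal (Theorem~\ref{thm:necessary-eq}), one has
\[ u_{i^*} \;=\; \sup_{i^*}\Bigl(\alpha_{i^*} + \sum_{j\in N(i^*)} \beta_{i^*,j}\,\Fb_j(\sup_{i^*})\Bigr). \]
I claim $\Fb_j(\sup_{i^*}) = 0$ for every $j \in N(i^*) \cap G'$. Indeed, by the choice of $i^*$, $\sup_j \le \sup_{i^*}$. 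If $\sup_{i^*} < 1$, then no seller has an atom there (Theorem~\ref{thm:necessary-eq}(\ref{thm:atoms-at-1})), hence $\Fb_j(\sup_{i^*}) = 0$. If $\sup_{i^*} = 1$, then either $i^*$ has an atom at $1$, in which case $j$ has none by Lemma~\ref{lemma:no-same-atom}, or, by the tie-breaking rule, no seller in $G'$ with supremum $1$ has an atom at $1$; either way $\Fb_j(1) = 0$. Since $G'$ is a connected component of $G$, every neighbor of $i^*$ outside $G'$ actually lies outside $G$, so
\[ u_{i^*} \;\le\; \alpha_{i^*} + \sum_{j\in N(i^*)\setminus G} \beta_{i^*,j} \;\le\; \epsilon, \]
using the lemma's hypothesis.

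To extend this bound to the rest of $G'$, for any $j \in G'$ let $i^* = i_0, i_1, \ldots, i_d = j$ be a shortest path within $G'$, so $d \le D_G$. The proof of Lemma~\ref{nbrs} actually gives the pointwise bound $u_{i_k} \ge u_{i_{k+1}}\,\beta_{i_{k+1},i_k}/\bigl(\alpha_{i_{k+1}} + \sum_l \beta_{i_{k+1},l}\bigr)$, and since both endpoints of every edge on the path lie in $G$, the ratio $(\alpha_{i_{k+1}} + \sum_l \beta_{i_{k+1},l})/\beta_{i_{k+1},i_k}$ is at most $\Delta_G$ by definition. Iterating yields $u_j \le \Delta_G^d u_{i^*} \le \Delta_G^{D_G}\epsilon$. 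Applying this argument to each connected component of $G$ completes the proof.

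The main obstacle is the first step, in particular the atom bookkeeping needed to conclude $\Fb_j(\sup_{i^*}) = 0$ for every $j \in N(i^*)\cap G'$ when $\sup_{i^*} = 1$; this is precisely what forces the tie-breaking rule in the choice of $i^*$, and the argument closely parallels the proof of Corollary~\ref{util_captive}.
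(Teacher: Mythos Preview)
Your proof is correct and follows essentially the same approach as the paper: work component by component, pick the seller in $G'$ with maximal supremum (breaking ties in favor of an atom-bearer), show that this seller wins no markets inside $G'$ at its supremum price and hence has utility at most $\epsilon$, and then propagate the upper bound along shortest paths inside $G'$ via the pointwise inequality underlying Lemma~\ref{nbrs}. Your explicit case analysis for the atom bookkeeping at $\sup_{i^*}=1$ is exactly the ``as in the proof of Theorem~\ref{thm:necessary-eq}(\ref{thm:nec-util-alpha})'' step the paper cites, and your observation that the per-edge ratio (rather than the global $\Delta_{i_{k+1}}$) is what is needed to stay under $\Delta_G$ is precisely the ``only staying inside the connected component of $G$'' refinement the paper alludes to.
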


\begin{proof}
We will prove it for every connected component of $G$ separately.   Take the seller with highest supremum price in a component,
and as in the proof of Theorem~\ref{thm:necessary-eq} (\ref{thm:nec-util-alpha}),
we can assume without loss of generality that none of his neighbors in $G$ has an atom at 1.  When pricing arbitrarily
close to his supremum, he will not win any markets that are shared within $G$ and thus his utility will be at most $\epsilon$.
At this point we proceed like in lemmas \ref{nbrs} and \ref{path}, only staying inside the connected component of $G$ and thus we get the same
upper bound as in lemma \ref{path} but with $\Delta$ and $D$ replaced by $\Delta_G$ and $D_G$.
\end{proof}

In particular this shows a phenomena that can be expected: take an edge-cut that separates
a subset of sellers $G$ from any captive market, and let the size of all markets in this cut approach zero, then the
utilities of all sellers in $G$ will approach zero.

The following phenomena may be more surprising: if instead of letting all market sizes in the cut approach zero, we let them approach infinity,
then it turns out that the utilities of all sellers in $G$ will also go down to zero, possibly except those who are direct neighbors of one of these cut markets that approach infinity.
The following theorem looks at a situation where a graph has two different scales of market size: ``regular'' ones and ``big'' ones, where the big
markets separate a subset of sellers $G$ from all captive markets.  We show that in this case too, the sellers in $G$ get low utility.

Consider a network where all captive markets satisfy $\alpha_i \le 1$ and all joint market sizes are either
similarly small $\beta_{ij} \le 1$ or much larger $\beta_{ij} \ge M$ (for some $M>>1$), and denote by $E$ the set of large edges
and by $B$ the sellers that share some large market.
For the following lemma we denote
$\Delta=\max(\Delta_B, \Delta_{G-B})$ and $D=\max(D_G, D_{G-B})$ where $\Delta_B$ and $D_B$ are the
maximum diameter of a connected component and the effective degree
in the sub-network that only contains the large edges within $B$ and similarly $D_{G-B}$ and $\Delta_{G-B}$ in the sub-network that only contains the small edges within $G-B$.

\begin{lemma}
If $E$ separates
a subset $G$ of the sellers from all captive markets then for all $i \in G-B$ we have that $u_i \le n^2 \Delta^{2D} / M$.
\end{lemma}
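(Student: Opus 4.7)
The plan is to reduce the bound on $u_i$ to a bound on $\sup_i$, chain along supports inside $G$ until reaching the boundary of a large-edge cluster, and then use large-market competition to force $\sup_{k^*}$ to be small.

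Setup and reduction. Since the cut $E$ separates $G$ from every captive market, every seller $j\in G$ has $\alpha_j=0$. For $i\in G-B$ every incident edge is small ($\beta_{ij}\le 1$) and every neighbor of $i$ lies in $G$; evaluating the equilibrium condition at $x=\sup_i$ gives $u_i=\sup_i\sum_{j\in N(i)}\beta_{ij}(1-F_j(\sup_i))\le n\cdot \sup_i$, so it suffices to show $\sup_i\le n\Delta^{2D}/M$.

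Chaining to a large-edge seller. Let $H$ be the connected component of $i$ in the subgraph of $G$ induced by small edges; every seller in $H$ has $\alpha=0$ and, by Observation~\ref{obs:atoms-at-1}, no atoms. Observation~\ref{obs:contained-support} then gives for every $j\in H$ with $\sup_j<1$ a neighbor $k$ with $\sup_k\ge \sup_j$, and Observation~\ref{obs:local-supremum} rules out $\sup_j=1$ for $j\in H$ (there is no neighbor with a captive, hence no neighbor can have an atom at $1$). Iterating yields $\sup_i\le \sup_{k^*}$ where $k^*\in H$ attains $\max_{j\in H}\sup_j$. Because a seller in $G-B$ has all neighbors in $H$, Observation~\ref{obs:local-supremum} forces $k^*\in B$, and $k^*$ has a large-edge neighbor $l\notin H$ with $\sup_l\ge \sup_{k^*}$ and $\beta_{k^*l}\ge M$.

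Squeezing $\sup_{k^*}$. Pick $y\in S_{k^*}\cap S_l$ slightly below $\sup_{k^*}$ (possible since $\sup_l\ge \sup_{k^*}$ and neither seller has an atom at $\sup_{k^*}<1$). The equilibrium condition for $k^*$ at $y$, after discarding the small-edge-neighbor contributions inside $H$ (whose suprema are at most $\sup_{k^*}$), gives $u_{k^*}\ge yM\Fb_l(y)$. To bound $u_{k^*}$ and $u_l$, I apply Lemma~\ref{nbrs} iteratively: first within the big cluster containing $\{k^*,l\}$ (of diameter at most $D_B$ and effective degree at most $\Delta_B$), and then along a small-edge path from that cluster back to a seller with a captive market, yielding via Corollary~\ref{util_captive} and $\alpha_{max}\le 1$ the bound $u_{k^*},u_l\le \Delta^{2D}$. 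Choosing $y$ close enough to $\sup_{k^*}$ that $\Fb_l(y)$ is at least $1/n$, using the piecewise-inverse-linear structure of CDFs from Lemma~\ref{lem:LP-to-eq}, the inequality rearranges to $\sup_{k^*}\le n\Delta^{2D}/M$; combined with paragraph~1 this gives $u_i\le n\sup_i\le n^2\Delta^{2D}/M$, as claimed.

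Main obstacle. The delicate step is controlling $\Fb_l(y)$ from below in the final squeeze: the raw inequality $u_{k^*}\ge yM\Fb_l(y)$ becomes vacuous as $y\to \sup_{k^*}$, and the lower bound on $\Fb_l(y)$ needs to come from the piecewise-inverse-linear structure of the CDFs (Lemma~\ref{lem:LP-to-eq}) together with a polynomial bound on the number of boundary points. The other components --- the reduction from $u_i$ to $\sup_i$, the support chain within $H$, and the Lemma~\ref{nbrs} cascades used to bound $u_{k^*},u_l$ --- are straightforward extensions of the arguments already used for Lemma~\ref{path} and Lemma~\ref{cut0}.
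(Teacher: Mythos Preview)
Your argument breaks at the ``squeezing $\sup_{k^*}$'' step, and the obstacle you flag is fatal rather than merely delicate. The inequality $u_{k^*}\ge yM\Fb_l(y)$ only yields $y\le u_{k^*}/(M\Fb_l(y))$, so to cap $y$ you need $\Fb_l(y)$ bounded \emph{below}. But nothing forces this: if $\sup_l=\sup_{k^*}$ (which is entirely possible --- indeed Corollary~\ref{cor:local-infimum} makes shared infima and suprema the norm), then $\Fb_l(y)\to 0$ as $y\uparrow\sup_{k^*}$ and the bound is vacuous. Your proposed fix via Lemma~\ref{lem:LP-to-eq} does not help: that lemma only constructs \emph{one} piecewise-inverse-linear equilibrium from a given sketch solution; it does not assert that every equilibrium has this form, nor that the number of boundary points is polynomial. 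And even granting a piecewise-$1/x$ structure, there is no reason a piece near $\sup_{k^*}$ should have $\Fb_l\ge 1/n$. A second gap: your bound $u_{k^*},u_l\le\Delta^{2D}$ chains Lemma~\ref{nbrs} ``back to a seller with a captive market,'' but that path necessarily leaves $G$ and passes through sellers whose effective degree is not controlled by $\Delta=\max(\Delta_B,\Delta_{G-B})$.

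The paper's argument runs the large-edge step in the opposite direction. It first bounds the utilities of \emph{all} sellers in $B$: scale every market by $1/M$, so that edges leaving $B$ and all captive markets have weight $\le 1/M$, and apply Lemma~\ref{cut0} to $B$ with $\epsilon=n/M$; after scaling back, $u_j\le n\Delta_B^{D_B}$ for every $j\in B$. From this it extracts an \emph{upper} bound on $\Fb_j(x)$ for $j\in B$: since $j$ has a large-edge neighbor $i\in B$, $u_i\ge x\beta_{ij}\Fb_j(x)\ge xM\Fb_j(x)$ gives $\Fb_j(x)\le n\Delta_B^{D_B}/(Mx)$. Now take the seller $i^*$ with highest supremum in a component of $G-B$; at $x=\sup_{i^*}$ every neighbor in that component has $\Fb=0$, so the utility comes only from neighbors $j\in B$, each contributing $x\beta_{i^*j}\Fb_j(x)\le n\Delta_B^{D_B}/M$. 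Summing over at most $n$ such neighbors gives $u_{i^*}\le n^2\Delta_B^{D_B}/M$, and Lemma~\ref{path} restricted to $G-B$ propagates this to the rest. The point is that the large edge is used to push $\Fb_j$ \emph{down}, not to push $\sup_{k^*}$ down; your reduction to $\sup_i$ is not needed.
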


The point is that as $M$ is taken to infinity the utilities go to zero.

\begin{proof}
Let us first look at the sub-network on $B$ and try to apply lemma \ref{cut0} to it.  Notice that after scaling all market sizes down by a factor of $M$,
we have an instance where all markets that go out of $B$ have weight of at most $1/M$, and so we can apply lemma \ref{cut0} to it with $\epsilon \le n/M$
(where $n$ is the total number of sellers in the graph).  This would give us a bound of $u_i \le n \Delta_B^{D_B}/M$, where we need to emphasize that
$\Delta_B$ only takes into account the large edges in $B$ and does not depend on $M$.  (To be slightly more precise,
in case that there are
small edges between some the sellers in $B$, we need to apply a variant of lemma \ref{cut0} that defines $\Delta_G$ using only the big edges -- as we did --
but allows additional small edges as long as their total weight is also summed up as part of the markets whose weight is bounded by $\epsilon$.  The
proof of this variant is identical to the proof of lemma \ref{cut0}.  If we now scale back up all market weights by a factor of $M$,
the strategies of all sellers remain
exactly the same, but the utilities scale up by a factor of $M$ too, giving $u_i \le n \Delta_B^{D_B}$ for all $i \in B$.  This implies
that for any $i \in B$ and any possible price level $x$ we have that $1-F_j(x) \le n \Delta_B^{D_B} / (Mx)$ since otherwise the $i \in B$
that shares a large market with $j$ could price at $x$ and obtain more than $n \Delta_B^{D_B}$ utility just from this market.

Now consider a connected component of $G-B$ and take the seller $i$ in it with highest supremum price.  When he prices at his supremum he can only get
win markets that are shared with some $j \in B$ (since $E$ separates him from $G^c$). Now we can estimate his utility from above by noting
that a price of $x$ can only win the shared market $(i,j)$ with probability bounded by $1-F_j(x) \le n \Delta_B^{D_B} / (Mx)$ giving
utility of at most $n \Delta_B^{D_B} / M$ from this market an a total utility bounded by $u_i \le n^2 \Delta_B^{D_B} / M$.
For the rest of the sellers in $i$'s connected component we apply lemma \ref{path} on $G-B$ obtaining the lemma.
\end{proof}

\section{Conclusions and Open Problems}\label{sec:conclude}

We have studied price competition between sellers that have access to different sets of buyers, focusing on the case that at most two sellers can access each of the buyers' populations. Our work leaves an ample supply of open problems.
Is it possible to compute an equilibrium in any graph? We have reduced the problem of equilibrium computation to the problem of finding the supports and the set of sellers with an atom at 1, but it is not clear how to compute these in polynomial time in a general network.
Another interesting question concerns the structure of the sellers' supports in equilibrium: is the support of the equilibrium distributions necessarily finite for a generic instance?
Finally, one might like to 
%relax some of the other assumptions we make, 
consider the more general case of hyper-graphs instead of graphs, and relax our assumptions about the forms of the supply and demand curves.

\bibliographystyle{plain}
\bibliography{bib}

\appendix

\section{Trees with a Single Captive Market: Details}
\label{app:trees.1.captive}

We now provide the details of proofs from Section \ref{sec:trees.1.captive}.
%The following is a restatement of Claim \ref{claim.sup.desc}.

\begin{claim}[Restatement of Claim \ref{claim.sup.desc}]
We have $\Fb_r(1) > 0$ and $\Fb_1(i) = 0$ for all $i \neq r$.  Also, $\sup_i \leq \sup_{P(i)}$ for each seller $i \neq r$, with equality only if $P(i) = r$.
\end{claim}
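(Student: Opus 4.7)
The plan is to prove the three assertions in roughly the order (2), (3), (1), since each part relies on the previous ones.

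Part (2) is the easiest: since only $r$ has a captive market, Theorem~\ref{thm:necessary-eq}(\ref{thm:atoms-at-1}) immediately gives $\Fb_i(1) = 0$ for every $i \neq r$, as that theorem asserts that atoms can only occur at sellers with captive markets. I would also note here, for later use, that $\sup_r = 1$: by Corollary~\ref{util_captive} the seller with the largest supremum has a captive market, and $r$ is the only such seller, so this seller must be $r$.

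For part (3), I would argue by contradiction, taking a deepest offending node. Suppose some non-root $v$ satisfies $\sup_v > \sup_{P(v)}$, and choose such a $v$ of maximal depth in the tree. Then every child $w$ of $v$ satisfies $\sup_w \leq \sup_{P(w)} = \sup_v$, so $v$'s supremum meets or exceeds the suprema of all its neighbors ($P(v)$ and its children). Theorem~\ref{thm:necessary-eq}(\ref{thm:nec-local-max}) then forces $\sup_v = 1$. But $v$ has no captive market, so by Theorem~\ref{thm:necessary-eq}(\ref{thm:nec-1}) some neighbor of $v$ must carry an atom at $1$; by part (2) this neighbor can only be $r$, forcing $P(v) = r$, whence $\sup_{P(v)} = \sup_r = 1 = \sup_v$, contradicting the strict inequality. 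The equality case ($\sup_v = \sup_{P(v)}$ with $P(v) \neq r$) is handled by essentially the same argument: $v$ still dominates all its neighbors, so $\sup_v = 1$, forcing $r$ to be a neighbor of $v$, which contradicts $P(v) \neq r$ since $r$ is the root and so cannot be a child of $v$.

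For part (1), suppose for contradiction that $\Fb_r(1) = 0$. Then for each child $c$ of $r$, $c$ has no captive market, and none of its neighbors (the root $r$, which has no atom at $1$ by assumption, and its own children, which have no atoms by part (2)) has an atom at $1$. Theorem~\ref{thm:necessary-eq}(\ref{thm:nec-1}) then gives $\sup_c < 1$. Let $s = \max_{c \in C(r)} \sup_c < 1$. For any $x \in (s,1)$, every child $c$ of $r$ satisfies $F_c(x) = 1$, so $u_r(x) = x \alpha_r < \alpha_r$. Since $u_r = \alpha_r$ by Corollary~\ref{util_captive}, no point of $(s,1)$ can lie in $S_r$. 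But $\sup_r = 1$ and $r$ has no atom at $1$, which forces $S_r$ to contain points arbitrarily close to $1$ from below — a contradiction with $(s,1) \cap S_r = \emptyset$.

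The main obstacle I anticipate is getting the argument for part (1) to be airtight: one has to resist the temptation to think that $u_r(1)=\alpha_r$ is the only thing that matters, and instead exploit the fact that the convention $u_r(1) = \lim_{x\to 1^-} u_r(x)$ together with the absence of an atom forces $r$'s support to approach $1$ continuously, which is inconsistent with the gap $(s,1)$ being cut out of the support. Part (3) also required some care in ordering the induction (deepest violation) and in correctly identifying the neighbor structure in a rooted tree, but once set up it reduces cleanly to the tools of Theorem~\ref{thm:necessary-eq}.
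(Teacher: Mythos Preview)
Your proposal is correct. The overall architecture matches the paper's proof: part~(2) via Observation~\ref{obs:atoms-at-1}, part~(3) via a deepest-offender argument, and part~(1) by contradiction assuming $r$ has no atom. The execution differs in two places worth noting.

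For part~(3), once you have that the deepest offender $v$ weakly dominates all its neighbors, you invoke Theorem~\ref{thm:necessary-eq}(\ref{thm:nec-local-max}) to force $\sup_v = 1$ and then Theorem~\ref{thm:necessary-eq}(\ref{thm:nec-1}) to produce a neighbor with an atom, which must be $r$. The paper instead just observes directly that $u_v(\sup_v) = 0$ (since $\alpha_v = 0$ and every neighbor $\ell$ has $\Fb_\ell(\sup_v) = 0$, using that no neighbor other than possibly $r$ can have an atom), contradicting positive utility. Your route is a clean black-box use of the structure theorems; the paper's is a one-line direct computation.

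For part~(1), the arguments genuinely diverge. You show (via Theorem~\ref{thm:necessary-eq}(\ref{thm:nec-1})) that every child of $r$ has supremum strictly below~$1$, and then argue that this cuts a gap $(s,1)$ out of $S_r$, which is incompatible with $\sup_r = 1$ and no atom. The paper instead uses Observation~\ref{obs:contained-support} to find a child $j \in C(r)$ with $\sup_j \geq \sup_r$; combining this with the already-proved part~(3) for grandchildren gives $u_j(\sup_j) = 0$, a contradiction. Your argument is perhaps more self-contained (it does not need to revisit the utility-zero computation), while the paper's reuses the machinery from part~(3) more directly. Both are short and valid.
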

\begin{proof}
Observation \ref{obs:atoms-at-1} immediately implies $\Fb_i(1) = 0$ for all $i \neq r$.   Next, suppose for contradiction that there exists a seller $i \not\in C(r) \cup \{r\}$ such that $\sup_i \geq \sup_{P(i)}$.  Let $j$ be a seller of maximum depth such that $\sup_j \geq \sup_{P(j)}$.  The maximality of $j$ then implies $\sup_j \geq \sup_k$ for all $k \in C(j)$, and hence $\sup_j \geq \sup_\ell$ for all $\ell \in N(j)$.  This implies $u_j(\sup_j) = 0$, and hence $u_j = 0$, contradicting Observation \ref{obs:inf-price-prop}.

Next suppose $A_r(1) = 0$.  By Observation \ref{obs:contained-support}, there exists some $j \in C(r)$ with $\sup_j \geq \sup_r$.  Since we have already shown $\sup_j \geq \sup_k$ for all $k \in C(j)$, we have $u_j = 0$, contradicting Observation \ref{obs:inf-price-prop}.  Thus $\Fb_r(1) > 0$, and hence $\sup_r = 1$.  From this we can infer $\sup_j \leq \sup_{P(j)} = 1$ for all $j \in C(r)$.
\end{proof}

\begin{proposition}[Restatement of Proposition \ref{prop:tree.equil.Lv}]
%\label{prop:tree.equil.Lv}
For each seller $v$, $S_v \subseteq [L_v^{\mathbf{F}}, H_v^{\mathbf{F}}]$.  Moreover, for each $j \in C(v)$ that is not a leaf, there exists $k \in C(j)$ with $\sup_k = L_v^{\mathbf{F}}$.
\end{proposition}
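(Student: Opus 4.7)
The proposition has two assertions: the containment $S_v\subseteq[L_v^{\mathbf{F}},H_v^{\mathbf{F}}]$ and the existence, for every non-leaf $j\in C(v)$, of a child $k\in C(j)$ with $\sup_k=L_v^{\mathbf{F}}$. My plan is to prove both by induction on the depth of $v$ in the tree, working bottom-up so that the statement is available for every strict descendant of $v$ while analyzing $v$ itself. I will separate the containment into an upper bound $\sup_v\le H_v^{\mathbf{F}}$ and a lower bound $\inf_v\ge L_v^{\mathbf{F}}$.

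The upper bound follows almost immediately from the definition of $H_v^{\mathbf{F}}$ together with Claim \ref{claim.sup.desc}. When $v\in\{r\}\cup C(r)$ we have $H_v^{\mathbf{F}}=1$ and nothing to prove. Otherwise $H_v^{\mathbf{F}}=L_{P(P(v))}^{\mathbf{F}}$, which by definition is a maximum of $\sup_k$ over the grandchildren of $P(P(v))$, a set that contains $v$ itself, so $\sup_v\le L_{P(P(v))}^{\mathbf{F}}$. The lower bound is trivial in the two easy cases: if $C(v)\neq\emptyset$ and $CC(v)=\emptyset$, then $L_v^{\mathbf{F}}=\inf_v$ by definition; and if $v$ is a leaf, $L_v^{\mathbf{F}}=L_{P(v)}^{\mathbf{F}}$, and the desired inequality follows from the induction hypothesis at $P(v)$ (after observing that a leaf's infimum is constrained by the parent's interval via Observation \ref{obs:neighbor-above} and Theorem~\ref{thm:necessary-eq}(\ref{thm:nec-union-support})). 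The substantive case is $CC(v)\neq\emptyset$.

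Here I argue by contradiction using the three-generation trick sketched in the section preamble. Fix $k^*\in CC(v)$ with $\sup_{k^*}=L_v^{\mathbf{F}}$ and let $j^*=P(k^*)\in C(v)$. Suppose some $p\in S_v$ satisfies $p<L_v^{\mathbf{F}}$. By Observation \ref{obs:contained-support}, $p$ must lie in $S_\ell$ for some neighbor $\ell$ of $v$; using the induction hypothesis to bound the supports of $v$'s other neighbors away from the region $[p,L_v^{\mathbf{F}})$, together with Observation \ref{obs:neighbor-above}, I would cascade the inclusion downward along the chain $v\to j^*\to k^*$, producing two distinct prices $p_1<p_2$ that simultaneously lie in $S_v\cap S_{j^*}\cap S_{k^*}$ with $p_2<L_v^{\mathbf{F}}$. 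Writing the indifference equations of $v$, $j^*$, and $k^*$ at $p_1$ and $p_2$ gives a small linear system in the quantities $\Fb_\ell(p_i)$ for neighbors $\ell$ of $\{v,j^*,k^*\}$; by the induction hypothesis, the supports of all other neighbors are bounded away from $[p_1,p_2]$, so their contributions are constant, and the remaining system forces a monotonicity violation of one of $F_v,F_{j^*},F_{k^*}$, the desired contradiction.

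For the second assertion, fix a non-leaf $j\in C(v)$. By the containment just proved applied inductively to $j$, $\sup_j\le H_j^{\mathbf{F}}=L_{P(v)}^{\mathbf{F}}=M_v^{\mathbf{F}}$, and by Theorem~\ref{thm:necessary-eq}(\ref{thm:nec-union-support}) every point of $S_j\setminus\{1\}$ lies in the support of a neighbor of $j$, i.e., either $v$ or some $k\in C(j)$. Since we have just shown $S_v\subseteq[L_v^{\mathbf{F}},H_v^{\mathbf{F}}]$, the portion of $S_j$ strictly below $L_v^{\mathbf{F}}$ must be covered by $\cup_{k\in C(j)}S_k$; combining this with the fact that the lower boundary of $j$'s interval $L_j^{\mathbf{F}}$ is achieved by one of its children, some $k\in C(j)$ must have $\sup_k=L_v^{\mathbf{F}}$. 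The main obstacle I expect is the construction of the two simultaneous prices $p_1,p_2$ in $S_v\cap S_{j^*}\cap S_{k^*}$: verifying the cascade of support inclusions and then extracting the precise algebraic contradiction from the three indifference conditions requires delicate bookkeeping, and care is needed to ensure the induction hypothesis is only invoked on strict descendants of $v$ so that the argument is not circular.
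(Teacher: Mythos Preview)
Your inductive framework has a structural problem that the paper avoids by using an extremal argument rather than induction. The critical issue is controlling the parent $P(v)$.

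In your main case ($CC(v)\neq\emptyset$), suppose $p\in S_v$ with $p<L_v^{\mathbf{F}}$. To extract anything from the indifference condition $u_v(p)=u_v$ you must know something about $\Fb_{P(v)}$ on $[p,L_v^{\mathbf{F}}]$ --- but your bottom-up induction only furnishes the statement for strict \emph{descendants} of $v$, never for $P(v)$. You assert that the induction hypothesis ``bound[s] the supports of $v$'s other neighbors away from the region $[p,L_v^{\mathbf{F}})$,'' yet $P(v)$ is a neighbor that is not a descendant, and nothing you have established excludes $S_{P(v)}$ from that region. The same circularity already appears in your leaf case, where you invoke the ``induction hypothesis at $P(v)$'' --- unavailable in a bottom-up scheme.

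The paper handles precisely this obstruction by an extremal choice: among all $v$ violating $\inf_v\ge L_v^{\mathbf{F}}$, pick one maximizing $t=\sup(S_v\cap[0,L_v^{\mathbf{F}}])$, breaking ties toward the root. The tie-breaking is the whole point: it guarantees that $P(v)$ does \emph{not} intrude below $L_v^{\mathbf{F}}$ near $t$, producing a point $z<L_v^{\mathbf{F}}$ in $S_v$ with $F_{P(v)}(z)=F_{P(v)}(L_v^{\mathbf{F}})$. With the parent neutralized, the paper compares $u_v(z)$ with $u_v(L_v^{\mathbf{F}})$ directly and reduces to showing $L_v^{\mathbf{F}}\,\Fb_j(L_v^{\mathbf{F}})\ge z\,\Fb_j(z)$ for \emph{every} child $j$ (not just one distinguished $j^*$), using for each non-leaf $j$ a grandchild $k\in C(j)$ with $\sup_k=L_v^{\mathbf{F}}$. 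The second assertion thus falls out inside the main argument rather than as a separate step.

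Even setting aside $P(v)$, your three-generation plan has a second gap: granting two prices $p_1<p_2$ in $S_v\cap S_{j^*}\cap S_{k^*}$, the indifference equation for $v$ still carries terms $\Fb_j$ for $j\in C(v)\setminus\{j^*\}$, and the induction hypothesis does not make these constant on $[p_1,p_2]$. For a non-leaf child $j\neq j^*$ the bound $S_j\subseteq[L_j^{\mathbf{F}},H_j^{\mathbf{F}}]$ places $\inf_j$ at $L_j^{\mathbf{F}}$, which is determined by \emph{great}-grandchildren of $v$ and may lie strictly below $L_v^{\mathbf{F}}$; so ``other contributions are constant'' is not justified.
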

\begin{proof}
Note that the definition of $H_v^{\mathbf{F}}$ implies that $\sup_v \leq H_v^{\mathbf{F}}$ for all $v$, so to show $S_v \subseteq [L_v^{\mathbf{F}}, H_v^{\mathbf{F}}]$ it suffices to show that $\inf_v \geq L_v^{\mathbf{F}}$.

If $v$ is a leaf, then $N(v) = \{P(v)\}$ and hence Observation \ref{obs:contained-support} implies $\inf_v \geq \inf_{P(v)} = L_v^{\mathbf{F}}$ as required.  In the case that $CC(v) = \emptyset$ but $v$ is not a leaf, we have $\inf_v = L_v^{\mathbf{F}}$ by definition.  We are left with the case that $v$ that has at least one grandchild: $CC(v) \neq \emptyset$.

Suppose for contradiction that there exists $v$ and $k \in CC(v)$ such that $\inf_v < L_v^{\mathbf{F}}$.  Let $t = \sup\{ S_v \cap [0,L_v^{\mathbf{F}}]\}$, and choose $v$ such that this value of $t$ is maximized (over all $v$ with $\inf_v < L_v^{\mathbf{F}}$), breaking ties in favor of $v$ closer to the root.

We claim that there exists $z < L_v^{\mathbf{F}}$ such that $z \in S_v$ and either $v = r$ or $F_{P(v)}(z) = F_{P(v)}(L_v^{\mathbf{F}})$.  If $v = r$ then this follows immediately from the fact that $\inf_v < L_v^{\mathbf{F}}$.  Otherwise, if $F_{P(v)}(z) < F_{P(v)}(L_v^{\mathbf{F}})$ for all $z \in S_v \cap [0, L_v^{\mathbf{F}})$, then we conclude that $\sup\{ S_{P(v)} \cap [0,L_{P(v)}^{\mathbf{F}}]\} \geq \sup\{ S_{P(v)} \cap [0,L_v^{\mathbf{F}}]\} \geq \sup\{ S_v \cap [0,L_v^{\mathbf{F}}]\}$, contradicting our choice of $v$.  We therefore have $F_{P(v)}(z) = F_{P(v)}(L_v^{\mathbf{F}})$ as claimed.

If $v \neq r$, we have
\[ u_v(z) = z ( \Fb_{P(v)}(z)\beta_{v P(v)} + \sum_{j \in C(v)}\Fb_j(z)\beta_{v j} ) \]
and
\[ u_v(L_v^{\mathbf{F}}) = L_v^{\mathbf{F}} ( \Fb_{P(v)}(L_v^{\mathbf{F}})\beta_{v P(v)} + \sum_{j \in C(v)}\Fb_j(L_v^{\mathbf{F}})\beta_{v j} ). \]
Since $F_{P(v)}(L_v^{\mathbf{F}}) = F_{P(v)}(z) < 1$, we can write $X = \Fb_{P(v)}(z)\beta_{v P(v)}$ and conclude
\begin{equation}
\label{eq.zLv}
zX + \sum_{j \in C(v)}z\Fb_j(z)\beta_{v j} = u_v(z) \geq u_v(L_v^{\mathbf{F}}) = L_v^{\mathbf{F}} X + \sum_{j \in C(v)}L_v^{\mathbf{F}} \Fb_j(L_v^{\mathbf{F}})\beta_{v j}
\end{equation}
where $X > 0$.  If $v = r$ we also have \eqref{eq.zLv} with $X = \Fb_r(1) > 0$, so \eqref{eq.zLv} holds for any choice of $v$.

Our strategy for the remainder of the proof of Proposition \ref{prop:tree.equil.Lv} is to show that $L_v^{\mathbf{F}}\Fb_j(L_v^{\mathbf{F}}) \geq z\Fb_j(z)$ for each $j \in C(v)$.  This plus the fact that $zX < L_v^{\mathbf{F}} X$ will contradict \eqref{eq.zLv}, leading to the desired contradiction.

From the definition of $L_v$, there must exist $j \in C(v)$ and $k \in C(j)$ with $\sup_k = L_v^{\mathbf{F}}$, and in particular $1 = F_k(L_v^{\mathbf{F}}) > F_k(w)$ for all $w < L_v^{\mathbf{F}}$.  Since $L_v^{\mathbf{F}} \in S_j$, we then have $u_j(L_v^{\mathbf{F}}) \geq u_j(w)$ for all $w \leq L_v^{\mathbf{F}}$, where
\[ u_j(L_v^{\mathbf{F}}) = L_v^{\mathbf{F}} \beta_{v j}\Fb_v(L_v^{\mathbf{F}})\]
and
\[ u_j(w) \geq w( \beta_{v j}\Fb_v(w) + \beta_{j k}\Fb_k(w) ) > w \beta_{v j}\Fb_v(w) \]
so, in particular, $L_v^{\mathbf{F}}\Fb_v(L_v^{\mathbf{F}}) > w\Fb_v(w)$ for all $w < L_v^{\mathbf{F}}$.

Choose $j \in C(v)$ and first suppose that $j$ is a leaf.  Then $u_j(L_v^{\mathbf{F}}) = L_v^{\mathbf{F}}\Fb_i(L_v^{\mathbf{F}}) > w\Fb_v(w) = u_j(w)$ for all $w < L_v^{\mathbf{F}}$.  We conclude that $[0, L_v^{\mathbf{F}}) \cap S_j = \emptyset$, and in particular $F_j(L_v^{\mathbf{F}}) = F_j(z)$, so $L_v^{\mathbf{F}}\Fb_j(L_v^{\mathbf{F}}) \geq z\Fb_j(z)$ as required.

Suppose $j$ is not a leaf.  From the definition of $L_v^{\mathbf{F}}$, $\sup_k \leq L_v^{\mathbf{F}}$ for all $k \in C(j)$.  We claim that there exists $k \in C(j)$ with $\sup_k = L_v^{\mathbf{F}}$ (this will establish the second claim of Proposition \ref{prop:tree.equil.Lv}).  Let $x = \max_{k \in C(j)} \sup_k$, and suppose for contradiction that $x < L_v^{\mathbf{F}}$.  We must then have $x \in S_j$, since Claim \ref{claim.sup.desc} implies that $j$ is the only neighbor of $k$ who can price at $\sup_k$, for each $k \in C(j)$.  But then $u_j(L_v^{\mathbf{F}}) = L_v^{\mathbf{F}}\Fb_v(L_v^{\mathbf{F}}) > x\Fb_v(x) = u_j(x)$, a contradiction.

We conclude that there exists $k \in C(j)$ with $\sup_k = L_v^{\mathbf{F}}$, and hence $u_k(L_v^{\mathbf{F}}) \geq u_k(z)$ for this $k$.  Since all children of $k$ have suprema strictly less than $\sup_k = L_v^{\mathbf{F}}$ by Claim \ref{claim.sup.desc}, we then have
\[ L_v^{\mathbf{F}}\Fb_j(L_v^{\mathbf{F}}) \beta_{j k} = u_k(L_v^{\mathbf{F}}) \geq u_k(z) \geq z\Fb_j(z) \beta_{j k} \]
and hence $L_v^{\mathbf{F}}\Fb_j(L_v^{\mathbf{F}}) \geq z\Fb_j(z)$ as required.
\end{proof}

\begin{proposition}[Restatement of Proposition \ref{prop:children.range}]
For each seller $v$, $S_v \cap [L_v^{\mathbf{F}}, M_v^{\mathbf{F}}] = [L_v^{\mathbf{F}}, M_v^{\mathbf{F}}] \cap (\cup_{j \in C(v)}S_j)$.
\end{proposition}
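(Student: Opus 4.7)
The plan is to prove the two inclusions separately, each combining Theorem~\ref{thm:necessary-eq}(\ref{thm:nec-union-support}) with the interval bounds of Proposition~\ref{prop:tree.equil.Lv}. The key geometric observation is that Proposition~\ref{prop:tree.equil.Lv} sharply restricts which neighbor supports can meet the slab $[L_v^{\mathbf{F}}, M_v^{\mathbf{F}}]$: the parent satisfies $S_{P(v)} \subseteq [L_{P(v)}, H_{P(v)}] = [M_v, H_{P(v)}]$, which intersects the slab only at $\{M_v\}$; and any grandchild $k \in CC(v)$ has $S_k \subseteq [L_k, H_k] = [L_k, L_v]$, intersecting the slab only at $\{L_v\}$. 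So on the open interior $(L_v, M_v)$, only $v$'s own support and the supports of $v$'s children can be active.

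For the forward inclusion, take $x \in S_v$ with $L_v \le x \le M_v$. For $x < M_v$, Theorem~\ref{thm:necessary-eq}(\ref{thm:nec-union-support}) (valid since $x < M_v \le 1$) places $x$ in some neighbor's support; since $x < M_v = L_{P(v)}$ implies $x \notin S_{P(v)}$, one gets $x \in \bigcup_{j \in C(v)} S_j$. For $x = M_v$ the task reduces to showing $M_v \in \bigcup_{j \in C(v)} S_j$ (for non-leaf $v$). When $v \ne r$, the second clause of Proposition~\ref{prop:tree.equil.Lv} applied to $P(v)$ with non-leaf child $j = v$ produces $k \in C(v)$ with $\sup_k = L_{P(v)} = M_v$, giving $M_v \in S_k$. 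When $v = r$ (so $M_v = 1$), I need an auxiliary claim that $\sup_j = 1$ for some $j \in C(r)$; the proof is by contradiction, letting $s = \max_{j \in C(r)} \sup_j$ and supposing $s < 1$: the indifference $u_r(x) = x\alpha_r < \alpha_r = u_r(1)$ on $(s,1)$ forces $S_r \cap (s,1) = \emptyset$, so $\Fb_r$ equals the atom size $\Fb_r(1)$ on $(s,1]$; and Claim~\ref{claim.sup.desc} gives $\sup_k < s$ for every grandchild $k$ of $r$, so $\Fb_k \equiv 0$ on $(s,1]$. Hence $u_{j^*}(x) = x\beta_{r j^*}\Fb_r(1)$ is strictly increasing on $(s,1]$ for any maximizer $j^*$, contradicting the optimality of $\sup_{j^*} = s$.

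For the reverse inclusion, take $x \in S_j$ with $j \in C(v)$ and $L_v \le x \le M_v$. Applying Theorem~\ref{thm:necessary-eq}(\ref{thm:nec-union-support}) to $j$ (with the $x = 1$ case absorbed into the $v = r$ analysis above) gives $x \in S_v \cup \bigcup_{k \in C(j)} S_k$. If $x \in S_v$ we are done; otherwise the geometric bound on grandchildren's supports forces $x = L_v$, so it remains to prove $L_v \in S_v$. When $CC(v) = \emptyset$, the definition yields $L_v = \inf_v$ and closedness of $S_v$ gives $L_v \in S_v$. Otherwise pick $k \in CC(v)$ with $\sup_k = L_v$ (supplied by Proposition~\ref{prop:tree.equil.Lv}). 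Observation~\ref{obs:neighbor-above} applied to $k$ at $L_v$ yields a neighbor of $k$ whose CDF is strictly increasing on every right-neighborhood of $L_v$; Claim~\ref{claim.sup.desc} eliminates every $j' \in C(k)$ (their suprema are strictly below $L_v$, so their CDFs are already saturated there), so this neighbor is $P(k) \in C(v)$, placing $L_v \in S_{P(k)}$. Rerunning the same argument on $P(k)$ at $L_v$ eliminates its children in $CC(v)$ (whose CDFs are again saturated at $L_v$) and leaves only $v = P(P(k))$, so $L_v \in S_v$.

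The main obstacle is the endpoint analysis at $L_v$ and $M_v$. On the open interior $(L_v, M_v)$ both inclusions follow cleanly from the interval geometry and Theorem~\ref{thm:necessary-eq}(\ref{thm:nec-union-support}), but each endpoint requires an independent equilibrium-indifference bootstrap: $M_v \in \bigcup_j S_j$ in the $v = r$ case hinges on a contradiction with a child's optimality at its own supremum, and $L_v \in S_v$ requires cascading Observation~\ref{obs:neighbor-above} twice down the tree. The edge case in which $v$ is a leaf (so $[L_v, M_v]$ degenerates to a single point and $\bigcup_j S_j$ is empty) is not needed for the downstream Lemma~\ref{lem:tree-staggered-intervals}, which uses only the half-open version $[L_v, M_v)$.
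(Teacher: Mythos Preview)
Your proof follows the same two-inclusion strategy as the paper's, driven by Theorem~\ref{thm:necessary-eq}(\ref{thm:nec-union-support}) together with the interval bounds of Proposition~\ref{prop:tree.equil.Lv}, so the core approach is identical. Where you differ is in the treatment of the endpoints $L_v^{\mathbf F}$ and $M_v^{\mathbf F}$: the paper argues only that $(L_v^{\mathbf F},M_v^{\mathbf F})\cap S_k=\emptyset$ for grandchildren $k$ and then passes directly to $S_j\cap[L_v^{\mathbf F},M_v^{\mathbf F}]\subseteq S_v$, leaving the point $L_v^{\mathbf F}$ (where some $S_k$ may still touch) unjustified; the later proof of Proposition~\ref{prop:bottom.range} then invokes Proposition~\ref{prop:children.range} precisely at $L_v^{\mathbf F}$, so the paper's presentation has a mild circularity there. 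Your two-step cascade of Observation~\ref{obs:neighbor-above} proves $L_v^{\mathbf F}\in S_v$ directly and removes that circularity, and your use of the second clause of Proposition~\ref{prop:tree.equil.Lv} (plus the auxiliary argument for $v=r$) similarly handles $M_v^{\mathbf F}$, which the paper does not address separately.

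Your remark on leaves is also apt: when $v$ is a leaf one has $L_v^{\mathbf F}=M_v^{\mathbf F}$ and $\bigcup_{j\in C(v)}S_j=\emptyset$, yet $L_v^{\mathbf F}$ can lie in $S_v$, so the closed-interval equality need not hold and the paper's ``vacuously true'' is inaccurate. As you note, only the half-open version $[L_v^{\mathbf F},M_v^{\mathbf F})$ is used downstream, so this does no harm.
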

\begin{proof}
If $v$ is a leaf then this is vacuously true.  Note that $(L_v^{\mathbf{F}}, M_v^{\mathbf{F}}) \cap S_{k} = \emptyset$ for each $k \in CC(v)$, since $L_v^{\mathbf{F}} = H_k^{\mathbf{F}} \geq \sup_k$ for any such $k$ by Proposition \ref{prop:tree.equil.Lv}.  Thus, for any $j \in C(v)$, we must have $S_j \cap [L_v^{\mathbf{F}}, M_v^{\mathbf{F}}] \subseteq S_v$ by Observation \ref{obs:contained-support}.  Similarly, if $v \neq r$ then $\inf_{P(v)} \geq L_{P(v)}^{\mathbf{F}} = M_v^{\mathbf{F}}$, $(L_v^{\mathbf{F}}, M_v^{\mathbf{F}}) \cap S_{P(v)} = \emptyset$.  We must therefore have $S_v \cap [L_v^{\mathbf{F}}, M_v^{\mathbf{F}}] \subseteq \cup_{j \in C(v)} S_j$, again by Observation \ref{obs:contained-support}.
\end{proof}

\begin{proposition}[Restatement of Proposition \ref{prop:bottom.range}]
%\label{prop:bottom.range}
$[L_v^{\mathbf{F}}, M_v^{\mathbf{F}}) \subseteq S_v$ for each seller $v$.
\end{proposition}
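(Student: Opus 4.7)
My plan is to prove the stronger statement that $[L_v^{\mathbf{F}}, M_v^{\mathbf{F}}] \subseteq S_v$ for every non-leaf $v$; since $L_v^{\mathbf{F}} = M_v^{\mathbf{F}}$ whenever $v$ is a leaf, the displayed half-open inclusion then follows. The strategy splits into two pieces: first show that both endpoints $L_v^{\mathbf{F}}$ and $M_v^{\mathbf{F}}$ lie in $S_v$, and then rule out any gap in $S_v$ strictly inside $[L_v^{\mathbf{F}}, M_v^{\mathbf{F}}]$ using the indifference condition. Since $S_v$ is closed, these two ingredients combine to give the desired inclusion.

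For the endpoints, if $CC(v) = \emptyset$ then $L_v^{\mathbf{F}} = \inf_v \in S_v$ by definition; otherwise I would pick $k^* \in CC(v)$ with $\sup_{k^*} = L_v^{\mathbf{F}}$ and write $j^* = P(k^*) \in C(v)$. Observation \ref{obs:contained-support} applied at $k^*$ places $L_v^{\mathbf{F}}$ in the support of some neighbor of $k^*$, and Claim \ref{claim.sup.desc} (using $P(k^*) = j^* \neq r$) rules out any child of $k^*$, so $L_v^{\mathbf{F}} \in S_{j^*}$; Proposition \ref{prop:children.range} then lifts this to $L_v^{\mathbf{F}} \in S_v$. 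For $M_v^{\mathbf{F}}$: the case $v = r$ is immediate since $M_r^{\mathbf{F}} = 1 \in S_r$ by Claim \ref{claim.sup.desc}. For $v \neq r$ non-leaf, applying Proposition \ref{prop:tree.equil.Lv} with $P(v)$ in the role of the parent node yields (because $v$ is a non-leaf child of $P(v)$) some $k \in C(v)$ with $\sup_k = L_{P(v)}^{\mathbf{F}} = M_v^{\mathbf{F}}$; the same Observation-plus-Claim reasoning then gives $M_v^{\mathbf{F}} \in S_v$. A minor but important sub-step is to verify $L_v^{\mathbf{F}}, M_v^{\mathbf{F}} < 1$ so that Observation \ref{obs:contained-support} applies: equality to $1$ at either endpoint would, via Claim \ref{claim.sup.desc}, force a grandchild's parent to be $r$, which is incompatible with $v \neq r$ non-leaf.

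For the no-gap step, suppose for contradiction $a, b \in S_v$ with $L_v^{\mathbf{F}} \leq a < b \leq M_v^{\mathbf{F}}$ and $(a,b) \cap S_v = \emptyset$. Then no neighbor $\ell$ of $v$ has support in $(a,b)$: the parent $P(v)$ (if it exists) lies in $[M_v^{\mathbf{F}}, H_{P(v)}^{\mathbf{F}}]$ by Proposition \ref{prop:tree.equil.Lv}, and each child $j$ satisfies $S_j \cap [L_v^{\mathbf{F}}, M_v^{\mathbf{F}}] \subseteq S_v$ by Proposition \ref{prop:children.range}. Since atoms occur only at $1 > b$, each $\Fb_\ell$ is constant across $[a,b]$, making $u_v(x) = x \cdot K$ linear in $x$ on $[a,b]$ with $K > 0$ (as $u_v(a) > 0$ by Observation \ref{obs:inf-price-prop}). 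Hence $u_v(b) > u_v(a)$, contradicting the optimality of both prices. I expect the main obstacle to be the endpoint bookkeeping in the first step, especially the case analyses ensuring that $L_v^{\mathbf{F}}, M_v^{\mathbf{F}} < 1$ and that the relevant grandchild or great-grandchild with the required supremum exists; once these are in place, the gap argument is a short, clean indifference calculation.
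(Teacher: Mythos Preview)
Your proposal is correct and follows essentially the same approach as the paper: both establish $L_v^{\mathbf{F}} \in S_v$ via Observation~\ref{obs:contained-support} and Proposition~\ref{prop:children.range}, then use a gap argument exploiting that no neighbor of $v$ prices in the gap (parent by Proposition~\ref{prop:tree.equil.Lv}, children by Proposition~\ref{prop:children.range}) to derive a strict utility increase and a contradiction. The only difference is that the paper skips your step of proving $M_v^{\mathbf{F}} \in S_v$: it instead takes $x = \sup\{x : [L_v^{\mathbf{F}}, x] \subseteq S_v\}$ and directly compares $u_v(x)$ with $u_v(x+\epsilon)$ (where $x+\epsilon$ need not lie in $S_v$), which is slightly more economical but yields only the half-open inclusion rather than the closed one you obtain.
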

\begin{proof}
We first claim that $L_v^{\mathbf{F}} \in S_v$.  If $CC(v) = \emptyset$ then $L_v^{\mathbf{F}} = \inf_v \in S_v$ by definition.  Otherwise, there exists some $j \in C(v)$ and $k \in C(j)$ with $\sup_k = L_v^{\mathbf{F}}$ by Proposition \ref{prop:tree.equil.Lv}.
We must therefore have $L_v^{\mathbf{F}} \in S_j$ by Observation \ref{obs:contained-support}, and hence $L_v^{\mathbf{F}} \in S_v$ by Proposition \ref{prop:children.range}.

Let $x = \sup\{x \colon [L_v^{\mathbf{F}}, x] \subseteq S_v\}$.  Since $L_v^{\mathbf{F}} \in S_v$ we know $x \geq L_v^{\mathbf{F}}$.  Suppose for contradiction that $x < M_v^{\mathbf{F}}$.  Then there exists some range $(x, x+\epsilon) \subseteq [L_v^{\mathbf{F}}, M_v^{\mathbf{F}})$ such that $x \in S_v$ but $(x, x+\epsilon) \cap S_j = \emptyset$ for each $j \in C(v)$.  However, since $\inf_{P(v)} \geq L_{P(v)}^{\mathbf{F}} = M_v^{\mathbf{F}}$, we then conclude that
\[ u_v(x+\epsilon) = (x+\epsilon)(\beta_{v P(v)} + \sum_{j \in C(v)}\beta_{j v}\Fb_j(x)) > x(\beta_{v P(v)} + \sum_{j \in C(v)}\beta_{j v}\Fb_j(x))  = u_v(x),\]
a contradiction.  Thus $x = M_v^{\mathbf{F}}$, so $[L_v^{\mathbf{F}}, M_v^{\mathbf{F}}) \subseteq S_v$ as required.
\end{proof}

\subsection{Uniqueness of Intervals}
\label{app:tree-solve}

In this section we prove Lemma \ref{lem:intervals-unique}, which is that
there exists a profile of staggered intervals $\{[L_v, H_v]\}_v$ such that $\{[L_v^{\mathbf{F}}, H_v^{\mathbf{F}}]\}_v = \{[L_v, H_v\}_v$ for every equilibrium $\mathbf{F}$.

Choose an equilibrium $\mathbf{F}$ and consider the corresponding profile of staggered intervals $\{[L_v^{\mathbf{F}}, H_v^{\mathbf{F}}\}_v$ (and values $M_v^{\mathbf{F}}$).  Write $w \succ v$ to mean that $w$ is a (strict) ancestor of $v$.  We now show that each $M_v^{\mathbf{F}}$ is determined by the values of $\Fb_w(M_w^{\mathbf{F}})$ for the ancestors of $v$.

\begin{claim}
\label{claim.midpoints}
$M_v^{\mathbf{F}} = \prod_{w \succ v}\Fb_{w}(M_w^{\mathbf{F}})$
\end{claim}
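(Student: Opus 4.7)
The plan is to prove the identity
\[ M_v^{\mathbf{F}} = M_{P(v)}^{\mathbf{F}} \cdot \Fb_{P(v)}(M_{P(v)}^{\mathbf{F}}) \qquad \text{for every } v \neq r, \]
and then observe that the full product formula follows by a one-line induction on the depth of $v$, with base case $M_r^{\mathbf{F}} = 1$ corresponding to the empty product.

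To prove this recursion, I would fix $v \neq r$ and set $u = P(v)$. Because the intervals $\{[L_w^{\mathbf{F}}, H_w^{\mathbf{F}}]\}$ are staggered, every child of $u$ has the same midpoint: $M_j^{\mathbf{F}} = L_u^{\mathbf{F}} = M_v^{\mathbf{F}}$ for all $j \in C(u)$. So it suffices to exhibit \emph{some} $j \in C(u)$ with $M_j^{\mathbf{F}} = M_u^{\mathbf{F}} \cdot \Fb_u(M_u^{\mathbf{F}})$. Proposition \ref{prop:bottom.range}, together with closedness of supports, gives $M_u^{\mathbf{F}} \in S_u$, and then Proposition \ref{prop:children.range} applied to $u$ produces some $j \in C(u)$ with $M_u^{\mathbf{F}} \in S_j$. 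Fix such a $j$; note that $M_j^{\mathbf{F}} \in S_j$ as well (again by Proposition \ref{prop:bottom.range} and closedness).

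The identity then comes from $j$'s equilibrium indifference between the two boundary prices $M_j^{\mathbf{F}}$ and $M_u^{\mathbf{F}}$. I would compute $u_j$ at each price directly from the staggered structure. At $x = M_j^{\mathbf{F}} = L_u^{\mathbf{F}} = \inf_u$, we have $\Fb_u(x) = 1$; for every grandchild $k \in C(j)$, $H_k^{\mathbf{F}} = M_j^{\mathbf{F}}$ and no atom occurs at $x < 1$ (Observation~\ref{obs:atoms-at-1}), so $\Fb_k(x) = 0$. Together with $\alpha_j = 0$ (only $r$ has a captive market), this gives $u_j(M_j^{\mathbf{F}}) = M_j^{\mathbf{F}}\,\beta_{j,u}$. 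At $x = M_u^{\mathbf{F}}$, we use $\Fb_u(M_u^{\mathbf{F}})$ directly, while $M_u^{\mathbf{F}} > M_j^{\mathbf{F}} = H_k^{\mathbf{F}}$ (which holds since $u$ is not a leaf, having $v$ as a child) forces $\Fb_k(M_u^{\mathbf{F}}) = 0$ for each $k \in C(j)$, so $u_j(M_u^{\mathbf{F}}) = M_u^{\mathbf{F}}\,\beta_{j,u}\,\Fb_u(M_u^{\mathbf{F}})$. Equating and cancelling $\beta_{j,u} > 0$ yields $M_j^{\mathbf{F}} = M_u^{\mathbf{F}}\,\Fb_u(M_u^{\mathbf{F}})$, which is the claim.

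The main obstacle is purely structural: ensuring that $M_u^{\mathbf{F}}$ is available as an optimal price for at least one child $j$, so that indifference can be invoked. This is precisely what Propositions \ref{prop:bottom.range} and \ref{prop:children.range} secure; everything after that reduces to reading off values of $\Fb$ at boundary points of the staggered intervals. The boundary case $u = r$ is handled uniformly: there $M_u^{\mathbf{F}} = 1$ and $\Fb_r(1)$ denotes the atom at $1$, while no child of $r$ has an atom at $1$ by Lemma~\ref{lemma:no-same-atom}, so the two utility computations go through unchanged.
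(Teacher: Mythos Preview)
Your proof is correct and follows essentially the same approach as the paper: both establish the recursion $M_v^{\mathbf{F}} = M_{P(v)}^{\mathbf{F}}\,\Fb_{P(v)}(M_{P(v)}^{\mathbf{F}})$ by locating a child $j \in C(P(v))$ with $M_{P(v)}^{\mathbf{F}} \in S_j$ via Proposition~\ref{prop:children.range}, and then reading off the indifference $u_j(M_j^{\mathbf{F}}) = u_j(M_{P(v)}^{\mathbf{F}})$ using the staggered structure. Your write-up is simply more explicit in justifying why $M_u^{\mathbf{F}} \in S_u$ (via Proposition~\ref{prop:bottom.range} and closedness) before invoking Proposition~\ref{prop:children.range}, and in handling the $u=r$ boundary case.
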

\begin{proof}
If $v = r$ then $M_r^{\mathbf{F}} = 1$ as required.  Otherwise, let $w = P(v)$.  By Proposition \ref{prop:children.range} there exists some $j \in C(w)$ with $M_w^{\mathbf{F}} \in S_j$.  For this seller $j$, we have $u_j(M_j^{\mathbf{F}}) = u_j(M_w^{\mathbf{F}})$, which implies
\[ M_j^{\mathbf{F}} \beta_{j w} = M_w^{\mathbf{F}} \beta_{j w} \Fb_w( M_w^{\mathbf{F}} ). \]
Since $M_j^{\mathbf{F}} = M_v^{\mathbf{F}}$ from the definition of a staggered interval profile, we then have $M_v^{\mathbf{F}} = M_w^{\mathbf{F}} \Fb_w( M_w^{\mathbf{F}} )$.  The result then follows by structural induction on the tree, with base case $M_r^{\mathbf{F}} = 1$.
\end{proof}

\begin{claim}
\label{claim.cdf.midpoints}
For each seller $v$, $\Fb_v(M_v^{\mathbf{F}})$ is positive and uniquely determined by the network weights.  Moreover, $\Fb_v(M_v^{\mathbf{F}})$ is independent of $\alpha_r$ for each $v \neq r$.
\end{claim}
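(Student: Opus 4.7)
The plan is to derive a clean recursion for $\Fb_v(M_v^{\mathbf{F}})$ from the sellers' indifference conditions, then solve it bottom-up from the leaves.

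First I would compute $u_v$ for a non-root $v$ by evaluating at $M_v^{\mathbf{F}} \in S_v$ (Lemma~\ref{lem:tree-staggered-intervals}). The staggered-interval structure forces $\inf_{P(v)} \geq L_{P(v)}^{\mathbf{F}} = M_v^{\mathbf{F}}$, so $\Fb_{P(v)}(M_v^{\mathbf{F}}) = 1$; each $j \in C(v)$ has $\sup_j \leq H_j^{\mathbf{F}} = M_v^{\mathbf{F}} < 1$ with no atom below $1$, so $\Fb_j(M_v^{\mathbf{F}}) = 0$; together with $\alpha_v = 0$ this yields $u_v = M_v^{\mathbf{F}}\beta_{v P(v)}$. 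Next, evaluating the same utility at $L_v^{\mathbf{F}} \in S_v$ (Proposition~\ref{prop:bottom.range}) preserves $\Fb_{P(v)}(L_v^{\mathbf{F}}) = 1$ but changes each child term to $\Fb_j(M_j^{\mathbf{F}})$, since $L_v^{\mathbf{F}} = M_j^{\mathbf{F}}$. Equating the two expressions for $u_v$, substituting $L_v^{\mathbf{F}} = M_j^{\mathbf{F}} = M_v^{\mathbf{F}}\Fb_v(M_v^{\mathbf{F}})$ from Claim~\ref{claim.midpoints}, and dividing by $M_v^{\mathbf{F}} > 0$ produces the recursion
\[ \Fb_v(M_v^{\mathbf{F}}) \;=\; \frac{\beta_{v P(v)}}{\beta_{v P(v)} + \sum_{j \in C(v)} \beta_{v j}\,\Fb_j(M_j^{\mathbf{F}})}. \]

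With the base case $\Fb_v(M_v^{\mathbf{F}}) = 1$ at every leaf (where $L_v^{\mathbf{F}} = M_v^{\mathbf{F}}$ forces $F_v^-(M_v^{\mathbf{F}}) = 0$), a bottom-up induction on the tree yields positivity and uniqueness simultaneously: each $\Fb_v(M_v^{\mathbf{F}})$ is a positive rational function of the $\beta$-weights in the subtree rooted at $v$, with no dependence on $\vec{\alpha}$ whatsoever, which handles the claim for every $v \neq r$. For the root, a parallel indifference argument at $L_r^{\mathbf{F}} \in S_r$, combined with $u_r = \alpha_r$ (pricing at the atom at $1$ wins only the captive market), gives
\[ \Fb_r(1) \;=\; \frac{\alpha_r}{\alpha_r + \sum_{j \in C(r)} \beta_{r j}\,\Fb_j(M_j^{\mathbf{F}})}, \]
which is positive and uniquely pinned down, though of course $\alpha_r$-dependent.

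The only delicate part is the bookkeeping of which neighbor prices above, below, or exactly at each boundary price; this is handled entirely by the staggered-interval structure of Lemma~\ref{lem:tree-staggered-intervals}, so no fixed-point or continuity argument is required -- the recursion is acyclic on the tree and the induction is routine.
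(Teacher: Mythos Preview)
Your proposal is correct and essentially identical to the paper's own proof: both evaluate $u_v$ at $M_v^{\mathbf{F}}$ and at $L_v^{\mathbf{F}}$, invoke Claim~\ref{claim.midpoints} to substitute $L_v^{\mathbf{F}} = M_v^{\mathbf{F}}\Fb_v(M_v^{\mathbf{F}})$, and obtain the same recursion $\Fb_v(M_v^{\mathbf{F}}) = \beta_{vP(v)}/\bigl(\beta_{vP(v)} + \sum_{j\in C(v)}\beta_{vj}\Fb_j(M_j^{\mathbf{F}})\bigr)$ (and the analogous formula at the root), which is then solved by bottom-up induction with the leaf base case $\Fb_v(M_v^{\mathbf{F}})=1$. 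Your write-up is in fact slightly more explicit than the paper's about why each neighbor's $\Fb$-value is $0$ or $1$ at the relevant boundary points.
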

\begin{proof}
We proceed by structural induction on the tree network.  For the base case, suppose $v$ is a leaf; then $\Fb_v(M_v^{\mathbf{F}}) = \Fb_v(L_v^{\mathbf{F}}) = 1$.  Next suppose that $C(v) \neq \emptyset$ and $v \neq r$.  By Lemma \ref{lem:tree-staggered-intervals} we have that $[M_v^{\mathbf{F}}, L_v^{\mathbf{F}}) \subseteq S_v$ and hence $u_v(M_v^{\mathbf{F}}) = u_v = u_v(L_v^{\mathbf{F}})$.  Choose $j \in C(v)$ and note that $L_v^{\mathbf{F}} = M_j^{\mathbf{F}} = M_v^{\mathbf{F}} \Fb_v(M_v^\mathbf{F})$ by Claim \ref{claim.midpoints}.  We conclude that
\[ M_v^{\mathbf{F}} \beta_{v P(v)} = M_v^{\mathbf{F}} \Fb_v(M_v^{\mathbf{F}}) \left( \beta_{v P(v)} + \sum_{j \in C(v)}\beta_{j v}\Fb_j(M_j^{\mathbf{F}}) \right) \]
which implies
\begin{equation}
\label{eq.Fb.v}
\Fb_v(M_v^{\mathbf{F}}) = \frac{\beta_{v P(v)}}{\beta_{v P(v)} + \sum_{j \in C(v)}\beta_{j v}\Fb_j(M_j^{\mathbf{F}})} > 0.
\end{equation}
Since each $\Fb_j(M_j^{\mathbf{F}})$ is uniquely determined by the network weights by induction, we conclude that $\Fb_v(M_v^{\mathbf{F}})$ is as well.
Finally, for $v = r$, a similar analysis yields
\begin{equation}
\label{eq.Fb.root}
\Fb_r(M_r^{\mathbf{F}}) = \frac{\alpha_r}{\alpha_r + \sum_{j \in C(r)}\beta_{j r}\Fb_j(M_j^{\mathbf{F}})} > 0
\end{equation}
and hence our induction implies $\Fb_r(M_r^{\mathbf{F}}) = \Fb_r(1)$ is positive and uniquely determined by the network weights, as required.
\end{proof}

Claims \ref{claim.midpoints} and \ref{claim.cdf.midpoints} together imply that the values $M_v^{\mathbf{F}}$ are uniquely determined by the network weights.  Note that this specifies the values of $L_v^{\mathbf{F}}$ and $H_v^{\mathbf{F}}$ as well, since $H_v^{\mathbf{F}} = M_{P(v)}^{\mathbf{F}}$ and $L_v^{\mathbf{F}} = M_j^{\mathbf{F}}$ for any $j \in C(v)$ (or $L_v^{\mathbf{F}} = M_v^{\mathbf{F}}$ if $C(v) = \emptyset$).

We conclude that interval profile $\{[L_v^{\mathbf{F}}, H_v^{\mathbf{F}} ]\}_v$ is independent of $\mathbf{F}$, and hence there is a unique profile $\{[L_v, H_v ]\}_v$ corresponding to \emph{every} equilibrium.  This completes the proof of Lemma \ref{lem:intervals-unique}, and of Theorem \ref{thm:tree.single.captive}.

\subsection{Utilities and Captive Market Size}

In this section we prove Proposition \ref{prop:tree-alpha}, which states that for fixed edge weights $\boldsymbol{\beta}$, the value of $u_v$ is strictly increasing as $\alpha_r$ increases, for every seller $v$.

\begin{observation}
\label{obs.2}
For every seller $v \neq r$, there exist positive $X_v, Y_v, Z_v$ independent of $\alpha_r$ such that
\[ M_v = \frac{\alpha_r X_v}{\alpha_r Y_v + Z_v} \]
\end{observation}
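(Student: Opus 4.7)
The plan is to combine Claim~\ref{claim.midpoints} and Claim~\ref{claim.cdf.midpoints} directly, isolating the dependence on $\alpha_r$.

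First, recall from Claim~\ref{claim.midpoints} that $M_v = \prod_{w \succ v} \Fb_w(M_w)$ for every $v \neq r$. Among the ancestors of $v$, the root $r$ is always one factor, so I would split the product as
\[
M_v \;=\; \Fb_r(M_r) \cdot \prod_{\substack{w \succ v \\ w \neq r}} \Fb_w(M_w).
\]
By Claim~\ref{claim.cdf.midpoints}, each factor $\Fb_w(M_w)$ for $w \neq r$ is determined entirely by the network weights $\boldsymbol\beta$ and, crucially, does not depend on $\alpha_r$ (this is the second statement of that claim, which was proved by an induction from the leaves up; only the base case at $r$ introduces $\alpha_r$). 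Therefore the whole product $K_v := \prod_{w \succ v,\, w \neq r} \Fb_w(M_w)$ is a positive constant independent of $\alpha_r$.

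Next I would substitute the explicit formula \eqref{eq.Fb.root} from the proof of Claim~\ref{claim.cdf.midpoints},
\[
\Fb_r(M_r) \;=\; \frac{\alpha_r}{\alpha_r + C_r},
\qquad C_r := \sum_{j \in C(r)} \beta_{j r}\,\Fb_j(M_j),
\]
where, again by the same claim, $C_r > 0$ and $C_r$ is independent of $\alpha_r$. Combining the two pieces gives
\[
M_v \;=\; K_v \cdot \frac{\alpha_r}{\alpha_r + C_r} \;=\; \frac{\alpha_r \, K_v}{\alpha_r \cdot 1 + C_r},
\]
so setting $X_v := K_v$, $Y_v := 1$, and $Z_v := C_r$ yields the desired form, with all three constants positive and independent of $\alpha_r$.

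There is no real obstacle here; the entire content of the proof is that the machinery of Claim~\ref{claim.cdf.midpoints} already confines the $\alpha_r$-dependence of the equilibrium to the single factor $\Fb_r(M_r)$, which is a M\"obius function of $\alpha_r$. The only thing to be careful about is citing the ``independent of $\alpha_r$'' clause of Claim~\ref{claim.cdf.midpoints} for every non-root ancestor, which is precisely what that claim was set up to provide.
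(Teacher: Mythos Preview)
Your proof is correct and uses essentially the same ingredients as the paper: Claim~\ref{claim.midpoints} for the product structure, Claim~\ref{claim.cdf.midpoints} for the $\alpha_r$-independence of the non-root factors, and \eqref{eq.Fb.root} for the explicit root factor. The paper phrases the argument as a top-down induction, peeling off one ancestor at a time, whereas you apply the closed-form product of Claim~\ref{claim.midpoints} directly and split off the root in one step; this is a minor presentational difference, and your version is in fact slightly more direct.
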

\begin{proof}
We proceed by structural induction on the tree network, top-down.   Suppose first that $v \in C(r)$.  Claim \ref{claim.midpoints} and \eqref{eq.Fb.root} imply that
\[ M_v = \Fb_r(1) = \frac{\alpha_r}{\alpha_r+\sum_{j \in C(r)} \beta_{j r} \Fb_j(M_j)}. \]
Since each $\Fb_j(M_j)$ is positive and independent of $\alpha_r$ (from Claim \ref{claim.cdf.midpoints}), we have that $M_v$ is of the required form.

For $v \not\in C(r) \cup \{r\}$, Claim \ref{claim.midpoints} and \eqref{eq.Fb.v} imply
\[ M_v = M_{P(v)} \Fb_{P(v)}(M_{P(v)}) = M_{P(v)} \frac{\beta_{P(v) P(P(v))} }{\beta_{P(v) P(P(v))} + \sum_{j \in C(P(v))} \beta_{j P(v)} \Fb_j(M_j)}. \]
Since each $\Fb_j(M_j)$ is positive and independent of $\alpha_r$, we have by induction that
\[ M_v = \frac{\alpha_r X_{P(v)}}{\alpha_r Y_{P(v)} + Z_{P(v)}} \cdot \frac{W}{U} \]
where $W$ and $U$ are positive and do not depend on $\alpha_r$, and hence $M_v$ is of the required form.
\end{proof}

We can now complete the proof of Proposition \ref{prop:tree-alpha}.
%
%\begin{proposition}
%For fixed edge weights $\boldsymbol{\beta}$, the value of $u_v$ is weakly increasing as $\alpha_r$ increases, for every seller $v$.
%\end{proposition}
%\begin{proof}
If $v = r$ then we have $u_r = \alpha_r$ so the result holds.  For $v \neq r$, we have
\[ u_v = M_v \beta_{v P(v)} = \frac{\alpha_r X_v}{\alpha_r Y_v + Z_v} \]
where each of $X_v, Y_v, Z_v$ is positive and independent of $\alpha_r$, by Observation \ref{obs.2}.  This implies that $u_v$ is strictly increasing in $\alpha_r$.
%\end{proof}

\subsection{Special Case: A Line with a Single Captive Market}

\begin{claim}[Restatement of Claim \ref{claim:line-network}]
%\label{claim:line-network}
For the line network with a single captive market belonging to a seller at one endpoint, there is a unique equilibrium.  Moreover, this equilibrium has a sketch of the following form: $|T| = n$, only seller $i_1$ has an atom at $1$, and $S_{i_k} = [t_{k+1}, t_{k-1}]$ for each $k$ (where we define $t_0 = 1$ and $t_{n+1} = t_n$ for notational convenience).
\end{claim}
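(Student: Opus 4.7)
The plan is to specialize Theorem~\ref{thm:tree.single.captive} to the line and then invoke Lemma~\ref{lem:LP-to-eq} to pin down the unique equilibrium. First I would apply Theorem~\ref{thm:tree.single.captive} to the line (a tree rooted at the single captive-market holder $i_1$) to obtain, for every equilibrium, the same staggered profile of intervals $\{[L_{i_k}, H_{i_k}]\}_{k=1}^n$ with midpoints $\{M_{i_k}\}_k$. Specializing the staggering relations to the line gives $H_{i_1} = M_{i_1} = 1$, $L_{i_n} = M_{i_n}$, and $H_{i_k} = M_{i_{k-1}}$, $L_{i_k} = M_{i_{k+1}}$ for $2 \le k \le n-1$. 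Setting $t_k := M_{i_k}$ and adopting the boundary conventions $t_0 := 1$, $t_{n+1} := t_n$, condition (1) of the theorem immediately yields $S_{i_k} \subseteq [t_{k+1}, t_{k-1}]$; the staggering properties ($L_v < M_v$ unless $v$ is a leaf, and $M_v < H_v$ unless $v = r$) give strict decrease $1 = t_1 > t_2 > \cdots > t_n > 0$, so in particular $|T| = n$.

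Next I would prove the matching lower inclusion $[t_{k+1}, t_{k-1}] \subseteq S_{i_k}$. From condition (1) of the theorem, $[L_{i_k}, M_{i_k}) = [t_{k+1}, t_k) \subseteq S_{i_k}$. For $k \ge 2$, condition (2) applied at $v = i_{k-1}$ (whose unique child is $i_k$) contributes $[t_k, t_{k-1}) = [L_{i_{k-1}}, M_{i_{k-1}}) \subseteq S_{i_k}$. Together with closedness of the support, this pins $S_{i_k} = [t_{k+1}, t_{k-1}]$ for $k \ge 2$. For $k = 1$, condition (3) gives $\Fb_{i_1}(1) > 0$, forcing $1 \in S_{i_1}$ and hence $S_{i_1} = [t_2, 1]$; condition (3) also confirms that $i_1$ is the only seller with an atom at $1$. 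This establishes the sketch described in the claim.

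Given the uniquely determined sketch, I would conclude by applying Lemma~\ref{lem:LP-to-eq}, for which it remains only to verify that the line has full rank with respect to this sketch. A direct reading of $R_j$ from the supports gives $R_j = \{i_j, i_{j+1}\}$ for $1 \le j \le n-1$ (and $R_n = \emptyset$). The corresponding $2 \times 2$ matrix $(\beta_{i,r})_{(i,r) \in R_j \times R_j}$ has zeros on the diagonal (each seller is not his own neighbor, so the relevant coefficient in its own utility equation is $0$) and off-diagonal entries equal to $\beta_{i_j, i_{j+1}} > 0$, giving determinant $-\beta_{i_j, i_{j+1}}^2 \ne 0$ and hence full rank. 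Lemma~\ref{lem:LP-to-eq} then delivers a unique equilibrium realising the sketch, as claimed. The main delicate point is the boundary bookkeeping at the root and the leaf---using the atom at $1$ to close $S_{i_1}$ at its upper endpoint, and the leaf condition $L_{i_n} = M_{i_n}$ absorbed by the convention $t_{n+1} = t_n$---but once these are set up, the rest of the argument is a direct specialization of results already established.
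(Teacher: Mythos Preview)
Your proposal is correct and follows essentially the same route as the paper: specialize Theorem~\ref{thm:tree.single.captive} to the line (where each non-leaf node has a unique child, so condition~(2) forces $[M_{i_k},H_{i_k})\subseteq S_{i_k}$), read off the sketch with $t_k=M_{i_k}$, and then appeal to full rank plus Lemma~\ref{lem:LP-to-eq} for uniqueness. Your verification of full rank and of $|T|=n$ is more explicit than the paper's, which simply asserts both; the only slight imprecision is that Lemma~\ref{lem:LP-to-eq} takes a sketch \emph{solution} as input, but the full-rank condition also pins down the sketch solution uniquely (as noted just after the definition of full rank), so the conclusion stands.
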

\begin{proof}
By Theorem \ref{thm:tree.single.captive}, there is a staggered profile of intervals such that $[L_i, M_i) \subseteq S_i \subseteq [L_i, H_i]$, and moreover $[L_i, M_i) \subseteq \cup_{j \in C(i)} S_j$.  Since $C(i_k) = i_{k+1}$ for all $k < n$, we have $\cup_{j \in C(i_k)} S_j = S_{i_{k+1}}$, and hence $[L_{i_k}, M_{i_k}) \subseteq S_{i_{k+1}}$ for each $k < n$.  Since $L_{i_k} = M_{i_{k+1}}$ and $M_{i_k} = H_{i_{k+1}}$, we conclude that $[M_i, H_i) \subseteq S_i$ for each $i \neq r$.  We therefore have $S_i = [L_i, H_i]$ for each seller $i$.

We can now describe the sketch of our equilibrium.  We have a set of boundary points $T = \{t_1, \dotsc, t_n\}$ with $t_j = M_{i_j}$ for each $j \leq n$.  Our supports are of the form $S_{i_k} = [L_{i_k}, H_{i_k}] = [t_{k-1}, t_{k+1}]$ for each $i$.

The analysis of Section \ref{sec:tree-solve} provides a sketch solution for this sketch.  Noting that the line network has full rank over this sketch, Lemma \ref{lem:LP-to-eq} implies that there is a unique equilibrium satisfying this sketch solution, and hence a unique equilibrium for our network.
\end{proof}

\subsection{Non-uniqueness for Cycles with a Single Captive Market}

Here are the details of the example depicted in figure \ref{fig:cex}(a).
The set of sellers is $\{1, 2, 3, 4, 5\}$.  We will have $\alpha_3 = 1$ and $\alpha_j = 0$ for $j \neq 3$.  The shared market weights are $(\beta_{12}, \beta_{23}, \beta_{34}, \beta_{45}, \beta_{51}) = (1,0.5,0.5,1,1)$.  Note that this market is symmetric in terms of reflection around seller $3$.
%(We note that the given market sizes are not generic, but the equilibria we describe will survive under perturbations of the market weights.  We will analyze the non-perturbed version for ease of presentation).

We now describe a sketch for this network.
Our set of boundary points will be $T = (t_1, t_2, t_3, t_4, t_5)$.  Seller $3$ is the only one with an atom at $1$.
The sellers' supports are $S_1 = [t_4, t_2]$, $S_2 = [t_3, t_1]$, $S_3 = [t_2, t_1]$, $S_4 = [t_5, t_4] \cup [t_2, t_1]$, and $S_5 = [t_5, t_3]$.

Thinking of program (LP1) from Section \ref{sec:sketches} as a quadratic program in which the boundary points $t_i$ are treated as variables, we can solve to find a sketch solution.  The boundary points of this solution are (approximately)
\[ (t_1, t_2, t_3, t_4, t_5) = (1, 0.933163, 0.645242, 0.357321, 0.311054) \]
and the relevant values of $\Fb_i(t_j)$ are given by
\[ \Fb_1(t_3) = 0.223111 \quad\quad \Fb_2(t_2) = 0.691457 \quad\quad \Fb_3(1) = 0.933163\]
\[ \Fb_4(t_4) = 0.741037 \quad\quad \Fb_5(t_4) = 0.805778 \]
At this equilibrium, $u_5 = t_5( \alpha_5 + \beta_{4 5} + \beta_{5 1} ) = 0.622108$ and $u_1 = t_3( \alpha_1 + \beta_{1 2} ) = 0.645242$.

By symmetry of the network, there exists a second equilibrium in which the cycle is reflected about seller $3$; that is, with the roles of sellers $2$ and $4$ reversed, and the roles of sellers $1$ and $5$ reversed.
In this equilibrium, $u_5 = 0.645242$ and $u_1 = 0.622108$.  These two equilibria are therefore not utility-equivalent for the sellers, as required.

\section{Star: Proof of Theorem~\ref{thm:star-unique}}
\label{app:star}
We restate and prove Theorem~\ref{thm:star-unique}.

\begin{theorem}
%\label{thm:star-unique}
For a star network with generic $\vec{\alpha}$, there exists a unique equilibrium.
\end{theorem}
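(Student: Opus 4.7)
The plan follows the three-step outline sketched after the statement: (1) derive the structural form of every equilibrium sketch, (2) show that generically only one such sketch is consistent with all equilibrium conditions, and (3) apply Lemma~\ref{lem:LP-to-eq} after verifying a full-rank condition.

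Step 1 (shape of the sketch). Each peripheral seller has only the center as a neighbor, so Theorem~\ref{thm:necessary-eq} (\ref{thm:nec-union-support}) gives both $S_i \setminus \{1\} \subseteq S_0$ for every $i \geq 1$ and $S_0 \setminus \{1\} \subseteq \bigcup_{i \geq 1} S_i$. If two peripheral supports $S_i$ and $S_j$ had overlapping interiors near some $x$, then $u_i(y) - u_j(y) = y(\alpha_i - \alpha_j)$ would have to be constant on a neighborhood of $x$; since $\alpha_i \neq \alpha_j$ generically, this is impossible. A two-deviation argument (add $u_i(x) \geq u_i(y)$ and $u_j(y) \geq u_j(x)$ to obtain $(\alpha_i - \alpha_j)(x - y) \geq 0$) then shows that $\alpha_i > \alpha_j$ forces $S_i$ to lie weakly above $S_j$. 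If the center's support had a maximal gap $(a,c)$ with $a,c \in S_0$, by ordering and disjointness of peripheral supports no peripheral seller would bid inside $(a,c)$; since peripheral atoms can only occur at $1$ (Theorem~\ref{thm:necessary-eq} (\ref{thm:atoms-at-1})), the center would obtain strictly greater utility at $a + \epsilon \in (a,c)$ than at $a$, contradicting $a \in S_0$. Hence $S_0 = [b_n,1]$ for some $b_n < 1$ (using Theorem~\ref{thm:necessary-eq} (\ref{thm:nec-local-max}) to get $\sup_0 = 1$), and the peripherals partition it as $S_i = [b_i, b_{i-1}]$ with $1 = b_0 \geq b_1 \geq \cdots \geq b_n$. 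Lemma~\ref{lemma:no-same-atom} excludes simultaneous atoms at $1$ by the center and any peripheral, so either the center has an atom at $1$ (and no peripheral does) or some prefix $\{1,\dots,j-1\}$ of peripherals is degenerate at $1$ with at most one having a partial atom there.

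Step 2 (uniqueness of the sketch generically). The center's indifference at each $b_i$ gives $u_0 = b_i(\alpha_0 + i)$, hence $b_i = u_0/(\alpha_0 + i)$. Seller $i$'s indifference between the endpoints of $[b_i, b_{i-1}]$ yields the recurrence
\[ \Fb_0(b_{i-1}) = \Fb_0(b_i) + \frac{\alpha_i + 1 - \Fb_0(b_i)}{\alpha_0 + i}, \]
seeded by $\Fb_0(b_n) = 0$, and (when some $j \geq 1$ marks the boundary of degeneracy) the relation $\alpha_j = u_j = b_j(\alpha_j + 1 - \Fb_0(b_j))$ pins down $u_0$. The validity constraints---$\Fb_0(b_i) \in [0,1]$ for $i \geq j$, $b_j \leq 1$, a non-negative atom size strictly less than $1$ for seller $j$ (when $j \geq 1$) or a positive atom at $1$ for the center (when $j = 0$), and the non-deviation inequality $x(\alpha_i + \Fb_0(x)) \leq \alpha_i$ for each $i < j$ and $x < 1$---cut out a single consistent index $j(\vec{\alpha})$. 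Since the boundaries between choices of $j$ are algebraic in $\vec{\alpha}$, they have Lebesgue measure zero, so for generic $\vec{\alpha}$ the sketch is unique.

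Step 3 (full rank and uniqueness of the equilibrium). On each boundary interval $(b_i, b_{i-1})$ with $b_i < b_{i-1}$, the active set is $R = \{0, i\}$, and the $2 \times 2$ coefficient matrix from constraints~(\ref{eq:eq-util}) has off-diagonal entries $\beta_{0,i} = \beta_{i,0} = 1$ with zero diagonal, giving determinant $-1$. Hence the network has full rank with respect to the sketch, and Lemma~\ref{lem:LP-to-eq} produces the unique equilibrium consistent with the unique sketch solution, completing the proof. The main technical hurdle I anticipate is Step~2: rigorously showing that $j(\vec{\alpha})$ is generically unambiguous. Because $\Fb_0(b_i)$ and $u_0$ depend rationally on $\vec{\alpha}$, the equality loci are algebraic and of measure zero, but one must verify that the inequality constraints defining validity of each candidate $j$ partition the generic region into disjoint cells; a monotonicity analysis of $\Fb_0(b_i)$ and $b_j$ as functions of the $\alpha_i$'s should suffice.
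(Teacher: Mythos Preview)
Your approach is essentially the paper's: Step~1 reproduces Observations~\ref{obs:closure-intersection}--\ref{obs:no-crossing} and the corollary on the form of supports, and Step~3 matches the paper's invocation of Lemma~\ref{lem:LP-to-eq} via the $2\times 2$ full-rank check.

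Two points on Step~2. First, a notational slip: with the paper's convention $\Fb_0(x)=1-F_0^-(x)$, the seed is $\Fb_0(b_n)=1$ and the recurrence is $\Fb_0(b_{i-1})=\Fb_0(b_i)-\tfrac{\alpha_i+\Fb_0(b_i)}{\alpha_0+i}$; what you wrote is the recurrence for $F_0$, not $\Fb_0$. Second, and more substantively, you flag the uniqueness of $j$ as the main hurdle and propose a cell-partition/monotonicity argument. The paper avoids this entirely by the observation you implicitly have but do not exploit: in the ratio $b_i/b_{i-1}=(\alpha_0+i-1)/(\alpha_0+i)$ the unknown $u_0$ cancels, so the recurrence for $\Fb_0(b_i)$ depends only on $\alpha_0,\alpha_i,\dots,\alpha_n$ and is completely independent of the candidate index $j$. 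One can therefore simply run the recurrence once from $\Fb_0(b_n)=1$ down to $\Fb_0(b_0)$; if $\Fb_0(b_0)>0$ the center has the atom (Case~1), and otherwise $j$ is the largest index with $\Fb_0(b_{j-1})<0$ (Case~2). Generically in $\vec{\alpha}$ none of the $\Fb_0(b_i)$ vanish, so $j$ is determined unambiguously with no further analysis needed.
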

We first outline the proof.
To prove the claim we first show that for any equilibrium, a sketch that satisfies the equilibrium must has the following form.
The center price on a non-trivial interval with supremum $1$, and the peripheral sellers each price on an interval (possibly degenerated to the point $1$), the interior of these intervals do not overlap. Moreover, the interval of a peripheral seller is above the intervals of any other peripheral seller with a smaller captive market.
Formally, for some $1=b_0\geq b_1\geq \ldots \geq b_n$ such that $b_n<1$ it holds that the support of the center is the interval $[b_n,1]$.
Additionally, Each peripheral seller $i$ has support $S_i=[b_i,b_{i-1}]$.
Next, in Lemma~\ref{lem:star-sketch} we show that
for a star network with generic $\vec{\alpha}$,
there is a unique sketch (set of sellers with atoms at $1$, and setting of $\{b_i\}_{i\in [n]}$)
that can be satisfied in equilibrium.
% We show that there is a unique way to set the $\{b_i\}_{i\in [n]}$ while not violating any equilibrium condition.
For any sketch with these supports, the network has full rank with respect to the given sketch. Equilibrium uniqueness follows from Lemma~\ref{lem:LP-to-eq}. %, and we also explicitly present the equilibrium.

Consider any equilibrium in this market.
For any seller $i$ and any point $x\in [0,1]$, recall that $\Fb_i(x) = 1-F^-_i(x)$.
% Let $F_i$ be the CDF of bids for seller $i$, and let $S_i$ be the support of that CDF. Let $u_i$ be the utility of seller $i$ from any bid in his support.
% We use $p_i=A_i(1)$ to denote the size of the atom seller $i$ has at the price of $1$.
As each peripheral node has only one neighbor (the center), and its support (except possibly an atom at 1) must be contained in the center's support (Observation~\ref{obs:contained-support}), the center cannot be pricing at $1$ with probability $1$.
Additionally the same observation implies that there is at least one peripheral seller that is not always pricing at $1$.

\begin{observation}
\label{obs:closure-intersection}
%For a generic star network,
For a  star network with $\alpha_1> \alpha_2> \ldots > \alpha_n>0$
in any equilibrium the intersection of the supports of any two peripheral sellers includes at most one point.
\end{observation}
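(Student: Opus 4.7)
The proof strategy is a direct indifference argument exploiting the fact that in a star, each peripheral seller has exactly one neighbor (the center), so its utility depends only on the center's CDF and its own captive market.

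The plan is to assume for contradiction that two peripheral sellers $i$ and $j$ (with $\alpha_i \neq \alpha_j$) share two distinct prices $x_1 < x_2$ in their supports. I will derive a single expression that both $\alpha_i$ and $\alpha_j$ must equal, yielding a contradiction.

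First I would write down the utility of a peripheral seller. Since a peripheral seller $k$ has only the center ($0$) as a neighbor and all shared markets have size $1$,
\[ u_k(x) = x\bigl(\alpha_k + \Fb_0(x)\bigr), \]
using the convention introduced in Section~\ref{sec:eq} that handles the possible atom of $F_0$ at $x=1$ uniformly (note: no seller has an atom at any positive price less than $1$ by Observation~\ref{obs:atoms-at-1}, so $\Fb_0$ is left-continuous on $[0,1]$ and coincides with $1 - F_0$ on $[0,1)$). Consequently, every point in a peripheral seller's support is an optimal price.

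Next, suppose $x_1, x_2 \in S_i \cap S_j$ with $x_1 < x_2$. Because $x_1, x_2 \in S_i$, the indifference condition yields $u_i(x_1) = u_i(x_2)$, i.e.
\[ x_1\bigl(\alpha_i + \Fb_0(x_1)\bigr) = x_2\bigl(\alpha_i + \Fb_0(x_2)\bigr). \]
Solving for $\alpha_i$ gives
\[ \alpha_i = \frac{x_2 \Fb_0(x_2) - x_1 \Fb_0(x_1)}{x_1 - x_2}. \]
The identical computation applied to seller $j$ (using $x_1, x_2 \in S_j$) gives
\[ \alpha_j = \frac{x_2 \Fb_0(x_2) - x_1 \Fb_0(x_1)}{x_1 - x_2}. \]
The right-hand sides depend only on $x_1$, $x_2$, and the center's CDF, not on the identity of the peripheral seller. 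Hence $\alpha_i = \alpha_j$, contradicting the assumption of strict ordering $\alpha_1 > \alpha_2 > \cdots > \alpha_n$.

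I expect no serious obstacles. The only subtlety worth double-checking is that the equations are valid even if $x_2 = 1$ and the center has an atom at $1$: this is handled precisely by the convention $u_k(1) = \alpha_k + \Fb_0(1)$, which makes $u_k(x) = x(\alpha_k + \Fb_0(x))$ hold uniformly on $[0,1]$, so the indifference equations are applicable with no change.
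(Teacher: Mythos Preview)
Your proposal is correct and follows essentially the same approach as the paper: both assume two distinct shared support points, write the indifference equations $x(\alpha_k+\Fb_0(x))=\text{const}$ for each peripheral seller, solve for $\alpha_i$ and $\alpha_j$ as the same expression in $x_1,x_2,\Fb_0$, and derive the contradiction $\alpha_i=\alpha_j$. Your explicit handling of the atom-at-$1$ convention is a nice addition but not a substantive departure.
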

\begin{proof}
Consider two peripheral sellers $i,j$ and assume that both $x'$ and $x''\neq x'$ are in the support of both sellers, and optimal for them.
%For each $k\in\{i,j\}$ it holds that for every $x$ in the support of $k$: $$u_k(x)= (\alpha_k+\Fb_0(x))x$$
That is,
$$(\alpha_i+\Fb_0(x'))x'= (\alpha_i+\Fb_0(x''))x''$$
and
$$(\alpha_j+\Fb_0(x'))x'= (\alpha_j+\Fb_0(x''))x''$$
Thus
$$\alpha_i = \frac{\Fb_0(x'')x'' - \Fb_0(x')x'}{x'-x''} = \alpha_j$$
a contradiction to $\alpha_i\neq \alpha_j$ for every $i\neq j$.
%a generic $\vec{\alpha}$.
\end{proof}

\begin{observation}
\label{obs:no-gap} For a star network,  in any equilibrium the support of the center is an interval with supremum of $1$, and this interval is exactly the union of the supports of the peripheral sellers.
\end{observation}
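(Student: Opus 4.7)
This observation has three parts to prove: $\sup_0 = 1$, the equality $S_0 = \bigcup_{k} S_k$ (where the union ranges over the peripherals), and the fact that $S_0$ is an interval. My plan is to establish them in that order.

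For $\sup_0 = 1$ I would argue by contradiction. Assume $s_0 := \sup_0 < 1$. Since the support of a probability distribution is closed we have $s_0 \in S_0$, and since atoms occur only at $1$ (Observation~\ref{obs:atoms-at-1}) the center has no atom at $s_0$, so $\Fb_0(s_0) = 1 - F_0(s_0) = 0$. Because $s_0 < 1$, Observation~\ref{obs:contained-support} applied to the center places $s_0$ in the support of some peripheral $k$ (the only neighbors of the center). Since $s_0 \in S_k$ and $s_0 < 1$, the equilibrium utility of $k$ equals $u_k = u_k(s_0) = s_0(\alpha_k + \beta_{0k}\Fb_0(s_0)) = s_0 \alpha_k$. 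But Theorem~\ref{thm:necessary-eq} (\ref{thm:captive-positive}) gives $u_k \geq \alpha_k > 0$, forcing $s_0 \geq 1$, a contradiction.

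Next I would prove $S_0 = \bigcup_k S_k$ by double inclusion. Observation~\ref{obs:contained-support} applied to the center yields $S_0 \setminus \{1\} \subseteq \bigcup_k S_k$; applied to each peripheral $k$, whose only neighbor is the center, it yields $S_k \setminus \{1\} \subseteq S_0$. Together these handle every point except $1$. The point $1$ belongs to $S_0$ since $\sup_0 = 1$ and $S_0$ is closed; to place $1$ in some peripheral support, choose a sequence $x_n \in S_0 \setminus \{1\}$ with $x_n \to 1$, so each $x_n \in S_{k_n}$ for some peripheral $k_n$, and pigeonholing over the finitely many peripherals together with closedness of the $S_k$ yields $1 \in S_{k^*}$ for some $k^*$.

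Finally, for the interval claim, suppose $S_0$ contained a maximal gap $(a,b)$ with $a, b \in S_0$ and $a < b \leq 1$. By the previous step $a \in S_k$ for some peripheral $k$. The center places no mass on $(a,b)$ and has no atom in $[a,b)$ (atoms only at $1$), so $\Fb_0$ is constant on $[a, b)$. For any $y \in (a, b)$, one then has $u_k(y) = y(\alpha_k + \beta_{0k}\Fb_0(a)) > a(\alpha_k + \beta_{0k}\Fb_0(a)) = u_k(a) = u_k$, using $\alpha_k > 0$. This is a profitable deviation for $k$, contradicting the optimality of $a \in S_k$. Of the three steps the main subtlety will be the first: identifying a peripheral witness at $s_0$ via Observation~\ref{obs:contained-support} and then using both the sharp equality $\Fb_0(s_0) = 0$ and the universal lower bound $u_k \geq \alpha_k$ to rule out any gap between $\sup_0$ and $1$; the remaining two parts then follow by short inclusion and deviation arguments.
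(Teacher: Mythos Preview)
The paper states Observation~\ref{obs:no-gap} without proof, so there is no argument of the paper's to compare against; your proposal supplies the missing justification and is essentially correct. There is one ordering issue worth flagging.

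In your second step, to place $1$ in some peripheral support you ``choose a sequence $x_n \in S_0 \setminus \{1\}$ with $x_n \to 1$''. This presupposes that $1$ is not an isolated point of $S_0$, which has not yet been established: knowing $\sup_0 = 1$ and $S_0 \neq \{1\}$ does not by itself rule out $S_0 = [a,b] \cup \{1\}$ with $b < 1$. The fix is already in your hands: your third step (the gap argument) applies verbatim to a gap of the form $(a,1)$, since it only uses that $a \in S_0 \setminus \{1\} \subseteq \bigcup_k S_k$ via Observation~\ref{obs:contained-support}, not the full equality from step two. So either run the interval argument first and then invoke it to justify the sequence, or simply note that the deviation argument with $b=1$ rules out $1$ being isolated in $S_0$; after that your pigeonhole step goes through. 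With this reordering the proof is complete.
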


\begin{observation}
\label{obs:monotone-utilities}
For a star network, in any equilibrium, for any peripheral sellers $i<j$ with $\alpha_i\geq \alpha_j$ it holds that $u_i\geq u_j$, with strict inequality if $\alpha_i> \alpha_j$.
\end{observation}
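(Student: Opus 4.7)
\textbf{Proof proposal for Observation~\ref{obs:monotone-utilities}.} The plan is to argue by a simple deviation: seller $i$ can mimic any price that $j$ uses in its support, and because $\alpha_i \geq \alpha_j$, $i$ does at least as well at that price as $j$ does. Formally, pick any $x_j \in S_j$. Since $j$'s only neighbor is the center, the utility formulas are
\[
u_j(x_j, F_{-j}) \;=\; x_j\bigl(\alpha_j + \Fb_0(x_j)\bigr), \qquad u_i(x_j, F_{-i}) \;=\; x_j\bigl(\alpha_i + \Fb_0(x_j)\bigr),
\]
using the convention from Section~\ref{sec:eq} that makes these quantities continuous at $1$. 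The difference is $x_j(\alpha_i - \alpha_j) \geq 0$.

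The main step is then to combine optimality of $x_j$ for $j$ with the fact that $i$ could deviate to $x_j$:
\[
u_j \;=\; u_j(x_j, F_{-j}) \;\leq\; u_j(x_j, F_{-j}) + x_j(\alpha_i - \alpha_j) \;=\; u_i(x_j, F_{-i}) \;\leq\; u_i.
\]
The first equality uses that every price in $S_j$ is optimal for $j$ (noted right after Lemma~\ref{lemma:no-same-atom} and in the discussion following the definition of $u_i(1, F_{-i})$), and the last inequality is the deviation bound for $i$. If $\alpha_i > \alpha_j$, strictness follows because $x_j \geq \delta > 0$ by Corollary~\ref{cor:captive-positive}.

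The only potential subtlety is the corner case in which $S_j = \{1\}$ and the center has an atom at $1$, since then the extended formula $u_i(1, F_{-i}) = \alpha_i + \Fb_0(1)$ treats $i$ as winning the tie against the center, which might overstate $i$'s actual payoff. However, by Lemma~\ref{lemma:no-same-atom} the center and $j$ cannot both have atoms at $1$, so in this case $\Fb_0(1)=0$ and $u_j = \alpha_j$; we then conclude directly from $u_i \geq \alpha_i \geq \alpha_j = u_j$ (with strict inequality when $\alpha_i > \alpha_j$) using Corollary~\ref{cor:captive-positive}. I do not expect any serious obstacle here; the whole argument is a one-line best-response comparison, and the only thing to be careful about is the tie-breaking bookkeeping at price $1$, which is handled by Lemma~\ref{lemma:no-same-atom} together with the extended-utility convention.
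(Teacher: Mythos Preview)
Your argument is correct and is essentially identical to the paper's: pick $x\in S_j$, write both utilities as $x(\alpha_\cdot+\Fb_0(x))$, compare, and invoke Corollary~\ref{cor:captive-positive} to get $x>0$ for the strict case. The extra corner-case analysis you add for $S_j=\{1\}$ is harmless but unnecessary: under the paper's convention $u_i(1,F_{-i})=\lim_{x\to 1^-}u_i(x,F_{-i})$, so the deviation bound $u_i\ge u_i(1,F_{-i})$ holds automatically, and in any event Lemma~\ref{lemma:no-same-atom} already rules out the scenario you worry about.
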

\begin{proof}
For any $x\in S_j$
$$u_j = u_j(x)= (\alpha_j+\Fb_0(x))x$$

The utility of $i$ is at least his utility by pricing at some $x\in S_j$ thus
$$u_i \geq u_i(x)= (\alpha_i+\Fb_0(x))x\geq (\alpha_j+\Fb_0(x))x= u_j(x)=u_j$$
when the right inequality follows since $\alpha_i\geq \alpha_j$ and is strict if $\alpha_i> \alpha_j$ (since $x\in S_j$ means $x>0$, by Corollary~\ref{cor:captive-positive}).
\end{proof}

\begin{observation}
\label{obs:no-crossing}
Fix any star network and any equilibrium. For any pair of peripheral sellers  $i<j$ with $\alpha_i>\alpha_j$ it holds that any price in the support of $i$ is at least as high as any price in the support of $j$.
That is for any $x'\in S_i$ and $x''\in S_j$ it holds that $x'\geq x''$.
\end{observation}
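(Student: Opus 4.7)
The plan is a textbook single-crossing argument that exploits the fact that for a star network the utility of a peripheral seller $k$ at price $x$ depends on $k$'s identity only through the additive term $\alpha_k x$. Concretely, I will suppose for contradiction that there exist $x' \in S_i$ and $x'' \in S_j$ with $x' < x''$, and derive that $\alpha_i \leq \alpha_j$, contradicting the hypothesis.

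First I would recall from the discussion following the definition of $u_i(1, F_{-i})$ that in equilibrium every point of $S_k$ is an optimal price for seller $k$ (continuity of utility below $1$, plus the convention made at $1$). In particular $u_i(x') \geq u_i(x'')$ and $u_j(x'') \geq u_j(x')$. For a peripheral seller $k$, whose sole neighbor is the center, these read
\begin{align*}
(\alpha_i + \Fb_0(x'))x' &\geq (\alpha_i + \Fb_0(x''))x'', \\
(\alpha_j + \Fb_0(x''))x'' &\geq (\alpha_j + \Fb_0(x'))x'.
\end{align*}

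Adding these two inequalities, the terms $\Fb_0(x')x'$ and $\Fb_0(x'')x''$ cancel on opposite sides, leaving
\[
\alpha_i x' + \alpha_j x'' \geq \alpha_i x'' + \alpha_j x',
\]
which rearranges to $(\alpha_i - \alpha_j)(x' - x'') \geq 0$. Since $\alpha_i > \alpha_j$ by hypothesis, this forces $x' \geq x''$, contradicting our assumption $x' < x''$. Hence every price in $S_i$ is at least every price in $S_j$, as claimed.

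There is essentially no obstacle beyond checking that both endpoints $x'$ and $x''$ may legitimately be treated as optimal prices for their respective sellers, which is explicit in the setup; the rest is the standard monotone-differences calculation. The same argument also immediately gives a weak version for $\alpha_i = \alpha_j$ (namely, $x' = x''$ or one of the inequalities is strict), which combined with Observation~\ref{obs:closure-intersection} will drive the ordering of supports used in the sketch characterization that follows.
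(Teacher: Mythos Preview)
Your proof is correct and takes essentially the same approach as the paper: assume $x'<x''$, write down the two optimality inequalities $u_i(x')\ge u_i(x'')$ and $u_j(x'')\ge u_j(x')$, and derive $\alpha_i\le\alpha_j$. The only difference is in the algebra---you add the two inequalities so that the $\Fb_0$-terms cancel directly, whereas the paper substitutes one into the other via a ratio and then simplifies; your route is the cleaner of the two.
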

\begin{proof}
% Consider such pair $i$ and $j$ such that the closure of the two supports intersect at a point $x''$. Such pair exists as by Observation~\ref{obs:no-gap} (no gap in the center's support) and the assumption.

Assume in contradiction that $x'< x''$ for $x'\in S_i$ and $x''\in S_j$.
We will show that seller $j$ can increase his utility by pricing at $x'$ instead of $x''$.
As $x'\in S_i$ it holds that $$u_i=u_i(x')= (\alpha_i+\Fb_0(x'))x'\geq  (\alpha_i+\Fb_0(x''))x'' = u_i(x'')$$
Thus,
$$x' \geq  x'' \frac{\alpha_i+\Fb_0(x'')}{\alpha_i+\Fb_0(x')}$$

Combining with $x''$ being optimal for $j$ (as $x''\in S_j$), it holds that
$$u_j=u_j(x'')= (\alpha_j+\Fb_0(x''))x''\geq u_j(x') = x'(\alpha_j+\Fb_0(x'))\geq x'' \frac{\alpha_i+\Fb_0(x'')}{\alpha_i+\Fb_0(x')}(\alpha_j+\Fb_0(x'))$$
we conclude that
$$
(\alpha_j+\Fb_0(x''))\geq \frac{\alpha_i+\Fb_0(x'')}{\alpha_i+\Fb_0(x')}(\alpha_j+\Fb_0(x'))
$$
Simplifying this shows that this is equivalent to $\alpha_i\leq \alpha_j$, a contradiction.
\end{proof}

\begin{corollary}
% For a generic star network,
For a  star network with $\alpha_1> \alpha_2> \ldots > \alpha_n>0$,
any equilibrium must have the following form.
For some $1=b_0\geq b_1\geq \ldots \geq b_n$ such that $b_n<1$ it holds that the support of the center is the interval $[b_n,1]$.
Additionally, Each peripheral seller $i$ has support $S_i=[b_i,b_{i-1}]$.
\end{corollary}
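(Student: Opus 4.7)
The plan is to deduce the structural statement directly from the no-crossing and no-gap observations (Observations~\ref{obs:no-gap} and~\ref{obs:no-crossing}); the closure-intersection and monotone-utilities observations are not needed beyond having already been folded into no-crossing. Write $L_i=\inf S_i$ and $H_i=\sup S_i$ for each peripheral seller $i$. The hypothesis $\alpha_1>\alpha_2>\cdots>\alpha_n$ together with the no-crossing observation yields $L_i\ge H_{i+1}$ for every $1\le i<n$, so both sequences $(L_i)$ and $(H_i)$ are weakly decreasing in $i$.

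I would then apply the no-gap observation, which identifies the center's support with $\bigcup_i S_i$ and presents it as an interval of the form $[b_n,1]$, with $b_n<1$ by the remark earlier in this section that the center cannot price at $1$ almost surely. If $L_i>H_{i+1}$ for some $i<n$, then the open interval $(H_{i+1},L_i)$ would lie inside $[b_n,1]$ but, by the no-crossing ordering, could not meet any $S_j$, contradicting $[b_n,1]=\bigcup_j S_j$; hence $L_i=H_{i+1}$ for every $1\le i<n$. The same reasoning applied at the top of the union forces $H_1=1$.

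To finish I need each $S_i$ to itself be a closed interval. Fix $i$ and any $x$ with $L_i<x<H_i$; such an $x$ lies in $[b_n,1]$ and therefore in some $S_j$, and the no-crossing inequalities $L_j\ge H_i>x$ for $j<i$ and $L_i>x\ge H_j$ for $j>i$ both fail, forcing $j=i$. Thus $(L_i,H_i)\subseteq S_i$, and closedness of supports of CDFs upgrades this to $S_i=[L_i,H_i]$ (degenerate to a single point when $L_i=H_i$). Setting $b_0=H_1=1$, $b_i=L_i=H_{i+1}$ for $1\le i<n$, and $b_n=L_n$, the decomposition $S_i=[b_i,b_{i-1}]$ and the chain $1=b_0\ge b_1\ge\cdots\ge b_n$ with $b_n<1$ both drop out. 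I foresee no real obstacle: the corollary is essentially a repackaging of the two preceding observations, and the only care required is to verify that the no-crossing ordering together with the no-gap equality jointly prevent both gaps and overlaps, leaving each $S_i$ as a contiguous piece of $[b_n,1]$.
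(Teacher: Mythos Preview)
Your argument is correct and is exactly the route the paper intends: the corollary is stated without proof precisely because it is the immediate combination of Observation~\ref{obs:no-gap} (the center's support is an interval $[b_n,1]$ equal to $\bigcup_i S_i$) and Observation~\ref{obs:no-crossing} (the peripheral supports are totally ordered), which you assemble just as one would expect. One wording glitch to fix: in your interval step the chains $L_j\ge H_i>x$ (for $j<i$) and $H_j\le L_i<x$ (for $j>i$) are what \emph{hold} and thereby exclude $x$ from $S_j$, so the phrase ``both fail'' is inverted relative to your intent; the logic underneath is fine.
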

Note that in particular, there is no equilibrium with infinite-boundary for any of the CDFs.

\begin{lemma}
\label{lem:star-sketch}
%For a generic star network,
For a star network with generic $\vec{\alpha}$, any equilibrium has finite boundary. Moreover, there is a unique sketch that can be satisfied in equilibrium.
\end{lemma}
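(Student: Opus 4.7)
The finite-boundary claim is immediate from the corollary preceding this lemma: each seller's support is a single interval ($[b_n,1]$ for the center, $[b_i, b_{i-1}]$ for peripheral $i$), so at most $2(n+1)$ boundary points appear in total. For uniqueness of the sketch I plan to parameterize any admissible sketch by a single index $j \in \{0, 1, \ldots, n\}$, derive an explicit recursion in $\vec\alpha$ that pins down every remaining parameter from $j$, and then show that $j$ itself is uniquely determined by the signs of the recursion's outputs.

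For the parameterization: Lemma~\ref{lemma:no-same-atom} forbids the center and any peripheral from both having an atom at $1$, so either the center has an atom at $1$ and no peripheral does (``Case A,'' identified with $j=0$), or the center has no atom and some set of peripherals does (``Case B''). In Case B, Observation~\ref{obs:closure-intersection} forces any atomic peripheral other than the lowest-$\alpha$ one to have the degenerate support $\{1\}$, and Observation~\ref{obs:no-crossing} then places these degenerate-support sellers at the top of the $\alpha$-order: thus there is a prefix $\{1,\ldots,j-1\}$ of sellers that only price at $1$, together with a single seller $j$ that has a partial atom at $1$ with $b_j<1$. Given $j$, seller 0's indifference across $[b_n,1]$ forces $b_k = u_0/(\alpha_0+k)$ for all $k \geq j$, and iterating the indifference of peripheral sellers $k > j$ across $[b_k, b_{k-1}]$ gives the recursion
\[ \phi_{k-1} = \frac{(\alpha_0+k-1)\phi_k - \alpha_k}{\alpha_0 + k}, \qquad \phi_n = 1, \]
for $\phi_k := \Fb_0(b_k)$, which is a function of $\vec\alpha$ alone. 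Then seller $j$'s indifference between $b_j$ and $1$ in Case B (yielding $b_j = \alpha_j/(\alpha_j + \phi_j)$), or the center's atom condition $u_0=\alpha_0$ in Case A, pins $u_0$ and hence every $b_k$ and atom size. Thus each candidate $j$ yields a unique candidate sketch.

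The main obstacle is selecting the valid $j$. A direct computation gives $\phi_k - \phi_{k-1} = (\phi_k + \alpha_k)/(\alpha_0+k) > 0$ whenever $\phi_k > 0$, so starting from $\phi_n = 1$ and running the recursion downward, the sequence is strictly decreasing and eventually (if ever) crosses zero. Validity for Case~B with index $j \geq 1$ reduces to $a_j \in (0,1)$, equivalently $0 < \phi_j < \alpha_j/(\alpha_0+j-1)$; since the recursion gives $\phi_{j-1} = 0 \iff \phi_j = \alpha_j/(\alpha_0+j-1)$, the upper bound on $\phi_j$ is precisely $\phi_{j-1} < 0$. Case~A is valid iff $\phi_0 > 0$. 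By monotonicity, there is therefore a unique index $j^*\in\{0,\ldots,n\}$ with $\phi_{j^*}>0$ and $\phi_{j^*-1}\leq 0$ (reading $\phi_{-1} \leq 0$ vacuously when $j^*=0$), and generically the latter inequality is strict, so exactly one $j$ is valid. Finally, every admissible sketch trivially has full rank with respect to itself — each active $R_j$ contains the center and one peripheral seller, whose $\beta$-interaction matrix $\bigl(\begin{smallmatrix}0 & 1\\ 1 & 0\end{smallmatrix}\bigr)$ is nonsingular — so Lemma~\ref{lem:LP-to-eq} extends the unique valid sketch to a unique equilibrium.
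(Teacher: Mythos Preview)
Your proof is correct and follows essentially the same route as the paper: derive the recursion for $\Fb_0(b_k)$ from the indifference conditions, extend it formally below the atom index, and read off the unique $j$ from the first sign change. Your explicit monotonicity computation $\phi_k - \phi_{k-1} = (\phi_k + \alpha_k)/(\alpha_0+k) > 0$ and the equivalence $\phi_{j-1} < 0 \iff \phi_j < \alpha_j/(\alpha_0+j-1)$ make the selection of $j^*$ cleaner than the paper's presentation, which leaves the uniqueness of the crossing point implicit; the final full-rank remark also matches the paper's appeal to Lemma~\ref{lem:LP-to-eq}.
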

\begin{proof}
We continue by presenting additional properties that must hold in any equilibrium.
The support of the center is the interval $[b_n,1]$.
%With the convention that $u_0(1)$
For every $i$ with $b_i<1$, $b_i$ is in the support of the center, thus $u_0=u_0(b_i)=b_i(\alpha_0+i)$.
We conclude that
\begin{equation}
\label{eq.b.formula}
b_i= \frac{u_0}{\alpha_0+i}
\end{equation}
This means that if $b_i<1$ then $\frac{b_{i}}{b_{i-1}} = \frac{\alpha_0+i-1}{\alpha_0+i}=1-\frac{1}{\alpha_0+i}$.
Thus, once we fix some $j$ such that $b_{j-1}=1$ and $b_{j}<1$ we fix every $b_i$.

We next compute $\Fb_0(b_i)$ for every $i$ such that $b_i<1$, starting from $\Fb_0(b_n)=1$ and decreasing $i$ by one at every step.
For every peripheral seller $i$ with $b_i<1$ and every $x\in S_i$ % such that $x\neq 1$
it holds that $u_i(x)=x(\alpha_i+\Fb_0(x))$.
This holds in particular at $b_{i-1},b_i\in S_i$.
% Thus with the convention that $F_0(b_0)=F_0^-(1)$ (and the atom of the center at 1 is of size $1-F_0^-(1)$)
$$u_i=u_i(b_{i-1})=b_{i-1}(\alpha_i+\Fb_0(b_{i-1})) =
b_{i}(\alpha_i+\Fb_0(b_{i})) = u_i(b_{i})
$$
alternatively
$$ \Fb_0(b_{i-1})
=  \frac{b_{i}}{b_{i-1}}\cdot (\alpha_i+\Fb_0(b_{i})) -\alpha_i =
\left(1-\frac{1}{\alpha_0+i}\right)\cdot (\alpha_i+\Fb_0(b_{i})) - \alpha_i
$$
Thus
\begin{equation}
\label{eq.F.recurrence}
\Fb_0(b_{i-1})
= \Fb_0(b_{i})  - \frac{\alpha_i+\Fb_0(b_{i})}{\alpha_0+i}
\end{equation}

Equation \eqref{eq.F.recurrence} gives a recurrence for computing $\Fb_0(b_{i-1})$ given $\Fb_0(b_i)$, starting with $\Fb_0(b_n) = 1$,
this recurrence must hold in any equilibrium.
For generic $\vec{\alpha}$ it holds that $\Fb_0(b_i)\neq 0$ for every $i$.
In equilibrium it must be the case that $\Fb_0(b_i)\geq 0$.

{\bf Case 1:} If $\Fb_0(b_0)=\Fb_0(1)>0$ it means that the center must have an atom at $1$. This means that no other seller has any atom.
This implies that $u_0=\alpha_0$, which means that for every $i\in [n]$, $b_i= \frac{\alpha_0}{\alpha_0+i}$.
Thus for this case we have a unique sketch that can be satisfied in equilibrium.

We remark that for this case to happen it is necessary that $\alpha_0$ is quite large since
for every $i$ it must hold that $u_i = u_i(b_i) = b_i(\alpha_i + \Fb_0(b_i))\geq \alpha_i$ which implies that
$\frac{\alpha_0}{\alpha_0+i}(\alpha_i + 1)\geq \alpha_i$ and thus
$\alpha_0>i\alpha_i$.

{\bf Case 2:} If by using the recurrence of Equation \eqref{eq.F.recurrence} we get $\Fb_0(b_0)=\Fb_0(1)<0$, then let $j$ be the maximum (i.e.\ first) value for which \eqref{eq.F.recurrence} yields $\Fb_0(b_{j-1}) <0$ (thus $\Fb_0(b_{j}) > 0$ for a generic $\alpha$). It must hold that $j$ has an atom at $1$ and $b_{j-1}=1$ while $b_{j}<1$. This imply that $1=b_0=b_1=\ldots = b_{j-1}> b_j$ and thus every seller $i<j$ always price at $1$ (has an atom at $1$ of size $A_i(1)=1$) and no other seller has any atom.
Additionally, for any $i\geq j$ this allows us to fix every $b_i$ using the recursion $\frac{b_{i}}{b_{i-1}} = 1-\frac{1}{\alpha_0+i}$.
Thus for this case we have a unique sketch that can be satisfied in equilibrium.
\end{proof}

\subsection{Equilibrium Utilities}
\label{app:star-util}
% Moshe: at this point I have only written this for the case that the center has an atom. The other case is not as clean as $u_0>\alpha_0$ and its formula is not so nice.

As there is unique equilibrium in each star network, it is meaningful to talk about the equilibrium utilities of the seller.
We next aim to understand how the utilities of the sellers change as the sizes of the captive markets change slightly (change that is small enough such that the order of captive market sizes and the sketch of the equilibrium do not change).
We focus on the case that the center has a large market, large enough for the equilibrium to be of the first kind, with the center having an atom at $1$.

%Any seller $i\in \{0,1,\ldots, n-1\}$ with an atom $1$ has utility $\alpha_i$.  
The center seller has an atom at $1$ and has utility $\alpha_0$.
For the peripheral sellers we can compute their utilities as follows.
Given $u_0=\alpha_0$ we can compute the utilities recursively, starting with $u_n$ and moving down to $u_1$.
It holds that
$$u_n=u_n(b_n)= b_n(\alpha_n+1) =  \frac{u_0}{\alpha_0+n}(\alpha_n+1) $$

Consider any peripheral seller $i-1$ that is not always pricing at $1$.
Since $b_{i-1}$ belongs to both $S_{i-1}$ and $S_i$, it holds that $u_i=u_i(b_{i-1})=b_{i-1}(\alpha_i+1-F_0(b_{i-1})) $
and $u_{i-1}=u_{i-1}(b_{i-1})=b_{i-1}(\alpha_{i-1}+1-F_0(b_{i-1})) $, thus
$$u_{i-1}=u_i + b_{i-1} ( \alpha_{i-1} -\alpha_{i} )  =
u_i + \frac{u_0}{\alpha_0+i-1} ( \alpha_{i-1} -\alpha_{i} )
=u_0\left( \frac{\alpha_n+1}{\alpha_0+n} +  \sum_{j=i}^{n} \frac{ \alpha_{j-1} -\alpha_{j} }{\alpha_0+j-1}\right)
$$
% Finally, if the center does not have an atom at $1$ his utility is ... FINISH!

As we consider parameters for which case 1 holds, that is, the center has an atom at 1, and no other seller has an atom, we know that 
$u_0=\alpha_0$. We conclude that the center gains nothing from having access to additional market. Note that every peripheral seller $i$ has utility larger than $\alpha_i$ (as his equilibrium utility is it least as much as he can gain by pricing arbitrary close to 1).
%All peripheral sellers gain from sharing a market, with seller $i$ having utility larger than $\alpha_i$.
The utility $u_i$ of each peripheral seller $i$ depends on $\alpha_0$ and the captive markets sizes $\alpha_i,\alpha_{i+1},\ldots, \alpha_n$, but not on the other captive market sizes. The utility $u_i$ increases in $\alpha_0$, but not linearly.
The dependence on every $\alpha_i,\alpha_{i+1},\ldots, \alpha_n$ is linear, %(locally, as long as the sketch does not change),
increasing linearly with $\alpha_i$ and decreasing linearly in $\alpha_l$ in $\alpha_{i+1},\alpha_{i+2},\ldots, \alpha_n$.
To see this, observe that the linear coefficient of $\alpha_l$ is $\alpha_0\left(\frac{1}{\alpha_0+l} - \frac{1}{\alpha_0+l-1}\right)<0$.

\subsection{Non-uniqueness of Equilibrium: Lines with Captive Markets}
%\subsection{Example: Non-uniqueness of Equilibrium}
%\label{example:non-eq-utils}

We now privide the details for the example that tree networks can exhibit multiple, non-utility-equivalent equilibria when there is 
more than one captive market, as depicted in figure \ref{fig:cex}(b) .  

Our network is a line of sellers $\{1, \dotsc, 6\}$, with weights $(\alpha_1, \dotsc, \alpha_6) = (10,1,1,1,1,10)$ and $(\beta_{12}, \beta_{23}, \beta_{34}, \beta_{45}, \beta_{56}) = (0.5,1,1,1,0.5)$.  %(We note that the given market sizes are not generic, but the equilibria we describe will survive under perturbations of the market weights.  We will analyze the non-perturbed version for ease of presentation).

We now describe a sketch for this network.  Our set of boundary points is $1 = t_1 > t_2 > \dotsc > t_6$.
Sellers $1$ and $6$ have atoms at $1$.  The sellers' supports are $S_1 = [t_3, t_1]$, $S_2 = [t_6, t_5] \cup [t_3, t_1]$, $S_3 = [t_6, t_4]$, $S_4 = [t_5, t_2]$, $S_5 = [t_4, t_2]$ and $S_6 = [t_2, t_1]$.  %See Figure \ref{fig:cex}(b) for a depiction of this sketch.
%Our level sets are $R_0 = \{1, 6\}$, $R_1 = \{1, 2, 5, 6\}$, $R_2 = \{1,2,4,5\}$, $R_3 = \{4,5\}$, $R_4 = \{3,4\}$, and $R_5 = \{2,3\}$.
Treating program (LP1) from Section \ref{sec:sketches} as a quadratic program in which the boundary points $t_i$ are variables, we can obtain a sketch solution.  The boundary points at this solution are (approximately)
\[ (t_1, t_2, t_3, t_4, t_5, t_6) = (1, 0.960749, 0.960197, 0.87288, 0.610929, 0.576118) \]
and the relevant values of $\Fb_i(t_j)$ are given by
\[ \Fb_1(1) = 0.880591 \quad\quad \Fb_2(t_5) = 0.82906 \quad\quad \Fb_3(t_5) = 0.85755 \]
\[ \Fb_4(t_4) = 0.150999 \quad\quad \Fb_5(t_2) = 0.817081 \quad\quad \Fb_6(1) = 0.882248 \]
At this equilibrium, $u_2 = t_3( \alpha_2 + \beta_{1 2} ) = 1.4403$ and $u_5 = t_2( \alpha_5 + \beta_{5 6} ) = 1.44112$.

By symmetry of the network, there exists a second equilibrium with the order of the sellers reversed; in this alternative equilibrium, we have 
$u_2 = 1.44112$ and $u_5 = 1.4403$.  These two equilibria are therefore not utility-equivalent for the buyers, as required.

% FINISH!

%\input{app-tree-many-captive}
%\input{app-clique}
%\input{app-line.tex}
%\input{app-utilities}

%\bibliographystyle{acm}
%\bibliography{bib.bib}
%}

\end{document}